\newtheorem{theorem}{Theorem}
\newtheorem{lemma}[theorem]{Lemma}
\theoremstyle{definition}
\newtheorem{definition}{Definition}
\newtheorem{remark}{Remark}
\newtheorem{assumption}{Assumption}
\newcommand{\cD}{\mathcal{D}}
\newcommand{\cE}{\mathcal{E}}
\newcommand{\cF}{\mathcal{F}}
\newcommand{\cG}{\mathcal{G}}
\newcommand{\cH}{\mathcal{H}}
\newcommand{\cI}{\mathcal{I}}
\newcommand{\cJ}{\mathcal{J}}
\newcommand{\cL}{\mathcal{L}}
\newcommand{\cM}{\mathcal{M}}
\newcommand{\cV}{\mathcal{V}}
\newcommand{\goto}{\rightarrow} 
\newcommand{\bEx}{\ensuremath{\mathbb{E}}}
\newcommand{\ex}[1]{\ensuremath{\mathbb{E}\left[ #1\right]}}
\newcommand{\gmid}{\! \mid \!}
\newcommand{\DKL}[2]{\ensuremath{D_\mathrm{KL}\left( #1 \, \middle \| \, #2 \right)}}
\DeclareMathOperator{\cov}{\sf Cov}
\DeclareMathOperator{\mmse}{\sf mmse}
\DeclareMathOperator{\var}{\sf Var}
\DeclareMathOperator{\sgn}{sgn}
\DeclareMathOperator{\gtr}{tr}
\newcommand{\reals}{\mathbb{R}}
\newcommand{\integers}{\mathbb{N}}
\newcommand{\normal}{\mathcal{N}}
\newcommand{\SE}{\cE}
\newcommand{\PSE}{V}
\newcommand{\IMI}{Z}
\newcommand{\IMID}{\cJ}
\newcommand{\PMID}{J}
\newcommand{\Ybar}{\bar{Y}}
\newcommand{\Zbar}{\bar{Z}}
\newcommand{\IR}{\cI_{\mathrm{RS}}}
\newcommand{\MR}{\cM_{\mathrm{RS}}}
\let\originalleft\left
\let\originalright\right
\renewcommand{\left}{\mathopen{}\mathclose\bgroup\originalleft}
\renewcommand{\right}{\aftergroup\egroup\originalright}
\newcommand{\fp}{\mathrm{FP}}
\newcommand{\rs}{\mathrm{RS}}
\global\long\def\dd{\mathrm{d}}
\global\long\def\rgz{\mathbb{R}_{+}}
\global\long\def\nifunB{\mathcal{D}_{0}}
\global\long\def\mmse{\mathsf{mmse}}
\global\long\def\fp{\mathrm{FP}}
\title{The Replica-Symmetric Prediction for Compressed Sensing with Gaussian Matrices is Exact} 
\author{
\IEEEauthorblockN{Galen Reeves\IEEEauthorrefmark{1}\IEEEauthorrefmark{2} and Henry D. Pfister\IEEEauthorrefmark{1}}\\
\thanks{The work of G.\ Reeves was supported in part by funding from the Laboratory for Analytic Sciences (LAS).  The work of H. Pfister was supported part by the NSF under Grant No.~1545143.  Any opinions, findings, conclusions, and recommendations expressed in this material are those of the authors and do not necessarily reflect the views of the sponsors.}
\IEEEauthorblockA{\IEEEauthorrefmark{1}Department of Electrical Engineering, Duke University\\ 
\IEEEauthorrefmark{2}Department of Statistical Science, Duke University}}
\begin{document}


\maketitle

\begin{abstract}
This paper considers the fundamental limit of compressed sensing for i.i.d.\ signal distributions and i.i.d.\ Gaussian measurement matrices. Its main contribution is a rigorous  characterization of the asymptotic mutual information (MI) and  minimum mean-square error (MMSE) in this setting. Under mild technical conditions, our results show that the limiting MI and MMSE are equal to the values predicted by the replica method from statistical physics. This resolves a well-known problem that has remained open for over a decade. 
\end{abstract}

\section{Introduction}

The canonical compressed sensing problem can be formulated as follows.  The signal is a random $n$-dimensional vector $X^n = (X_1,\ldots,X_n)$ whose entries are drawn independently from a common distribution $P_X$ with finite variance. The signal is observed using noisy linear measurements of the form
\begin{align}
Y_k = \langle A_k, X^n \rangle  + W_k,\notag 
\end{align}
where $\{A_k\}$ is a sequence of $n$-dimensional measurement vectors, $\{W_k\}$ is a sequence of standard Gaussian random variables, and $\langle \cdot, \cdot \rangle$ denotes the Euclidean inner product between vectors. The primary goal is to reconstruct $X^n$ from the set of $m$ measurements $\{(Y_k, A_k)\}_{k=1}^m$. Since the reconstruction problem is symmetric under simultaneous scaling of $X^n$ and $\{W_k\}$, the unit-variance assumption on $\{W_k\}$ incurs no loss of generality. In matrix form, the relationship between the signal and a set of $m$ measurements is given by
\begin{align}
Y^m = A^m X^n  + W^m \label{eq:model}
\end{align}
where $A^m$ is an $m \times n $ measurement matrix whose $k$-th row is $A_k$.

This paper analyzes the minimum mean-square error (MMSE) reconstruction in the asymptotic setting where the number of measurements $m$ and the signal length $n$ increase to infinity. The focus is on scaling regimes in which the measurement ratio $\delta_n = m/n$ converges to a number $\delta \in (0,\infty)$. The objective is to show that the normalized  mutual information (MI) and MMSE converge to limits, 
\begin{align*}
\cI_n (\delta_n) & \triangleq \frac{1}{n} I\left(X^n ;  Y^{m} \gmid A^{m}  \right)    \to \cI(\delta) \\
\cM_n (\delta_n) & \triangleq \frac{1}{n} \mmse\left(X^n \gmid Y^{m} , A^m \right)     \to \cM(\delta), 
\end{align*}
almost everywhere and to characterize these limits in terms of the measurement ratio $\delta$ and the signal distribution $P_X$. 
We note that all mutual informations are computed in nats

Using the replica method from statistical physics, Guo and Verd\'{u}~\cite{guo:2005} provide an elegant characterization of these limits in the setting of i.i.d.\ measurement matrices.  Their result was stated originally as a generalization of Tanaka's replica analysis of code-division multiple-access (CDMA) with binary signaling \cite{tanaka:2002}.  The replica method was also applied specifically to compressed sensing in~\cite{kabashima:2009,guo:2009, rangan:2012, reeves:2012,reeves:2012a, tulino:2013}.  The main issue, however, is that  the replica method is not rigorous. It requires an exchange of limits that is unjustified, and it requires the assumption of replica symmetry, which is unproven in the context of the compressed sensing problem. 

The main result of this paper is that replica prediction is correct for i.i.d.\ Gaussian measurement matrices provided that the signal distribution, $P_X$, has bounded fourth moment and satisfies a certain `single-crossing' property. The proof differs from previous approaches in that we first establish some properties of the finite-length MMSE and MI sequences, and then use these properties to uniquely characterize their limits.

\subsection{The Replica-Symmetric Prediction}
We now describe the results predicted by the replica method. For a  signal distribution $P_X$, the function $R : \reals_+^2 \goto \reals_+$  is defined as
\begin{align}
R(\delta, z) &= I_X\left( \frac{\delta}{1+z} \right)  + \frac{\delta}{2}\left[ \log\left( 1+ z\right) - \frac{ z}{ 1 +  z} \right], \notag
\end{align}
where $I_X(s)  = I\left( X  ; \sqrt{s} X + N \right)$  is the scalar mutual information function (in nats) of $X \sim P_X$  under independent Gaussian noise $N \sim \normal(0, 1)$ with signal-to-noise ratio $s \in \reals_+$~\cite{guo:2005,guo:2009}.
\begin{definition}\label{def:rs_limits} 
The replica-MI function $\IR\colon \rgz \to \rgz$ and the replica-MMSE function $\MR\colon \rgz \to \rgz$ are defined as
\begin{align}
\IR (\delta) &= \min_{z \ge 0} R(\delta, z) \notag\\
\MR (\delta) & \in \arg\min_{z \ge 0 } R(\delta, z). \notag
\end{align}
\end{definition}
\vspace{-0.5mm}

The function $\IR(\delta)$ is increasing because $R(\delta,z)$ is increasing in $\delta$ and it is concave because it is the pointwise minimum of concave functions.  The concavity implies that $\IR'(\delta)$ exists almost everywhere and is decreasing.  It can also be shown that $\MR(\delta)$ is also decreasing and, thus, continuous almost everywhere.  If the minimizer is not unique, then $\MR(\delta)$ may have jump discontinuities and may not be uniquely defined at those points; see Figure~\ref{fig:fp_curve}.

\begin{figure*}[!ht]
\centering
\begin{tikzpicture}
    	\node[anchor=south west,inner sep=0] at (0,0) {\includegraphics{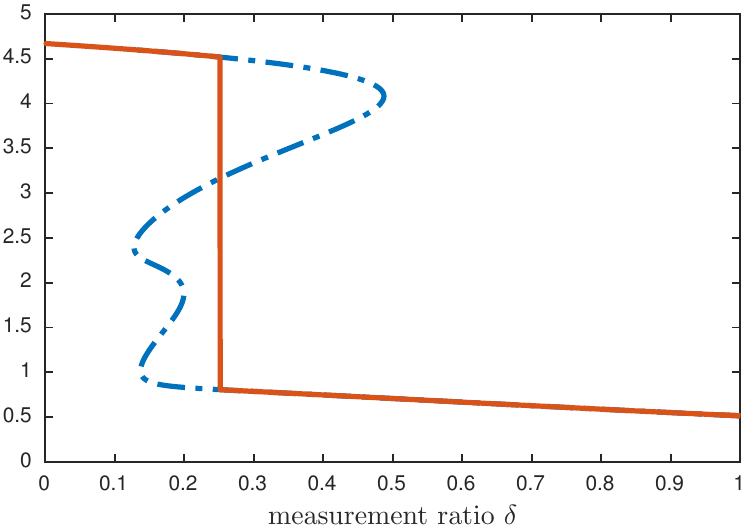}};
        \draw [<-,color = {rgb:red,0;green,0.447;blue,0.741}] (3.9, 4.2) -- (4.4,4.2);
        \node[ align =center, text = {rgb:red,0;green,0.447;blue,0.741}] at (5.6,4.2) {\small fixed-point \\ \small information curve};
        \draw [<-,color = {rgb:red,0.85;green,0.325;blue,0.098}] (2.3, 2.8) -- (2.8,2.8);
 	\node[ align =center, text = {rgb:red,0.85;green,0.325;blue,0.098}] at (4.3,2.8) { \small $\frac{1}{2}\log(1 +\MR(\delta))$};
\end{tikzpicture}
\hspace{.5cm}
\begin{tikzpicture}
        \node[anchor=south west,inner sep=0] at (0,0) {\includegraphics{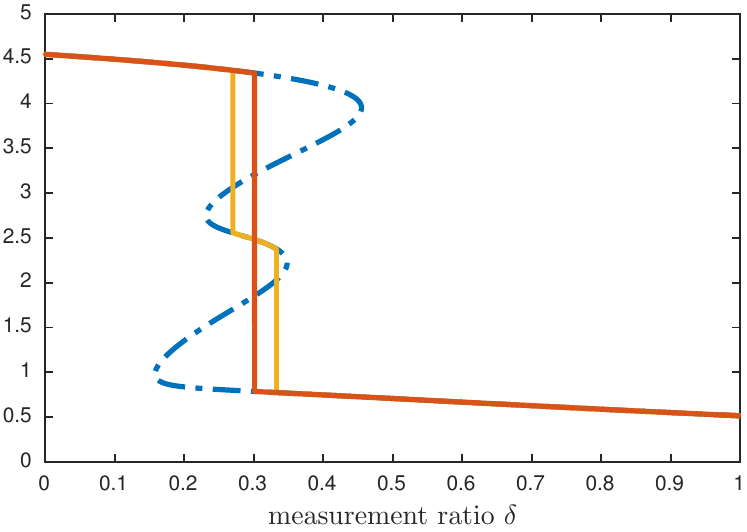}};
        \draw [<-,color = {rgb:red,0;green,0.447;blue,0.741}] (3.8, 4.3) -- (4.3,4.3);
        \node[ align =center, text = {rgb:red,0;green,0.447;blue,0.741}] at (5.5,4.3) {\small fixed-point \\ \small information curve};
        \draw [<-,color = {rgb:red,0.85;green,0.325;blue,0.098}] (2.65, 3.3) -- (3.15,3.3);
 	\node[ align =center, text = {rgb:red,0.85;green,0.325;blue,0.098}] at (4.7,3.2) {\small $\frac{1}{2}\log(1 +\MR(\delta))$};
         \draw [<-,color = {rgb:red,0.929;green,0.694;blue,0.125}] (2.9 , 2) -- (3.4,2);
 	\node[align =center, text ={rgb:red,0.929;green,0.694;blue,0.125}] at (5.1,2) {\small $\frac{1}{2}\log(1 +g(\delta))$ for a \\ {\small different function $g \in \cG$}};
\end{tikzpicture}

\caption{\label{fig:fp_curve} 
Plot of the replica-MMSE as a function of the measurement ratio $\delta$. The signal distribution is given by a three-component Gaussian mixture of the form $P_X = 0.4 \normal(0, 5) + \alpha \normal(40,5) + (0.6-\alpha) \normal(220,5)$. In the left panel, $\alpha = 0.1$ and the distribution satisfies the single-crossing property. In the right panel,  $\alpha = 0.3$ and the distribution does not satisfy the single-crossing property. The fixed-point information curve (dashed blue line) is given by $\frac{1}{2} \log(1 + z)$ where $z$ satisfies the fixed-point equation $R_z(\delta,z) = 0$.}
\vspace*{-.25cm}
\end{figure*}

\vspace{-0.5mm}
\subsection{Statement of Main Result} 

In order to state our results, we need some further definitions. Let $R_z(\delta,z) = \frac{\partial}{\partial z} R(\delta, z)$ denote the partial derivative of $R(\delta, z)$ with respect to $z$. The \emph{fixed-point curve} FP is the set of $(\delta, z)$ pairs where $z$ is a stationary point of $R(\delta, z)$, i.e.,
\begin{align}
\mathrm{FP} =\left \{ (\delta, z) \in \reals_+^{2} \, : \, R_z(\delta, z) = 0 \right\}. \notag
\end{align}
To emphasize the connection with mutual information, we often plot this curve using the change of variables $z\mapsto \frac{1}{2}\log(1+z)$.
The resulting curve, $(\delta,\frac{1}{2}\log(1+z))$, is called the 
 \emph{fixed-point information curve}; see Figure~\ref{fig:fp_curve}.

\begin{definition}[Single-Crossing Property] Informally, a signal distribution $P_X$ has the single-crossing property if the replica-MMSE crosses the fixed-point curve FP at most once. A formal definition of the single-crossing property is given in Section~\ref{sec:single_crossing}. 
\end{definition}

\begin{assumption}[IID Gaussian Measurements]
The rows of the measurement matrix $\{A_k\}$ are independent Gaussian vectors with mean zero and covariance $n^{-1} I_n$. Furthermore, the noise $\{W_k\}$ is i.i.d.\ Gaussian with mean zero and variance one. 
\end{assumption} 

\begin{assumption}[IID Signal Entries] 
The signal entries $\{X_i\}$ are independent copies of a random variable $X \sim P_X$ with bounded fourth moment $\ex{ X^4} \le B$.
\end{assumption}

\begin{assumption}[Single-Crossing Property]
The signal distribution $P_X$ satisfies the single-crossing property.
\end{assumption}

\begin{theorem} \label{thm:limits} Under Assumptions 1-3, we have
\begin{enumerate}[(i)]

\item The sequence of MI functions $\cI_n (\delta)$ converges to the replica prediction.  In other words, for all $\delta \in \rgz$,
 \begin{align}
 \lim_{n \goto \infty}  \cI_n (\delta ) = \IR(\delta). \notag
\end{align}

\item The sequence of MMSE functions $\cM_n (\delta)$ converges almost everywhere to the replica prediction.  In other words, for all continuity points of $\MR(\delta)$, 
 \begin{align}
  \lim_{n \goto \infty}  \cM_n(\delta )=  \MR(\delta). \notag
\end{align}

\end{enumerate}

\end{theorem}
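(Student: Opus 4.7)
My plan is to establish three finite-$n$ properties of the sequences $\cI_n(\delta)$ and $\cM_n(\delta)$, and then combine them via the single-crossing hypothesis to pin down the limit. The first is a structural package (monotonicity, concavity, and an I-MMSE integral identity). The second is an \emph{envelope inequality}
\begin{align*}
\cI_n(\delta) \;\le\; R(\delta,z) + o(1) \qquad \text{for every } z \ge 0,
\end{align*}
which immediately yields $\limsup_n \cI_n(\delta) \le \IR(\delta)$. The third, and most delicate, is the \emph{approximate fixed-point property}
\begin{align*}
R_z\bigl(\delta,\cM_n(\delta)\bigr) \;=\; o(1) \qquad \text{as } n \to \infty,
\end{align*}
which forces any limit point of $\cM_n$ to lie on the curve $\mathrm{FP}$.

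\textbf{Structural package.} Monotonicity of $\cI_n$ in $\delta$ is immediate, and concavity follows from the I-MMSE theorem applied to each added Gaussian measurement: the information gained by the $(m{+}1)$-st row is $\tfrac12\,\ex{\log(1+T_m)}$ for a posterior variance $T_m$ whose expectation is non-increasing in $m$. Integrating in $\delta$ and controlling the Jensen gap between $\ex{\log(1+T)}$ and $\log(1+\ex{T})$ via the fourth-moment bound in Assumption~2, one obtains
\begin{align*}
\cI_n(\delta) \;=\; \int_0^\delta \tfrac{1}{2}\log\bigl(1+\cM_n(t)\bigr)\,dt + o(1).
\end{align*}
The same assumption provides a uniform linear-MMSE upper bound on $\cM_n$, which serves as the integrable envelope for dominated convergence later. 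The envelope inequality itself can be obtained by a Gaussian interpolation (Guerra-type) coupling the measurement channel to a decoupled scalar direct channel at effective SNR $\delta/(1+z)$, using the Nishimori identity of the planted Gaussian model to sign the interpolation derivative.

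\textbf{Approximate fixed-point property.} This step is where I expect the main technical obstacle. The plan is a cavity calculation: couple the measurement model with an auxiliary direct scalar channel $\sqrt{s}\,X_i+\tilde{N}_i$ on each coordinate, form the enlarged mutual information $\cI_n(\delta,s)$, and differentiate at $s=0$. I-MMSE on the scalar side returns $\tfrac12\,\cM_n(\delta)$, while rotational invariance of the i.i.d.\ Gaussian $A^m$ allows the marginal posterior of a single coordinate to be rewritten, via a cavity decoupling, as a scalar Gaussian channel with effective SNR $\delta/(1+\cM_n(\delta))$, up to fluctuations. Matching the two expressions yields $\cM_n(\delta)=\mmse_X\bigl(\delta/(1+\cM_n(\delta))\bigr)+o(1)$, which from the explicit form of $R_z$ is equivalent to $R_z(\delta,\cM_n(\delta))=o(1)$. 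Making the decoupling rigorous for a general non-Gaussian, non-discrete prior using only a fourth-moment bound is the heart of the technical work; it will require quantitative Gaussian concentration inequalities and moment estimates on posterior means and covariances.

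\textbf{Identification of the limit.} Every subsequential limit $\cM^*(\delta)$ of $\cM_n(\delta)$ is then a stationary point of $R(\delta,\cdot)$. Along any such stationary branch the envelope theorem gives $\tfrac{d}{d\delta}R(\delta,\cM^*(\delta)) = R_\delta(\delta,\cM^*(\delta)) = \tfrac12\log(1+\cM^*(\delta))$, and integrating from $\delta=0$ together with the structural identity yields $\lim_n\cI_n(\delta)=R(\delta,\cM^*(\delta))$. The envelope inequality forces $R(\delta,\cM^*(\delta))\le\IR(\delta)=\min_z R(\delta,z)$, so $\cM^*(\delta)$ must be a \emph{minimizing} stationary point. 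Assumption~3 (single crossing) then identifies $\cM^*(\delta)=\MR(\delta)$ at every continuity point of $\MR$, and dominated convergence, through the linear-MMSE bound from Assumption~2, upgrades the pointwise convergence to $\cI_n(\delta)\to\IR(\delta)$ at every $\delta$, completing both parts of Theorem~\ref{thm:limits}.
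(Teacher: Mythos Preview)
Your proposal diverges from the paper's route in two places, and one of them contains a real gap.

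\textbf{The identification step is circular.} You write that ``integrating from $\delta=0$ together with the structural identity yields $\lim_n\cI_n(\delta)=R(\delta,\cM^*(\delta))$,'' and then use the envelope inequality to force $\cM^*(\delta)$ to be a global minimizer. But the formula
\[
\int_0^\delta \tfrac{1}{2}\log\bigl(1+\cM^*(t)\bigr)\,dt \;=\; R\bigl(\delta,\cM^*(\delta)\bigr)
\]
holds along a stationary branch only when $\cM^*$ is continuous; if $\cM^*$ jumps at some $\delta_0$ from a fixed point $z_+$ to a fixed point $z_-$ with $R(\delta_0,z_+)\ne R(\delta_0,z_-)$, the left side is continuous while the right side jumps, and the identity fails. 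This is precisely the phase-transition scenario the theorem targets, so the argument assumes what it is trying to prove. Concretely: the envelope inequality $\cI_*\le\IR$ only rules out $\cM^*$ jumping \emph{later} than $\MR$ (since that would force $\cI_*>\IR$ on an interval); it does \emph{not} rule out $\cM^*$ jumping \emph{earlier}, which yields $\cI_*\le\IR$ strictly with no contradiction. The paper closes this direction with a completely different ingredient you do not have: a two-sided boundary condition $|\cI_n(\delta)-\IR(\delta)|\le C\delta^{-1/2}$ for large $\delta$ (Theorem~\ref{thm:I_m_boundary}), obtained from singular-value control of the Gaussian matrix. Combined with the single-crossing property (Lemmas~\ref{lem:delta_rs_alt}--\ref{lem:G_unique}), this forces any non-increasing fixed-point solution satisfying the boundary condition to equal $\MR$ almost everywhere. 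Your invocation of single-crossing at the end does not substitute for this, because without the boundary condition there is nothing for single-crossing to bite on.

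\textbf{The I-MMSE identity is not a Jensen-gap estimate.} You describe the relation $\cI_n'(\delta)\approx\tfrac12\log(1+\cM_n(\delta))$ as ``controlling the Jensen gap between $\ex{\log(1+T)}$ and $\log(1+\ex{T})$ via the fourth-moment bound.'' The gap here is the Kullback--Leibler non-Gaussianness of the conditional law of a \emph{new} measurement $Y_{m+1}$ given all past data $(Y^m,A^{m+1})$, and bounding it requires a conditional central limit theorem for random projections of the posterior error (Lemma~\ref{lem:cclt}), which in turn needs weak decoupling of the posterior (Lemma~\ref{lem:SE_bound_1}) and concentration of the posterior variance $V_{m,n}$ about $M_{m,n}$ (Lemma~\ref{lem:post_var_dev_bound}). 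The fourth-moment bound is used throughout, but nowhere is the argument a direct Jensen estimate; the bulk of the paper's technical work is exactly here. Your cavity sketch for the fixed-point relation is closer in spirit to the paper's augmented-MMSE construction (Lemmas~\ref{lem:MMSE_aug_alt}--\ref{lem:MMSE_aug_bound}), but it too ultimately rests on the same conditional-Gaussianity estimate $\Delta_{m,n}\to 0$.

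\textbf{What is genuinely different.} Your Guerra-interpolation envelope bound $\limsup_n\cI_n(\delta)\le\IR(\delta)$ is not in the paper at all; the paper never interpolates. If you can establish it, it is a useful one-sided replacement for part of the boundary-condition argument, but as explained above it cannot by itself locate the phase transition.
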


\begin{remark}
The primary contribution of Theorem~\ref{thm:limits} is for the case where $\MR(\delta)$ has a discontinuity.  This occurs, for example, in applications such as compressed sensing with sparse priors and CDMA with finite alphabet signaling.
For the special case where $\MR(\delta)$ is continuous, the validity of the replica prediction can also be established by combining the AMP analysis with the I-MMSE relationship~\cite{guo:2006, donoho:2009a,donoho:2011,  bayati:2011,bayati:2012a}.
\end{remark}

\begin{remark}
For a given signal distribution $P_X$ the single-crossing property can be verified by numerically evaluating the replica-MMSE and checking whether it crosses the fixed-point curve more than once. 
\end{remark}

\subsection{Related Work} 
The replica method was developed originally to study mean-field approximations in spin glasses~\cite{edwards:1975,mezard:2009}.
It was first applied to linear estimation problems in the context of CDMA wireless communication \cite{tanaka:2002, muller:2003, guo:2005}, with subsequent work focusing on the compressed sensing directly \cite{guo:2009, kabashima:2009, rangan:2012, reeves:2012,reeves:2012a,  tulino:2013}. 

Within the context of compressed sensing, the results of the replica method have been proven rigorously in a number of settings. One example is given by message passing on matrices with special structure, such as sparsity \cite{montanari:2006, guo:2006, baron:2010} or spatial coupling~\cite{kudekar:2010, krzakala:2012, donoho:2013}. 
However, in the case of i.i.d.\ matrices, the results are limited to signal distributions with a unique fixed point \cite{donoho:2009a, bayati:2011} (e.g., Gaussian inputs~\cite{verdu:1999,tse:1999}). For the special case of i.i.d.\ matrices with binary inputs, it has also been shown that the replica prediction provides an upper bound  for the asymptotic mutual information \cite{korada:2010}. Bounds on the locations of discontinuities in the MMSE with sparse priors have also been obtain by analyzing the problem of approximate support recovery~\cite{reeves:2012, reeves:2012a,  tulino:2013}. 

Recent work by Huleihel and Merhav \cite{huleihel:2016} addresses the validity of the replica MMSE directly in the case of Gaussian mixture models, using tools from statistical physics and random matrix theory \cite{merhav:2010,merhav:2011a}.

\subsection{Notation} 

We use $C$ to denote an absolute constant and $C_\theta$ to denote a number that depends on a parameter $\theta$. In all cases, the numbers $C$ and $C_\theta$ are positive and finite, although their values change from place to place. The Euclidean norm is denoted by $\|\cdot \|$.  The positive part of a real number $x$ is denoted by $(x)_+ \triangleq \max(x,0)$. The nonnegative real line $[0,\infty)$ is denoted by $\reals_+$ and the positive integers $\{1,2,\cdots\}$ are denoted by $\integers$. For each  $n \in \integers$ the set $\{1,2,\cdots , n\}$ is denoted by $[n]$.  

The  joint distribution of the random variables $X,Y$ is denoted by $P_{X,Y}$ and the marginal distributions are denoted by $P_X$ and $P_Y$. The conditional distribution of $X$ given $Y= y$ is denoted by $P_{X\mid Y=y}$ and the conditional distribution of $X$ corresponding to a random realization of $Y$ is denoted by $P_{X| Y}$.  The expectation over a single random variable is denoted by $\bEx_{X}$.  For example, this implies that $\bEx \left[ f(X,Y) | Y \right] = \bEx_X \left[ f(X,Y) \right]$.

Using this notation, the mutual information between $X$ and $Y$ can be expressed in terms of the expected Kullback-Leibler divergence as follows: 
\begin{align}
I(X;Y) &= \DKL{P_{X,Y}}{P_X  \times P_Y} \notag \\
& =  \bEx\left[ \DKL{P_{X\mid Y}}{ P_X} \right] \notag\\
&  =  \bEx\left[ \DKL{P_{Y\mid X}}{ P_Y} \right],\notag
\end{align}
where the expectation in the second line is with respect to $Y$ and the expectation in the third line is with respect to $X$.

The conditional variance of a random variable $X$ given $Y$ is denoted by
\begin{align}
\var(X \gmid Y) &= \ex{ ( X - \ex{X  \gmid Y})^2\; \middle |  \;  Y}, \notag
\end{align}
and the conditional covariance matrix of a random vector $X$ given $Y$ is denoted by
\begin{align}
\cov(X \gmid Y) &= \ex{ ( X - \ex{X  \gmid Y}) ( X - \ex{X \gmid Y})^T \; \middle |  \;  Y}. \notag
\end{align}
The conditional variance and conditional covariance matrix are random because they are functions of the random conditional distribution $P_{X\mid Y}$. 

The minimum mean-square error (MMSE) of $X$ given $Y$ is defined to be the expected squared difference between $X$ and its conditional expectation and is denoted by
\begin{align}
\mmse(X \gmid Y) & = \ex{ \| X - \ex{X \gmid Y}\|^2}.  \notag
\end{align}
Since the expectation is taken with respect to both $X$ and $Y$, the MMSE is a deterministic functions of the joint distribution $P_{X,Y}$. The MMSE can also be expressed in terms of the expected trace of the conditional covariance matrix:
\begin{align}
\mmse(X \gmid Y) & =  \bEx\left[ \gtr\left( \cov(X \gmid Y)\right) \right]. \notag
\end{align}


\section{Overview of Main Steps in Proof}


We begin with some additional definitions.  The finite-length MI sequence $I : \integers^2 \to \reals_+$ and MMSE sequence $M : \integers^2 \to \reals_+$ are defined according to
\begin{align}
I_{m,n} &\triangleq I(X^n; Y^m  \! \mid \! A^m), \notag\\
M_{m,n} &\triangleq  \frac{1}{n} \mmse(X^n \! \mid  \! Y^m, A^m ),\notag
\end{align}
where the relationship between the $n$-dimensional signal $X^n$, $m$-dimensional measurements $Y^m$, and $m \times n$ measurement matrix $A^m$ is given by the measurement model \eqref{eq:model}. Furthermore, the first and second order MI difference sequences are defined according to
\begin{align}
I'_{m,n} &\triangleq I_{m+1,n} - I_{m,n} \notag\\
I''_{m,n} &\triangleq   I'_{m+1,n} - I'_{m,n}. \notag
\end{align}
To simplify notation we will often drop the explicit dependence on the signal length $n$ and simply write  $I_m$, $M_m$, $I'_m$, and $I''_m$. 

\subsection{Properties of Finite-Length Sequences} 

At its foundation, our proof relies on certain relationships between the MI and MMSE sequences.  Observe that by the chain rule for mutual information, 
\begin{align}
\underbrace{I(X^n;  Y^m  \gmid  A^m)}_{I_{m,n}}
& = \sum_{k=1}^{m-1} \underbrace{I(X^n ; Y_{k+1} \gmid Y^k, A^{k+1})}_{I'_{k,n}}. \notag 
\end{align}
Here, the conditioning in the mutual information on the right-hand side depends only on $A^{k+1}$ because the measurement vectors are generated independently of the signal.

The above decomposition shows that the MI difference is given by  $I'_{m,n} = I(X^n ; Y_{m+1} \gmid Y^m, A^{m+1})$. In other words, it is the mutual information between the signal and a new measurement $Y_{m+1}$,  conditioned on the previous data $(Y^m, A^{m+1})$.  One of the key steps in our proof is to show that the MI difference and MMSE satisfy the following relationship almost everywhere
\begin{align}
I'_{m,n} \approx \frac{1}{2} \log\left(1 + M_{m,n}\right). \label{eq:Ip_MMSE_inq}
\end{align}
Our approach relies on the fact that the gap between the right and left sides of \eqref{eq:Ip_MMSE_inq} can be viewed as a measure of the non-Gaussianness of the conditional distribution of the new measurement. By relating this non-Gaussianness to certain properties of the posterior distribution, we are able to show that \eqref{eq:Ip_MMSE_inq} is tight whenever the second order MI difference sequence is small. The details of these steps are given in Section~\ref{proof:thm:I_MMSE_relationship}.

Another important relationship that is used in our proof is the following fixed-point identity for the MMSE
\begin{align}
M_{m,n} \approx \mmse_X\left(  \frac{m/n}{1 + M_{m,n}} \right) . \label{eq:M_fixed_point} 
\end{align}
In words, this says that the MMSE of the compressed sensing problem is approximately equal to that of a scalar problem whose signal-to-noise ratio  is a function of the measurement rate. In Section~\ref{proof:thm:MMSE_fixed_point} it is shown that the tightness in \eqref{eq:M_fixed_point}  can be bounded in terms of the tightness of  \eqref{eq:Ip_MMSE_inq}.

\subsection{Asymptotic Constraints} 

The previous subsection focused on relationships between the finite-length MI and MMSE sequences. To characterize these relationships in the asymptotic setting of interest, the finite-length sequences are extended to functions of a continuous parameter $\delta \in \reals_+$ according to
\begin{align}
\cI'_n(\delta) & = I'_{\lfloor \delta n \rfloor, n}   \notag\\
\cI_n(\delta) & = \int_0^\delta \cI'_n(\gamma)\,  \dd \gamma \notag\\
\cM_n(\delta) &= M_{\lfloor \delta n \rfloor,n}. \notag
\end{align}
This choice of interpolation has the convenient property that the MI function $\cI_n$ is continuous and differentiable almost everywhere. Furthermore, by construction, $\cI_n$ corresponds to the normalized mutual information and satisfies $\cI_n\left( \tfrac{m}{n} \right) = \frac{1}{n} I_{m, n}$ for all integers $m$ and $n$. 

With this notation in hand, we are now ready to state two of the main theorems in this paper. These theorems provide precise bounds on the relationships given in \eqref{eq:Ip_MMSE_inq} and \eqref{eq:M_fixed_point}. The proofs are given in Section~\ref{proof:thm:I_MMSE_relationship} and Section~\ref{proof:thm:MMSE_fixed_point}.

\begin{theorem}\label{thm:I_MMSE_relationship}
Under Assumptions 1 and 2, the MI and MMSE functions satisfy
\begin{align}
   \int_{0}^{\delta} \left| \cI_{n}'(\gamma)   - \frac{1}{2} \log\left(1 + \cM_{n}(\gamma)  \right) \right| \dd \gamma &  \le C_{B, \delta}  \cdot n^{-r}, \notag
\end{align}
for all $n \in \integers$ and $\delta \in \reals_+$ where $r \in (0,1)$ is a universal constant. 
\end{theorem}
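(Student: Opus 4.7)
The plan is to drop the absolute value by first establishing $I'_{m,n}\le \tfrac12\log(1+M_{m,n})$ pointwise, then to decompose the resulting non-negative gap into a ``non-Gaussianness gap'' and a ``Jensen gap'' and to control each by a different concavity or concentration argument. For the one-sided bound, condition on $\mathcal{P}_m=(Y^m,A^{m+1})$: because $Y_{m+1}=\langle A_{m+1},X^n\rangle+W_{m+1}$ and the additive noise is Gaussian, the Gaussian-maximizes-entropy bound gives $I'_{m,n}\le \mathbb{E}\bigl[\tfrac12\log(1+V_{m,n})\bigr]$ with $V_{m,n}=A_{m+1}^\top\Sigma_{m,n}A_{m+1}$ and $\Sigma_{m,n}=\cov(X^n\gmid Y^m,A^m)$; Jensen's inequality then upgrades this to $I'_{m,n}\le \tfrac12\log(1+M_{m,n})$ since $\mathbb{E}[V_{m,n}]=M_{m,n}$.

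Next I would decompose the gap as $\tfrac12\log(1+M_{m,n})-I'_{m,n}=J_{m,n}+N_{m,n}$, where $J_{m,n}=\tfrac12\log(1+M_{m,n})-\mathbb{E}\bigl[\tfrac12\log(1+V_{m,n})\bigr]$ is a Jensen gap and $N_{m,n}$ is the expected KL divergence between the law of $Y_{m+1}$ given $\mathcal{P}_m$ and the Gaussian with matched mean and variance $1+V_{m,n}$. Expanding $I(X^n;Y_{m+1},Y_{m+2}\gmid Y^m,A^{m+2})$ by the chain rule in two different ways, together with the conditional independence $Y_{m+1}\perp Y_{m+2}\mid(X^n,A^{m+2})$ and the exchangeability of the two fresh measurements, I would derive the identity
\[
|I''_{m,n}|=I(Y_{m+1};Y_{m+2}\gmid Y^m, A^{m+2}),
\]
which expresses that the two extra measurements become conditionally independent precisely when the posterior of $X^n$ yields Gaussian-like projections. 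The heart of the argument is then to show $N_{m,n}\le |I''_{m,n}|+\varepsilon_{m,n}$; that is, that the non-Gaussianness of the one-dimensional marginal of $Y_{m+1}$ is controlled by the conditional mutual information between $Y_{m+1}$ and an independent ``replica'' measurement $Y_{m+2}$.

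With this in hand, the $N$-terms are handled by the concavity of $m\mapsto I_{m,n}$: since $I''_{m,n}\le 0$, telescoping gives $\sum_{m=0}^{\lfloor\delta n\rfloor-1}|I''_{m,n}|=I'_{0,n}-I'_{\lfloor\delta n\rfloor,n}\le I'_{0,n}\le \tfrac12\log(1+\var(X))$, which is $O(1)$, so that $\tfrac1n\sum_m N_{m,n}=O(n^{-1})$ up to the $\varepsilon_{m,n}$ remainders. For the Jensen gap, a direct second-moment computation for Gaussian quadratic forms gives $\var(V_{m,n}\gmid\Sigma_{m,n})=2n^{-2}\gtr(\Sigma_{m,n}^2)$; the bounded fourth moment $\mathbb{E}X^4\le B$ then supplies the operator-norm control on $\Sigma_{m,n}$ needed to obtain $\mathbb{E}\bigl[\var(V_{m,n}\gmid\Sigma_{m,n})\bigr]=O(n^{-1})$, while the residual fluctuation of $n^{-1}\gtr(\Sigma_{m,n})$ across realizations of $\mathcal{P}_m$ is absorbed into the rate through the monotonicity of the MMSE sequence and an averaging argument over $m$. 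Applying the Lipschitz estimate $|\log(1+x)-\log(1+y)|\le|x-y|$ then gives $\tfrac1n\sum_m J_{m,n}=O(n^{-\alpha})$ for some $\alpha>0$, and combining the two estimates produces the stated $C_{B,\delta}\,n^{-r}$ bound for some universal $r\in(0,1)$.

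I expect the main obstacle to be the inequality $N_{m,n}\le |I''_{m,n}|+\varepsilon_{m,n}$: converting the \emph{equality} $|I''_{m,n}|=I(Y_{m+1};Y_{m+2}\gmid Y^m,A^{m+2})$ into a quantitative bound on the expected KL divergence between a one-dimensional non-Gaussian law and its Gaussian surrogate. This likely requires either a two-dimensional analogue of de~Bruijn's identity relating the non-Gaussianness of $Y_{m+1}$ to the conditional correlation of $(Y_{m+1},Y_{m+2})$, or an explicit moment-comparison using $\mathbb{E}X^4\le B$ to show that $A_{m+1}^\top X^n$ given past has Gaussian-like characteristic function up to a polynomially small remainder.
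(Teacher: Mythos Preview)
Your decomposition $\Delta_{m,n}=J_{m,n}+N_{m,n}$ and the identity $|I''_{m,n}|=I(Y_{m+1};Y_{m+2}\mid Y^m,A^{m+2})$ match the paper's framework, and the paper does bound the non-Gaussianness piece via the second-order difference (though it obtains only $\mathbb{E}[\Delta^P_{m,n}]\le C_B|I''_{m,n}|^{1/10}$, through a quantitative conditional CLT plus a weak-decoupling estimate $\tfrac{1}{n}\mathbb{E}\|\Sigma_{m,n}\|_F\le C_B|I''_{m,n}|^{1/4}$, rather than the linear bound $N_{m,n}\le |I''_{m,n}|$ you hope for). Where your plan has a genuine gap is the treatment of the Jensen term, in two places.

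First, the claim that ``$\mathbb{E}X^4\le B$ supplies the operator-norm control on $\Sigma_{m,n}$ needed to obtain $\mathbb{E}[\var(V_{m,n}\mid\Sigma_{m,n})]=O(n^{-1})$'' is not justified: the fourth-moment bound gives nothing about the off-diagonal entries of the posterior covariance, so $n^{-2}\gtr(\Sigma_{m,n}^2)$ need not be small a priori. The paper controls exactly this quantity, $\tfrac1n\|\Sigma_{m,n}\|_F$, but it does so by the identity $\tfrac{1}{n^2}\|\Sigma_{m,n}\|_F^2=\mathbb{E}_{A_{m+1},A_{m+2}}[|\cov(Y_{m+1},Y_{m+2}\mid Y^m,A^{m+2})|^2]$ together with a covariance-to-mutual-information bound and the identity for $|I''_{m,n}|$; there is no shortcut via moment assumptions on $X$.

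Second, and more seriously, the ``residual fluctuation of $n^{-1}\gtr(\Sigma_{m,n})$ across realizations of $\mathcal{P}_m$'' cannot be absorbed by monotonicity of $m\mapsto M_{m,n}$ and averaging over $m$. Monotonicity of the \emph{expectation} says nothing about the concentration of the random posterior variance $V_m=n^{-1}\gtr(\Sigma_{m,n})$ around $M_m$ at a fixed $m$, and there is no telescoping structure in the sum $\sum_m J_{m,n}$ that would produce cancellation. In the paper this is the single hardest step: one writes $\tfrac12\log(1+V_m)=J_m+\Delta^P_m$ where $J_m=\mathbb{E}[\imath(X^n;Y_{m+1}\mid Y^m,A^{m+1})\mid Y^m,A^m]$ is the posterior MI difference, bounds the deviation of $\tfrac{1}{\ell}\sum_{k=m}^{m+\ell-1}J_k$ by the variance of the mutual information density $\imath(X^n;Y^m\mid A^m)$, and controls that variance by the Gaussian Poincar\'e inequality (for the $A$-randomness) and the Efron--Stein inequality (for the $X$- and $W$-randomness). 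Without this concentration-of-MI-density argument, the Jensen gap simply does not close; your proposal is missing it entirely.
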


\begin{theorem}\label{thm:MMSE_fixed_point} 
Under Assumptions 1 and 2, the MMSE function satisfies
\begin{align}
  \int_{0}^{\delta} \left| \cM_{n}(\gamma)    - \mmse_X\left( \frac{\gamma  }{1 + \cM_n(\gamma) }    \right) \right| \dd \gamma&  \le C_{B,\delta} \cdot n^{-r}, \notag
\end{align}
for all $n \in \integers$ and $\delta \in \reals_+$ where $r \in (0,1)$ is a universal constant. 
\end{theorem}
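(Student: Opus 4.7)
The plan is to deduce Theorem~\ref{thm:MMSE_fixed_point} from Theorem~\ref{thm:I_MMSE_relationship} via an auxiliary-channel argument that couples the compressed-sensing observation model to a pure scalar Gaussian channel. Concretely, I would augment the model with an independent per-coordinate scalar observation $\tilde{Y}_i = \sqrt{s}\, X_i + N_i$, $N_i \sim \normal(0,1)$, and write $\tilde{\cI}_n(s,\gamma)$ and $\tilde{\cM}_n(s,\gamma)$ for the resulting normalized MI and per-coordinate MMSE. These reduce to $\cI_n(\gamma)$ and $\cM_n(\gamma)$ at $s=0$ and to $I_X(s)$ and $\mmse_X(s)$ at $\gamma=0$. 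Conditioning on $\tilde{Y}^n$ turns the augmented problem into a compressed-sensing problem with an i.i.d.\ tilted prior $P_{X \mid \tilde{Y}}$, which retains a uniform fourth-moment bound in expectation (Gaussian side-observations cannot inflate prior moments on average), so Theorem~\ref{thm:I_MMSE_relationship} applies to the augmented system with the same rate up to a constant depending on $B$ and $\delta$.

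The scalar I-MMSE identity yields the exact relation $\partial_s \tilde{\cI}_n(s,\gamma) = \tfrac{1}{2}\tilde{\cM}_n(s,\gamma)$. Integrating in $s$ from the $s=0$ boundary, integrating in $\gamma$ from the $\gamma=0$ boundary and applying the (extended) Theorem~\ref{thm:I_MMSE_relationship} to replace each $\gamma$-derivative of $\tilde{\cI}_n$ by $\tfrac{1}{2}\log(1+\tilde{\cM}_n)$, subtracting the $s=0$ version, and using $I_X(s) = \int_0^s \tfrac{1}{2}\mmse_X(u)\, du$, I obtain the key integrated identity
\[
\int_0^s \bigl[\tilde{\cM}_n(u,\gamma) - \mmse_X(u)\bigr]\, du \;=\; \int_0^\gamma \bigl[\log(1+\tilde{\cM}_n(s,\gamma')) - \log(1+\cM_n(\gamma'))\bigr]\, d\gamma' \;+\; O_{B,\delta}(n^{-r}).
\]
Both sides are non-positive, since $\tilde{\cM}_n(s,\gamma) \le \min\{\mmse_X(s), \cM_n(\gamma)\}$ by the data-processing inequality; this identity is essentially a two-parameter conservation law that trades scalar-MMSE loss (LHS) for CS MI-increment loss (RHS).

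To extract the fixed-point bound, I would specialize $s$ to the ``characteristic'' $s(\gamma) := \gamma/(1+\cM_n(\gamma))$, so that the scalar MMSE $\mmse_X(s(\gamma))$ is exactly the prediction appearing in Theorem~\ref{thm:MMSE_fixed_point}. Using the monotonicity of $\tilde{\cM}_n$ in both arguments together with the concavity of $\log(1+\cdot)$, a Taylor expansion around $\cM_n(\gamma')$ converts both sides of the identity into expressions proportional to $|\cM_n(\gamma') - \mmse_X(\gamma'/(1+\cM_n(\gamma')))|$, integrated against a positive kernel; equating them yields the claimed $L^1$ bound.

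The main obstacle will be the careful integration along this characteristic curve. Because $\cM_n(\gamma)$ is only piecewise constant in $\gamma$ (via the floor $\lfloor \gamma n\rfloor$), the curve $s(\gamma)$ is only piecewise smooth, and the $L^1$ bound from the extended Theorem~\ref{thm:I_MMSE_relationship} must be carefully aligned via a Fubini-style change of variables so that the $n^{-r}$ rate is preserved. A secondary technical point is verifying that the constant in the extended Theorem~\ref{thm:I_MMSE_relationship} is uniform in $(s,y)$; this follows from the fact that $\bEx[\bEx[X^4 \mid \tilde{Y}]] = \bEx[X^4] \le B$, so on average the tilted prior's fourth moment is controlled.
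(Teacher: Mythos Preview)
Your interpolation framework is different from the paper's route, and the two-parameter identity you derive is plausible, but the final extraction step has a genuine gap. After you reach
\[
\int_0^{s} \bigl[\tilde{\cM}_n(u,\gamma) - \mmse_X(u)\bigr]\, du \;=\; \int_0^\gamma \bigl[\log(1+\tilde{\cM}_n(s,\gamma')) - \log(1+\cM_n(\gamma'))\bigr]\, d\gamma' \;+\; O_{B,\delta}(n^{-r}),
\]
you set $s = \gamma/(1+\cM_n(\gamma))$ and assert that a Taylor expansion ``converts both sides\ldots into expressions proportional to $|\cM_n(\gamma') - \mmse_X(\gamma'/(1+\cM_n(\gamma')))|$.'' This does not follow. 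The identity says that the \emph{difference} of two order-one integrals is $O(n^{-r})$; neither integral is individually small, and the fixed-point discrepancy you want does not appear in either integrand at generic $(u,\gamma')$. For the characteristic choice $s(\gamma)$ to force the fixed-point relation you would need an additional mechanism --- for instance, an adaptive path along which $\tilde{\cM}_n$ is held constant, in the spirit of adaptive interpolation --- and supplying that mechanism is precisely the missing hard step. Your ``secondary'' point is also underjustified: several lemmas behind Theorem~\ref{thm:I_MMSE_relationship} (e.g.\ the permutation-invariance arguments in Lemma~\ref{lem:Im_bounds} and in the variance bounds of Section~\ref{sec:IMI_concentration}) use the i.i.d.\ structure of the prior, not merely a product prior with an averaged fourth-moment bound, so the extension to the tilted prior $P_{X\mid \tilde Y}$ is not a black-box application.

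The paper avoids all of this with a one-step augmentation. It introduces a \emph{single} extra coordinate $X_{n+1}$ and a single extra measurement $Z_{m+1} = Y_{m+1} + \sqrt{G_{m+1}}\, X_{n+1}$, with $G_{m+1} \sim \tfrac{1}{n}\chi^2_{m+1}$ independent. The augmented MMSE $\widetilde{M}_{m,n}$ of $X^n$ is sandwiched between $M_{m+1,n}$ and $M_{m,n}$, hence within $C_B|I''_{m,n}|^{1/2}$ of $M_{m,n}$ (Lemma~\ref{lem:M_to_M_aug}). A symmetry trick (Lemma~\ref{lem:MMSE_aug_alt}) --- rotating the augmented measurement matrix so that all $n+1$ columns become i.i.d.\ Gaussian and hence exchangeable --- shows that $\widetilde{M}_{m,n}$ also equals the scalar MMSE of $X_{n+1}$. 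But from the viewpoint of $X_{n+1}$ the only informative observation is $Z_{m+1} = \sqrt{G_{m+1}}\, X_{n+1} + \bar{Y}_{m+1}$, where $\bar{Y}_{m+1}$ plays the role of additive noise; the non-Gaussianness bound $\Delta_{m,n}$ already established inside the proof of Theorem~\ref{thm:I_MMSE_relationship} then gives $\widetilde{M}_{m,n} \approx \mmse_X\bigl((m/n)/(1+M_{m,n})\bigr)$ directly (Lemma~\ref{lem:MMSE_aug_bound}). The fixed-point form emerges from this single-coordinate symmetry, and the $n^{-r}$ rate is inherited from the summed $\Delta_{m,n}$ bound; no two-parameter interpolation or characteristic curve is needed.
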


The bounds given in Theorems~\ref{thm:I_MMSE_relationship} and \ref{thm:MMSE_fixed_point} are with respect the integrals over $\cI'_n$ and $\cM_n$, and thus prove convergence in $L_1$ over bounded intervals. This is sufficient to show that the relationships hold almost everywhere in the limit. Importantly, though, these bounds still allow for the possibility that the relationships do not hold at countably many points, and thus allow for the possibility of phase transitions.

For distributions that have a phase transition, our proof technique requires a boundary condition for the mutual information. This boundary condition is used to determine the location of the phase transition. The next result shows that the replica-MI is equal to the MI-function in the limit as the measurement rate increases to infinity, and thus the replica-MI can be used as a boundary condition. The proof is given in Section~\ref{proof:thm:I_m_boundary}.

\begin{theorem}\label{thm:I_m_boundary}  Under Assumptions 1 and 2, the MI function satisfies
\begin{align}
 \left |   \cI_{n}(\delta) -  \IR\left( \delta \right) \right |  \le C \cdot \delta^{-\frac{1}{2}}, \notag
\end{align}
for all $n \in \integers$ and $\delta \ge 4$.
\end{theorem}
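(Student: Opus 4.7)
The plan is to prove the bound via the triangle inequality
\[
|\cI_n(\delta) - \IR(\delta)| \le |\cI_n(\delta) - I_X(\delta)| + |I_X(\delta) - \IR(\delta)|
\]
and show each piece is $O(\delta^{-1/2})$ for $\delta \ge 4$. The heuristic is that, for large $\delta$, the matrix $A^m$ is tall and well-conditioned, so the finite-length compressed-sensing problem decouples approximately into $n$ parallel scalar Gaussian channels of SNR $\delta$, each contributing mutual information $I_X(\delta)$; the replica minimum $\IR(\delta)$ does the same.

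For the replica-versus-scalar piece, evaluating $R(\delta,0) = I_X(\delta)$ immediately gives $\IR(\delta) \le I_X(\delta)$. For the reverse direction I will use the fixed-point equation $\MR(\delta) = \mmsex(\delta/(1+\MR(\delta)))$ together with the universal inequality $\mmsex(s) \le 1/s$ (which follows from dominating $\mmsex$ by the linear-MMSE $\sigma^2/(1+s\sigma^2)$) to deduce $\MR(\delta) \le 2/\delta$ for $\delta \ge 2$. Plugging this into $R(\delta,\MR)$, using $\log(1+z) - z/(1+z) = O(z^2)$ for the penalty term and a second application of $\mmsex \le 1/s$ to control $I_X(\delta) - I_X(\delta/(1+\MR))$, will yield $|I_X(\delta) - \IR(\delta)| = O(1/\delta)$.

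For the finite-length-versus-scalar piece, I first pass to the sufficient statistic: since $\delta > 1$, the Gram matrix $G \triangleq (A^m)^T A^m$ is almost surely invertible, so $\tilde X \triangleq G^{-1}(A^m)^T Y^m = X^n + \tilde W$ with $\tilde W \mid A^m \sim \normal(0, G^{-1})$. Let $\mu_{\min} \le \mu_{\max}$ denote the extreme eigenvalues of $G$, and note $\mu_{\max}^{-1} I \preceq G^{-1} \preceq \mu_{\min}^{-1} I$. Decomposing $\tilde W$ as an isotropic $\normal(0, \mu_{\max}^{-1} I)$ plus an independent PSD-correlated Gaussian and invoking the data-processing inequality yields $I(X^n; X^n + \tilde W \mid A^m = a) \le n\, I_X(\mu_{\max})$; conversely, inflating $\tilde W$ to an isotropic $\normal(0, \mu_{\min}^{-1} I)$ by adding an independent Gaussian with covariance $\mu_{\min}^{-1} I - G^{-1}$ gives $n\,I_X(\mu_{\min}) \le I(X^n; X^n + \tilde W \mid A^m = a)$. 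Taking expectation over $A^m$ produces the sandwich
\[
\expt\bigl[I_X(\mu_{\min})\bigr] \;\le\; \cI_n(\delta) \;\le\; \expt\bigl[I_X(\mu_{\max})\bigr].
\]
The I-MMSE identity together with $\mmsex \le 1/s$ gives $|I_X(\mu) - I_X(\delta)| \le \tfrac{1}{2}|\log(\mu/\delta)|$, so it remains to show $\expt[|\log(\mu_{\max}/\delta)|] + \expt[|\log(\delta/\mu_{\min})|] \le C \delta^{-1/2}$. This follows from Gaussian-matrix concentration (e.g., Davidson--Szarek), which yields $\sqrt{\mu_{\max/\min}} = \sqrt{\delta} \pm 1 + O(1/\sqrt{n})$ on events of probability $1 - \exp(-c n t^2)$, combined with the elementary bound $\log((\sqrt{\delta} \pm 1)^2/\delta) = O(\delta^{-1/2})$ for $\delta \ge 4$.

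The main obstacle will be controlling the left-tail contribution to $\expt|\log \mu_{\min}|$ uniformly in $n$: the exponential concentration is sharp only for large $n$, and for small $n$ (in particular $n=1$, where $\mu_{\min} = \mu_{\max}$ reduces to a chi-squared random variable) one must fall back on explicit moment or density estimates for the Wishart distribution near the origin---using crucially that $\mu_{\min} > 0$ a.s.\ whenever $\delta > 1$---to guarantee that rare events with $\log \mu_{\min} \ll 0$ do not contaminate the bound.
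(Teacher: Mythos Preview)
Your proposal is correct and shares the paper's high-level outline: triangle inequality through $I_X(\delta)$, with the replica-side bound obtained exactly as you describe (the paper gets $\MR(\delta)\le 1/(\delta-1)$ by the same $\mmse_X(s)\le 1/s$ manipulation, yielding $|\IR(\delta)-I_X(\delta)|\le \tfrac{1}{2(\delta-1)}$). The paper even mentions your eigenvalue sandwich $\ex{I_X(\sigma_{\min}^2)}\le \tfrac{1}{n}I_{m,n}\le \ex{I_X(\sigma_{\max}^2)}$ as the basic intuition before giving its actual proof. The genuine difference is on the finite-length side: instead of extreme eigenvalues plus Davidson--Szarek, the paper uses the QR decomposition of $A^m$ to obtain
\[
\frac{1}{n}\sum_{k=1}^{n}\ex{I_X\!\bigl(\tfrac{1}{n}\chi^2_{m-k+1}\bigr)}\;\le\;\tfrac{1}{n}I_{m,n}\;\le\;\ex{I_X\!\bigl(\tfrac{1}{n}\chi^2_m\bigr)},
\]
since under the Gaussian assumption the diagonal entries $R_{ii}^2\sim\tfrac{1}{n}\chi^2_{m-i+1}$ are independent with explicit laws. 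This cleanly sidesteps precisely the obstacle you identified: the left tail of $\log\mu_{\min}$ uniformly in $n$. With the QR bounds, the gap is controlled by chi-squared moments and the smoothness estimate $I_X(t)-I_X(s)\le\tfrac{1}{2}\log(t/s)$, which gives the $\delta^{-1/2}$ rate with no small-$n$ casework. Your route is valid but requires the extra Wishart smallest-eigenvalue work you anticipate; the QR argument buys uniformity in $n$ for free.
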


At first glance, it may seem surprising that Theorem~\ref{thm:I_m_boundary} holds for all signal lengths. However, this bound is tight in the regime where the number of measurements is much larger than the number of signal entries. From the rotational invariance of the Gaussian noise and monotonicity of the mutual information with respect to the signal-to-noise ratio,  one can obtain the sandwiching relation
\begin{align}
n\,  \ex{  I_X\left(  \sigma^2_\text{min}(A^m) \right)} \le I_{m,n} \le n\,  \ex{  I_X\left(  \sigma^2_\text{max}(A^m) \right)},\notag
\end{align}
where the upper and lower bounds depend only on the minimum and maximum singular values of the random $m \times n$ measurement matrix. For fixed $n$, it is well known that the ratio of these singular values converges to one almost surely in the limit as $m$ increases to infinity. Our proof of Theorem~\ref{thm:I_m_boundary} uses a more refined analysis, based on the QR decomposition, but the basic idea is the same.

\subsection{Uniqueness of Limit}\label{sec:properties_replica_prediction}

The final step of the proof is to make the connection between the asymptotic constraints on the MI and MMSE described in the previous subsection and the replica-MI and replica-MMSE functions given in Definition~\ref{def:rs_limits}. 

We begin by describing two functional properties of the replica limits. The first property follows from the fact that the MMSE is a global minimizer of the function $R(\delta, z)$ with respect to its second argument. Since $R(\delta,z)$ is differentiable for all $\delta,z \in \reals_+$, any minimizer $z^*$ of $R(\delta, z)$ must satisfy the equation 
  $R_z (\delta, z^*)  = 0$, where $R_z (\delta,z) = \frac{\partial}{\partial z} R(\delta,z)$. Using the I-MMSE relationship \cite{guo:2005a}, it can be shown that, for each $\delta \in \reals_+$, the replica-MMSE $\MR(\delta)$ satisfies the fixed-point equation
  \begin{align}
\MR(\delta)  = \mmse_X\left( \frac{\delta}{1 + \MR(\delta) }   \right) . \label{eq:fixed_point_replica} 
\end{align}
Note that if this fixed-point equation has a unique solution, then it provides an equivalent definition of the replica-MMSE. However, in the presence of multiple solutions, only the solutions that correspond to global minimizers of $R(\delta, z)$ are valid. 

The second property relates the derivative of the replica-MI to  the replica-MMSE. Specifically,  as a consequence of the envelope theorem~\cite{milgrom:2002} and  \eqref{eq:fixed_point_replica}, it can be shown that the derivative of the replica-MI satisfies
\begin{align}
\IR'(\delta) = \frac{1}{2} \log\left( 1 + \MR(\delta) \right), \label{eq:derivative_replica} 
\end{align}
almost everywhere.

These properties show that the replica limits $\IR$ and $\MR$ satisfy the relationships given in Theorems~\ref{thm:I_MMSE_relationship} and \ref{thm:MMSE_fixed_point} with equality. 
In order to complete proof we need to show that, in conjunction with a boundary condition imposed by the large measurement rate limit, the constraints \eqref{eq:fixed_point_replica} and \eqref{eq:derivative_replica} provide an equivalent characterization of the replica limits. 

\begin{definition}
\label{lem:replica_mmse_set} For a given signal distribution $P_X$, let $\cV$ be the subset of non-increasing functions from $\rgz \to \rgz$ such that all $g\in \cV$ satisfy the fixed-point condition 
\begin{align} 
g(\delta) = \mmse_X \left(\frac{\delta}{1+g(\delta)} \right) \label{eq:fixed_point_g}. 
\end{align}
Furthermore, let $\cG \subseteq \cV$ be the subset such that all $g \in \cG$ also satisfy the large measurement rate boundary condition
\begin{align}
\lim_{\delta \to \infty} \left|  \int_{0}^{\delta}\frac{1}{2} \log \left(1 + g(\gamma) \right)  \dd \gamma  - \IR(\delta)   \right| = 0. \notag
\end{align}

\end{definition}

In Section~\ref{sec:single_crossing}, it is shown that if the signal distribution $P_{X}$ has the single-crossing property, then $\MR(\delta)$ has at most one discontinuity and all $g\in \cG$ satisfy $g(\delta)=\MR(\delta)$ almost everywhere. In other words, the single-crossing property provides a sufficient condition under which the replica limits can be obtained uniquely from \eqref{eq:derivative_replica} and \eqref{eq:fixed_point_g}.   A graphical illustration is provided in Figure~\ref{fig:fp_curve}.

%
%

%
%
%
%

\section{Properties of MI and MMSE}\label{sec:MI_MMSE_bounds}

\subsection{Single-Letter Functions}  

The single-letter MI and MMSE functions corresponding to a real valued input distribution $P_X$ under additive Gaussian noise are defined by
\begin{align}
I_X(s) &\triangleq I(X; \sqrt{s} X + N) \notag \\
\mmse_X(s) & \triangleq \mmse(X\mid \sqrt{s} X + N), \notag
\end{align}
where $X \sim P_X$ and $N \sim \normal(0,1)$ are independent and $s \in \reals_+$ parametrizes the signal-to-noise ratio. Many  properties of these functions have been studied in the literature \cite{guo:2005a,guo:2011,wu:2011,wu:2012a}. The function $I_X(s)$ is concave and non-decreasing with $I_X(0) = 0$. If $I_X(s)$ is finite for some $s > 0$ then it is finite for all $s \in \reals_+$ \cite[Theorem~6]{wu:2012a}. The MMSE function is non-increasing with $\mmse_X(0) = \var(X)$. Both $I_X(s)$ and $\mmse_X(s)$ are infinitely differentiable on $(0,\infty)$ \cite[Proposition~7]{guo:2011}.

The so-called I-MMSE relationship \cite{guo:2005a} states that
\begin{align}
\frac{\dd}{ \dd s} I_X(s) = \frac{1}{2} \mmse_X(s).  \label{eq:I-MMSE}
\end{align}
This relationship was originally stated for input distributions with finite second moments \cite[Theorem~1]{guo:2005a}, and was later shown to hold for any input distributions with finite mutual information \cite[Theorem~6]{wu:2012a}. This relationship can also be viewed as a consequence of De-Bruijn's identity \cite{stam:1959}. 

Furthermore, it is well known that under a second moment constraint, the MI and MMSE are maximized when the input distribution is Gaussian. This yields the following inequalities 
\begin{align}
I_X(s) &\le \frac{1}{2} \log(1 + s \var(X))  \label{eq:IX_Gbound}\\
\mmse_X(s) &\le  \frac{ \var(X)}{ 1 + s \var(X)}. \label{eq:mmseX_Gbound}
\end{align}
More generally, the MMSE function satisfies the upper bound $\mmse_X(s) \le 1/s$, for every input distribution $P_X$ and  $s > 0$ \cite[Proposition~4]{guo:2011}. Combining this inequality with \eqref{eq:I-MMSE} leads to 
\begin{align}
I_X(t ) - I_X(s) \le \frac{1}{2} \log\left( \frac{t}{s}\right), \quad  0 < s \le t, \label{eq:IX_smooth}
\end{align}
which holds for any input distribution with finite mutual information. 

Finally, the derivative of the MMSE with respect to $s$ is given by the second moment of the conditional variance \cite[Proposition~9]{guo:2011},
\begin{align}
\frac{\dd}{\dd s} \mmse_X(s) = - \ex{ \left(\var(X \gmid Y) \right)^2}. \label{eq:mmseXp}
\end{align}
This relationship shows that the slope of $\mmse_X(s)$ at $s=0$ is finite if and only if $X$ has bounded forth moment. It also leads to the following result, which is proved in Appendix~\ref{proof:lem:mmseX_bound}.

\begin{lemma}\label{lem:mmseX_bound} The single-letter MMSE function satisfies the following bounds:
\begin{enumerate}[(i)]
\item For any input distribution $P_X$ with finite fourth moment and $s,t \in \reals_+$, 
\begin{align}
 \left | \mmse_X\left(s  \right ) - \mmse_X\left(t \right)  \right| &\le 4 \ex{X^4} |s - t|. \label{eq:mmseX_smooth_a}
\end{align}
\item For every input distribution $P_X$ and $s,t \in (0,\infty)$, 
\begin{align}
\left | \mmse_X\left(s  \right ) - \mmse_X\left(t \right)  \right| &\le  12\,  \left | \frac{1}{s} - \frac{1}{t} \right| .  \label{eq:mmseX_smooth_b}
\end{align}
\end{enumerate}
\end{lemma}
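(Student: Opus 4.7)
The plan is to derive both inequalities from the derivative identity \eqref{eq:mmseXp}, $\frac{\dd}{\dd s}\mmse_X(s) = -\ex{(\var(X \gmid Y))^2}$ with $Y = \sqrt{s}\, X + N$ and $N \sim \normal(0,1)$. Integrating this identity from $\min(s,t)$ to $\max(s,t)$ reduces each inequality to a pointwise upper bound on $\ex{(\var(X\gmid Y))^2}$, and I would use two different bounds depending on whether the target is Lipschitz in $s$ or in $1/s$.

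For part (i), I would compare the conditional mean $\ex{X\gmid Y}$ against the zero estimator. Optimality of the posterior mean yields $\var(X\gmid Y) \le \ex{X^2\gmid Y}$ pointwise, and two applications of the conditional Jensen inequality then give $(\var(X\gmid Y))^2 \le (\ex{X^2\gmid Y})^2 \le \ex{X^4\gmid Y}$. Taking expectations shows $\ex{(\var(X\gmid Y))^2} \le \ex{X^4}$ uniformly in $s \ge 0$. Integrating \eqref{eq:mmseXp} then produces a Lipschitz bound with constant $\ex{X^4}$, which is stronger than (hence implies) the claimed factor of $4\,\ex{X^4}$.

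For part (ii), since $s > 0$, I would instead compare the posterior mean to the trivial estimator $\tilde X(Y) = Y/\sqrt{s}$. Because $X - \tilde X(Y) = -N/\sqrt{s}$, optimality of the posterior mean gives $\var(X\gmid Y) \le (1/s)\,\ex{N^2\gmid Y}$. Squaring, applying conditional Jensen, and using $\ex{N^4} = 3$ yields $\ex{(\var(X\gmid Y))^2} \le 3/s^2$. Integrating \eqref{eq:mmseXp} from $s$ to $t$ (WLOG $s < t$) then gives $|\mmse_X(s) - \mmse_X(t)| \le 3\int_s^t u^{-2}\,\dd u = 3|1/s - 1/t|$, again sharper than the claimed constant $12$.

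I do not foresee significant obstacles, since both pointwise variance bounds are elementary and the integration is immediate. The only minor technicality is the applicability of \eqref{eq:mmseXp} at the endpoint $s=0$ in part (i): when $\ex{X^4} < \infty$ the identity extends continuously to $s=0$ via the smoothness properties of $\mmse_X$ cited just before \eqref{eq:mmseXp}, so the bound carries over to the closed interval, and this is enough to recover the stated inequality for all $s,t \in \reals_+$.
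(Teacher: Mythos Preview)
Your proposal is correct and follows the same overall strategy as the paper: bound the derivative via \eqref{eq:mmseXp} and integrate. The only difference is how you bound $\ex{(\var(X\gmid Y))^2}$ pointwise. The paper uses the two-copy representation $\var(X\gmid Y)=\tfrac{1}{2}\,\ex{(X_Y-X'_Y)^2\gmid Y}$, then Jensen and $(a-b)^4\le 8(a^4+b^4)$, which produces the factors $4$ and $12$; for part~(ii) it first invokes the exact identity $\var(X\gmid Y)=\tfrac{1}{s}\var(N\gmid Y)$ and repeats the argument with $N$ in place of $X$. Your comparison of the posterior mean to the zero estimator and to $Y/\sqrt{s}$ is a more direct route that avoids the two-copy machinery and yields the sharper constants $1$ and $3$. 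Either argument suffices for the lemma as stated; yours is slightly more elementary and loses less in the constants.
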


\subsection{Multivariate MI and MMSE}\label{sec:multivariate_MI_MMSE}

From the chain rule for mutual information, we see that the MI difference sequence is given by
\begin{align}
I'_{m,n} & = I(X^n ; Y_{m+1} \gmid Y^m, A^{m+1}). \label{eq:Ip_alt}
\end{align}
By the non-negativity of mutual information, this establishes that the MI sequence is non-decreasing in $m$. Alternatively, by the data-processing inequality for MMSE \cite[Proposition 4]{rioul:2011}, we also see that $M_{m,n}$ is non-increasing in $m$, and also
\begin{align}
M_{m,n} \le   \var(X). \notag
\end{align}

The next result shows that the second order MI difference can also be expressed in terms of the mutual information. The proof is given in Appendix~\ref{proof:lem:Ip_and_Ipp}.

\begin{lemma}\label{lem:Ip_and_Ipp} 
Under Assumption 1, the second order MI difference sequence satisfies 
\begin{align}
I''_{m,n} & = - I(Y_{m+1}  ; Y_{m+2} \mid Y^m, A^{m+2}). \label{eq:Ipp_alt}
\end{align}
\end{lemma}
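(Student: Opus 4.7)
\medskip

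\noindent\textbf{Proof plan for Lemma~\ref{lem:Ip_and_Ipp}.}
My plan is to express both terms of $I''_{m,n}=I'_{m+1,n}-I'_{m,n}$ as conditional mutual informations under a common conditioning variable and then apply the chain rule together with two symmetry observations specific to the i.i.d.\ Gaussian model in Assumption~1.

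First, I would enlarge the conditioning. Since the measurement vector $A_{m+2}$ is drawn independently of $(X^n, A^{m+1}, W^{m+1})$, the random variable $Y_{m+1}$ is independent of $A_{m+2}$ given $(X^n,Y^m,A^{m+1})$, and therefore
\begin{align}
I'_{m,n} \;=\; I(X^n;Y_{m+1}\mid Y^m, A^{m+1}) \;=\; I(X^n;Y_{m+1}\mid Y^m, A^{m+2}).\notag
\end{align}
Of course, $I'_{m+1,n}=I(X^n;Y_{m+2}\mid Y^{m+1},A^{m+2})$ directly. Thus both differences live under the same conditioning $\sigma$-field generated by $(Y^m,A^{m+2})$.

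Next, I would use two orthogonality/exchangeability facts. (a)~\emph{Exchangeability of the $(m{+}1)$-st and $(m{+}2)$-nd measurement pair:} because $(A_{m+1},W_{m+1})$ and $(A_{m+2},W_{m+2})$ are i.i.d.\ and independent of $(X^n,A^m,W^m)$, the joint distribution of $(X^n,A^m,A_{m+1},A_{m+2},Y^m,Y_{m+1},Y_{m+2})$ is invariant under the swap $(A_{m+1},Y_{m+1})\leftrightarrow(A_{m+2},Y_{m+2})$. Since we condition on $A^{m+2}$ (which contains both $A_{m+1}$ and $A_{m+2}$), this gives
\begin{align}
I(X^n;Y_{m+1}\mid Y^m,A^{m+2}) \;=\; I(X^n;Y_{m+2}\mid Y^m,A^{m+2}).\notag
\end{align}
(b)~\emph{Conditional independence of fresh noises:} given $(X^n,A^{m+2})$, the outputs $Y_{m+1}$ and $Y_{m+2}$ are functions of the independent noises $W_{m+1},W_{m+2}$, and hence $I(Y_{m+1};Y_{m+2}\mid X^n,Y^m,A^{m+2})=0$.

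Finally, I would apply the chain rule to $I(X^n,Y_{m+1};Y_{m+2}\mid Y^m,A^{m+2})$ in two ways:
\begin{align}
I(X^n;Y_{m+2}\mid Y^m,A^{m+2}) + I(Y_{m+1};Y_{m+2}\mid X^n,Y^m,A^{m+2}) \;=\; I(Y_{m+1};Y_{m+2}\mid Y^m,A^{m+2}) + I(X^n;Y_{m+2}\mid Y^{m+1},A^{m+2}).\notag
\end{align}
Using (b) to kill the second term on the left and (a) together with the enlarged-conditioning identity to rewrite the first term on the left as $I'_{m,n}$, the right-hand side becomes $I(Y_{m+1};Y_{m+2}\mid Y^m,A^{m+2}) + I'_{m+1,n}$, so rearranging yields the claim $I''_{m,n}=-I(Y_{m+1};Y_{m+2}\mid Y^m,A^{m+2})$.

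I do not expect any real obstacle here: the only subtlety is being careful about which $\sigma$-field one conditions on when invoking the exchangeability argument, which is why I insert $A_{m+2}$ into the conditioning for $I'_{m,n}$ at the very first step.
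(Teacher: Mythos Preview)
Your proof is correct and uses essentially the same ingredients as the paper's: exchangeability of the $(m{+}1)$-st and $(m{+}2)$-nd measurement pair, and the fact that $Y_{m+1},Y_{m+2}$ are conditionally independent given $(X^n,Y^m,A^{m+2})$. The only cosmetic difference is that the paper works at the differential-entropy level---writing $I'_{m,n}=h(Y_{m+1}\mid Y^m,A^{m+1})-h(W_{m+1})$, canceling the constant $h(W_{m+1})$, and then using exchangeability to rewrite $h(Y_{m+1}\mid Y^m,A^{m+1})$ as $h(Y_{m+2}\mid Y^m,A^{m+2})$---whereas you stay at the mutual-information level via the chain rule; the two arguments are interchangeable.
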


One consequence of Lemma~\ref{lem:Ip_and_Ipp}  is that the first order MI difference sequence is non-increasing in $m$, and thus
\begin{align}
I'_{m,n} \le I'_{1,n} = I_{1,n}. \label{eq:Ip_bound}
\end{align}
This inequality plays an important role later on in our proof, when we show that certain terms of interest are bounded by the magnitude of the second order MI difference.

The next result provides non-asymptotic bounds in terms of the single-letter MI and MMSE functions corresponding to the signal distribution $P_X$. The proof is given in Appendix~\ref{proof:lem:Im_bounds}

\begin{lemma}\label{lem:Im_bounds}  Under Assumptions 1 and 2, the MI and MMSE sequences satisfy
\begin{align}
& \sum_{k=1}^{\min(n,m)} \hspace{-.2cm} \ex{ I_X\left( \tfrac{1}{n} \chi^2_{m-k+1} \right)} \le I_{m,n}  \le n \, \ex{ I_X\left( \tfrac{1}{n} \chi^2_{m} \right)}  \label{eq:Im_bounds} \\
& \ex{ \mmse_X\left( \tfrac{1}{n} \chi^2_m \right)}  \le  M_{m,n} \le \ex{ \mmse_X\left( \tfrac{1}{n}  \chi^2_{m - n + 1} \right) }, \label{eq:Mm_bounds}
\end{align}
where $\chi_k^2$ denotes a chi-squared random variable with $k$ degrees of freedom and the upper bound on $M_{m,n}$ is valid for all $m \ge n$. 
\end{lemma}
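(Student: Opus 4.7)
The four bounds separate naturally into two pairs by proof technique. The upper bound on $I_{m,n}$ and the lower bound on $M_{m,n}$ both follow from a genie argument that reveals all signal entries except one; the lower bound on $I_{m,n}$ and the upper bound on $M_{m,n}$ both follow from a Gram--Schmidt projection that isolates a scalar measurement of each signal entry. Throughout, let $a_i \in \reals^m$ denote the $i$-th column of $A^m$; under Assumption~1 these columns are i.i.d.\ $\normal(0, I_m/n)$, so that $\|a_i\|^2 \sim \chi^2_m/n$.

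For the genie bounds, I would first expand via the chain rule
\begin{align*}
I_{m,n} = \sum_{i=1}^n I(X_i; Y^m \mid X_1, \ldots, X_{i-1}, A^m), \quad n M_{m,n} = \sum_{i=1}^n \mmse(X_i \mid Y^m, A^m).
\end{align*}
Because the $X_j$ are mutually independent (and independent of $A^m$), adding the remaining entries $X_{-i} = (X_j : j\neq i)$ to the conditioning can only increase each MI term (using $I(X_i;W \mid Z) = 0$ when $X_i \perp W \mid Z$) and decrease each MMSE term. Given $(X_{-i}, A^m)$, the residual measurement is $a_i X_i + W^m$, whose sufficient statistic for $X_i$ is the scalar AWGN channel $\|a_i\| X_i + Z$ with $Z \sim \normal(0,1)$ and SNR $\|a_i\|^2 \sim \chi^2_m/n$. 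Summing the resulting single-letter expressions immediately yields the upper bound on $I_{m,n}$ and the lower bound on $M_{m,n}$.

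For the projection bounds, I would construct, for each signal index, a scalar observation with no interference from the other entries. For the lower bound on $I_{m,n}$, use the same chain rule and note that, conditional on $(X_1,\ldots,X_{k-1}, A^m)$, the residual measurement is $\tilde Y_k^m = \sum_{j\ge k} a_j X_j + W^m$. Let $V_k$ be the component of $a_k$ orthogonal to $\mathrm{span}(a_{k+1},\ldots,a_n)$ in $\reals^m$, so that $\langle V_k, a_j\rangle = 0$ for $j>k$ and $\langle V_k, a_k\rangle = \|V_k\|^2$ by construction. Then
\begin{align*}
\hat Y_k \,\triangleq\, \frac{\langle V_k, \tilde Y_k^m\rangle}{\|V_k\|} = \|V_k\| X_k + Z_k, \quad Z_k \sim \normal(0,1) \text{ independent of } X_k.
\end{align*}
Rotational invariance of the Gaussian vector $a_k$, combined with its independence from $(a_{k+1},\ldots,a_n)$, gives $\|V_k\|^2 \sim \chi^2_{(m-n+k)_+}/n$. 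Data processing then yields $I(X_k; \tilde Y_k^m \mid A^m) \ge \ex{I_X(\|V_k\|^2)}$; summing over $k$ and re-indexing $k' = n-k+1$ produces the stated $\sum_{k'=1}^{\min(n,m)}$ lower bound. For the upper bound on $M_{m,n}$ (valid for $m \ge n$), I would apply the same projection idea without the chain-rule conditioning: let $P_i$ be the projection onto the orthogonal complement of $\mathrm{span}\{a_j : j\neq i\}$, so $P_i a_j = 0$ for $j\neq i$. Then $\langle P_i a_i, Y^m\rangle/\|P_i a_i\| = \|P_i a_i\| X_i + Z_i$ is a scalar AWGN channel of SNR $\|P_i a_i\|^2 \sim \chi^2_{m-n+1}/n$, and the MMSE data-processing inequality followed by averaging over $i$ gives the claimed upper bound.

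The main obstacle is purely bookkeeping: checking in each case that the projection of $a_k$ onto a subspace determined by columns \emph{independent} of $a_k$ is itself Gaussian of the correct dimension, and that after normalization the noise in the constructed scalar channel is exactly standard Gaussian rather than a rescaled version. Both follow from the rotational invariance of i.i.d.\ Gaussian vectors, but care is needed to avoid off-by-one errors in the degrees of freedom when transitioning between the ranges $m < n$ and $m \ge n$.
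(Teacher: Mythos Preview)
Your proposal is correct and matches the paper's proof in all essentials. The paper organizes the projection step via the QR decomposition $A^m = QR$ (so the scalar channels appear as $R_{k,k}X_k + \tilde W_k$ with $R_{k,k}^2 \sim \tfrac{1}{n}\chi^2_{m-k+1}$, and the chain rule is applied on the rotated-measurement side), whereas you apply the chain rule on the signal side and project $a_k$ onto the orthogonal complement of $\mathrm{span}(a_{k+1},\dots,a_n)$; after your re-indexing $k' = n-k+1$ the two constructions coincide, and the genie argument for the other pair of bounds is identical.
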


\begin{remark}
The proof of Lemma~\ref{lem:Im_bounds} does not require the assumption that the signal distribution has bounded fourth moment. In fact, \eqref{eq:Im_bounds} holds for any signal distribution with finite mutual information and \eqref{eq:Mm_bounds}  holds for any  signal distribution. 
\end{remark} 

\begin{remark}
The upper bound in \eqref{eq:Im_bounds} and lower bound in \eqref{eq:Mm_bounds} are not new and are special cases of results given in \cite{tulino:2013}.
\end{remark} 

Combining Lemma~\ref{lem:Im_bounds} with Inequalities  \eqref{eq:IX_Gbound} and \eqref{eq:mmseX_Gbound}, leads to upper bounds on the MI and MMSE that depend only on the variance of the signal distribution. Alternatively, combining Lemma~\ref{lem:Im_bounds} with the smoothness of the single-letter functions given in \eqref{eq:IX_smooth} and \eqref{eq:mmseX_smooth_b} leads to the following characterization, which is tight whenever $m$ is much larger than $n$.  The proof is given in Appendix~\ref{proof:lem:Im_bounds_gap}

\begin{lemma}\label{lem:Im_bounds_gap}  Under Assumptions 1 and 2, the MI and MMSE sequences satisfy, for all $m \ge n + 2$, 
\begin{align}
\left| \tfrac{1}{n} I_{m,n} -  I_X(\tfrac{m}{n}) \right| & \le  \tfrac{1}{2} \left[  \tfrac{n + 1}{m - n -  1} +   \sqrt{ \tfrac{2}{m\!-\! 2 } }  \right] \label{eq:Im_bounds_gap}\\
\left| M_{m,n} - \mmse_X\left( \tfrac{m}{n} \right) \right| & \le  \tfrac{ 12\, n}{m}  \left[  \tfrac{ n+1}{m - n - 1} +  \sqrt{ \tfrac{2}{m -2}} \right].
\label{eq:Mm_bounds_gap}
\end{align}
\end{lemma}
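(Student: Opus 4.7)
The plan is to start from the sandwich bounds in Lemma~\ref{lem:Im_bounds} and reduce everything to elementary moment estimates for chi-squared random variables, using the smoothness bounds \eqref{eq:IX_smooth} and Lemma~\ref{lem:mmseX_bound}(ii) to transfer from the random argument $\chi^2_j/n$ to the deterministic value $m/n$.

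\emph{Mutual information bound.} The upper sandwich $\tfrac{1}{n}I_{m,n} \le \ex{I_X(\chi^2_m/n)}$ combined with Jensen's inequality applied to the concave $I_X$ already gives $\tfrac{1}{n}I_{m,n} \le I_X(m/n)$, so only the one-sided gap $I_X(m/n) - \tfrac{1}{n}I_{m,n}$ needs bounding. Starting from the lower sandwich, I would decompose this gap as the Jensen gap $I_X(m/n) - \ex{I_X(\chi^2_m/n)}$ plus the degree-of-freedom reduction gap $\tfrac{1}{n}\sum_{k=1}^{n}\ex{I_X(\chi^2_m/n) - I_X(\chi^2_{m-k+1}/n)}$. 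The first is treated by applying $|I_X(u) - I_X(v)| \le \tfrac{1}{2}|\log(u/v)|$ (a consequence of \eqref{eq:IX_smooth}) followed by Cauchy--Schwarz, which bounds it by $\tfrac{1}{2}\sqrt{\ex{(\log\chi^2_m - \log m)^2}}$; standard trigamma estimates such as $\psi'(m/2) \le 2/(m-2)$ then yield the $\tfrac{1}{2}\sqrt{2/(m-2)}$ term. For the second gap, I would couple $\chi^2_m = \chi^2_{m-k+1} + \widetilde{\chi}^2_{k-1}$ with independent summands, so that \eqref{eq:IX_smooth} gives
\begin{align*}
I_X(\chi^2_m/n) - I_X(\chi^2_{m-k+1}/n) \le \tfrac{1}{2}\,\widetilde{\chi}^2_{k-1}/\chi^2_{m-k+1}.
\end{align*}
Independence together with $\ex{1/\chi^2_j} = 1/(j-2)$ reduces the expectation to $(k-1)/(2(m-k-1))$, and averaging over $k \in \{1,\ldots,n\}$ with $1/(m-k-1) \le 1/(m-n-1)$ yields the $(n+1)/(m-n-1)$ term (up to modest slack in the constant).

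\emph{MMSE bound.} Convexity of $\mmse_X$ applied to the lower sandwich $\ex{\mmse_X(\chi^2_m/n)} \le M_{m,n}$ immediately gives $\mmse_X(m/n) \le M_{m,n}$ via Jensen, so only the upper direction is nontrivial. Using the upper sandwich in \eqref{eq:Mm_bounds} together with Lemma~\ref{lem:mmseX_bound}(ii),
\begin{align*}
M_{m,n} - \mmse_X(\tfrac{m}{n}) \le 12\,\ex{\bigl|n/\chi^2_{m-n+1} - n/m\bigr|} = \tfrac{12n}{m}\,\ex{\bigl|m/\chi^2_{m-n+1} - 1\bigr|}.
\end{align*}
I would then split the inner expectation around $m\cdot\ex{1/\chi^2_{m-n+1}} = m/(m-n-1)$ via the triangle inequality: the bias term is exactly $|m/(m-n-1) - 1| = (n+1)/(m-n-1)$, and the remaining fluctuation term is controlled by $m\sqrt{\var(1/\chi^2_{m-n+1})}$, from which the $\sqrt{2/(m-2)}$ piece follows via $\var(1/\chi^2_j) = 2/((j-2)^2(j-4))$ after an elementary manipulation of denominators.

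\emph{Main obstacle.} The main technical work lies in the bookkeeping required to align the elementary chi-squared moment estimates with the exact form $\tfrac{n+1}{m-n-1} + \sqrt{2/(m-2)}$; in particular, obtaining the universal $\sqrt{2/(m-2)}$ denominator (rather than something like $\sqrt{2/(m-n-3)}$ that appears from a naive variance computation at $j = m-n+1$) requires a careful choice of split and sharp bounds on both $\psi'$ and $\var(1/\chi^2_j)$.
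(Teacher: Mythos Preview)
Your overall decomposition is sound, but there is one genuine error: you invoke ``convexity of $\mmse_X$'' to conclude $\mmse_X(m/n) \le \ex{\mmse_X(\chi^2_m/n)} \le M_{m,n}$. The single-letter MMSE function is \emph{not} convex in general (e.g.\ for binary inputs it has an inflection point), so this Jensen step is invalid. The fix is easy---just bound $|\mmse_X(m/n) - \ex{\mmse_X(\chi^2_m/n)}|$ directly via Lemma~\ref{lem:mmseX_bound}(ii) and add it to the upper-direction bound---but as written the argument for the lower direction does not stand.

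Regarding the ``main obstacle'' you identify, the paper resolves it with a trick you did not find: it introduces an auxiliary variable $Z = U + V$ with $U \sim \tfrac{1}{n}\chi^2_{m-n+1}$, $V \sim \tfrac{1}{n}\chi^2_{n+1}$ independent, so that $Z \sim \tfrac{1}{n}\chi^2_{m+2}$ and hence $\ex{Z^{-1}} = n/m$ \emph{exactly}. This kills the bias term in $\ex{|Z^{-1} - n/m|}$, leaving only $\sqrt{\var(Z^{-1})} = \tfrac{n}{m}\sqrt{2/(m-2)}$, and simultaneously lets the degree-of-freedom gap be handled as a single step $\ex{I_X(Z)} - \ex{I_X(U)} \le \tfrac{1}{2}\ex{V/U} = \tfrac{1}{2}\cdot\tfrac{n+1}{m-n-1}$. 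Your route---working with $\chi^2_m$ directly and using trigamma estimates for $\var(\log\chi^2_m)$---is conceptually fine and would prove a bound of the same order, but the digamma bias $|\psi(m/2) - \log(m/2)|$ and the mismatch between $\var(1/\chi^2_{m-n+1})$ and $2/(m-2)$ prevent you from hitting the stated constants exactly. The $\chi^2_{m+2}$ device is what makes the bookkeeping trivial rather than delicate.
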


For any fixed $n$, the right-hand sides of \eqref{eq:Im_bounds_gap} and \eqref{eq:Mm_bounds_gap} converge to zero as $m$ increases to infinity. Consequently,  the large $m$ behavior of the MI sequence is given by
\begin{align}
\lim_{m \to \infty} I_{m,n} & = \begin{cases}
H(X^n) ,  & \text{if $P_X$ has finite entropy}\\
+ \infty, &\text{otherwise}.
\end{cases}  \notag
\end{align}

\subsection{Properties of Replica Prediction} 

Using the I-MMSE relationship, the partial derivative of $R(\delta, z)$ with respect to its second argument is given by
\begin{align}
R_z(\delta, z) & =  \frac{\delta}{ 2 (1 + z)^2 } \left[  z - \mmse_X\left(  \frac{ \delta}{ 1 + z} \right)  \right]. \label{eq:R_z} 
\end{align}
From this expression, we see that the $R_z(\delta, z) = 0$ is equivalent to the fixed-point condition
\begin{align}
z = \mmse\left( \frac{ \delta}{1 + z} \right) . \notag
\end{align}

Furthermore, since $\IR(\delta)$ is concave, it is differentiable almost everywhere. For all $\delta$ where $\IR'(\delta)$ exists, it follows from the envelope theorem~\cite{milgrom:2002} that $\IR'(\delta)  = R_{\delta}\left(\delta, \MR(\delta)\right)$, where $R_\delta(z,\delta)$ is the partial derivative of $R(\delta, z)$ with respect to its first argument. Direct computation yields 
\begin{align}
R_\delta(\delta, z) & = \frac{1}{2} \log(1 \!+\! z) +  \frac{1}{ 2 (1 \!+\! z) } \left[  \mmse_X\left(  \frac{ \delta}{ 1 \!+\! z} \right) - z \right]. \notag
\end{align}
Finally, noting that the second term on the right-hand side is equal to zero whenever $z = \MR(\delta)$, leads to 
\begin{align}
\IR'(\delta) & = \frac{1}{2} \log\left(1 + \MR(\delta) \right). \notag
\end{align}

The proof of the next result is given in Appendix~\ref{proof:lem:IR_boundary}. 
\begin{lemma} \label{lem:IR_boundary} The Replica-MI and Replica-MMSE functions satisfy, for all $\delta \ge 1$, 
\begin{align}
I_X(\delta-1) &\le   \IR(\delta)  \le I_X(\delta), \label{eq:IRS_bounds}\\
\mmse_X(\delta) &\le   \MR(\delta)  \le \mmse_X(\delta-1).  \label{eq:MRS_bounds}
\end{align}
\end{lemma}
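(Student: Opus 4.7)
The plan is to extract both pairs of bounds from a single auxiliary estimate on the fixed point, using only: (a) the closed form of $R(\delta,z)$, (b) monotonicity/concavity properties of $I_X$ and $\mmse_X$, and (c) the universal scalar bound $\mmse_X(s)\le 1/s$ recalled earlier in the paper.

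First, the upper bound on $\IR$ is immediate: evaluating $R(\delta,0)=I_X(\delta)+0$ and using the definition $\IR(\delta)=\min_{z\ge 0}R(\delta,z)$ gives $\IR(\delta)\le I_X(\delta)$. Next, I would prove the key intermediate estimate that, for $\delta>1$,
\begin{align}
\MR(\delta) \le \frac{1}{\delta-1}, \qquad \text{equivalently}\qquad \frac{\delta}{1+\MR(\delta)} \ge \delta-1. \notag
\end{align}
This is obtained by substituting the universal bound $\mmse_X(s)\le 1/s$ into the fixed-point identity \eqref{eq:fixed_point_replica}: $\MR(\delta)=\mmse_X(\delta/(1+\MR(\delta)))\le (1+\MR(\delta))/\delta$, which rearranges to the displayed inequality. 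Trivially one also has $\MR(\delta)\ge 0$, which gives the complementary bound $\delta/(1+\MR(\delta))\le \delta$.

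With the SNR argument $\delta/(1+\MR(\delta))$ now sandwiched between $\delta-1$ and $\delta$, the MMSE bounds \eqref{eq:MRS_bounds} follow by applying the monotonicity of $\mmse_X$ to each side of the fixed-point identity \eqref{eq:fixed_point_replica}. For the remaining MI lower bound, I would substitute $z^*:=\MR(\delta)$ into the explicit form of $R$:
\begin{align}
\IR(\delta) \;=\; I_X\!\left(\tfrac{\delta}{1+z^*}\right) \;+\; \tfrac{\delta}{2}\!\left[\log(1+z^*)-\tfrac{z^*}{1+z^*}\right]. \notag
\end{align}
The bracketed term is non-negative for every $z^*\ge 0$ (a one-line verification: the derivative of $\log(1+z)-z/(1+z)$ equals $z/(1+z)^2\ge 0$, and the expression vanishes at $z=0$), so the second summand can be dropped. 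For the first summand, the SNR lower bound together with monotonicity of $I_X$ yields $I_X(\delta/(1+z^*))\ge I_X(\delta-1)$, which finishes \eqref{eq:IRS_bounds}.

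There is no real obstacle here; the only subtle step is noticing that the universal bound $\mmse_X(s)\le 1/s$ (combined with the fixed-point identity) is exactly what is needed to convert the abstract fixed-point relation into a usable bound on the effective SNR. Once that is in hand, all four inequalities drop out from monotonicity and the non-negativity of the logarithmic tail term in $R$.
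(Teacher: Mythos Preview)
Your proof is correct and follows essentially the same approach as the paper's own proof: both evaluate $R(\delta,0)$ for the upper MI bound, both feed the universal estimate $\mmse_X(s)\le 1/s$ into the fixed-point identity to obtain $\MR(\delta)\le 1/(\delta-1)$ (hence the effective SNR is sandwiched between $\delta-1$ and $\delta$), and both then read off the remaining three inequalities from monotonicity of $I_X$, $\mmse_X$ and the non-negativity of $\log(1+z)-z/(1+z)$. The only differences are cosmetic ordering of the steps.
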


It is interesting to note the parallels between the bounds on the MI and MMSE sequences in Lemma~\ref{lem:Im_bounds}  and the bounds on the replica functions in Lemma~\ref{lem:IR_boundary}. Combining Lemma~\ref{lem:IR_boundary} with the smoothness of the single-letter functions given in \eqref{eq:IX_smooth} and \eqref{eq:mmseX_smooth_b} leads to 
\begin{align}
\left| \IR(\delta) -  I_X(\delta) \right| &\le \frac{ 1} {2(\delta -1)}   \notag\\
\left| \MR(\delta) -  \mmse_X(\delta) \right| &\le  \frac{1}{ \delta( \delta  -1)} .\notag
\end{align}

\subsection{Proof of Theorem~\ref{thm:I_m_boundary}}\label{proof:thm:I_m_boundary}

This proof follows from combining Lemmas~\ref{lem:Im_bounds_gap} and \ref{lem:IR_boundary}. Fix any $n \in \reals_+$ and $\delta > 4$ and let $m = \lfloor \delta n \rfloor$ and  $\lambda = m + 1 - \delta n$. The MI function obeys the upper bound 
\begin{align}
\cI_n(\delta) &=  \frac{1}{n} \left[ \lambda  I_{m,n}  + (1-\lambda) I_{m+1,n} \right] \notag\\
& \overset{(a)}{\le} \lambda \ex{  I_X\left(\tfrac{1}{n} \chi^2_m \right)}  + (1-\lambda) \ex{  I_X\left(\tfrac{1}{n} \chi^2_{m+1} \right)} \notag\\
& \overset{(b)}{\le}
I_X(\delta), \label{eq:I_m_boundary_b}
\end{align}
where: (a) follows from \eqref{eq:Im_bounds}; and (b) follows from Jensen's inequality and the concavity of $I_X(s)$. 
The MI function also obeys the lower bound
\begin{align}
\cI_n(\delta) &\overset{(a)}{\ge}
\frac{1}{n}  I_{m,n}  \notag\\
& \overset{(b)}{\ge } I_X\left(\delta \right)  - \tfrac{1}{2 (\delta  - 1)  } -  \tfrac{1}{2} \left[  \tfrac{n+1}{(\delta -1) n  - 2}  -  \sqrt{ \tfrac{2}{\delta-3} } \right] ,\label{eq:I_m_boundary_c}
\end{align}
where (a) follows from the fact that $I_{m,n} $ is non-decreasing in $m$ and (b) follows from \eqref{eq:Im_bounds_gap}, \eqref{eq:IX_smooth}, and the fact that  $m \ge \delta n - 1 \ge \delta - 1$. Finally, we have
\begin{align}
\left| \cI_n(\delta) - \IR(\delta) \right| &\overset{(a)}{\le} \left| \cI_n(\delta) - I_X(\delta) \right|   + \left| I_X(\delta) - \IR(\delta) \right| \notag\\
& \overset{(b)}{\le} \tfrac{1}{(\delta -1) }   + \tfrac{1}{2}  \tfrac{n+1}{(\delta -1) n  - 2}  + \tfrac{1}{2} \sqrt{ \tfrac{2}{\delta-3} } \notag\\
&\overset{(c)}{\le}  \left( 4 + \sqrt{2} \right) \delta^{-\frac{1}{2}},  \notag
\end{align}
where: (a) follows from the triangle inequality;  (b) follows from \eqref{eq:IRS_bounds},  \eqref{eq:I_m_boundary_b}, and \eqref{eq:I_m_boundary_c}; and (c) follows from the assumption $\delta \ge 4$. This completes the proof of Theorem~\ref{thm:I_m_boundary}.

\begin{figure*}[!ht]
\centering

\tikzstyle{gblock} = [rectangle, minimum width=2.75cm, minimum height=1cm, text centered, text width=2.75cm,  draw=black, fill=gray!05]
\tikzstyle{arrow} = [thick,->,>=stealth]

\begin{tikzpicture}[node distance=1.6cm]
\footnotesize
\node (dist_ident) [gblock]{Posterior distribution identities \\ (Lemma~\ref{lem:post_dist_to_id})};
\node (var_imi) [gblock, below of  = dist_ident, yshift = -0.8cm] {Concentration of MI density\\ (Lemmas~\ref{lem:Im_var} and \ref{lem:IMI_var})};

\node (post_dist) [gblock, right of  = dist_ident,  xshift=1. 9cm, yshift =.8cm] {Weak decoupling \\ (Lemma~\ref{lem:SE_bound_1})};
\node (smooth) [gblock, below of = post_dist] {Smoothness of posterior variance\\ (Lemma~\ref{lem:post_var_smoothness})};
\node (pmid) [gblock, below of = smooth] {Smoothness of posterior MI difference \\ (Lemma~\ref{lem:PMID_var})};

\node (cclt) [gblock, right of = smooth, xshift=1.9 cm, yshift  = 0cm] {Posterior Gaussianness of new measurements \\ (Lemma~\ref{lem:DeltaP_bound})};
\node (cclt0) [gblock, above of = cclt] {Conditional CLT \cite{reeves:2016b} \\ (Lemma~\ref{lem:cclt})};

\node (dev_var) [gblock, below of= cclt] {Concentration of posterior variance\\ (Lemma~\ref{lem:post_var_dev_bound})};

\node (cclt2) [gblock, right of = cclt, xshift=1.9cm, yshift  = -0cm] {Gaussianness of new measurements \\ (Lemma~\ref{lem:Delta_bound})};
\node (fixed_point) [gblock, below of = cclt2] {MMSE fixed-point constraint \\ (Lemmas~\ref{lem:M_to_M_aug} and \ref{lem:MMSE_aug_bound})};

\node (derivative) [gblock, right of  = cclt2, xshift = 1.9 cm, yshift = .8cm] {Asymptotic derivative constraint\\ (Theorem~\ref{thm:I_MMSE_relationship})};
\node (mmse) [gblock, below of = derivative] {Asymptotic  fixed-point constraint\\ (Theorem~\ref{thm:MMSE_fixed_point})};

%


\draw [arrow] (dist_ident) -- (post_dist);
\draw [arrow] (dist_ident) -- (smooth);
\draw [arrow] (post_dist) -- (cclt);
\draw [arrow] (smooth) -- (dev_var);
\draw [arrow] (var_imi) -- (pmid);
\draw [arrow] (pmid) -- (dev_var);
\draw [arrow] (cclt) -- (dev_var);
\draw [arrow] (cclt0) -- (cclt);
\draw [arrow] (cclt) -- (cclt2);
\draw [arrow] (dev_var) -- (cclt2);
\draw [arrow] (cclt2) -- (derivative);
\draw [arrow] (cclt2) -- (mmse);
\draw [arrow] (cclt2) -- (mmse);
\draw [arrow] (fixed_point) -- (mmse);
\end{tikzpicture}
\caption{Outline of the main steps in the proofs of Theorem~\ref{thm:I_MMSE_relationship}  and Theorem~\ref{thm:MMSE_fixed_point}  \label{fig:proof_outline_a}. } 
\end{figure*}
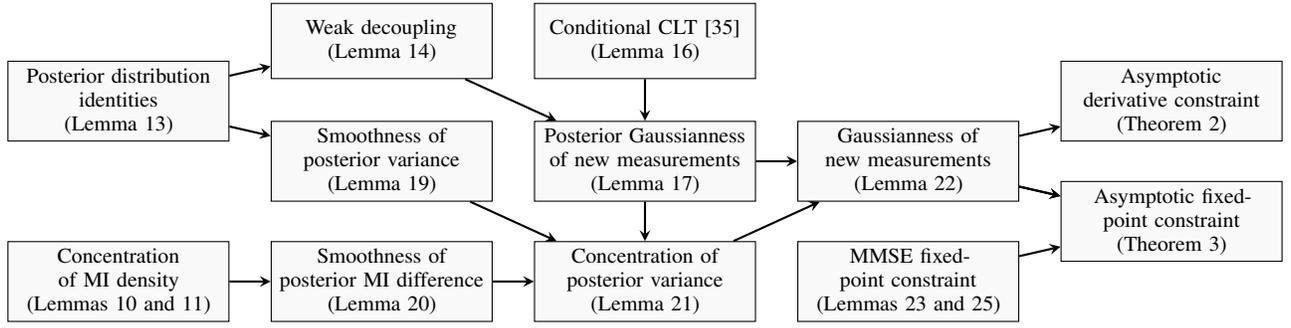

\subsection{Concentration of MI Density}\label{sec:IMI_concentration} 

In order to establish the MI and MMSE relationships used our proof, we need to show that certain functions of the random tuple $(X^n, W^n, A^n)$ concentrate about their expectations. 

Our first result bounds the variation in the mutual information corresponding to the measurement matrix. Let $I_{m,n}(A^m)$ denote the MI sequence as a function of the random matrix $A^m$. The following result follows from the Gaussian Poincar\'e inequality and the multivariate I-MMSE relationship. The proof is given in Appendix~\ref{proof:lem:Im_var}. 

\begin{lemma}\label{lem:Im_var} 
Under Assumptions 1 and 2, the variance of the MI with respect to the measurement matrix satisfies
\begin{align}
\var\left(I_{m,n}(A^m)\right) & \le \min\left\{ \left(\var(X)\right)^2\, m , \frac{n^2}{(m-n-1)_+} \right\}.  \notag
\end{align}
\end{lemma}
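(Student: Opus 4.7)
The plan is to combine the Gaussian Poincar\'e inequality with the multivariate I-MMSE identity of Palomar and Verd\'u, and then to obtain the two terms in the $\min$ from two complementary Loewner-order estimates of the posterior covariance matrix.

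Since $A^m$ has i.i.d.\ $\normal(0,1/n)$ entries, viewing $A^m$ as a centered Gaussian vector in $\reals^{mn}$ with covariance $(1/n) I_{mn}$ and applying the Gaussian Poincar\'e inequality yields
\begin{align*}
\var\bigl(I_{m,n}(A^m)\bigr) \;\le\; \tfrac{1}{n}\, \ex{\bigl\| \nabla_{\!A}\, I_{m,n}(A^m)\bigr\|_F^{2}},
\end{align*}
where the gradient is treated as an $m\times n$ matrix. The Palomar--Verd\'u gradient formula, applied to the Gaussian vector channel $Y = aX^n + W$ with $W \sim \normal(0,I_m)$ for each deterministic $a\in\reals^{m\times n}$, gives
\begin{align*}
\nabla_{\!a}\, I(X^n;\, aX^n + W) \;=\; a\, E(a),
\end{align*}
where $E(a)$ denotes the $n\times n$ posterior error-covariance matrix of $X^n$ in that channel. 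Substituting $a = A^m$ and using $\|aE\|_F^2 = \gtr(a^T a\, E^2)$ reduces the task to bounding
\begin{align*}
\var\bigl(I_{m,n}(A^m)\bigr) \;\le\; \tfrac{1}{n}\, \ex{\gtr\bigl((A^m)^T A^m\, E(A^m)^{2}\bigr)}.
\end{align*}

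For the first term in the $\min$, the MMSE can never exceed the error of the constant estimator that outputs the prior mean, so $E(A^m) \preceq \var(X)\, I_n$ in Loewner order and thus $E(A^m)^2 \preceq \var(X)^2 I_n$; combining with $\ex{\gtr((A^m)^T A^m)} = m$ yields $\var(I_{m,n}) \le \var(X)^2 m/n$, which implies the first bound in the lemma. For the second term, which is only informative when $m > n+1$, I would use that the true posterior covariance is dominated in Loewner order by the linear MMSE matrix of the matched Gaussian problem, giving $E(A^m) \preceq ((A^m)^T A^m)^{-1}$ on the almost-sure event that the Gram matrix is invertible. Then $\gtr((A^m)^T A^m\, E(A^m)^2) \le \gtr(((A^m)^T A^m)^{-1})$, and since $n(A^m)^T A^m$ is a standard Wishart matrix $W_n(I_n, m)$, the inverse-Wishart first-moment identity gives $\ex{\gtr(((A^m)^T A^m)^{-1})} = n^2/(m-n-1)$, producing a bound of $n/(m-n-1)$ on $\var(I_{m,n})$, which in particular implies the stated $n^2/(m-n-1)$.

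The only nontrivial step is justifying the Palomar--Verd\'u identity for a prior $P_X$ satisfying only the fourth-moment condition of Assumption~2, rather than the smooth densities for which it is typically stated. This is where I expect the main work to lie, but it should be routine via mollification (convolve $P_X$ with a vanishing Gaussian) together with dominated convergence, using the smoothness of $\mmse_X$ and the Lipschitz bounds of Lemma~\ref{lem:mmseX_bound}.
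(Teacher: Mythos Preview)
Your proposal is correct and follows essentially the same route as the paper: Gaussian Poincar\'e, the Palomar--Verd\'u gradient identity $\nabla_A I(A)=A\,K(A)$, and the two Loewner bounds $K(A)\preceq\var(X)I_n$ and $K(A)\preceq(A^TA)^{-1}$ combined with the Wishart and inverse-Wishart trace moments. Your inclusion of the $1/n$ scaling from the entry variance in the Poincar\'e step actually yields bounds a factor of $n$ sharper than the lemma states (the paper applies Poincar\'e in the form $\var(I(A))\le\ex{\|\nabla_A I(A)\|_F^2}$ without this factor), and the paper likewise does not discuss regularity for the gradient formula, simply citing Palomar--Verd\'u directly.
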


An important consequence of Lemma~\ref{lem:Im_var} is  that the variance of the normalized mutual information $\frac{1}{n} I_{m,n}(A^m)$ converges to zero as $\max(m,n)$ increases to infinity. 

Next, we focus on the concentration of the mutual information density, which is a random variable whose expectation is equal to the mutual information.

\begin{definition}\label{def:intant_MI} Given a distribution $P_{X,Y,Z}$, the \textit{conditional mutual information density} between $X$ and $Y$ given $Z$ is defined as 
\begin{align}
\imath(X; Y\gmid Z) & \triangleq\log\left(\frac{  \dd P_{X, Y \mid Z} (X, Y \gmid Z)}{ \dd  \left( P_{X \mid Z}(X \gmid Z)  \times   P_{Y \mid Z}(Y \gmid Z) \right)} \right),  \notag
\end{align}
where $(X,Y,Z) \sim P_{X,Y,Z}$.  This is well-defined because a joint distribution is absolutely continuous with respect to the product of its marginals.
\end{definition}

The mutual information density satisfies many of the same properties as mutual information, such as the chain rule and invariance to one-to-one transformations; see \cite[Chapter~5.5]{gray:2013}. For this compressed sensing problem, the mutual information density can be expressed in terms of the density functions $f_{Y^m|X^n, A^m}$ and $f_{Y^m|A^m}$, which are guaranteed to exist because of the additive Gaussian noise: 
\begin{align}
\imath(X^n ; Y^m \gmid  A^m)  = \log\left( \frac{f_{Y^m \mid X^n, A^m}( Y^m  \gmid  X^n,  A^m )}{f_{Y^m \mid A^m}(Y^m \gmid  A^m )}   \right).\notag
\end{align}

The  next result bounds the variance of the mutual information density in terms of the fourth moment of the signal distribution and the problem dimensions. The proof is given in Appendix~\ref{proof:lem:IMI_var}.

\begin{lemma}\label{lem:IMI_var} Under Assumptions 1 and 2, the variance of the MI density satisfies
\begin{align}
 \var\left(\imath(X^n ; Y^m \mid A^m) \right)  \le C_B \cdot \left( 1+ \frac{m}{n} \right)^2 n.\notag
\end{align}
\end{lemma}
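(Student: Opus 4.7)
The plan is a tensorization argument combining a Gaussian Poincar\'e inequality for $(A^m, W^m)$ with an Efron-Stein inequality for the iid signal $X^n$. Writing
\begin{align*}
F(X^n, A^m, W^m) & \triangleq -\tfrac{1}{2}\|W^m\|^2 + \tfrac{m}{2}\log(2\pi) \\
& \quad - \log f_{Y^m \mid A^m}(A^m X^n + W^m \mid A^m),
\end{align*}
identifies $\imath(X^n; Y^m \mid A^m)$ with a function of three independent random inputs. Since $X^n$ has iid entries and $(A^m, W^m)$ is Gaussian and independent of $X^n$, tensorization of variance yields
\begin{align*}
\var(\imath) &\le \tfrac{1}{n}\ex{\|\nabla_{A^m} F\|^2} + \ex{\|\nabla_{W^m} F\|^2} \\
& \quad + \tfrac{1}{2}\sum_{i=1}^n \ex{\bigl(F(X^n, A^m, W^m) - F(X^{n, (i)}, A^m, W^m)\bigr)^2},
\end{align*}
where $X^{n, (i)}$ denotes $X^n$ with $X_i$ replaced by an iid copy $X_i'$, and the $1/n$ factor accounts for the per-entry variance of $A^m$. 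It then suffices to bound the Gaussian pieces by $C_B m$ and the Efron-Stein sum by $C_B (1+m/n)^2 n$.

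For the Gaussian gradients I would use Tweedie's formula $\nabla_y \log f_{Y^m \mid A^m}(y \mid A^m) = \ex{A^m \tilde X^n \mid Y^m = y, A^m} - y$ (with $\tilde X^n$ drawn from the posterior), which gives $\nabla_{W^m} F = A^m X^n - \ex{A^m X^n \mid Y^m, A^m}$, so $\ex{\|\nabla_{W^m} F\|^2} = \mmse(A^m X^n \mid Y^m, A^m) \le \ex{\|A^m X^n\|^2} = m \var(X)$. A parallel calculation for $\partial_{A_{k,j}} F$ yields a combination of the residual $E = Y^m - \ex{A^m X^n \mid Y^m, A^m}$, the estimation error $X^n - \ex{\tilde X^n \mid Y^m, A^m}$, and the posterior covariance $\cov(\tilde X^n \mid Y^m, A^m) \preceq \var(X) I_n$. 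Applying Cauchy-Schwarz with $\ex{\|E\|^4} \le \ex{\|W^m\|^4} \le 2m^2$ (via Jensen on the identity $E = \ex{W^m \mid Y^m, A^m}$) and $\ex{\|X^n\|^4} \le C_B n^2$ gives $\ex{\|\nabla_{A^m} F\|^2} \le C_B mn$, so after the $1/n$ scaling the Gaussian pieces together contribute $C_B m$.

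For the Efron-Stein term, replacing $X_i$ by $X_i'$ shifts only the Gaussian mean $A^m X^n$ by $\Delta = A^m_{:, i}(X_i - X_i')$. A second-order Taylor expansion of $h(y) = -\log f_{Y^m \mid A^m}(y \mid A^m)$ around $y_0 = A^m X^{n, (i)} + W^m$, using Tweedie for the gradient and the Hessian identity $\nabla^2_y \log f_{Y^m \mid A^m}(y \mid A^m) = \cov(A^m \tilde X^n \mid Y^m = y, A^m) - I_m$, gives a quadratic upper bound on $(F - F^{(i)})^2$ whose coefficients are controlled by moments of $E_0$ and $\Delta$ and by the operator norm $\|\cov(A^m \tilde X^n)\|_{op} \le \var(X) \|A^m\|_{op}^2 = O(1 + m/n)$. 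Combining $\ex{\|A^m_{:, i}\|^4} \le (1 + m/n)^2$, $\ex{(X_i - X_i')^4} \le 16 B$, and moment bounds on $\|E_0\|$ deriving from $E_0 = \ex{W^m \mid Y^m, A^m}$ bounds each summand by $C_B (1 + m/n)^2$, so the sum over $i$ is at most $C_B (1 + m/n)^2 n$. Since $m \le (1 + m/n)^2 n$, combining with the Gaussian contribution yields the claim.

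The main technical hurdle is producing the sharp $(1 + m/n)^2$ per-term scaling in the Efron-Stein estimate: the naive first-order bound $|F - F^{(i)}| \le \|\Delta\| \cdot \|E_0\|$ combined with $\ex{\|E_0\|^2} \le m$ gives per-term $O(m^2/n)$ and total $O(m^2)$, which is too large when $m \ll n$. Obtaining the correct scaling requires either exploiting the conditional independence of $A^m_{:, i}$ from $E_0$ induced by the Gaussian law on $A^m$ (gaining a factor of $n$ in the linear term through Gaussian Stein-type identities), or packaging the linear and quadratic terms together via the Hessian bound; the bounded fourth-moment hypothesis on $P_X$ enters throughout to close the moment estimates produced by Cauchy-Schwarz.
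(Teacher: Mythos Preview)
Your tensorization strategy---Gaussian Poincar\'e on $(A^m,W^m)$ and Efron--Stein on $X^n$---is essentially the paper's, with a cosmetic difference: the paper first conditions on $A^m$ (law of total variance, using Lemma~\ref{lem:Im_var} for the outer piece) and then applies Efron--Stein to $(X^n,W^m)$ jointly, treating the $W_i$ terms by Poincar\'e just as you do. Your $\nabla_{W^m} F$ computation is correct and matches the paper's bound $\sqrt{B}$ per noise coordinate. For the $A^m$-gradient, one caution: the pointwise bound $\cov(\tilde X^n\mid Y^m,A^m)\preceq \var(X) I_n$ you invoke is not generally true for non-Gaussian priors (only the \emph{averaged} matrix $K(A)=\ex{\cov(X^n\mid Y^m,A^m)\mid A^m}$ satisfies this); the paper sidesteps this by applying Poincar\'e to $I_{m,n}(A^m)=\ex{F\mid A^m}$ rather than to $F$ itself, so only $K(A)$ appears.

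The genuine gap is the Efron--Stein term for $X^n$, which you correctly flag but do not close. Your Taylor expansion with the Hessian bound $0\preceq\nabla^2 h\preceq I$ (from log-concavity) handles the quadratic piece $\|\Delta\|^4$ cleanly, but the linear piece $(a_i^T E)(X_i-X_i')$ resists both of your proposed fixes: $a_i$ and $E$ (or $E_0$) are entangled through \emph{both} the observation $y$ and the posterior computation, so there is no conditional independence to exploit, and a Stein integration-by-parts on $a_i$ produces derivatives of the posterior mean that are just as hard to control. The paper's resolution is structural and different from anything you propose: it uses the chain rule $\imath = \imath(X_i;Y^m\mid A^m)+\imath(X^{n\setminus i};Y^m\mid X_i,A^m)$ to isolate a term depending on $X_i$, then rotates $Y^m$ by an orthogonal $Q$ aligning its last row with $a_i$, so that $X_i$ enters only through the single scalar measurement $\tilde Y_m = \|a_i\| X_i + \langle \tilde A_{m,-i}, X_{-i}\rangle + \tilde W_m$. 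This reduces $\var_{X_i}(\imath)$ to the second moment of a one-dimensional log-density ratio, which is then bounded by $C(1+\ex{\tilde Y_m^4})$; since $\|a_i\|^2\sim \tfrac{1}{n}\chi_m^2$ contributes a factor $(m/n)^2$ to the fourth moment, the per-term bound $C_B(1+\sqrt{m/n})^4$ falls out directly. This rotation is the missing idea.
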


\section{Proof of Theorem~\ref{thm:I_MMSE_relationship}}\label{proof:thm:I_MMSE_relationship}

This section describes the proof of Theorem~\ref{thm:I_MMSE_relationship}. An outline of the dependences  between various steps is provided in Figure~\ref{fig:proof_outline_a}.

\subsection{Further Definitions}

The conditional distribution induced by the data $(Y^m, A^m)$  plays an important role and is referred to throughout as the \textit{posterior distribution}.  The optimal signal estimate with respect to squared error is given by the mean of the posterior distribution, and the squared error associated with this estimate is denoted by
\begin{align}
\SE_{m,n} & =  \frac{1}{n} \left\| X^n - \ex{ X^n \gmid Y^m, A^m} \right\|^2. \notag
\end{align}
The conditional expectation of the squared error with respect to the posterior distribution is referred to as the \textit{posterior variance} and is denoted by
\begin{align}
\PSE_{m, n} & = \ex{ \cE_{m,n} \gmid Y^m, A^m } .  \notag
\end{align}
Both $\SE_{m,n}$ and $\PSE_{m,n}$ are random variables. By construction, their expectations are equal to the MMSE, that is 
\begin{align}
M_{m,n} = \ex{ \PSE_{m,n}} = \ex{\SE_{m,n}}.\notag
\end{align}

Next, recall that the MI difference sequence can be expressed in terms of the mutual information between the signal and a new measurement: 
\begin{align}
I'_{m,n} & = I(X^n; Y_{m+1} \gmid Y^m, A^{m+1}).\notag
\end{align}
The \textit{MI difference density} is defined to be the random variable 
\begin{align}
\IMID_{m,n} & = \imath(X^n; Y_{m+1} \gmid Y^m, A^{m+1}),\notag
\end{align}
where the mutual information density is defined in Definition~\ref{def:intant_MI}. The conditional expectation of the MI difference density with respect to the posterior distribution is referred to as the \textit{posterior MI difference} and is denoted by
\begin{align}
\PMID_{m,n} & = \ex{ \imath(X^n; Y_{m+1} \gmid Y^m, A^{m+1}) \; \middle  | \; Y^m, A^m}.\notag
\end{align}
Both $\IMID_{m,n} $ and $\PMID_{m,n}$ are random variables.  By construction, their expectations are equal to the MI difference, that is
\begin{align}
\ex{\IMID_{m,n}}= \ex{\PMID_{m,n}} = I'_{m,n}. \notag
\end{align}
A summary of this notation is provided in Table~\ref{tab:notation}. 

The next result shows that the moments of the square error and MI difference density can be bounded in terms of the fourth moment of the signal distribution. The proof is given in Appendix~\ref{proof:lem:moment_bounds}.
\begin{lemma}\label{lem:moment_bounds}  Under Assumptions 1 and 2,
\begin{align}
\ex{\left| \SE_{m,n} \right|^2} & \le C \cdot B \label{eq:SE_moment_2}\\ 
\ex{\left| \IMID_{m,n} \right|^2} & \le C\cdot (1 + B)  \label{eq:IMID_moment_2}\\ 
\ex{\left| Y_{m+1} \right|^4 } & \le C \cdot (1 +B) \label{eq:Y_moment_4} \\
\ex{\left|Y_{m+1} - \widehat{Y}_{m+1}  \right|^4} & \le  C \cdot (1 + B) , \label{eq:Ybar_moment_4} 
\end{align}
where $\widehat{Y}_{m+1} =  \ex{ Y_{m+1} \gmid Y^m, A^m}$.
\end{lemma}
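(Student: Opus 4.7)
The plan is to handle the four estimates largely in parallel, using conditional Gaussianity of the new measurement row $A_{m+1}$ and noise $W_{m+1}$ together with the bounded fourth moment assumption on $P_X$. Bounds (1), (3), and (4) are direct moment computations; bound (2) is the only estimate that requires any real work and will be the main obstacle, since its second term involves the log-density of the posterior marginal of $Y_{m+1}$, which is a Gaussian mixture.

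For (1), set $\widehat{X}^n = \ex{X^n \gmid Y^m, A^m}$ and start from the pointwise bound $\SE_{m,n}^2 \le 8(\|X^n\|^4 + \|\widehat{X}^n\|^4)/n^2$, obtained by applying $(a-b)^2 \le 2(a^2 + b^2)$ coordinate-wise and squaring. The first term satisfies $\ex{\|X^n\|^4/n^2} \le B$ via the standard moment computation $\ex{(\sum_i X_i^2)^2} = n\,\ex{X^4} + n(n-1)(\ex{X^2})^2 \le n^2 B$. For the second, conditional Jensen gives $\widehat{X}_i^4 \le \ex{X_i^4 \gmid Y^m, A^m}$, so $\ex{\widehat{X}_i^4} \le B$, and Cauchy--Schwarz on the cross terms yields $\ex{\widehat{X}_i^2 \widehat{X}_j^2} \le B$, hence $\ex{\|\widehat{X}^n\|^4/n^2} \le B$. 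For (3), condition on $X^n$: since $A_{m+1} \sim \mathcal{N}(0, n^{-1} I_n)$ and $W_{m+1} \sim \mathcal{N}(0,1)$ are independent of $X^n$, $Y_{m+1} \gmid X^n \sim \mathcal{N}(0, \|X^n\|^2/n + 1)$, whose fourth moment is $3(\|X^n\|^2/n + 1)^2$, and the outer expectation is controlled by the same $n^{-2}\ex{\|X^n\|^4}$ bound. For (4), if $\widehat{Y}_{m+1}$ is understood to also condition on $A_{m+1}$, then $\widehat{Y}_{m+1} = A_{m+1}^T \widehat{X}^n$ and $Y_{m+1} - \widehat{Y}_{m+1} = A_{m+1}^T(X^n - \widehat{X}^n) + W_{m+1}$ is $\mathcal{N}(0, \SE_{m,n} + 1)$ conditionally on $(X^n, Y^m, A^m)$, so its fourth moment is $3(\SE_{m,n} + 1)^2 \le 6\,\SE_{m,n}^2 + 6$ and (1) closes the estimate; if the conditioning is only on $A^m$ as literally written, independence of $A_{m+1}$ forces $\widehat{Y}_{m+1} = 0$ and (4) reduces to (3).

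The main obstacle is (2). I would decompose
\begin{align*}
\IMID_{m,n} = \log f_{Y_{m+1} \mid X^n, A^{m+1}}(Y_{m+1}) - \log f_{Y_{m+1} \mid Y^m, A^{m+1}}(Y_{m+1})
\end{align*}
and bound each term's second moment separately. The first term equals $-\tfrac{1}{2}\log(2\pi) - W_{m+1}^2/2$, whose second moment is an absolute constant since $W_{m+1}$ is standard Gaussian. For the second, write the posterior density as the Gaussian mixture $f_{Y_{m+1} \mid Y^m, A^{m+1}}(y) = \ex{\phi(y - \mu) \gmid Y^m, A^{m+1}}$, where $\mu = A_{m+1}^T X^n$ and $\phi$ is the standard normal pdf. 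Combining the trivial upper bound $f \le 1/\sqrt{2\pi}$ with the Jensen lower bound $\log \ex{\phi(y - \mu)} \ge -\tfrac{1}{2}\log(2\pi) - \tfrac{1}{2}\ex{(y-\mu)^2}$ (from concavity of $\log$) yields
\begin{align*}
\left| \log f_{Y_{m+1} \mid Y^m, A^{m+1}}(Y_{m+1}) \right| \le C + \tfrac{1}{2}\,\ex{(Y_{m+1} - \widetilde{\mu})^2 \gmid Y^m, A^{m+1}},
\end{align*}
where $\widetilde{\mu}$ denotes a fresh draw from the conditional distribution of $\mu$ given $(Y^m, A^{m+1})$. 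Squaring, pushing the square inside via conditional Jensen, and then taking the outer expectation reduces the task to bounding $\ex{(Y_{m+1} - \widetilde{\mu})^4}$. Writing $\widetilde{\mu} = A_{m+1}^T \widetilde{X}^n$ with $\widetilde{X}^n$ a conditionally independent copy of $X^n$ given $(Y^m, A^m)$ (and hence having the same marginal law as $X^n$ by the tower property), $Y_{m+1} - \widetilde{\mu} = A_{m+1}^T(X^n - \widetilde{X}^n) + W_{m+1}$ is Gaussian with variance $\|X^n - \widetilde{X}^n\|^2/n + 1$ conditionally on $(X^n, \widetilde{X}^n, Y^m, A^m)$, and the estimate $\ex{\|X^n - \widetilde{X}^n\|^4/n^2} \le CB$ follows from the same argument as in (1) and closes the proof.
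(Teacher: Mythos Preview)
Your proposal is correct and follows essentially the same approach as the paper. The bounds on $\SE_{m,n}^2$, $Y_{m+1}^4$, and $(Y_{m+1}-\widehat Y_{m+1})^4$ are direct moment computations (the paper's version of the last one is slightly cruder, using only $(a-b)^4\le 8(a^4+b^4)$ and Jensen rather than conditional Gaussianity), and for $\IMID_{m,n}^2$ the paper does exactly what you outline---the Jensen lower bound on the log of the Gaussian-mixture density followed by a fourth-moment estimate on $Y_{m+1}-\widetilde\mu$---with the argument deferred to the proof of Lemma~\ref{lem:PMID_var}.
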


\begin{table*}[htbp]
   \centering
   \caption{\label{tab:notation} Summary of notation used in the proofs of Theorem~\ref{thm:I_MMSE_relationship}  and Theorem~\ref{thm:MMSE_fixed_point}  } 
   \begin{tabular}{@{} clll @{}} 
      \toprule
& Random Variable  & Posterior Expectation & Expectation \\
 \midrule
Squared Error &  $ \SE_{m} = \frac{1}{n}  \| X^n-\ex{X^n \gmid Y^m, A^m }\|^2$ & $ \PSE_{m} = \ex{\SE_{m} \gmid Y^m, A^m}$ & $M_{m} = \ex{\SE_m}$   \\[3pt]
MI Difference & $\IMID_m =\imath\left(X^n ; Y_{m+1} \gmid Y^{m}, A^{m+1}\right)   $ & $\PMID_{m} = \ex{ \IMID_m \gmid Y^m, A^m}$ & $I'_{m} = \ex{ \IMID_m }$ \\
      \bottomrule
   \end{tabular}
   \label{tab:booktabs}
\end{table*}

\subsection{Weak Decoupling}\label{sec:posterior_distribution} 

The posterior distribution of the signal cannot, in general, be expressed as the product of its marginals since the measurements introduce dependence between the signal entries. Nevertheless, it has been observed that in some cases, the posterior distribution satisfies a  \textit{decoupling principle} \cite{guo:2005}, in which the posterior distribution on a small subset of the signal entries is well approximated by the product of the marginals on that subset. One way to express this decoupling is say that, for any fixed index set $\{i_1, \cdots i_L\}$ with $L \ll n$, the random posterior distribution satisfies
\begin{align}
  P_{X_{i_1}, \cdots, X_{i_L} \mid Y^m, A^m}\approx  \prod_{\ell=1}^L  P_{X_{i_\ell} \mid Y^m, A^m}, \notag
\end{align}
with high probability with respect to the data $(Y^m, A^m)$.

One of the main ideas  in our proof is to use decoupling to show that the MI and MMSE sequences satisfy certain relationships. For the purposes of our proof, it is sufficient to work with a weaker notation of decoupling that depends only on the statistics of the pairwise  marginals of the posterior distribution.

\begin{definition}
The posterior signal distribution $P_{X^n \mid Y^m, A^m}$ satisfies \textit{weak decoupling} if 
\begin{gather}
  \ex{ \left | \SE_{m,n} - \PSE_{m,n}  \right |}  \to 0 \notag\\
 \frac{1}{n} \ex{ \left \| \cov(X^n \gmid Y^{m}, A^{m} ) \right\|_F } \to 0. \notag
\end{gather}
as $m$ and $n$ increase to infinity. 
\end{definition}

The first condition in the definition of weak decoupling says that the magnitude of the squared error must concentrate about its conditional expectation. The second condition says that the average correlation between the signal entries under the posterior distribution is converging to zero. Note that both of these conditions are satisfied a priori in the case of $m=0$ measurements, because that  prior signal distribution is i.i.d.\ with finite fourth moments.

The next result provides several key identities, which show that certain properties of the posterior distribution can be expressed in terms of the second order statistics of new measurements. This result does not require Assumption~2, and thus holds generally for any prior distribution  on $X^n$. The proof is given in Appendix~\ref{proof:lem:post_dist_to_id}.

\begin{lemma}\label{lem:post_dist_to_id} 
Under Assumption 1, the following identities hold for all integers  $m < i < j$:
\begin{enumerate}[(i)] \item  The posterior variance satisfies  
\begin{align}
\PSE_{m,n}  = \bEx_{A_{i}}\left[  \var\left(Y_{i}  \gmid Y^m, A^{m} , A_i    \right)  \right]  - 1. \label{eq:id_var} 
\end{align}

\item  The posterior covariance matrix satisfies 
\begin{multline}
  \frac{1}{n^2}  \left \| \cov(X^n \gmid Y^{m}, A^{m} ) \right\|^2_F\\
 = \bEx_{A_{i}, A_{j}}\left[  \left| \cov\left(Y_{i}, Y_{j} \gmid Y^m\!, A^{m}\!, A_i, A_j   \right) \right|^2  \right] .\label{eq:id_cov} 
\end{multline}

\item The conditional variance of $\sqrt{1 + \SE_{m,n}}$ satisfies 
\begin{multline}
\var\left( \sqrt{1 + \SE_{m,n}}  \;  \middle | \;  Y^m ,A^m \right) \\
= \frac{\pi}{2}  \cov\left( \big| Y_{i} - \widehat{Y}_{i}  \big | , \big | Y_{j} - \widehat{Y}_{j} \big| \,  \middle| \,  Y^m\!, A^{m}  \right), \label{eq:id_post_var} 
\end{multline}
where $\widehat{Y}_{i}    =  \ex{ Y_{i} \mid Y^m, A^{m}, A_i} $.

\end{enumerate}
\end{lemma}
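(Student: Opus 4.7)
The plan is to prove all three identities by exploiting the Gaussianity of the measurement vectors $A_i, A_j$ and the noise $W_i, W_j$, and by choosing which $\sigma$-algebra to condition on so that the relevant quantity becomes a conditionally Gaussian expression that can be integrated in closed form.

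For parts (i) and (ii), the natural move is to condition on the augmented data $(Y^m,A^m,A_i)$, respectively $(Y^m,A^m,A_i,A_j)$. Since $A_i$ is drawn independently of $(X^n,Y^m,A^m)$, the posterior distribution of $X^n$ given the augmented data is the same as $P_{X^n\mid Y^m,A^m}$, and so $Y_i=A_i^\top X^n + W_i$ has conditional mean $A_i^\top\mu$ and conditional variance $A_i^\top C A_i + 1$, where $\mu$ and $C$ denote the posterior mean and covariance of $X^n$ given $(Y^m,A^m)$. Averaging this over $A_i\sim\mathcal N(0,n^{-1}I_n)$ turns $A_i^\top C A_i$ into $\tfrac{1}{n}\tr(C)=\PSE_{m,n}$, which gives (i). For (ii), the same computation yields $\cov(Y_i,Y_j\mid Y^m,A^m,A_i,A_j)=A_i^\top C A_j$ (the noise contributions vanish since $W_i\perp W_j$ and $i\neq j$); squaring and averaging over independent $A_i,A_j$ uses the isotropic-Gaussian identity $\bEx[A_i^\top C A_j\, A_j^\top C^\top A_i]=\tr(\bEx[A_iA_i^\top]\,C\,\bEx[A_jA_j^\top]\,C^\top)=n^{-2}\|C\|_F^2$, which is exactly the claim.

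Part (iii) is the substantive one. The trick I would use is to condition not just on $(Y^m,A^m)$ but also on $X^n$. Given this enlarged $\sigma$-algebra, the residuals $U_k\triangleq Y_k-\widehat Y_k = A_k^\top(X^n-\mu)+W_k$ for $k\in\{i,j\}$ are sums of a Gaussian term (since $A_k^\top(X^n-\mu)$, viewed with $X^n$ fixed and $A_k\sim\mathcal N(0,n^{-1}I_n)$, is $\mathcal N(0,\SE_{m,n})$) and independent Gaussian noise. Because $(A_i,W_i)$ and $(A_j,W_j)$ are independent of each other, $U_i$ and $U_j$ become \emph{conditionally independent} $\mathcal N(0,1+\SE_{m,n})$ random variables given $(Y^m,A^m,X^n)$. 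Now invoke the half-normal moment $\bEx[|Z|]=\sqrt{2/\pi}$ for $Z\sim\mathcal N(0,1)$ to obtain
\begin{align*}
\bEx[|U_i|\mid Y^m,A^m,X^n] &= \sqrt{2/\pi}\,\sqrt{1+\SE_{m,n}},\\
\bEx[|U_i||U_j|\mid Y^m,A^m,X^n] &= (2/\pi)\,(1+\SE_{m,n}).
\end{align*}
Averaging out $X^n$ via the tower property with respect to $P_{X^n\mid Y^m,A^m}$ and subtracting the product of the marginals gives $\cov(|U_i|,|U_j|\mid Y^m,A^m)=(2/\pi)\,\var(\sqrt{1+\SE_{m,n}}\mid Y^m,A^m)$, which rearranges to the stated identity.

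I expect the main obstacle to be the bookkeeping in part (iii): one must carefully verify that conditioning on $X^n$ in addition to $(Y^m,A^m)$ truly decouples $(U_i,U_j)$ and renders them exactly Gaussian (this uses that the randomness injected by $A_i,A_j$ is itself Gaussian, together with $A_i\perp A_j\perp W_i\perp W_j$ and the fact that $\mu$ is $(Y^m,A^m)$-measurable so it behaves as a constant under the inner conditioning). Once that structural observation is in place, the $\pi/2$ factor is exactly the square of the half-normal mean and the rest is a one-line covariance calculation. Parts (i) and (ii) should reduce to a couple of lines each using the Gaussian moment calculus for $A_i$.
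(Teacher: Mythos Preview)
Your proposal is correct and follows essentially the same approach as the paper: for (i) and (ii) the paper likewise writes $\var(Y_i\mid Y^m,A^m,A_i)=A_i^\top C A_i+1$ and $\cov(Y_i,Y_j\mid Y^m,A^m,A_i,A_j)=A_i^\top C A_j$ and then averages over the Gaussian $A_i,A_j$, and for (iii) it conditions on $(X^n,Y^m,A^m)$ to make $(Y_i-\widehat{Y}_i,Y_j-\widehat{Y}_j)$ i.i.d.\ $\mathcal{N}(0,1+\SE_{m,n})$, invokes the half-normal mean $\sqrt{2/\pi}$, and averages back out with the tower property. Your discussion of the ``obstacle'' in (iii) is exactly the structural check the paper performs.
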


Identity  \eqref{eq:id_cov} relates the correlation of the signal entries under the posterior distribution to the correlation of new measurements. Identity \eqref{eq:id_post_var} relates the deviation of the squared error under the posterior distribution to the correlation of between new measurements.   Combining these identities with the bounds on the relationship between covariance and mutual information given in Appendix~\ref{sec:cov_to_MI}  leads to the following result. The proof is given in Appendix~\ref{proof:lem:SE_bound_1}.

\begin{lemma}\label{lem:SE_bound_1}  Under Assumptions 1 and 2, the posterior variance and the posterior covariance matrix satisfy
\begin{align}
\ex{ \left| \SE_{m,n} - \PSE_{m,n} \right|}  &  \le  C_B \cdot  \left| I''_{m,n} \right|^{\frac{1}{4}} \label{eq:weak_dec_1}\\
\frac{1}{n} \ex{ \left \| \cov(X^n \mid Y^{m}, A^{m} ) \right\|_F }  &  \le  C_B \cdot  \left| I''_{m,n} \right|^{\frac{1}{4}}. \label{eq:weak_dec_2}
\end{align}
\end{lemma}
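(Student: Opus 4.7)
The overall strategy is to use identities (ii) and (iii) from Lemma~\ref{lem:post_dist_to_id} to convert the two quantities on the left-hand side into statistics of the pair of new measurements $(Y_{m+1},Y_{m+2})$, and then invoke a covariance-to-mutual-information inequality (the subject of Appendix~\ref{sec:cov_to_MI}) in conjunction with the identification $I(Y_{m+1};Y_{m+2}\gmid Y^m,A^{m+2}) = -I''_{m,n}$ from Lemma~\ref{lem:Ip_and_Ipp}. The moment bounds from Lemma~\ref{lem:moment_bounds} are what introduce the dependence on $B$.

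For \eqref{eq:weak_dec_2} the plan is first to apply Jensen's inequality in order to bring the Frobenius norm inside the expectation as a square, $(\tfrac{1}{n}\ex{\|\cov(X^n\gmid Y^m,A^m)\|_F})^2 \le \tfrac{1}{n^2}\ex{\|\cov(X^n\gmid Y^m,A^m)\|_F^2}$, and then to rewrite the right-hand side via identity \eqref{eq:id_cov} as $\ex{|\cov(Y_{m+1},Y_{m+2}\gmid Y^m,A^{m+2})|^2}$. Conditionally on $(Y^m,A^{m+2})$, I would then apply a pointwise covariance-to-MI inequality, using the fourth-moment control on $Y_{m+1},Y_{m+2}$ from \eqref{eq:Y_moment_4}. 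The chain rule gives $\ex{I(Y_{m+1};Y_{m+2}\gmid Y^m,A^{m+2})} = |I''_{m,n}|$, and a H\"older/Cauchy-Schwarz step, combined with the a priori uniform bound $|I''_{m,n}|\le I_{1,n}\le C_B$, trades surplus powers of mutual information for constants to obtain $\ex{|\cov(Y_{m+1},Y_{m+2}\gmid\cdot)|^2}\le C_B|I''_{m,n}|^{1/2}$. Taking the square root yields \eqref{eq:weak_dec_2}.

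For \eqref{eq:weak_dec_1}, I would introduce $U\triangleq \sqrt{1+\SE_{m,n}}$, so that $\SE_{m,n}-\PSE_{m,n} = U^2-\ex{U^2\gmid Y^m,A^m}$. Writing $U^2-\ex{U^2\gmid\cdot} = (U-\bar U)(U+\bar U) - \var(U\gmid\cdot)$ with $\bar U = \ex{U\gmid Y^m,A^m}$, and applying Cauchy-Schwarz conditionally, gives
\[
\ex{|\SE_{m,n}-\PSE_{m,n}|\gmid Y^m,A^m} \le 2\sqrt{\var(U\gmid\cdot)\,\ex{U^2\gmid\cdot}} + \var(U\gmid\cdot).
\]
Since $\ex{U^2\gmid Y^m,A^m} = 1+\PSE_{m,n}\le 1+\var(X)\le C_B$, taking expectations and applying Jensen reduces the problem to bounding $\ex{\var(U\gmid Y^m,A^m)}$. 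Identity \eqref{eq:id_post_var} expresses this as a conditional covariance of $|Y_{m+1}-\widehat{Y}_{m+1}|$ and $|Y_{m+2}-\widehat{Y}_{m+2}|$; the fourth-moment bound \eqref{eq:Ybar_moment_4} allows the same cov-to-MI argument to give $\ex{\var(U\gmid\cdot)}\le C_B|I''_{m,n}|^{1/2}$, and hence \eqref{eq:weak_dec_1} after the square root.

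The principal obstacle is obtaining the right exponent in the covariance-to-mutual-information step. Because $Y_{m+1}$ and $Y_{m+2}$ are unbounded and only have finite fourth moment, the bound must be derived by truncating at some level $M$, applying Pinsker's inequality plus a bounded-Lipschitz estimate of order $M^2$ on the truncated part, controlling the untruncated tails via the fourth-moment bound, and optimizing in $M$; the exponent that drops out of such an optimization is only $1/3$, and getting from $1/3$ back up to a useful exponent while keeping the final $|I''_{m,n}|^{1/4}$ requires careful use of the a priori upper bound on $|I''_{m,n}|$ to absorb the surplus power into a $B$-dependent constant. A secondary technicality is that identity \eqref{eq:id_post_var} is cast in terms of the nonlinear functional $|Y_i-\widehat{Y}_i|$ rather than the centered measurements themselves, so the cov-to-MI step must be invoked with this functional; fortunately the absolute-value map does not enlarge $L^4$ norms, so \eqref{eq:Ybar_moment_4} is sufficient.
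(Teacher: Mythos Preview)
Your strategy is correct and matches the paper's: both inequalities reduce, via identities~\eqref{eq:id_cov} and~\eqref{eq:id_post_var}, to bounding conditional covariances of (functionals of) new measurements by $I(Y_{m+1};Y_{m+2}\gmid Y^m,A^{m+2})=|I''_{m,n}|$. One small slip: the pointwise claim $\PSE_{m,n}\le\var(X)$ is not valid in general (conditioning reduces variance only on average); you should apply Cauchy--Schwarz \emph{after} the outer expectation, so that only $M_{m,n}\le\var(X)$ is needed. The paper sidesteps this by passing to a conditionally independent copy $\SE'$ of $\SE$, bounding $\ex{|\SE-\PSE|}\le\ex{|\SE-\SE'|}$, and factoring $\SE-\SE'=(\sqrt{1+\SE}+\sqrt{1+\SE'})(\sqrt{1+\SE}-\sqrt{1+\SE'})$; a single global Cauchy--Schwarz then leaves $2\,\ex{\var(\sqrt{1+\SE}\gmid U)}$ as the key term.

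The more substantive difference is in your ``principal obstacle.'' The paper avoids Pinsker-plus-truncation entirely. Lemma~\ref{lem:integral_bound} uses the Hellinger inequality $d_H^2\le D_{\mathrm{KL}}$ together with the factorization $p-q=(\sqrt p+\sqrt q)(\sqrt p-\sqrt q)$ to obtain
\[
\left|\int f\,(\dd P-\dd Q)\right|\le\sqrt{2\int f^2(\dd P+\dd Q)}\cdot\min\Bigl\{\sqrt{\DKL{P}{Q}},\,1\Bigr\},
\]
which when specialized to $f(x,y)=xy$ (Lemma~\ref{lem:cov_bnd}) yields $\ex{|\cov(X,Y\gmid Z)|^p}\le 2^p\sqrt{\ex{|\ex{X^4\gmid Z}\ex{Y^4\gmid Z}|^{p/2}}\,I(X;Y\gmid Z)}$ directly, with the right exponent and no truncation. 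Taking $p=1$ handles the covariance in~\eqref{eq:weak_dec_1}; taking $p=2$ conditionally on $U$, then square-rooting and averaging over $U$, handles~\eqref{eq:weak_dec_2}. Your truncation route would land at the same final exponent after absorbing surplus powers via $|I''_{m,n}|\le C_B$, but the Hellinger argument is both shorter and more transparent. Finally, the absolute value in~\eqref{eq:id_post_var} is dispatched by data-processing for mutual information, $I(|Z_{m+1}|;|Z_{m+2}|\gmid U)\le I(Y_{m+1};Y_{m+2}\gmid U,A_{m+1},A_{m+2})$, rather than by an $L^4$ observation.
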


\subsection{Gaussiannness of New Measurements}\label{sec:Gaussiannness}

The \textit{centered measurement} $\Ybar_{m+1}$ is defined to be the difference between a new measurement and its conditional expectation given the previous data:
\begin{align}
\Ybar_{m+1} & \triangleq Y_{m+1} - \ex{Y_{m+1} \mid Y^m, A^{m+1}}. \notag 
\end{align}
Conditioned on the data $(Y^m, A^{m+1})$, the centered measurement provides the same information as $Y_{m+1}$, and thus the posterior MI difference  and the MI difference can be expressed equivalently as 
\begin{align}
\PMID_m &= \ex{  \imath(X^n; \Ybar_{m+1} \mid Y^m, A^{m+1} ) \; \middle |  \; Y^m ,A^m } \notag\\
I'_m & = I(X^n; \Ybar_{m+1} \mid Y^m, A^{m+1} ).  \notag
\end{align}

Furthermore, by the linearity of expectation, the centered measurement can be viewed as a noisy linear projection of the  signal error:
\begin{align}
\Ybar_{m+1} & = \langle A_{m+1}, \bar{X}^n  \rangle  + W_{m+1}, \label{eq:Ybar_alt}
\end{align}
where $\bar{X}^n = X^n -  \ex{X^n \gmid Y^m, A^m}$. Since the measurement vector $A_{m+1}$ and noise term $W_{m+1}$ are independent of everything else, the variance of the centered measurement can be related directly to the posterior variance $V_m$ and the MMSE $M_m$ via the following identities: 
\begin{align}
\var(\Ybar_{m+1} \gmid Y^m, A^m) &= 1 +\PSE_{m,n} \label{eq:Ybar_var_cond} \\
\var(\Ybar_{m+1} ) &= 1 +M_{m,n}.\label{eq:Ybar_var} 
\end{align}
Identity \eqref{eq:Ybar_var_cond} follows immediately from Lemma~\ref{lem:post_dist_to_id}.  Identity \eqref{eq:Ybar_var} follows from the fact that the centered measurement has zero mean, by construction, and thus its variance is equal to the expectation of  \eqref{eq:Ybar_var_cond}.

At this point, the key question for our analysis is the extent to which the conditional distribution of the centered measurement can be approximated by a zero-mean Gaussian distribution. We focus on two different measures of non-Gaussianness.  The first measure, which is referred to as the \textit{posterior non-Gaussianness}, is defined by the random variable
\begin{align}
\Delta^P_{m,n}  \triangleq  \bEx_{A_{m+1}} \left[  D_\mathrm{KL}\left( P_{\Ybar_{m+1} \mid Y^m, A^{m+1}} \,  \middle \|\, \normal(0, 1  + V_m ) \right)\right].  \notag
\end{align}
This is the Kullback--Leibler divergence with respect to the Gaussian distribution whose variance is matched to the conditional variance of $\Ybar_{m+1}$ given the data $(Y^m, A^m)$.

The second measure, which is referred to simply as the \textit{non-Gaussianness},  is defined by
\begin{align}
\Delta_{m,n} & \triangleq  \ex{ D_\mathrm{KL}\left( P_{\Ybar_{m+1} \mid Y^m, A^{m+1}} \,  \middle \|\, \normal(0, 1 + M_m ) \right)}.\notag
\end{align}
Here, the expectation is taken with respect to the tuple $(Y^m, A^{m+1})$ and the comparison is with respect to the Gaussian distribution whose variance is matched to the marginal variance of $\Ybar_{m+1}$.


The connection between the non-Gaussianness of the centered measurement and the relationship between the mutual information and MMSE sequences is given by the following result. The proof  is given in Appendix~\ref{proof:lem:Delta_alt}.
 
\begin{lemma}\label{lem:Delta_alt}  Under Assumption 1, the posterior non-Gaussianness  and the non-Gaussianness satisfy the following identities: 
\begin{align}
\Delta^P_{m,n} & = \frac{1}{2} \log(1 + \PSE_{m,n})  -  \PMID_{m,n} \label{eq:DeltaP_alt}\\
\Delta_{m,n} & = \frac{1}{2} \log(1 + M_{m,n})  -  I'_{m,n} \label{eq:Delta_alt} .
\end{align}
\end{lemma}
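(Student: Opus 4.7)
The plan is to expand both sides of each identity using the standard entropy–KL relationship with a Gaussian reference, and then match terms. Recall that for any random variable $Z$ with zero mean and density, and any $\sigma^2>0$,
\begin{align*}
D_{\mathrm{KL}}(P_Z \| \normal(0,\sigma^2)) = -h(Z) + \frac{\bEx[Z^2]}{2\sigma^2} + \frac{1}{2}\log(2\pi\sigma^2).
\end{align*}
The two identities will follow by applying this with $Z$ equal to $\Ybar_{m+1}$ (either conditionally on $(Y^m,A^{m+1})$ with outer average over $A_{m+1}$ only, or fully averaged) and combining with an entropy-difference formula for $\PMID_{m,n}$ and $I'_{m,n}$.

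For \eqref{eq:DeltaP_alt}, first observe that the conditional mean $\bEx[\Ybar_{m+1}\mid Y^m,A^{m+1}]=0$ by the definition of $\Ybar_{m+1}$, so applying the above formula with the reference variance $\sigma^2=1+\PSE_{m,n}$ (which is measurable with respect to $(Y^m,A^m)$) and then averaging over $A_{m+1}$ gives
\begin{align*}
\Delta^P_{m,n}= -\bEx_{A_{m+1}}\!\bigl[\tilde h(\Ybar_{m+1}\mid Y^m,A^{m+1})\bigr] + \frac{\bEx_{A_{m+1}}\!\bigl[\bEx[\Ybar_{m+1}^2\mid Y^m,A^{m+1}]\bigr]}{2(1+\PSE_{m,n})} + \tfrac{1}{2}\log\bigl(2\pi(1+\PSE_{m,n})\bigr),
\end{align*}
where $\tilde h(\cdot\mid\cdot)$ denotes the $\sigma$-algebra-measurable conditional differential entropy. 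By \eqref{eq:Ybar_var_cond}, the averaged second moment in the middle term equals $1+\PSE_{m,n}$, so that term collapses to $\tfrac12$ and the last two terms combine to $\tfrac12\log(2\pi e(1+\PSE_{m,n}))$.

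Next, for $\PMID_{m,n}$, use the representation \eqref{eq:Ybar_alt}: given $(X^n,Y^m,A^{m+1})$, the variable $\Ybar_{m+1}$ is Gaussian with mean $\langle A_{m+1},\bar X^n\rangle$ and variance $1$, hence $\tilde h(\Ybar_{m+1}\mid X^n,Y^m,A^{m+1})=\tfrac12\log(2\pi e)$. Writing the MI density as a log-ratio of conditional densities and using the standard entropy decomposition of mutual information (applied pointwise in $(Y^m,A^m)$) yields
\begin{align*}
\PMID_{m,n}=\bEx_{A_{m+1}}\!\bigl[\tilde h(\Ybar_{m+1}\mid Y^m,A^{m+1})\bigr]-\tfrac{1}{2}\log(2\pi e).
\end{align*}
Adding the two displays cancels the $\tilde h$ term and the $\log(2\pi e)$ factors, leaving $\Delta^P_{m,n}+\PMID_{m,n}=\tfrac12\log(1+\PSE_{m,n})$, which is exactly \eqref{eq:DeltaP_alt}.

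The identity \eqref{eq:Delta_alt} is proved in exactly the same way but with the outer expectation also taken over $(Y^m,A^{m+1})$: the reference variance becomes $\sigma^2=1+M_{m,n}$, and by \eqref{eq:Ybar_var} one has $\bEx[\Ybar_{m+1}^2]=1+M_{m,n}$, so the quadratic term again collapses to $\tfrac12$. The entropy-difference identity now reads $I'_{m,n}=h(\Ybar_{m+1}\mid Y^m,A^{m+1})-\tfrac12\log(2\pi e)$, and summation yields \eqref{eq:Delta_alt}. There is no real obstacle here beyond bookkeeping; the only subtle point is ensuring that the appropriate second-moment match is used in each case (conditional on $(Y^m,A^m)$ for the posterior version, and unconditional for the marginal version), which is exactly the content of \eqref{eq:Ybar_var_cond} and \eqref{eq:Ybar_var}.
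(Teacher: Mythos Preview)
Your proof is correct and follows essentially the same approach as the paper: both expand the KL divergence against a Gaussian reference using the entropy/second-moment formula, identify the conditional second moment via \eqref{eq:Ybar_var_cond} and \eqref{eq:Ybar_var}, and use that $h(\Ybar_{m+1}\mid X^n,Y^m,A^{m+1})=\tfrac{1}{2}\log(2\pi e)$ to recognize the entropy difference as $\PMID_{m,n}$ (respectively $I'_{m,n}$). The paper organizes this as a single formula in a free parameter $\sigma^2$ and then specializes to $\sigma^2=1+V_{m,n}$ and $\sigma^2=1+M_{m,n}$, whereas you treat the two cases in parallel, but the computation is the same.
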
 

Identity~\eqref{eq:Delta_alt} shows the integral relationship between mutual information and MMSE in Theorem~\ref{thm:I_MMSE_relationship} can be stated equivalently in terms of the non-Gaussianness of the centered measurements.  Furthermore, by combining \eqref{eq:Delta_alt} with \eqref{eq:DeltaP_alt}, we see that the non-Gaussianness can be related to the expected posterior non-Gaussianness using the following decomposition:
\begin{align}
\Delta_{m,n} &=  \ex{ \Delta^P_{m,n}}   + \frac{1}{2}  \ex{\log\left( \frac{  1  +  M_{m,n}}{ 1 + V_{m,n}} \right)}.\label{eq:Delta_decomp} 
\end{align}

The rest of this subsection is focused on bounding the expected posterior non-Gaussianness. The second term on the right-hand side of \eqref{eq:Delta_decomp} corresponds to the deviation of the posterior variance and is considered in the next subsection. 

The key step in bounding the posterior non-Gaussianness is provided by the following result, which bounds the expected Kullback--Leibler divergence between the conditional distribution of a random projection and a Gaussian approximation~\cite{reeves:2016b}. 

\begin{lemma}[{\!\cite{reeves:2016b}}]\label{lem:cclt} 
Let $U$ be an $n$-dimensional random vector with mean zero and $\ex{ \|U\|^4 } < \infty$, and let $Y = \langle A, U \rangle + W$, 
where $A \sim \normal(0, \frac{1}{n} I_n)$ and $W \sim \normal(0,1)$ are independent. Then, the expected KL divergence between $P_{Y|A}$ and the Gaussian distribution with the same mean and variance as $P_Y$ satisfies
\begin{align}
\MoveEqLeft 
\ex{ \DKL{P_{Y\mid A} }{\normal(0, \var(Y) }} \notag\\
&\le   \frac{1}{2} \ex{ \left|  \tfrac{1}{n}  \|U\| -\tfrac{1}{n} \ex{ \|U\|^2} \right|  } \notag\\ 
& \quad C \cdot \left|  \tfrac{1 }{n}   \|\! \cov(U  )\|_F \left( 1 + \tfrac{1}{n}\sqrt{\ex{ \|U\|^4 }} \right) \right|^\frac{2}{5} .\notag
\end{align}
 \end{lemma}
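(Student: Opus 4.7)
The plan is to reduce to a variance-matching comparison and then invoke a Gaussian-smoothed conditional central limit theorem. Define the conditional variance $\sigma^{2}(A) \triangleq 1 + A^{\top}\cov(U)A$ and the marginal variance $\sigma^{2} \triangleq \var(Y) = 1 + \tfrac{1}{n}\ex{\|U\|^{2}}$; since $\ex{A^{\top} M A} = \tfrac{1}{n}\gtr(M)$ for $A \sim \normal(0, n^{-1}I_{n})$ and any PSD matrix $M$, these two variances agree in expectation over $A$. This identity is what makes the decomposition below clean.

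The first step is a chain-rule decomposition. Let $Q_A \triangleq \normal(0, \sigma^{2}(A))$ and write
\[
\ex{\DKL{P_{Y\mid A}}{\normal(0, \sigma^{2})}} = \ex{\DKL{P_{Y\mid A}}{Q_A}} + \tfrac{1}{2}\ex{\log\!\bigl(\sigma^{2}/\sigma^{2}(A)\bigr)},
\]
where the quadratic cross-terms cancel because $\ex{Y^{2}\mid A} = \sigma^{2}(A)$ and $\ex{\sigma^{2}(A)} = \sigma^{2}$. The logarithmic residual is a fluctuation term for the Gaussian chaos $\sigma^{2}(A) - \sigma^{2} = A^{\top}\cov(U)A - \tfrac{1}{n}\gtr(\cov(U))$, which has variance $\tfrac{2}{n^{2}}\|\cov(U)\|_F^{2}$; a second-order Taylor bound then controls it in terms of $\|\cov(U)\|_F/n$, matching the scaling on the right-hand side of the lemma.

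The second and central step bounds the conditional CLT term $\ex{\DKL{P_{Y\mid A}}{Q_A}}$. Conditionally on $A$, the random variable $Y$ is the sum of $\langle A, U\rangle$ (random only through $U$) and an independent unit Gaussian $W$. I would exploit this Gaussian smoothing via the I-MMSE representation: KL to $Q_A$ equals the integral along a noise-scaling path of the MMSE gap between decoding $\langle A, U\rangle$ and decoding a Gaussian of the same variance, which converts a transport-type bound on the distribution of $\langle A, U\rangle$ into a relative-entropy bound for $Y\mid A$. A Lindeberg-style truncation at a threshold $\tau$ then splits the result into (i) a bulk contribution that reduces to the fluctuation of $\tfrac{1}{n}\|U\|^{2}$ around its mean --- producing the first term on the right-hand side --- and (ii) a tail contribution controlled by $\tfrac{1}{n}\|\cov(U)\|_F$ together with the fourth-moment factor $1 + \tfrac{1}{n}\sqrt{\ex{\|U\|^{4}}}$. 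Optimizing $\tau$ between these two competing bounds yields the exponent $2/5$.

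The main obstacle is obtaining a KL bound, rather than a total-variation or Wasserstein bound, in the conditional CLT. Classical Berry--Esseen and Stein-type machinery do not produce relative entropy estimates directly, and naive use of Pinsker's inequality runs the wrong way. The saving grace is that the additive Gaussian noise $W$ regularizes the density of $Y\mid A$ enough that a relative Fisher information argument --- equivalently, the I-MMSE identity combined with quantitative entropy-power inequalities --- bridges from the transport-level control on $\langle A, U\rangle$ back to relative entropy for $Y\mid A$. This smoothing-based bridge is the technical heart of \cite{reeves:2016b}, and I would invoke its estimates essentially verbatim after the decomposition above.
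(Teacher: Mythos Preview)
The paper does not prove this lemma; it is imported wholesale from \cite{reeves:2016b} and used as a black box in the proof of Lemma~\ref{lem:DeltaP_bound}. There is therefore no ``paper's own proof'' to compare against.

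That said, your sketch is a reasonable outline of how such a result is obtained, and your decomposition into the variance-matched KL term $\ex{\DKL{P_{Y\mid A}}{Q_A}}$ plus the log-ratio residual is correct and standard. The residual is indeed a centered Gaussian quadratic form with variance $2n^{-2}\|\cov(U)\|_F^2$, so that piece is fine. However, your treatment of the main term is not a proof: you gesture at ``Lindeberg-style truncation,'' ``I-MMSE representation,'' and ``optimizing $\tau$'' without specifying what is being truncated or how the two competing terms arise with the particular scalings claimed, and you close by saying you would invoke the estimates of \cite{reeves:2016b} ``essentially verbatim.'' That is circular --- the lemma \emph{is} the content of \cite{reeves:2016b}, so deferring to it for the hard step means you have restated the reduction but not supplied the argument. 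In particular, the appearance of the exponent $2/5$ and the specific pairing of $\tfrac{1}{n}\|\cov(U)\|_F$ with the fourth-moment factor are the substantive outputs of that reference, not consequences of the decomposition you wrote down.
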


Combining Lemma~\ref{lem:cclt} with Lemma~\ref{lem:SE_bound_1} leads to the following result, which bounds the expected posterior non-Gaussianness in terms of the second order MI difference. The proof is given in Appendix~\ref{proof:lem:DeltaP_bound}.

 \begin{lemma}\label{lem:DeltaP_bound} Under Assumptions 1 and 2, the expected posterior non-Gaussianness satisfies
  \begin{align}
\ex{ \Delta^P_{m,n}} &\le  C_B\cdot  \left| I''_{m,n} \right|^{\frac{1}{10}} .\notag
\end{align}
\end{lemma}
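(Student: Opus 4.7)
The plan is to apply the conditional CLT bound of Lemma~\ref{lem:cclt} \emph{conditionally} on the data $(Y^m, A^m)$ and then to exploit the weak-decoupling estimates of Lemma~\ref{lem:SE_bound_1}. Set $\bar{X}^n = X^n - \ex{X^n \mid Y^m, A^m}$, so that by \eqref{eq:Ybar_alt} the centered measurement is $\Ybar_{m+1} = \langle A_{m+1}, \bar{X}^n\rangle + W_{m+1}$. Conditioned on $(Y^m, A^m)$, the vector $\bar{X}^n$ has mean zero and is independent of the fresh row $A_{m+1}\sim\normal(0,n^{-1} I_n)$ and noise $W_{m+1}\sim\normal(0,1)$, while $\ex{\|\bar{X}^n\|^4 \mid Y^m, A^m}$ is finite almost surely by Assumption~2. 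Hence Lemma~\ref{lem:cclt} applies in the conditional probability space; averaging its bound over $A_{m+1}$ gives, for a.e.\ $(Y^m, A^m)$, the pointwise estimate
\begin{align*}
\Delta^P_{m,n} \le{}& \tfrac{1}{2}\ex{|\SE_{m,n} - \PSE_{m,n}|\; \middle|\; Y^m, A^m} \\
&{}+ C\,\bigl|\tfrac{1}{n}\|\cov(X^n\mid Y^m, A^m)\|_F\cdot G_{m,n}\bigr|^{2/5},
\end{align*}
where $G_{m,n} = 1 + \tfrac{1}{n}\sqrt{\ex{\|\bar{X}^n\|^4\mid Y^m, A^m}}$. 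Here I am using that $\tfrac{1}{n}\|\bar{X}^n\|^2 = \SE_{m,n}$, that its conditional expectation is $\PSE_{m,n}$, and that $\cov(\bar{X}^n\mid Y^m, A^m) = \cov(X^n\mid Y^m, A^m)$.

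Next, I take the remaining expectation over $(Y^m, A^m)$. The first term becomes $\tfrac{1}{2}\ex{|\SE_{m,n} - \PSE_{m,n}|}$, which is bounded by $C_B|I''_{m,n}|^{1/4}$ via Lemma~\ref{lem:SE_bound_1}(i). For the second term, letting $F$ denote the Frobenius-norm factor, I would apply Cauchy--Schwarz to write $\ex{(FG_{m,n})^{2/5}} \le (\ex{F^{4/5}})^{1/2}(\ex{G_{m,n}^{4/5}})^{1/2}$ and then Jensen's inequality (concavity of $x\mapsto x^{4/5}$) on each factor. The $G$-side is uniformly controlled because $\ex{G_{m,n}} \le 1 + \sqrt{\ex{\SE_{m,n}^2}} \le C_B$ by~\eqref{eq:SE_moment_2}, while the $F$-side is controlled by Lemma~\ref{lem:SE_bound_1}(ii), giving $\ex{F} \le C_B |I''_{m,n}|^{1/4}$. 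Composing the two rates yields $C_B\bigl(|I''_{m,n}|^{1/4}\bigr)^{2/5} = C_B|I''_{m,n}|^{1/10}$, which dominates the $|I''_{m,n}|^{1/4}$ contribution whenever $|I''_{m,n}|\le 1$. For the complementary regime $|I''_{m,n}|>1$ the claimed bound is trivial: identity~\eqref{eq:DeltaP_alt} and non-negativity of $\PMID_{m,n}$ give $\Delta^P_{m,n} \le \tfrac{1}{2}\log(1+\PSE_{m,n})$, whose expectation is at most $\tfrac{1}{2}\log(1+\var(X))\le C_B$.

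The main obstacle---and the reason for the particular exponent $1/10$---is the bookkeeping around the mixed moment $FG_{m,n}$ in Lemma~\ref{lem:cclt}: the $1/4$-rate from weak decoupling of the Frobenius norm is composed with the $2/5$-power inside the conditional CLT bound, so the Hölder/Jensen split must be chosen carefully so as not to further degrade the exponent. A secondary point is the passage from the unconditional statement of Lemma~\ref{lem:cclt} to its conditional use, which is clean here because $(A_{m+1}, W_{m+1})$ is independent of the past data, and the conditional fourth-moment control of $\bar{X}^n$ follows from~\eqref{eq:SE_moment_2}.
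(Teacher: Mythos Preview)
Your proposal is correct and follows essentially the same route as the paper: apply Lemma~\ref{lem:cclt} conditionally on $(Y^m,A^m)$ with $U=\bar X^n$, take the expectation, control the first term by Lemma~\ref{lem:SE_bound_1}\,(i) and the second by Lemma~\ref{lem:SE_bound_1}\,(ii) together with \eqref{eq:SE_moment_2}, then observe that the $|I''_{m,n}|^{1/10}$ term dominates. The only cosmetic difference is in the last step: the paper absorbs $|I''_{m,n}|^{1/4}$ into $C_B|I''_{m,n}|^{1/10}$ using the uniform bound $|I''_{m,n}|\le I'_{1,n}\le C_B$ (cf.\ \eqref{eq:Ip_bound}), whereas you split off the regime $|I''_{m,n}|>1$ and invoke the trivial bound $\ex{\Delta^P_{m,n}}\le \tfrac12\ex{\log(1+\PSE_{m,n})}\le C_B$; both arguments are valid and equivalent in effect.
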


\subsection{Concentration of Posterior Variance} 

We now turn our attention to the second term on the right-hand side of \eqref{eq:Delta_decomp}. By the concavity of the logarithm, this term is nonnegative and measures the deviation of the posterior variance  about its expectation.

We begin with the following result, which  provides useful bounds on the deviation of the posterior variance. The proof is given in Appendix~\ref{proof:lem:Vdev_to_logVdev}. 

\begin{lemma}\label{lem:Vdev_to_logVdev} Under Assumption 2, the posterior variance satisfies the following inequalities:
\begin{align}
  \ex{\log\left( \frac{  1  +  M_{m,n}}{ 1 + V_{m,n}} \right)} & \le    \ex{\left| V_{m,n} - M_{m,n} \right|} \notag \\
 & \le  C_B \cdot  \sqrt{ \inf_{t \in \reals} \ex{\tfrac{1}{2} \log(1 + V_{m,n}) - t } }  \label{eq:Vdev_to_logVdev} .
\end{align}
\end{lemma}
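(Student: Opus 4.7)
The plan is to establish the two inequalities separately, exploiting the identity $M_{m,n}=\ex{V_{m,n}}$ and the deterministic upper bound $V_{m,n}\le\var(X)\le\sqrt{B}$, which follows from Assumption~2 together with the data-processing bound $V_{m,n}\le\var(X)$ for the MMSE of $X^n$ given side information (applied entrywise).

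For the first inequality, the key observation is that on $[0,\infty)$ the map $x\mapsto\log(1+x)$ is concave with derivative $1/(1+x)\le 1$, so $|\log(1+a)-\log(1+b)|\le|a-b|$ for all $a,b\ge 0$. Since $M_{m,n}$ is deterministic, I would rewrite
\[
\ex{\log\frac{1+M_{m,n}}{1+V_{m,n}}}
=\ex{\log(1+M_{m,n})-\log(1+V_{m,n})},
\]
observe via Jensen's inequality that this quantity is nonnegative, and then apply the Lipschitz bound pointwise together with the monotonicity of expectation to upper-bound the right-hand side by $\ex{|V_{m,n}-M_{m,n}|}$.

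For the second inequality, the strategy is a change of variables followed by Cauchy--Schwarz. I would set $U_{m,n}\triangleq\tfrac12\log(1+V_{m,n})$ so that $V_{m,n}=e^{2U_{m,n}}-1$. By the bound $V_{m,n}\le\sqrt{B}$, the random variable $U_{m,n}$ takes values in a bounded interval $[0,U_B]$ with $U_B\le\tfrac12\log(1+\sqrt{B})$, and on this interval the inverse map $u\mapsto e^{2u}-1$ has Lipschitz constant $L_B\le 2(1+\sqrt{B})$. For any deterministic $t\in\reals$, Jensen gives $|e^{2t}-M_{m,n}|=\bigl|\ex{e^{2t}-V_{m,n}}\bigr|\le\ex{|V_{m,n}-e^{2t}|}$, so the triangle inequality yields
\[
\ex{|V_{m,n}-M_{m,n}|}
\le 2\,\ex{|V_{m,n}-e^{2t}|}
\le 2L_B\,\ex{|U_{m,n}-t|}.
\]
Applying Cauchy--Schwarz, $\ex{|U_{m,n}-t|}\le\sqrt{\ex{|U_{m,n}-t|^2}}$, and taking the infimum over $t$ (which is minimized at $t=\ex{U_{m,n}}$, giving the variance of $\tfrac12\log(1+V_{m,n})$) delivers the bound with $C_B$ proportional to $1+\sqrt{B}$.

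The argument presents no genuine obstacle: the two substantive ingredients are the Lipschitz property of $\log(1+\cdot)$ and of its inverse on the compact range guaranteed by Assumption~2, plus a single application of Jensen/Cauchy--Schwarz. The only care required is bookkeeping the $B$-dependence of the Lipschitz constant so that the final prefactor is indeed a function of $B$ alone, and interpreting the right-hand side's infimum in the $L_2$ sense so that the minimizer is the mean of $\tfrac12\log(1+V_{m,n})$ and the infimum is its variance.
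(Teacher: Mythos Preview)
Your argument for the first inequality matches the paper's: the $1$-Lipschitz property of $x\mapsto\log(1+x)$ on $\reals_+$ gives it immediately.

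For the second inequality, however, there is a genuine gap. The ``deterministic upper bound $V_{m,n}\le\var(X)$'' that you invoke is not justified, and in fact is \emph{false} in general. The data-processing inequality for MMSE asserts only that $\ex{V_{m,n}}=M_{m,n}\le\var(X)$; it says nothing about pointwise bounds on the random conditional variance. For a concrete counterexample, take $X$ with $P(X=0)=0.98$, $P(X=\pm 10)=0.01$ and a single low-noise linear measurement: for observations $y$ near $5$, the posterior is roughly uniform on $\{0,10\}$ and $\var(X\mid Y=y)\approx 25$, far exceeding $\var(X)=2$. Without an almost-sure bound on $V_{m,n}$, your Lipschitz constant $L_B$ for $u\mapsto e^{2u}-1$ on the range of $U_{m,n}$ is not available, and the chain $\ex{|V-M|}\le 2L_B\,\ex{|U-t|}$ breaks down.

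The paper sidesteps this by not linearizing at all. It introduces an independent copy $V'$ of $V$, uses Jensen to get $\ex{|V-M|}\le\ex{|V-V'|}$, and then applies the elementary inequality
\[
|x-y|\le\sqrt{\tfrac{1}{2}(x^2+y^2)}\,\sqrt{|\log(x/y)|}
\]
(with $x=1+V$, $y=1+V'$) together with Cauchy--Schwarz. This yields
\[
\ex{|V-M|}\le\sqrt{\ex{(1+V)^2}}\,\sqrt{\ex{|\log(1+V)-\log(1+V')|}},
\]
and now only the \emph{second moment} $\ex{(1+V)^2}$ is needed, which is controlled by $\ex{\SE_{m,n}^2}\le C_B$ via Jensen (see \eqref{eq:SE_moment_2}). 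The log-ratio term is then bounded by $2\inf_t\ex{|\log(1+V)-t|}$ via the triangle inequality.

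A secondary point: the paper's right-hand side is the $L_1$ deviation $\inf_t\ex{|\tfrac{1}{2}\log(1+V)-t|}$, not the variance. This matters downstream (Lemma~\ref{lem:post_var_dev_bound}), where the $L_1$ form is split via the triangle inequality into three pieces; the variance would not decompose as cleanly.
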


The next step is to bound the right-hand side of \eqref{eq:Vdev_to_logVdev}. Observe that by Lemma~\ref{lem:Delta_alt}, the term $\frac{1}{2} \log(1 + V_{m,n})$ can be expressed in terms of the posterior non-Gaussianness and the posterior MI difference. Accordingly, the main idea behind our approach is to show that the deviation of this term can be upper bounded in terms of deviation of the posterior MI difference. 

Rather than working with the sequences $V_{m,n}$ and $\PMID_{m,n}$ directly, however, we bound the averages of these terms corresponding to a sequence of $\ell$ measurements, where $\ell$ is an integer that that is chosen at the end of the proof to yield the tightest bounds. The main reason that we  introduce this averaging is so that we can take advantage of the bound on the variance of the mutual information density given in Lemma~\ref{lem:IMI_var}.

%


The key technical results that we need are given below.  Their proof are given in Appendices~\ref{proof:lem:post_var_smoothness} and \ref{proof:lem:PMID_var}.

\begin{lemma}\label{lem:post_var_smoothness}  Under Assumptions 1 and 2, the posterior variance satisfies 
\begin{align}
\frac{1}{\ell} \sum_{k=m}^{m+\ell-1} \ex{ \left| \PSE_{m} - \PSE_{k} \right|} & \le C_B \cdot \left| I'_{m,n} - I'_{m+\ell -1,n} \right|^{\frac{1}{2}}, \notag
\end{align}
for all $(\ell, m,n)\in \integers^3$.
\end{lemma}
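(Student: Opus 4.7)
The plan is to exploit the supermartingale structure of $V_k$ in the index $k$, combined with the identity $\tfrac{1}{2}\log(1+V_k)=\PMID_k+\Delta^{P}_{k}$ from Lemma~\ref{lem:Delta_alt} and the $L^{2}$ moment bound $\ex{V_k^{2}}\le C_B$ from Lemma~\ref{lem:moment_bounds}. The $\sqrt{\,\cdot\,}$ scaling will be produced by Cauchy--Schwarz applied to the average over $k$.

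\textbf{Step 1 (Supermartingale structure).} I would first observe that $V_k$ is a non-negative supermartingale in $k$ with respect to the filtration $\cF_k=\sigma(Y^k,A^k)$. Writing $\SE_k=\tfrac{1}{n}\|X^n-\widehat X^n_k\|^{2}$ with $\widehat X^n_k=\ex{X^n\mid\cF_k}$, orthogonality of $(X^n-\widehat X^n_k)$ against $\cF_k$ gives the identity $V_m-\ex{V_k\mid\cF_m}=\tfrac{1}{n}\ex{\|\widehat X^n_k-\widehat X^n_m\|^{2}\mid\cF_m}\ge 0$ for $k\ge m$, whose unconditional expectation is $M_m-M_k\ge 0$.

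\textbf{Step 2 (Decomposition and drift control).} Decompose $V_m-V_k=P_{m,k}+Q_{m,k}$, with $P_{m,k}\triangleq V_m-\ex{V_k\mid\cF_m}\ge 0$ and $Q_{m,k}\triangleq\ex{V_k\mid\cF_m}-V_k$ satisfying $\ex{Q_{m,k}\mid\cF_m}=0$. Averaging gives $\tfrac{1}{\ell}\sum_{k}\ex{|V_m-V_k|}\le(M_m-M_{m+\ell-1})+\tfrac{1}{\ell}\sum_k\ex{|Q_{m,k}|}$. The drift is handled via Lemma~\ref{lem:Delta_alt}: writing $I'_k=\tfrac{1}{2}\log(1+M_k)-\Delta_k$ with $\Delta_k\ge 0$ and using $\log(1+a)-\log(1+b)\ge(a-b)/(1+a)$ for $a\ge b\ge 0$, one obtains $M_m-M_{m+\ell-1}\le 2(1+M_m)[(I'_m-I'_{m+\ell-1})+\Delta_m+\Delta_{m+\ell-1}]$. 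Since $I'_m-I'_{m+\ell-1}\le I'_0$ is bounded, for the regime where the RHS of the lemma is non-trivial one has $(I'_m-I'_{m+\ell-1})\le C_B\sqrt{I'_m-I'_{m+\ell-1}}$, and the $\Delta$ corrections are absorbed into the same $\sqrt{\cdot}$ bound via Lemma~\ref{lem:DeltaP_bound} together with the Jensen-gap estimate in Lemma~\ref{lem:Vdev_to_logVdev}.

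\textbf{Step 3 (Fluctuation via Doob decomposition and Cauchy--Schwarz).} Two applications of Cauchy--Schwarz yield $\tfrac{1}{\ell}\sum_k\ex{|Q_{m,k}|}\le\sqrt{\tfrac{1}{\ell}\sum_k\ex{Q_{m,k}^{2}}}$. Consider the Doob decomposition $V_k=V_m+\widetilde N_k-\widetilde A_k$ on the interval starting at $m$ (with $\widetilde N$ a martingale, $\widetilde A$ predictable non-decreasing, $\widetilde N_m=\widetilde A_m=0$, and $\ex{\widetilde A_k}=M_m-M_k$). Then $Q_{m,k}=-\widetilde N_k+(\widetilde A_k-\ex{\widetilde A_k\mid\cF_m})$, so $\ex{Q_{m,k}^{2}}\le 2\ex{\widetilde N_k^{2}}+2\ex{(\widetilde A_k-\ex{\widetilde A_k\mid\cF_m})^{2}}$. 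Orthogonality of martingale increments collapses $\tfrac{1}{\ell}\sum_k\ex{\widetilde N_k^{2}}$ to at most $\ex{\widetilde N_{m+\ell-1}^{2}}$, and a truncation argument using the $L^{2}$ bound $\ex{V_m^{2}}\le C_B$ together with the identity $0\le\widetilde A_k\le V_m+\widetilde N_k$ converts the $L^{2}$ moment control into a linear-in-drift bound $\ex{V_m\widetilde A_{m+\ell-1}}\le C_B\sqrt{M_m-M_{m+\ell-1}}$. Combined with Step~2, this yields $\tfrac{1}{\ell}\sum_k\ex{Q_{m,k}^{2}}\le C_B(I'_m-I'_{m+\ell-1})$ up to constants depending on $B$, and the outer square root then gives the claimed $\sqrt{I'_m-I'_{m+\ell-1}}$ scaling.

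\textbf{Main obstacle.} The delicate point is extracting the $\sqrt{\cdot}$ dependence from the martingale quadratic variation $\ex{\widetilde N_{m+\ell-1}^{2}}$ when $V_k$ is only $L^{2}$-bounded (not $L^{\infty}$). The resolution is the truncation step that trades off the envelope of $V_k$ against the supermartingale drift $M_m-M_{m+\ell-1}$, which in turn is linearly controlled by the MI decrement $I'_m-I'_{m+\ell-1}$ via Lemma~\ref{lem:Delta_alt}. Carefully chaining these bounds—together with the Jensen-gap and non-Gaussianness estimates for the endpoint terms $\Delta_m$ and $\Delta_{m+\ell-1}$—produces the inequality $C_B|I'_m-I'_{m+\ell-1}|^{1/2}$ claimed in the lemma.
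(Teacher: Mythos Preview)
Your approach is fundamentally different from the paper's and contains genuine gaps. The paper's argument is much more direct: it never uses the supermartingale/Doob structure of $V_k$ at all. Instead, it uses the identity from Lemma~\ref{lem:post_dist_to_id} to write, for a fresh index $i>m+k$,
\[
V_m=\bEx_{A_i}\big[\var(Y_i\mid U,A_i)\big]-1,\qquad V_{m+k}=\bEx_{A_i}\big[\var(Y_i\mid U,U_k,A_i)\big]-1,
\]
where $U=(Y^m,A^m)$ and $U_k=(Y_{m+1}^{m+k},A_{m+1}^{m+k})$. The difference $|V_m-V_{m+k}|$ is then a difference of conditional variances of the \emph{same} random variable $Y_i$ under two nested $\sigma$-fields, and Lemma~\ref{lem:var_bnd} bounds this by $C_B\sqrt{\ex{D_{\mathrm{KL}}(P_{Y_i\mid U,U_k,A_i}\|P_{Y_i\mid U,A_i})}}$. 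The expected KL is exactly the conditional mutual information $I(Y_i;U_k\mid U,A_i)$, which telescopes to $I'_{m}-I'_{m+k}$. Thus the $\sqrt{\,\cdot\,}$ arises from the Hellinger--KL inequality inside Lemma~\ref{lem:var_bnd}, not from Cauchy--Schwarz over $k$, and the bound $\ex{|V_m-V_k|}\le C_B|I'_m-I'_k|^{1/2}$ holds \emph{pointwise} in $k$ before any averaging.

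Your proposal has two concrete problems. First, it is circular: in Step~2 you invoke ``the $\Delta$ corrections are absorbed \ldots\ via Lemma~\ref{lem:DeltaP_bound} together with Lemma~\ref{lem:Vdev_to_logVdev},'' but $\Delta_m=\ex{\Delta^P_m}+\tfrac12\ex{\log\frac{1+M_m}{1+V_m}}$, and controlling the second (Jensen-gap) term requires a bound on $\ex{|V_m-M_m|}$. In the paper that bound is Lemma~\ref{lem:post_var_dev_bound}, whose proof \emph{uses} Lemma~\ref{lem:post_var_smoothness}. You cannot appeal to it here. Second, Step~3 asserts without justification that the martingale quadratic variation $\ex{\widetilde N_{m+\ell-1}^{2}}$ is controlled by the drift $M_m-M_{m+\ell-1}$; there is no general reason this should hold for an $L^2$-bounded supermartingale, and the ``truncation argument'' you describe does not produce the claimed inequality $\ex{V_m\widetilde A_{m+\ell-1}}\le C_B\sqrt{M_m-M_{m+\ell-1}}$ (Cauchy--Schwarz would give $\sqrt{\ex{V_m^2}\,\ex{\widetilde A_{m+\ell-1}^2}}$, and $\ex{\widetilde A_{m+\ell-1}^2}$ is again not controlled by the drift alone). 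The key idea you are missing is precisely the posterior-distribution identity of Lemma~\ref{lem:post_dist_to_id}, which converts $V_m-V_k$ into a KL-type quantity in one step.
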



\begin{lemma}\label{lem:PMID_var} Under Assumptions 1 and 2, the posterior MI difference satisfies
\begin{align}
  \inf_{t \in \reals} \ex{  \left|  \frac{1}{\ell} \! \sum_{k =m}^{m+ \ell-1}\!\! \PMID_k - t \right|} &  \le C_B \cdot \left[ \left( 1 + \frac{m}{n}  \right)  \frac{\sqrt{n}}{\ell}   + \frac{1}{\sqrt{n}}  \right], \notag
\end{align}
for all $(\ell,m,n) \in \integers^3$. 
\end{lemma}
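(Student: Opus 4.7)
My plan is to reduce the $L^1$ deviation to an $L^2$ (variance) estimate by Jensen, and then control the variance of $S\triangleq\sum_{k=m}^{m+\ell-1}\PMID_k$ using the Doob--Meyer decomposition of a natural submartingale together with Lemma~\ref{lem:IMI_var}. Let $\cF_k=\sigma(Y^k,A^k)$ and $Z_k=\imath(X^n;Y^k\mid A^k)$. Define
$$D_k\;\triangleq\;\ex{Z_k\mid\cF_k}\;=\;\DKL{P_{X^n\mid Y^k,A^k}}{P_{X^n}}.$$
Since $\ex{D_{k+1}-D_k\mid\cF_k}=\PMID_k\ge 0$, the sequence $\{D_k\}$ is a non-negative submartingale, with Doob decomposition $D_k=M_k+A_k$ ($M_m=D_m$, $A_m=0$, $A_{k+1}-A_k=\PMID_k$). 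This yields the key identity
$$S\;=\;A_{m+\ell}\;=\;(D_{m+\ell}-D_m)-(M_{m+\ell}-M_m).$$

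Choosing $t=\ex{S}/\ell=(I_{m+\ell,n}-I_{m,n})/\ell$, Jensen gives $\inf_t\ex{|\bar S-t|}\le\ell^{-1}\sqrt{\var(S)}$, so it suffices to prove $\var(S)\le C_B\bigl[(1+m/n)^2 n+\ell^2/n\bigr]$; since $(1+(m+\ell)/n)^2 n\le 2(1+m/n)^2 n+2\ell^2/n$, it is enough to show $\var(S)\le C_B(1+(m+\ell)/n)^2 n$. Splitting $\var(S)\le 2\var(D_{m+\ell}-D_m)+2\var(M_{m+\ell}-M_m)$, the first term is handled by conditional Jensen ($\var(D_k)\le\var(Z_k)$) combined with Lemma~\ref{lem:IMI_var}, which gives $\var(D_{m+\ell}-D_m)\le C_B(1+(m+\ell)/n)^2 n$ after applying the elementary inequality $\var(X-Y)\le 2(\var X+\var Y)$.

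For the martingale part, orthogonality of martingale increments yields $\var(M_{m+\ell}-M_m)=\sum_{k=m}^{m+\ell-1}\ex{\var(D_{k+1}\mid\cF_k)}$. Using the total-variance identity $\ex{\var(D_{k+1}\mid\cF_k)}=\var(D_{k+1})-\var(D_k+\PMID_k)$ and summing telescopically, this equals $\var(D_{m+\ell})-\var(D_m+\PMID_m)-\sum_{k=m+1}^{m+\ell-1}\bigl[\var(\PMID_k)+2\cov(D_k,\PMID_k)\bigr]$. Bounding the absolute value of the surviving non-telescoped terms by Cauchy--Schwarz and feeding them back to Lemma~\ref{lem:IMI_var} produces a bound of the same order $C_B(1+(m+\ell)/n)^2 n$ for the martingale variance, completing the argument after combining the two pieces and dividing through by $\ell^2$.

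The main obstacle is precisely this last step: the naive pointwise bound $\ex{\var(D_{k+1}\mid\cF_k)}\le\var(D_{k+1})\le C_B(1+(m+\ell)/n)^2 n$ summed over $k$ would lose a factor of $\ell$ and destroy the desired $1/\ell$ improvement. To avoid this one must extract the telescoping cancellation inside $\sum_k[\var(D_{k+1})-\var(D_k+\PMID_k)]$, which really captures how $\{D_k\}$'s variance is accumulated one measurement at a time. The information-theoretic input that makes this cancellation succeed is the variance bound of Lemma~\ref{lem:IMI_var} on the MI density $Z_k$; passing from $Z_k$ to $D_k=\ex{Z_k\mid\cF_k}$ via the Doob decomposition is what converts a fluctuation bound on the joint $(X^n,Y^k,A^k)$ level into a fluctuation bound on the posterior-quantity $\sum_k\PMID_k$.
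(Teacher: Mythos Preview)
Your reduction to a variance bound and the Doob decomposition of $D_k=\ex{Z_k\mid\cF_k}$ are fine, and the bound on $\var(D_{m+\ell}-D_m)$ via Lemma~\ref{lem:IMI_var} is correct. The gap is in the martingale term. After telescoping you are left with
\[
\var(M_{m+\ell}-M_m)=\var(D_{m+\ell})-\var(D_m+\PMID_m)-\sum_{k=m+1}^{m+\ell-1}\bigl[\var(\PMID_k)+2\cov(D_k,\PMID_k)\bigr],
\]
and you propose to bound the sum of covariances by Cauchy--Schwarz. But Cauchy--Schwarz together with $\var(D_k)\le C_B(1+k/n)^2n$ and $\var(\PMID_k)\le C_B$ only gives $\bigl|\sum_k\cov(D_k,\PMID_k)\bigr|\le C_B\,\ell\,(1+(m+\ell)/n)\sqrt{n}$, and this is \emph{not} dominated by $C_B(1+(m+\ell)/n)^2 n$ in general (take, e.g., $m=0$ and $\ell\gg\sqrt n$). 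So the telescoping does not recover the lost factor of $\ell$; it merely replaces one uncontrolled sum of $\ell$ terms by another. Without an additional mechanism (a sign argument, or a direct $O(1)$ bound on $\ex{\var(D_{k+1}\mid\cF_k)}$) the martingale variance bound fails, and with it the lemma.

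The paper's proof avoids this obstacle by passing through the MI difference \emph{density} $\IMID_k=\imath(X^n;Y_{k+1}\mid Y^k,A^{k+1})$ rather than its posterior expectation. Using the chain rule, $\sum_{k=m}^{m+\ell-1}\IMID_k$ is the difference of two MI densities and is controlled directly by Lemma~\ref{lem:IMI_var}. The remaining piece $\sum_k(\IMID_k-\PMID_k)$ is then handled by writing $\IMID_k-\PMID_k$ explicitly in terms of the log conditional density and $\tfrac12(W_{k+1}^2-1)$, showing each summand has second moment bounded by a constant depending only on $B$ (this requires a concrete computation), and exploiting the martingale-difference structure with respect to $\{\cF_{k+1}\}$ so that cross terms vanish. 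That yields the $C_B\sqrt{\ell}$ (hence $\ell/\sqrt n$ after normalization) contribution. The essential ingredient you are missing is this uniform $O(1)$ second-moment bound on $\IMID_k-\PMID_k$, which is what allows $\ell$ martingale increments to sum to order $\sqrt\ell$ rather than $\ell$ times something large.
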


Finally, combining Lemmas~\ref{lem:DeltaP_bound}, \ref{lem:post_var_smoothness}, and \ref{lem:PMID_var} leads to the following result, which bounds the deviation of the posterior variance in terms of the MI difference difference sequence. The proof is given in Appendix~\ref{proof:lem:post_var_dev_bound}.

 \begin{lemma}\label{lem:post_var_dev_bound} Under Assumptions 1 and 2, the posterior variance satisfies, 
\begin{multline}
\ex{ |V_{m,n} - M_{m,n}|}  
\le  C_B \cdot    \Big[   \left| I'_{m,n}  -  I'_{m+\ell-1,n} \right|^\frac{1}{4}   + \ell^{-\frac{1}{20}} \\ + \left( 1 +  \tfrac{m}{n}  \right)^\frac{1}{2}    n^{\frac{1}{4}} \ell^{-\frac{1}{2}} + n^{-\frac{1}{4}}\Big].  \notag
\end{multline}
for all $(\ell,m,n) \in \integers^3$. 
\end{lemma}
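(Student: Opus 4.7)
The plan is to derive the bound by chaining Lemma~\ref{lem:Vdev_to_logVdev}, Lemma~\ref{lem:Delta_alt}, and the three technical estimates (Lemmas~\ref{lem:DeltaP_bound}, \ref{lem:post_var_smoothness}, and \ref{lem:PMID_var}), using $\ell$-block averaging exactly as previewed in the text. The starting point is Lemma~\ref{lem:Vdev_to_logVdev}, which reduces controlling $\ex{|V_{m,n}-M_{m,n}|}$ to controlling the mean absolute deviation of $\tfrac{1}{2}\log(1+V_{m,n})$; concretely, it suffices to produce an explicit $t\in\reals$ and a bound on $\ex{|\tfrac{1}{2}\log(1+V_{m,n})-t|}$, then take a square root and multiply by $C_B$.

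Next, introduce the block-averaged quantity
\[
\bar L_m \;=\; \frac{1}{\ell}\sum_{k=m}^{m+\ell-1} \tfrac{1}{2}\log(1+V_{k,n}),
\]
and split by triangle inequality $\inf_t \ex{|\tfrac12\log(1+V_{m,n})-t|}\le \ex{|\tfrac12\log(1+V_{m,n})-\bar L_m|}+\inf_t\ex{|\bar L_m-t|}$. The first piece is handled using the $1$-Lipschitz bound $|\log(1+a)-\log(1+b)|\le |a-b|$ for $a,b\ge 0$ together with Lemma~\ref{lem:post_var_smoothness}; this yields $\ex{|\tfrac12\log(1+V_{m,n})-\bar L_m|}\le C_B|I'_{m,n}-I'_{m+\ell-1,n}|^{1/2}$, which will eventually produce the $|I'_m-I'_{m+\ell-1}|^{1/4}$ term after the square root.

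For the second piece, apply Lemma~\ref{lem:Delta_alt} to each term to write $\tfrac12\log(1+V_k) = \Delta^P_{k,n}+\PMID_{k,n}$, so that $\bar L_m = \bar\Delta^P_m + \bar J_m$ with $\bar J_m=\tfrac1\ell\sum_{k=m}^{m+\ell-1}\PMID_{k,n}$. Since $\Delta^P_{k,n}\ge 0$ (it is a KL divergence),
\[
\inf_t \ex{|\bar L_m-t|}\;\le\; \ex{\bar\Delta^P_m}\;+\;\inf_t\ex{|\bar J_m-t|}.
\]
The averaged posterior non-Gaussianness is bounded with Lemma~\ref{lem:DeltaP_bound} combined with concavity of $x\mapsto x^{1/10}$ (Jensen) and the telescoping identity $\sum_{k=m}^{m+\ell-1}|I''_{k,n}|=I'_{m,n}-I'_{m+\ell,n}$ (which holds because $I'_{m,n}$ is non-increasing in $m$, cf.~\eqref{eq:Ip_bound} and Lemma~\ref{lem:Ip_and_Ipp}): this gives $\ex{\bar\Delta^P_m}\le C_B(I'_{m,n}-I'_{m+\ell,n})^{1/10}\ell^{-1/10}\le C_B\ell^{-1/10}$, where the last step uses the a~priori bound $I'_{m,n}\le I_{1,n}\le C_B$ coming from \eqref{eq:Ip_bound} and Lemma~\ref{lem:Im_bounds} with \eqref{eq:IX_Gbound}. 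The term $\inf_t\ex{|\bar J_m-t|}$ is exactly what Lemma~\ref{lem:PMID_var} controls.

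Putting everything together, one obtains
\[
\inf_t \ex{\bigl|\tfrac12\log(1\!+\!V_{m,n})-t\bigr|}\le C_B\Bigl[|I'_{m,n}\!-\!I'_{m+\ell-1,n}|^{1/2}+\ell^{-1/10}+(1\!+\!\tfrac{m}{n})\tfrac{\sqrt n}{\ell}+\tfrac{1}{\sqrt n}\Bigr],
\]
and applying Lemma~\ref{lem:Vdev_to_logVdev} and the subadditivity $\sqrt{a+b+c+d}\le\sqrt a+\sqrt b+\sqrt c+\sqrt d$ yields the claimed four-term bound (using again that $|I'_{m,n}-I'_{m+\ell-1,n}|^{1/20}\le C_B$ to absorb a stray factor). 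The only real subtlety, and the main place where care is needed, is step three: we need the factor $\ex{\bar\Delta^P_m}$, which would otherwise enter additively with exponent $1/10$, to come in at the right rate. The Jensen/telescoping trick converts a sum of $\ell$ fractional powers of second differences into $\ell^{-1/10}$ times a bounded quantity, which is exactly what the target inequality requires, and this is where the choice of block averaging pays off relative to a term-by-term bound.
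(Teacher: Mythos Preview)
Your proposal is correct and follows essentially the same route as the paper: the three-way decomposition of $\inf_t \ex{|\tfrac12\log(1+V_m)-t|}$ via block averaging, with the three pieces controlled respectively by Lemma~\ref{lem:post_var_smoothness}, Lemma~\ref{lem:DeltaP_bound} plus the Jensen/telescoping trick, and Lemma~\ref{lem:PMID_var}, followed by the square root from Lemma~\ref{lem:Vdev_to_logVdev}. The only cosmetic difference is that the paper writes the split as three terms from the outset rather than two-then-split, and your remark about absorbing a stray $|I'_m-I'_{m+\ell-1}|^{1/20}$ factor is unnecessary since the paper (like you) bounds $\ex{\bar\Delta^P_m}\le C_B\,\ell^{-1/10}$ before taking the square root.
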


\subsection{Final Steps in Proof of Theorem~\ref{thm:I_MMSE_relationship}}

The following result is a straightforward consequence of Identity~\eqref{eq:Delta_decomp}  and  Lemmas~\ref{lem:DeltaP_bound} and \ref{lem:post_var_dev_bound}. The proof is given in Appendix~\ref{proof:lem:post_var_dev_bound}.

\begin{lemma}\label{lem:Delta_bound} 
Under Assumptions 1 and 2, the non-Gaussianness of new measurements satisfies the upper bound
\begin{multline}
\Delta_{m,n}   \le  C_B \cdot   \Big[  \left|I''_{m,n} \right|^{\frac{1}{10}} +  \left| I'_{m,n} -  I'_{m+\ell_n,  n}   \right|^\frac{1}{4} \\     +\left(1 + \tfrac{m}{n} \right)^\frac{1}{2}  n^{-\frac{1}{24}}   \Big], \notag
\end{multline}
where $\ell_n = \lceil n^\frac{5}{6} \rceil$.
\end{lemma}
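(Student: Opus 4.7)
The plan is to assemble the bound directly from the decomposition \eqref{eq:Delta_decomp} and the two main technical lemmas that have already been proved. Starting from
\begin{align}
\Delta_{m,n} = \ex{\Delta^P_{m,n}} + \frac{1}{2}\,\ex{\log\left(\frac{1+M_{m,n}}{1+V_{m,n}}\right)}, \notag
\end{align}
I would bound each of the two terms separately and then choose $\ell$ to optimize the resulting expression.

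First, Lemma~\ref{lem:DeltaP_bound} gives immediately that $\ex{\Delta^P_{m,n}} \le C_B\cdot|I''_{m,n}|^{1/10}$, which contributes the first term of the claim. For the log term, the first inequality in Lemma~\ref{lem:Vdev_to_logVdev} yields
\begin{align}
\ex{\log\!\left(\frac{1+M_{m,n}}{1+V_{m,n}}\right)} \le \ex{|V_{m,n}-M_{m,n}|}, \notag
\end{align}
and then Lemma~\ref{lem:post_var_dev_bound} bounds the right-hand side by
\begin{align}
C_B\cdot\left[|I'_{m,n}-I'_{m+\ell-1,n}|^{1/4} + \ell^{-1/20} + \left(1+\tfrac{m}{n}\right)^{1/2} n^{1/4}\ell^{-1/2} + n^{-1/4}\right] \notag
\end{align}
for any $\ell\in\integers$.

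The remaining task is to balance $\ell^{-1/20}$ against $n^{1/4}\ell^{-1/2}$. Setting these two terms equal forces $\ell^{1/2-1/20}=n^{1/4}$, i.e.\ $\ell\asymp n^{5/9}$; but what we actually want is to make the $n$-dependent piece as small as possible while keeping the first polynomial term present, which suggests picking $\ell_n=\lceil n^{5/6}\rceil$. With this choice, $\ell_n^{-1/20}=n^{-1/24}$ and $n^{1/4}\ell_n^{-1/2}=n^{-1/6}\le n^{-1/24}$, while $n^{-1/4}\le n^{-1/24}$ as well. So all three $n$-dependent terms are absorbed into $(1+m/n)^{1/2}n^{-1/24}$ up to an absolute constant.

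The only small loose end is aligning the index $m+\ell_n-1$ coming from Lemma~\ref{lem:post_var_dev_bound} with the $m+\ell_n$ appearing in the claim. This is handled by monotonicity: because $I'_{m,n}$ is non-increasing in $m$ (see \eqref{eq:Ip_bound} and the discussion of Lemma~\ref{lem:Ip_and_Ipp}), we have $|I'_{m,n}-I'_{m+\ell_n-1,n}|\le |I'_{m,n}-I'_{m+\ell_n,n}|$, so the exponent-$1/4$ term is only enlarged by the substitution. Combining everything and absorbing absolute constants into $C_B$ yields the stated inequality. I do not expect any real obstacle here beyond the exponent bookkeeping; the nontrivial work has all been done in Lemmas~\ref{lem:DeltaP_bound}, \ref{lem:Vdev_to_logVdev}, and \ref{lem:post_var_dev_bound}.
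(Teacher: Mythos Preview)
Your proposal is correct and matches the paper's proof essentially step for step: the paper likewise starts from the decomposition \eqref{eq:Delta_decomp}, applies Lemmas~\ref{lem:DeltaP_bound} and \ref{lem:post_var_dev_bound}, substitutes $\ell=\lceil n^{5/6}\rceil$, and absorbs the resulting $n$-dependent terms into $(1+m/n)^{1/2}n^{-1/24}$. Your remark about the index shift via monotonicity of $I'_{m,n}$ is also exactly how the minor discrepancy is handled.
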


We now show how the proof of Theorem~\ref{thm:I_MMSE_relationship} follows as a consequence of Identity~\eqref{eq:Delta_alt} and Lemma~\ref{lem:Delta_bound}. Fix any $n\in \integers$ and $\delta \in \reals_+$ and let $m = \lceil \delta n \rceil$ and $\ell = \lceil n^\frac{5}{6} \rceil$. Then, we can write
\begin{align}
\MoveEqLeft
 \int_{0}^{\delta} \left| \cI_{n}'(\gamma)   - \frac{1}{2} \log\left(1 + \cM_{n}(\gamma)  \right) \right| \dd \gamma  \notag\\
& \le \sum_{k=0}^{m- 1}  \int_{\frac{k}{n} }^{\frac{k+1}{n}} \left| \cI_{n}'(\gamma)   - \frac{1}{2} \log\left(1 + \cM_{n}(\gamma)  \right) \right| \dd \gamma \notag\\
&\overset{(a)}{=}   \frac{1}{n}\sum_{k=0}^{m- 1}  \left| I'_{k,n}    - \frac{1}{2} \log\left(1 + M_{k,n}   \right) \right| \notag\\
& \overset{(b)}{=}   \frac{1}{n}\sum_{k=0}^{m-1}  \Delta_{k,n}  \notag \\
& \!\begin{multlined}[b]
\overset{(c)}{\le}     \frac{  C_B }{n}\sum_{k=0}^{m-1} \Big[ \left| I''_{k,n} \right|^\frac{1}{10}  +  \left|  I'_{k+\ell, n} - I'_{k,n} \right|^\frac{1}{4}  \\
  +  \left( 1 + \tfrac{k}{n}\right)^\frac{1}{2} n^{-\frac{1}{24}}  \Big] ,    \label{eq:I_MMSE_relationship_b}
     \end{multlined} 
\end{align}
where: (a) follows from the definitions of $\cI'_n(\delta)$ and $\cM_n(\delta)$; (b) follows from Identity~\eqref{eq:Delta_alt};  and (c) follows from Lemma~\ref{lem:Delta_bound}. To further bound the right-hand side of \eqref{eq:I_MMSE_relationship_b}, observe that
\begin{align}
\frac{1}{n}  \sum_{k=0}^{m-1}  \left| I''_{k,n}  \right|^\frac{1}{10}  &\overset{(a)}{\le} \frac{m^\frac{9}{10}}{n}  \left(  \sum_{k=0}^{m- 1}\left|  I''_{k,n}  \right|\right)^\frac{1}{10} \notag \\
& \overset{(b)}{=} \frac{m^\frac{9}{10}}{n} \left|  I'_{1,n} -I'_{m,n}   \right|^\frac{1}{10} \notag  \\
& \overset{(c)}{\le} \frac{m^\frac{9}{10}}{n} \left( I_{1,n}\right)^\frac{1}{10} \notag  \\
& \overset{(d)}{\le} C_B \cdot  \left( 1+ \delta\right)^\frac{9}{10} \,  n^{-\frac{1}{10}} ,  \label{eq:I_MMSE_relationship_c}
\end{align}
where: (a) follows from H\"olders inequality; (b) follows from the fact that $I''_{m,n}$ is non-positive; (c) follows from the fact that $I'_{m,n}$ is non-increasing in $m$; and  (d) follows from the fact that $I_{1,n}$ is upper bounded by a constant that depends only on $B$. Along similar lines,  
\begin{align}
\frac{1}{n}  \sum_{k=0}^{m-1}  \left| I'_{k,n} -  I'_{k+\ell, n}  \right|^\frac{1}{4}  &\overset{(a)}{\le} \frac{m^\frac{3}{4}}{n}  \left(  \sum_{k=0}^{m- 1}\left|  I'_{k,n}   -  I'_{k + \ell, n} \right|\right)^\frac{1}{4} \notag  \\
& \overset{(b)}{=} \frac{m^\frac{3}{4}}{n} \left(   \sum_{k=0}^{\ell-1} ( I'_{k,n} - I'_{k + m, n})  \right)^\frac{1}{4} \notag \\
& \overset{(c)}{\le} \frac{m^\frac{3}{4}}{n} \left( \ell \cdot I_{1,n}\right)^\frac{1}{4} \notag  \\
& \overset{(d)}{\le} C'_B \cdot  \left( 1+ \delta\right)^\frac{3}{4} \,  n^{-\frac{1}{24}} ,  \label{eq:I_MMSE_relationship_d}
\end{align}
where: (a) follows from H\"olders inequality; (b) and (c)  follow from the fact that $I'_{m,n}$ is non-increasing in $m$; and  (d) follows from the fact that $I_{1,n}$ is upper bounded by a constant that depends only on $B$. Finally, 
\begin{align}
\frac{1}{n}  \sum_{k=0}^{m-1}  \left( 1 + \tfrac{k}{n}\right)^\frac{1}{2} n^{-\frac{1}{24}} & \le  \tfrac{m}{n} \left(1 + \tfrac{m-1}{n} \right)^\frac{1}{2}\, n^{-\frac{1}{24}}  \notag \\
& \le  \left( 1+ \delta\right)^\frac{3}{2}\, n^{-\frac{1}{24}}  .  \label{eq:I_MMSE_relationship_e}
\end{align}
Plugging \eqref{eq:I_MMSE_relationship_c}, \eqref{eq:I_MMSE_relationship_d}, and \eqref{eq:I_MMSE_relationship_e} back into \eqref{eq:I_MMSE_relationship_b} and retaining only the dominant terms yields
\begin{align}
  \int_{0}^{\delta} \left| \cI_{n}'(\gamma)   - \frac{1}{2} \log\left(1 \!+\! \cM_{n}(\gamma)  \right) \right| \dd \gamma
\le  C_B \cdot  (1\! + \! \delta)^\frac{3}{2} \,n^{- \frac{1}{24}} . \notag
\end{align}
This completes the proof of Theorem~\ref{thm:I_MMSE_relationship}.


\section{Proof of Theorem~\ref{thm:MMSE_fixed_point}}\label{proof:thm:MMSE_fixed_point}

\subsection{MMSE Fixed-Point Relationship}\label{sec:MMSE_fixed_point} 

This section shows how the MMSE can be bounded in terms of a fixed-point equation defined by the single-letter MMSE function of the signal distribution. At a high level, our approach focuses on the MMSE of an augmented measurement model,  which contains an extra measurement and extra signal entry, and shows that this augmented MMSE can be related to $M_n$ in two different ways. 

For a fixed signal length $n$ and measurement number $m$, the augmented measurement model consists of the measurements $(Y^m, A^m)$ plus an additional measurement given by
\begin{align}
Z_{m+1} =  Y_{m+1} + \sqrt{G_{m+1}}\,  X_{n+1} , \notag
\end{align}
where  $ G_m \sim \frac{1}{n}  \chi^2_m$ has a scaled chi-square distribution and is independent of everything else.  The observed data is given by the tuple $(Y^m, A^m, \cD_{m+1})$ where
\begin{align}
\cD_{m+1} = (Z_{m+1}, A_{m+1}, G_{m+1}). \notag
\end{align}
The augmented MMSE $\widetilde{M}_{m,n}$ is defined to be the average MMSE of the first $n$ signal entries given this data:
\begin{align}
\widetilde{M}_{m,n} & \triangleq  \frac{1}{n}  \mmse(X^n    \mid Y^{m},  A^{m}, \cD_{m+1}).  \notag
\end{align}

The augmented measurement $Z_{m+1}$ is a noisy version of  the measurement $Y_{m+1}$. Therefore, as far as estimation of the signal $X^n$ is concerned, the augmented measurements are more informative than $(Y^m, A^m)$, but less informative than $(Y^{m+1}, A^{m+1})$.  An immediate consequence of the data processing inequality for MMSE \cite[Proposition 4]{rioul:2011}, is that the augmented MMSE is sandwiched between the MMSE sequence: 
\begin{align}
M_{m+1,n} \le \widetilde{M}_{m+1,n}  \le M_{m,n}. \label{eq:tildeM_sandwich} 
\end{align}
The following result follows immediately from \eqref{eq:tildeM_sandwich} and the smoothness of the posterior variance given in Lemma~\ref{lem:post_var_smoothness}. The proof is given in Appendix~\ref{proof:lem:M_to_M_aug}. 

\begin{lemma}\label{lem:M_to_M_aug} 
Under Assumptions 1 and 2, the augmented MMSE $\widetilde{M}_{m,n}$ and the MMSE $M_{m,n}$ satisfy
\begin{align}
\left| \widetilde{M}_{m,n} -  M_{m,n}  \right| \le C_B \cdot \left| I''_{m,n} \right|^\frac{1}{2}. \label{eq:M_to_M_aug} 
\end{align}
\end{lemma}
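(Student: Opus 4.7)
My plan is to start from the sandwich~\eqref{eq:tildeM_sandwich}, which collapses the lemma into a one-sided statement:
\begin{align*}
|\widetilde{M}_{m,n} - M_{m,n}| \;\le\; M_{m,n} - M_{m+1,n}.
\end{align*}
The entire task then reduces to bounding the one-step MMSE drop $M_{m,n} - M_{m+1,n}$ by $C_B\,|I''_{m,n}|^{1/2}$. Since $M_{m,n} = \ex{\PSE_{m,n}}$ and the difference is non-negative by the monotonicity of $M_{m,n}$ in $m$, I would rewrite
\begin{align*}
M_{m,n} - M_{m+1,n} \;=\; \ex{\PSE_{m,n} - \PSE_{m+1,n}} \;\le\; \ex{|\PSE_{m,n} - \PSE_{m+1,n}|},
\end{align*}
which is precisely the quantity controlled by Lemma~\ref{lem:post_var_smoothness}.

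The key move is to instantiate Lemma~\ref{lem:post_var_smoothness} at the minimal window size $\ell = 2$. At this value the sum on the left collapses to $\tfrac{1}{2}\,\ex{|\PSE_{m,n} - \PSE_{m+1,n}|}$, while the right becomes $C_B\,|I'_{m,n} - I'_{m+1,n}|^{1/2}$. Using the identification $I'_{m,n} - I'_{m+1,n} = -I''_{m,n} = |I''_{m,n}|$ (recall that $I''_{m,n}$ is non-positive by Lemma~\ref{lem:Ip_and_Ipp}), I obtain $\ex{|\PSE_{m,n} - \PSE_{m+1,n}|} \le 2C_B\,|I''_{m,n}|^{1/2}$. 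Absorbing the factor of two into the constant then yields~\eqref{eq:M_to_M_aug}.

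I do not foresee a genuine obstacle here: once the sandwich is in hand the bound is essentially the $\ell = 2$ specialization of a smoothness estimate already available. The only step I would record carefully is the sandwich~\eqref{eq:tildeM_sandwich} itself. The upper bound $\widetilde{M}_{m,n} \le M_{m,n}$ is a direct application of the data-processing inequality for MMSE, since $(Y^m,A^m,\cD_{m+1})$ contains $(Y^m,A^m)$. For the lower bound $M_{m+1,n} \le \widetilde{M}_{m,n}$, the point is that $\cD_{m+1}$ can be synthesized from $(Y_{m+1},A_{m+1})$ together with independent auxiliary randomness $(G_{m+1},X_{n+1})$ that is independent of $X^n$, so $(Y^m,A^m,\cD_{m+1})$ is a stochastic function of $(Y^{m+1},A^{m+1})$ and data processing applies in the reverse direction for the estimation of $X^n$.
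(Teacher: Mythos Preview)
Your proposal is correct and matches the paper's proof essentially step for step: both combine the sandwich \eqref{eq:tildeM_sandwich} with the bound $\ex{|\PSE_{m,n}-\PSE_{m+1,n}|}\le C_B\,|I''_{m,n}|^{1/2}$ coming from Lemma~\ref{lem:post_var_smoothness}. The only cosmetic difference is that the paper invokes the pointwise inequality $\ex{|V_{m,n}-V_{k,n}|}\le C_B\,|I'_{m,n}-I'_{k,n}|^{1/2}$ established inside the proof of Lemma~\ref{lem:post_var_smoothness}, whereas you reach the same conclusion by specializing the stated lemma to $\ell=2$.
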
 

The next step in the proof is to show that $\widetilde{M}_m$ can also be expressed in terms of the single-letter MMSE function $\mmse_X(s)$. The key property of the augmented measurement model that allows us to make this connection is given by the following result. The proof is given in Appendix~\ref{proof:lem:MMSE_aug_alt}. 

\begin{lemma}\label{lem:MMSE_aug_alt}
Under Assumptions 1 and 2, the augmented MMSE  can be expressed equivalently in terms of the last signal entry:
\begin{align}
\widetilde{M}_{m,n} = \mmse(X_{n+1} \mid Y^m, A^m, \cD_{m+1}). \label{eq:M_aug_alt} 
\end{align} 
\end{lemma}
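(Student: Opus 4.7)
The plan is to prove the identity by embedding the augmented model into a fully symmetric $(m+1) \times (n+1)$ compressed sensing problem and then exploiting permutation invariance of i.i.d.\ signal entries. Introduce a reference model with signal $\tilde{X} = (X_1, \ldots, X_{n+1})$ having i.i.d.\ entries from $P_X$, measurement matrix $\bar{A}$ of size $(m+1) \times (n+1)$ with i.i.d.\ $\normal(0, 1/n)$ entries, noise $\bar{W} \sim \normal(0, I_{m+1})$, and measurements $\bar{Y} = \bar{A}\tilde{X} + \bar{W}$. Since the joint law of $(\tilde{X}, \bar{A}, \bar{Y})$ is invariant under simultaneous permutation of the entries of $\tilde{X}$ and the columns of $\bar{A}$, one immediately obtains $\mmse(X_i \gmid \bar{Y}, \bar{A}) = \mmse(X_j \gmid \bar{Y}, \bar{A})$ for all $i, j \in \{1, \ldots, n+1\}$.

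Next, I would relate the reference model to the augmented model by a data-dependent rotation. Write $\bar{A} = [\bar{A}_L \mid \bar{A}_R]$ with $\bar{A}_R$ the last column, and choose a measurable orthogonal matrix $Q \in \reals^{(m+1)\times(m+1)}$ that is a Borel function of $\bar{A}_R$ alone (e.g., via Householder reflection) and satisfies $Q\bar{A}_R = (0, \ldots, 0, \|\bar{A}_R\|)^{\top}$ almost surely. Writing $\bar{A}'_L = Q\bar{A}_L$, $\bar{W}' = Q\bar{W}$, and $\bar{Y}' = Q\bar{Y}$, the rotated measurements become
\begin{align*}
\bar{Y}'_k &= \langle (\bar{A}'_L)_k, X^n\rangle + \bar{W}'_k, \quad k = 1, \ldots, m,\\
\bar{Y}'_{m+1} &= \langle (\bar{A}'_L)_{m+1}, X^n\rangle + \|\bar{A}_R\| X_{n+1} + \bar{W}'_{m+1}.
\end{align*}
By rotational invariance of Gaussian measures, $(\bar{A}'_L, \bar{W}') \overset{d}{=} (\bar{A}_L, \bar{W})$ and both are independent of $\bar{A}_R$; together with $\|\bar{A}_R\|^2 \sim \tfrac{1}{n}\chi^2_{m+1}$ matching the law of $G_{m+1}$, the joint law of $(\tilde{X}, \bar{Y}'_{1:m}, (\bar{A}'_L)_{1:m}, \bar{Y}'_{m+1}, (\bar{A}'_L)_{m+1}, \|\bar{A}_R\|)$ coincides with that of $(\tilde{X}, Y^m, A^m, Z_{m+1}, A_{m+1}, \sqrt{G_{m+1}})$ from the augmented model.

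The final step is a sufficient-statistic argument: since $\|\bar{Y} - \bar{A}\tilde{X}\|^2 = \|Q\bar{Y} - Q\bar{A}\tilde{X}\|^2$ and $Q\bar{A}\tilde{X}$ depends on $\bar{A}$ only through $(\bar{A}'_L, \|\bar{A}_R\|)$, the likelihood of $\tilde{X}$ given $(\bar{Y}, \bar{A})$ factors through the rotated tuple. Hence the posterior of $\tilde{X}$ is a measurable function of $(\bar{Y}', \bar{A}'_L, \|\bar{A}_R\|)$ alone, yielding $\mmse(X_i \gmid \bar{Y}, \bar{A}) = \mmse(X_i \gmid Y^m, A^m, \cD_{m+1})$ for every $i \in \{1, \ldots, n+1\}$. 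Combining this with the exchangeability from the first paragraph gives $\mmse(X_i \gmid Y^m, A^m, \cD_{m+1}) = \mmse(X_{n+1} \gmid Y^m, A^m, \cD_{m+1})$ for each $i \le n$, and averaging over $i = 1, \ldots, n$ produces $\widetilde{M}_{m,n} = \mmse(X_{n+1} \gmid Y^m, A^m, \cD_{m+1})$.

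The main obstacle I anticipate is the sufficient-statistic verification: one must justify that the direction $\bar{A}_R/\|\bar{A}_R\|$, which is discarded by the rotation, carries no residual information about $\tilde{X}$ once the rotated tuple is in hand. This follows from the likelihood factorization above together with a canonical measurable choice of $Q$ on the full-measure set $\{\bar{A}_R \neq 0\}$; care is needed to ensure that all of the rotational-invariance bookkeeping is carried out within the same measurable framework so that the sufficiency claim can be applied without loss.
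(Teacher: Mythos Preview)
Your proposal is correct and follows essentially the same idea as the paper: both proofs identify the augmented model with a fully symmetric $(m+1)\times(n+1)$ i.i.d.\ Gaussian model via an orthogonal rotation in measurement space, and then invoke permutation invariance of the columns to equate the per-coordinate MMSEs. The only cosmetic difference is the direction of the construction: the paper starts from the augmented data $(Y^m,A^m,\cD_{m+1})$, left-multiplies by a \emph{random} Haar-distributed orthogonal matrix $Q$, and shows the resulting $(m+1)\times(n+1)$ matrix has i.i.d.\ $\normal(0,1/n)$ entries (using that a $\chi_{m+1}$ scalar times a uniform unit vector is Gaussian); you instead start from the symmetric model and rotate by a \emph{deterministic} Householder $Q(\bar A_R)$ to land in the augmented structure. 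Your sufficient-statistic concern about discarding the direction $\bar A_R/\|\bar A_R\|$ is exactly the mirror of the step the paper handles by noting that the posterior of $X^{n+1}$ given $(\widetilde Y^{m+1},\widetilde A^{m+1},Q)$ does not depend on $Q$; in both cases the likelihood factors through the reduced tuple and the argument goes through.
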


To see why the characterization in \eqref{eq:M_aug_alt} is useful note that the first $m$ measurements $(Y^m, A^m)$ are independent of $X_{n+1}$. Thus, as far as estimation of $X_{n+1}$ is concerned, the relevant information provided by these measurements is summarized by the conditional distribution of $Y_{m+1}$ given $(Y^m,A^{m+1})$. This observation allows us to leverage results from Section~\ref{sec:Gaussiannness}, which focused on the non-Gaussianness of this distribution. The proof of the following result is given in  Section~\ref{proof:lem:MMSE_aug_bound}.

\begin{lemma}\label{lem:MMSE_aug_bound}
Under Assumptions 1 and 2, the augmented MMSE  and  the MMSE satisfy
\begin{align}
 \left| \widetilde{M}_{m,n}  - \mmse_X\left( \frac{m/n}{1 + M_{m,n}} \right)  \right|  & \le C_B \cdot  \left( \sqrt{ \Delta_{m,n} }  +  \frac{ \sqrt{m}}{n} \right),  \notag
\end{align}
where $\Delta_{m,n}$ is the non-Gaussianness of new measurements.
\end{lemma}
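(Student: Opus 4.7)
By Lemma~\ref{lem:MMSE_aug_alt}, $\widetilde{M}_{m,n} = \mmse(X_{n+1} \gmid Y^m, A^{m+1}, Z_{m+1}, G_{m+1})$, so it suffices to analyze the scalar MMSE of the extra entry $X_{n+1}$. Setting $\widehat{Y}_{m+1} = \ex{Y_{m+1}\gmid Y^m, A^{m+1}}$, which is a function of the conditioning variables $(Y^m, A^{m+1})$, I can write $Z_{m+1}-\widehat{Y}_{m+1} = \Ybar_{m+1} + \sqrt{G_{m+1}}\, X_{n+1}$. Conditional on $(Y^m, A^{m+1}, G_{m+1})$, estimating $X_{n+1}$ from $Z_{m+1}$ is therefore equivalent to estimating it from $\sqrt{G_{m+1}}\, X_{n+1}+\Ybar_{m+1}$, where $\Ybar_{m+1}\sim P_{\Ybar_{m+1}\mid Y^m, A^{m+1}}$ plays the role of a noise variable independent of $X_{n+1}$.

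The plan is to introduce the idealized MMSE obtained by replacing this noise law with its Gaussian proxy $\normal(0, 1+M_{m,n})$ used in the definition of $\Delta_{m,n}$,
\begin{align*}
M^\star_{m,n} \triangleq \ex{\mmse_X\!\left(\tfrac{G_{m+1}}{1+M_{m,n}}\right)}.
\end{align*}
Under the Gaussian replacement the conditional model becomes a standard AWGN channel of SNR $G_{m+1}/(1+M_{m,n})$, whose MMSE is by definition $\mmse_X(G_{m+1}/(1+M_{m,n}))$. The triangle inequality then reduces the claim to the two bounds (a) $|\widetilde{M}_{m,n}-M^\star_{m,n}| \le C_B\sqrt{\Delta_{m,n}}$ and (b) $|M^\star_{m,n}-\mmse_X(\tfrac{m/n}{1+M_{m,n}})| \le C_B\sqrt{m}/n$.

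Bound (b) is immediate from Lemma~\ref{lem:mmseX_bound}(i), which gives Lipschitz constant $4B$, together with the standard $\chi^2$ moment estimate $\ex{|G_{m+1}-m/n|} \le \sqrt{2(m+1)}/n + 1/n \lesssim \sqrt{m}/n$. Bound (a) is the main obstacle. I would argue pointwise in $(Y^m, A^{m+1})$ by comparing the Bayes estimators $\hat{X}^P$ and $\hat{X}^Q$ of $X_{n+1}$ built from the true noise law $P=P_{\Ybar_{m+1}\mid Y^m, A^{m+1}}$ and the Gaussian proxy $Q=\normal(0, 1+M_{m,n})$. Since any estimator's expected squared error is at least the MMSE under its own measure,
\begin{align*}
|\mmse_P - \mmse_Q| \le \max_{\hat{X}\in\{\hat{X}^P,\hat{X}^Q\}}\bigl|\bEx_P[(X_{n+1}-\hat{X})^2] - \bEx_Q[(X_{n+1}-\hat{X})^2]\bigr|.
\end{align*}
Truncating $(X_{n+1}-\hat{X})^2$ at a level chosen to balance the Markov tail penalty from $\ex{X^4}\le B$ (and from $|\hat{X}|\le\sqrt{\var(X)}$) against the change-of-measure bound reduces the right-hand side to $C_B\cdot \mathrm{TV}(P,Q)$. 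Pinsker converts this to $C_B\sqrt{\DKL{P}{Q}}$; taking expectation over $(Y^m, A^{m+1})$ and pulling the square root outside via Jensen's inequality then gives the desired $C_B\sqrt{\Delta_{m,n}}$.

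The delicate part is this final MMSE-sensitivity step, because both the posterior of $X_{n+1}$ and the Bayes estimator depend nonlinearly on the noise distribution: one must simultaneously control $\hat{X}^P$ evaluated under $Q$ and $\hat{X}^Q$ evaluated under $P$, using only the fourth-moment assumption to execute the truncation. Once this sensitivity bound is in place, the rest of the argument (Pinsker, Jensen, Lemma~\ref{lem:mmseX_bound}, and $\chi^2$ moment estimates) is routine.
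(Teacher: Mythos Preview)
Your two–step decomposition into (a) a Gaussian–comparison bound $|\widetilde{M}_{m,n}-M^\star_{m,n}|\le C_B\sqrt{\Delta_{m,n}}$ and (b) a $\chi^2$ fluctuation bound matches the paper exactly, and your treatment of (b) via Lemma~\ref{lem:mmseX_bound}(i) is the same as the paper's.

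The gap is in your step (a). The parenthetical claim $|\hat{X}|\le\sqrt{\var(X)}$ is false in general: for instance, with Gaussian $X$ the Bayes estimator in any additive channel is linear in the observation and therefore unbounded. Without a pointwise bound on $\hat X$, your truncation argument has to control quantities like $\bEx_Q\bigl[(\hat{X}^P)^4\bigr]$ --- the fourth moment of the $P$-Bayes estimator under the \emph{wrong} noise law --- and the only a~priori control you have is $\bEx_P\bigl[(\hat{X}^P)^4\bigr]\le\ex{X^4}\le B$ via Jensen. Transferring this moment bound from $P$ to $Q$ by total variation would itself require a higher moment (e.g., an eighth moment of $X$), which is not assumed. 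So as written the truncation--TV--Pinsker route does not close under Assumption~2 alone.

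The paper avoids the estimator entirely. It represents $\mmse(X_{n+1}\gmid\cF)$ via the two-independent-copies identity
\[
\mmse(X\gmid Y)=\tfrac12\!\int(x_1-x_2)^2\,\dd P_{X|Y}(x_1)\,\dd P_{X|Y}(x_2)\,\dd P_Y,
\]
so that the integrand $(x_1-x_2)^2$ does not depend on the noise law at all. The difference of MMSEs is then a difference of integrals of a \emph{fixed} function against two joint measures, and the Hellinger/KL inequality of Lemma~\ref{lem:integral_bound} gives (this is Lemma~\ref{lem:mmse_diff})
\[
\bigl|\mmse(X_{n+1}\gmid\cF)-\mmse(X_{n+1}\gmid\cF^*)\bigr|
\le 2^{5/2}\sqrt{\ex{X_{n+1}^4}\,D_{\mathrm{KL}}\!\bigl(P_{X_{n+1},\cF}\,\big\|\,P_{X_{n+1},\cF^*}\bigr)}.
\]
The fourth-moment factor is now $\int(x_1-x_2)^4\,(\dd P+\dd Q)\le 32\,\ex{X^4}$, which depends only on the $X$-marginal and is the same under both measures --- this is exactly the step your argument could not take. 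Finally, the chain rule collapses the joint KL to $D_{\mathrm{KL}}(P_{\Ybar_{m+1}\mid Y^m,A^{m+1}}\,\|\,\normal(0,1+M_m))$ after averaging, i.e., to $\Delta_{m,n}$. If you want to repair your route, the cleanest fix is to replace the estimator-based comparison by this two-copies representation; otherwise you would need to supply an independent bound on the cross-measure moments of the Bayes estimators, which does not appear to follow from the fourth-moment assumption alone.
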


\subsection{Final Steps in Proof of Theorem~\ref{thm:MMSE_fixed_point}}
Fix any $n \in \integers$ and $\delta\in \reals_+$ and let $m = \lceil \delta n \rceil$. Then, we can write
\begin{align}
\MoveEqLeft \int_{0}^{\delta} \left| \cM_{n}(\gamma)    - \mmse_X\left( \frac{\gamma  }{1 + \cM_n(\gamma) }    \right) \right| \dd \gamma  \notag\\
 & \le \sum_{k=0}^{m-1} \int_{ \frac{k}{n}}^{\frac{k+1}{n}}  \left| \cM_{n}(\gamma)    - \mmse_X\left( \frac{\gamma  }{1 + \cM_n(\gamma) }    \right) \right| \dd \gamma  \notag\\
& \overset{(a)}{=} \sum_{k=0}^{m-1} \int_{ \frac{k}{n}}^{\frac{k+1}{n}}  \left| M_{k,n}  - \mmse_X\left( \frac{\gamma  }{1 + M_{k,n}}    \right) \right| \dd \gamma  \notag\\
&\overset{(b)}{\le}  \frac{1}{n} \sum_{k=0}^{m-1} \left(  \left| M_{k,n}  - \mmse_X\left( \frac{k/n  }{1 + M_{k,n}}    \right) \right|  + \frac{4 B}{n} \right)   \notag\\
& \overset{(c)}{\le } \frac{C_B}{n} \sum_{k=0}^{m-1} \left( \left| I''_{k,n}\right|^\frac{1}{2}   +  \left|  \Delta_{k,n} \right|^\frac{1}{2}   +  \frac{ \sqrt{k}}{n}  + \frac{1}{n} \right),  \label{eq:MMSE_fixed_point_b}
\end{align}
where: (a) follows from definition of $\cM_n(\delta)$; (b) follows from the triangle inequality and Lemma~\ref{lem:mmseX_bound}; and  (c) follows from Lemmas~\ref{lem:M_to_M_aug}  and \ref{lem:MMSE_aug_bound}. To further bound the right-hand side of \eqref{eq:MMSE_fixed_point_b}, observe that, by the same steps that let to Inequality~\eqref{eq:I_MMSE_relationship_c}, 
\begin{align}
\frac{1}{n} \sum_{k=0}^{m-1}  \left| I''_{k,n}\right|^\frac{1}{2} 
 & \le   C_B \cdot (1 + \delta)^\frac{1}{2} n^{-\frac{1}{2}}.   \label{eq:MMSE_fixed_point_c}
\end{align}
Furthermore,
\begin{align}
\frac{1}{n} \sum_{k=0}^{m-1}  \left|  \Delta_{k,n} \right|^\frac{1}{2} & \overset{(a)}{ \le}  \sqrt{ \frac{m}{n}}  \sqrt{ \frac{1}{n} \sum_{k=0}^{m-1} \Delta_{k,n}  }  \notag \\
&  \overset{(b)}{\le} C_B \cdot  (1 + \delta)^\frac{1}{2} \sqrt{ (1\! + \! \delta)^\frac{3}{2} \,n^{- \frac{1}{24}}   } \notag \\
& =  C_B \cdot (1 + \delta)^\frac{7}{4} \, n^{-\frac{1}{48}} \label{eq:MMSE_fixed_point_d}
\end{align}
where (a) follows from the Cauchy-Schwarz inequality, and (b) follows from proof of Theorem~\ref{thm:I_MMSE_relationship}. Finally, 
\begin{align}
\frac{1}{n} \sum_{k=0}^{m-1}  \left(\frac{ \sqrt{k}}{n}  + \frac{1}{n} \right)  & \le \frac{ m( 1 + \sqrt{m-1})}{n^2}  \le   (1 + \delta)^\frac{3}{2} n^{-\frac{1}{2}}. \label{eq:MMSE_fixed_point_e}
\end{align}

Plugging \eqref{eq:MMSE_fixed_point_c}, \eqref{eq:MMSE_fixed_point_d}, and \eqref{eq:MMSE_fixed_point_e}, back into \eqref{eq:MMSE_fixed_point_b} and keeping only the dominant terms leads to 
\begin{multline}
\int_{0}^{\delta} \left| \cM_{n}(\gamma)    - \mmse_X\left( \frac{\gamma  }{1 + \cM_n(\gamma) }    \right) \right| \dd \gamma\\
  \le C_B \cdot  (1 + \delta)^\frac{7}{4} \, n^{-\frac{1}{48}}.  \notag
\end{multline}
This completes the proof of Theorem~\ref{thm:MMSE_fixed_point}.


\section{Proof of Theorem~\ref{thm:limits}} 

The proof of Theorem~\ref{thm:limits} is established by combining implications of the single-crossing property, the constraints on the MI and MMSE given in Theorems~\ref{thm:I_MMSE_relationship}, \ref{thm:MMSE_fixed_point}, and \ref{thm:I_m_boundary},  and standard results from functional analysis.

\subsection{The Single-Crossing Property}\label{sec:single_crossing}

The fixed point curve is given by the graph of the function
\begin{align}
\delta_\fp(z) = (1 + z) \mmse_X^{-1}(z),  \notag
\end{align}
where $\mmse_X^{-1}(z)$ is the functional inverse of $\mmse_X(s)$. The function $\delta_\fp(z)$ is continuously differentiable over its domain because $\mmse_X(s)$ is smooth on $(0,\infty)$  \cite[Proposition~7]{guo:2011}.

The function $\delta_\rs(z)$ is defined to be the functional inverse of the replica-MMSE function
\begin{align}
\delta_\rs(z)  &= \MR^{-1}(z).  \notag
\end{align}
The function $\delta_\rs$ is continuous and non-increasing because $\MR$ is strictly decreasing. Note that jump discontinuities in $\MR(\delta)$ correspond to flat sections in $\delta_\rs(z)$. 

Using these definitions we can now provide a formal definition of the single-crossing property. 

\begin{definition}[Single-Crossing Property] 
A signal distribution $P_X$ has the single-crossing property if  $\delta_\rs - \delta_\fp$ has at most one zero-crossing. In other words, there exists $z_* \in \reals_+$ such that $\delta_\rs - \delta_\fp$ is nonpositive or nonnegative on $[0,z_*]$ and nonpositive or nonnegative on $[z_*, \infty)$.  
\end{definition}

\begin{lemma}\label{lem:delta_rs_alt}
If the signal distribution $P_X$ has the single-crossing property there exists a point $(z_*,\delta_*) \in \fp$ such that
\begin{align}
\delta_\rs(z) & = \begin{cases}
\max( \delta_\fp(z), \delta_*), & \text{if $z \in [0,z_*]$}  \notag\\
\min( \delta_\fp(z), \delta_*), & \text{if $z\in [z_*,\infty)$}.  \notag
\end{cases}
\end{align}
\end{lemma}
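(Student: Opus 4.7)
The plan is to analyze $\MR$ and $\delta_\rs$ through the geometry of the fixed-point curve $\fp$, exploiting the identity \eqref{eq:fixed_point_replica} and the explicit formula \eqref{eq:R_z} for $R_z$. Since $\MR(\delta)$ is a global minimizer of the smooth function $R(\delta,\cdot)$ in $z$, it is a stationary point, so $(\delta,\MR(\delta))\in\fp$ and hence $\delta_\fp(\MR(\delta))=\delta$ for every $\delta$. This yields $\delta_\rs(z) = \delta_\fp(z)$ at every $z$ lying in the range of $\MR$. Because $\MR$ is strictly decreasing, $\delta_\rs$ is a well-defined continuous non-increasing function, and the complement of the range of $\MR$ is the (at most countable) union of open intervals $(z_-,z_+)$ corresponding to jump discontinuities of $\MR$. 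On each such interval $\delta_\rs$ is constant and equal to the jump location $\delta_0$, and the fixed-point identity forces $\delta_\fp(z_-)=\delta_\fp(z_+)=\delta_0$ at the endpoints.

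Next I analyze the sign of $h(z) \triangleq \delta_\rs(z) - \delta_\fp(z)$ inside each jump interval $(z_-,z_+)$. Formula \eqref{eq:R_z} shows that, for each fixed $z$, $R_z(\delta,z)$ has the same sign as $\delta-\delta_\fp(z)$, so at each local minimum of $R(\delta_0,\cdot)$ the function $\delta_\fp(z)$ crosses $\delta_0$ from above to below as $z$ increases, and at each local maximum it crosses from below to above. Because $z_-$ and $z_+$ are both global minima of $R(\delta_0,\cdot)$ with the same value, the interior stationary points alternate local-max / local-min, beginning and ending with a local max. Tracking signs shows that $h$ vanishes outside the jump intervals, and a jump with $2k+1$ interior stationary points contributes exactly $2k+1$ sign changes of $h$; in particular, a jump with a single interior stationary point $z_*$ produces the pattern $h>0$ on $(z_-,z_*)$ and $h<0$ on $(z_*,z_+)$.

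The single-crossing property then completes the argument. Two jumps would produce at least three sign changes of $h$ (one strictly inside each jump, plus a $-\to+$ transition on the intervening portion where $h\equiv 0$), and a single jump with three or more interior stationary points would itself produce three or more sign changes. Both possibilities are forbidden, so $\MR$ has at most one jump, and any such jump contains exactly one interior stationary point $z_*$, at which $\delta_\fp(z_*)=\delta_0\triangleq\delta_*$, giving $(z_*,\delta_*)\in\fp$. If $\MR$ has no jump, one may take any $z_*\in\rgz$ and $\delta_*=\MR(z_*)$; since $\delta_\rs=\delta_\fp$ is non-increasing, the $\max/\min$ formula holds immediately. Otherwise, a case analysis over $[0,z_-]$, $(z_-,z_*]$, $[z_*,z_+)$, and $[z_+,\infty)$ finishes the argument: outside the jump $\delta_\rs=\delta_\fp$ and monotonicity gives $\delta_\fp\ge\delta_*$ on $[0,z_-]$ and $\delta_\fp\le\delta_*$ on $[z_+,\infty)$; inside, the sign analysis gives $\delta_\fp\le\delta_*$ on $(z_-,z_*]$ and $\delta_\fp\ge\delta_*$ on $[z_*,z_+)$. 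The main obstacle is the sign analysis at interior stationary points, which is handled cleanly by the monotonicity of $R_z$ in $\delta$ coming from \eqref{eq:R_z}.
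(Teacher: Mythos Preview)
Your proposal is correct and follows essentially the same route as the paper's proof: both identify the flat sections of $\delta_\rs$ with jump intervals of $\MR$, use the sign identity $\sgn(R_z(\delta_0,z))=-\sgn(\delta_\fp(z)-\delta_0)$ derived from \eqref{eq:R_z} to relate stationary points of $R(\delta_0,\cdot)$ to crossings of $\delta_\fp$ with the flat level $\delta_0$, and then invoke the single-crossing property to force at most one jump and locate $z_*$. Your treatment is somewhat more explicit than the paper's in counting interior stationary points and their alternating max/min structure, whereas the paper argues more briefly via the existence of a single interior local maximum of $Q_*(z)=R(\delta_*,z)$; but the underlying mechanism and the final case analysis are the same.
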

\begin{proof}
If $\delta_\rs(z) = \delta_\fp(z)$ for all $z \in \reals_+$ then this representation holds for every point in $\fp$ because $\delta_\rs$ is non-increasing. Alternatively, if there exists $(u_*, \delta_*) \in \reals^2_+$ such that  $\delta_* = \delta_\rs(u_*) \ne \delta_\fp(u_*) $, then it must be the case that the global minimum of $Q_*(z) \triangleq R(\delta_*, z)$  is attained at more than one point. More precisely,  there exists $z_1 <u_* <  z_2$ such that
\begin{align}
& Q_*( z_1)  = Q_*(z_2)  = \min_{z} Q_*(z)  \label{eq:delta_rs_alt_b}\\
& Q_*( u )  >  \min_{z} Q_*(z)   \quad \text{for some $u \in (z_1, z_2)$}. \label{eq:delta_rs_alt_c}
\end{align}
To see why the second constraint follows from the assumption $\delta_\rs(u_*) \ne \delta_\fp(u_*)$, note that if $Q_*(z)$ were constant over the interval $[z_1, z_2]$,  that would mean that $Q'_*(z) = R_z(\delta_* , z) = 0 $ for all $z \in [z_1, z_2]$. This is equivalent to saying that every point on the line from $(\delta_*, z_1)$ to $(\delta_*, z_2)$ is on the fixed-point curve, which is a contradiction. 

Now, since $\delta_\rs(z)$ is non-increasing and equal to $\delta_*$ at both $z_1$ and $z_2$ we  know that $\delta_\rs(z) = \delta_*$ for all $z \in [z_1, z_2]$. Furthermore, since $z_1$ and $z_2$ are minimizers of $R(\delta_*, z)$, we also know that  $\delta_\fp(z_1) = \delta_\fp(z_2) = \delta_*$. 

Next, we will show that the the function $\delta_\fp(z)  - \delta_*$ must have at least one negative-to-positive zero-crossing on  $(z_1, z_2)$. Recall that the function $Q_*(z)$ is continuous, has global minima at $z_1$ and $z_2$, and it not constant over $[z_1,z_2]$. Therefore, it must attain a local maximum on the open interval $(z_1,z_2)$. Since it is continuously differentiable, this means that there  exists $u_1, u_2 \in (z_1, z_2)$ with $u_1 < u_2$ such that $Q'_*(u_1) > 0$ and $Q'_*(u_2) < 0$. The sign changes in $Q_*(z)$ can be related to the sign changes in $\delta_\fp(z) - \delta_*$ be noting that
\begin{align}
\sgn( Q'_*(z) ) &   \overset{(a)}{=} \sgn\left( z- \mmse_X\left( \frac{\delta_*}{1 + z} \right)  \right) \notag\\
&  \overset{(b)}{=}  - \sgn\left( \mmse^{-1}_X(z)-  \frac{\delta_*}{1 + z} \right)  \notag\\
& =  - \sgn\left( \delta_\fp(z) - \delta_*\right),  \notag
\end{align}
where (a) follows from \eqref{eq:R_z} the fact that $\delta_*$ can be taken to be strictly positive and (b)  follows from the fact that $\mmse_X(s)$ is strictly decreasing. As a consequence, we see that $\delta_\fp(u_1) < \delta_*$ and $\delta_\fp(u_2) > \delta_*$, and thus $\delta_\fp(z) - \delta_*$ has at least one negative-to-positive zero-crossing on the interval $(z_1, z_2)$.

At this point, we have shown that every tuple $(\delta_*, z_1, z_2)$ satisfying \eqref{eq:delta_rs_alt_b} and \eqref{eq:delta_rs_alt_c}  leads to at least one negative-to-positive zero-crossing of $\delta_\fp - \delta_\rs$. Therefore, if the signal distribution has the single-crossing property, there can be at most one  such tuple. This implies that  $\delta_\fp(z)  = \delta_\rs(z)$ for all $z \in [0,z_1] \cup [z_2, \infty)$. 
Furthermore, by the continuity of $\delta_\fp$, there exists a point $z_* \in (u_1, z_2)$ such that $\delta_\fp(z_*) = \delta_*$ and
\begin{align}
z \le z_* &\implies \delta_\rs(z) \ge \delta_\fp(z) \notag\\
z \ge z_* &\implies \delta_\rs(z) \le \delta_\fp(z).  \notag
\end{align}
Combining these observations leads to the stated result. 
\end{proof}


Next, for each $g \in \cV$, we use $\delta_g(z) = g^{-1}(z)$ to denote the functional inverse. The function $\delta_g$ is continuous and non-increasing because $g$ is strictly decreasing.

\begin{lemma}\label{lem:delta_g_bounds}
 If the signal distribution $P_X$ has the single-crossing property then, for every $g \in \cV$, the function $\delta_g$ is either an upper bound or lower bound on $\delta_\rs$. 
\end{lemma}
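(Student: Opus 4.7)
The plan is to reduce the comparison of $\delta_g$ and $\delta_\rs$ to a pointwise comparison of $g$ and $\MR$ themselves, using that under single-crossing the set $\cV$ admits a particularly simple geometric description.

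First, since every $g \in \cV$ satisfies $(\delta,g(\delta))\in \fp$, its continuous non-increasing inverse $\delta_g$ coincides with $\delta_\fp$ wherever $z$ lies in the range of $g$, and is flat at some value $\delta_0$ on each ``gap'' $[z_2,z_1]$ of this range, with $\delta_\fp(z_1)=\delta_\fp(z_2)=\delta_0$. Because $g$ is non-increasing while $\delta_\fp(z)$ is strictly increasing on any middle branch of the FP curve (the portion between a local min and local max of $\delta_\fp$), no $g \in \cV$ can traverse a middle branch; each S-feature of $\delta_\fp$ must therefore be bridged by a single jump of $g$ from the upper to the lower branch.

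Second, I would show that single-crossing forces the FP curve to have at most one S-feature. If there were two, then $\MR$ (which, by the same monotonicity argument, cannot traverse any middle branch either) would be forced to jump twice; but each jump of $\MR$ induces a separate sign-change region of $\delta_\rs - \delta_\fp$, exactly as in the zero-crossing construction in the proof of Lemma~\ref{lem:delta_rs_alt}, so two jumps would produce two sign changes, contradicting the single-crossing hypothesis. Consequently, any $g \in \cV$ has at most one jump at some level $\delta_0$, and $\MR$ has at most one jump at $\delta_*$.

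Third, with $g$ and $\MR$ each determined (up to forced agreement on single-fixed-point regions) by a jump level, pointwise comparison is immediate. For $\delta < \min(\delta_0,\delta_*)$ both sit on the upper branch of FP and agree; for $\delta > \max(\delta_0,\delta_*)$ both sit on the lower branch and agree; on $[\min(\delta_0,\delta_*),\max(\delta_0,\delta_*)]$ the one that has already jumped is on the lower branch (smaller $z$) while the other remains on the upper branch (larger $z$). Hence $g \ge \MR$ pointwise when $\delta_0 \ge \delta_*$, and $g \le \MR$ pointwise when $\delta_0 \le \delta_*$. Inverting, which reverses the order because both functions are non-increasing, yields $\delta_g \le \delta_\rs$ in the first case and $\delta_g \ge \delta_\rs$ in the second, establishing the claim.

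The main obstacle is the second step, namely rigorously promoting ``at most one jump'' from $\MR$ to a generic $g \in \cV$. I expect the argument to be a direct adaptation of the proof of Lemma~\ref{lem:delta_rs_alt}: each jump of a non-increasing fixed-point selection must sit inside an S-feature of $\delta_\fp$, and single-crossing places at most one S-feature on the FP curve. Some care is needed at the boundary cases where $g$ has no jump (then $\delta_g = \delta_\fp$ on its domain and the trichotomy simplifies), where $\MR$ is continuous (then $\delta_\rs = \delta_\fp$ and the conclusion is immediate), and at the endpoints of a gap where $\delta_g$ transitions from $\delta_\fp$ to the flat value $\delta_0$.
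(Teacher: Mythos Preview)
Your approach has a genuine gap in step 2. The claim that single-crossing forces $\delta_\fp$ to have at most one S-feature is not justified by the argument you give, and in fact it is false. Your reasoning ``two S-features $\Rightarrow$ $\MR$ jumps twice'' breaks because $\MR$ might jump once across both; but more importantly, the correct conclusion of Lemma~\ref{lem:delta_rs_alt} is only that within the single jump interval $[z_1,z_2]$ of $\MR$ one has $\delta_\fp \le \delta_*$ on $[z_1,z_*]$ and $\delta_\fp \ge \delta_*$ on $[z_*,z_2]$. Nothing prevents $\delta_\fp$ from oscillating on either subinterval while staying on the correct side of $\delta_*$. When that happens, a generic $g\in\cV$ can follow one of these interior decreasing branches and acquire two or more jumps (e.g., continue past $z_2$ onto a descending branch inside $[z_*,z_2]$, ride it to its local maximum, then jump again). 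So step~3's ``one jump each, compare jump levels'' picture collapses.

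The paper's proof sidesteps this entirely: it never bounds the number of jumps of $g$. Instead it uses two facts valid for any $g\in\cV$: (i) monotonicity of $\delta_g$ gives $\delta_g(z)\gtrless \delta_g(z_*)$ according as $z\lessgtr z_*$; (ii) since $\delta_g(z)=\delta_\fp(z')$ for some $z'\le z$ (either $z'=z$ on the range of $g$, or $z'$ the lower endpoint of the jump containing $z$), one has the lower-envelope bound $\delta_g(z)\ge \min_{u\in[0,z]}\delta_\fp(u)$. Combining these with the explicit formula $\delta_\rs(z)=\max(\delta_\fp(z),\delta_*)$ on $[0,z_*]$ and $\min(\delta_\fp(z),\delta_*)$ on $[z_*,\infty)$ from Lemma~\ref{lem:delta_rs_alt}, and then using that $\delta_\rs$ is non-increasing (so $\min_{u\le z}\delta_\rs(u)=\delta_\rs(z)$), yields $\delta_g\ge\delta_\rs$ when $\delta_g(z_*)\ge\delta_*$ and the reverse otherwise. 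That is the argument you should replace step~2--3 with.
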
 
\begin{proof}
Let $(\delta_*,z_*)$ be the point described in Lemma~\ref{lem:delta_rs_alt}.  Since $\delta_g$ is non-increasing, we have $\delta_g(z) \ge \delta_g(z_*)$ for all $z \in [0,z_*]$ and $\delta_g(z) \le \delta_g(z_*)$ for all $z \in [z_*, \infty)$. Combining these inequalities with the fact that $\delta_g$ is lower bounded by the lower envelop of the fixed-point curve leads to
\begin{align}
\delta_g(z) & \ge 
\begin{dcases}
\min_{u \in [0,z]} \max( \delta_\fp(u) ,  \delta_g(z_*)), &\text{if $z \in [0, z_*]$}\\
\min_{u \in [z_*, z]} \min( \delta_\fp(u) , \delta_g(z_*)), &\text{if $z \in [z_*,\infty)$}.
\end{dcases}. \notag
\end{align}
Therefore, if $\delta_g(z_*) \ge \delta_*$, we see that
\begin{align}
\delta_g(z) & \ge
\begin{dcases}
\min_{u \in [0,z]} \max( \delta_\fp(u) ,  \delta_*), &\text{if $z \in [0, z_*]$}\\
\min_{u \in [z_*, z]} \min( \delta_\fp(u) , \delta_*), &\text{if $z \in [z_*,\infty)$}. 
\end{dcases} \notag\\
&\overset{(a)}{ =} \begin{dcases}
\min_{u \in [0,z]} \delta_\rs(u) , &\text{if $z \in [0, z_*]$}\\
\min_{u \in [z_*, z]} \delta_\rs(u), &\text{if $z \in [z_*,\infty)$}.
\end{dcases} \notag\\
& \overset{(b)}{=} \delta_\rs(z) , \notag
\end{align}
where (a) follows from Lemma~\ref{lem:delta_rs_alt} and (b) follows from the fact that $\delta_\rs$ is non-increasing. Alternatively, if $\delta_g(z_*) \le \delta_*$ then a similar argument can be used to show that $\delta_g(z) \le \delta_\rs(z)$ for all $z \in \reals_+$. 
\end{proof}

\begin{lemma}\label{lem:G_unique}
 If the signal distribution $P_X$ has the single-crossing property, then $\cG$ is equal to the equivalence class of functions in $\cV$ that are equal to $\MR$ almost everywhere. 
\end{lemma}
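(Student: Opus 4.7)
The plan is to prove the two inclusions separately, with the nontrivial direction being $\cG \subseteq \{g \in \cV : g = \MR \text{ a.e.}\}$.

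First, I would verify the easy inclusion. If $g \in \cV$ and $g = \MR$ almost everywhere, then
\begin{align}
\int_0^\delta \tfrac{1}{2}\log(1+g(\gamma))\,\dd\gamma = \int_0^\delta \tfrac{1}{2}\log(1+\MR(\gamma))\,\dd\gamma = \IR(\delta), \notag
\end{align}
where the second equality uses \eqref{eq:derivative_replica} together with $\IR(0)=0$. Hence the boundary condition holds trivially and $g\in\cG$. This also confirms that $\MR$ itself lies in $\cG$.

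For the nontrivial inclusion, I fix an arbitrary $g\in\cG$. By Lemma~\ref{lem:delta_g_bounds}, the generalized inverse $\delta_g$ satisfies either $\delta_g(z)\ge\delta_\rs(z)$ for all $z\ge 0$ or $\delta_g(z)\le\delta_\rs(z)$ for all $z\ge 0$. I would translate each inequality on inverses into a pointwise inequality on the original functions: since $g$ and $\MR$ are non-increasing, the case $\delta_g\ge\delta_\rs$ yields $g(\delta)\ge\MR(\delta)$ at every $\delta$ where $\MR$ is continuous (evaluating $\delta_g$ at $z=\MR(\delta)$ and using that $g$ is non-increasing gives $g(\delta)\ge g(\delta_g(\MR(\delta)))=\MR(\delta)$), and the reverse case yields $g\le\MR$ pointwise at continuity points of $\MR$. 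In either case, the sign of $\log(1+g(\gamma))-\log(1+\MR(\gamma))$ is constant almost everywhere.

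Now I would invoke the boundary condition of $\cG$ together with \eqref{eq:derivative_replica} to write
\begin{align}
0 = \lim_{\delta\to\infty}\left|\int_0^\delta \tfrac{1}{2}\bigl[\log(1+g(\gamma)) - \log(1+\MR(\gamma))\bigr]\,\dd\gamma\right|. \notag
\end{align}
Because the integrand has constant sign, the absolute value can be brought inside the integral, giving
\begin{align}
\int_0^\infty \bigl|\log(1+g(\gamma)) - \log(1+\MR(\gamma))\bigr|\,\dd\gamma = 0, \notag
\end{align}
which forces $g(\gamma)=\MR(\gamma)$ for Lebesgue-almost every $\gamma\ge 0$.

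The main subtlety will be the translation from the inverse inequality given by Lemma~\ref{lem:delta_g_bounds} to a pointwise inequality between $g$ and $\MR$, because $\MR$ may have a jump discontinuity (corresponding to a flat section of $\delta_\rs$). I expect this to be handled cleanly by restricting attention to the (co-countable) set of continuity points of $\MR$, which suffices since the conclusion is only claimed almost everywhere, and by using that $\delta_g$ and $\delta_\rs$ are continuous non-increasing functions of $z$.
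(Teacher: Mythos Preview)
Your proposal is correct and follows essentially the same approach as the paper's proof: both use Lemma~\ref{lem:delta_g_bounds} to show the integrand $\log(1+g)-\log(1+\MR)$ has constant sign, then combine this with the boundary condition and \eqref{eq:derivative_replica} to conclude the $L^1$ norm of the difference vanishes. Your write-up is slightly more explicit---you spell out the easy inclusion and the translation from the inverse inequality $\delta_g\gtrless\delta_\rs$ to the pointwise inequality $g\gtrless\MR$, whereas the paper compresses this into a single line citing ``monotonicity of $g$ and $\MR$ and Lemma~\ref{lem:delta_g_bounds}''---but the argument is the same.
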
 
\begin{proof}
Recall that $\cG$ is the set of all functions $g \in \cV$ that satisfy the boundary condition
\begin{align}
\lim_{\delta \to \infty}  \left| \int_0^\delta \frac{1}{2} \log(1 + g(\gamma) ) \dd \gamma - \IR(\delta) \right| = 0. \label{eq:G_unique_b}
\end{align}
Furthermore, for each $g \in \cV$ and $\delta \in \reals_+$, we can write
\begin{align}
\MoveEqLeft \left| \int_0^\delta \tfrac{1}{2} \log(1 + g(\gamma) ) \dd \gamma - \IR(\delta) \right|  \notag\\
& \overset{(a)}{=}  \left| \int_0^\delta \! \tfrac{1}{2} \log(1\! +\! \MR(\gamma) ) \dd \gamma - \int_0^\delta \! \tfrac{1}{2} \log(1 \!+\! g(\gamma) ) \dd \gamma  \right|  \notag\\
& \overset{(b)}{=}  \int_0^\delta  \left|\tfrac{1}{2} \log(1\! +\! \MR(\gamma) )  -\tfrac{1}{2} \log(1\! +\! g(\gamma) )\right| \dd \gamma,    \label{eq:G_unique_c}
\end{align}
where (a) follows from \eqref{eq:derivative_replica} and (b) follows from the monotonicity of $g$ and $\MR$ and Lemma~\ref{lem:delta_g_bounds}. Combining \eqref{eq:G_unique_b} and \eqref{eq:G_unique_c},  we see that, for all $g \in \cG$, 
\begin{align}
  \int_0^\infty  \left|\frac{1}{2} \log(1 + \MR(\gamma) )  -\frac{1}{2} \log(1 + g(\gamma) )\right| \dd \gamma   = 0, \notag
\end{align}
and thus $\MR$ and $g$ are equal almost everywhere. 
\end{proof}

\subsection{Convergence of Subsequences}

For each $n\in \mathbb{N}$, the function $\cM_n$ is a non-increasing function from $\rgz$ to $\rgz$.
Convergence of the sequence $\cM_n$ can be treated in a few different ways.
In our original approach~\cite{reeves:2016a}, we focused on the L\'{e}vy metric~\cite[Ch.~2]{Gnedenko-1968}.
Here, we present a more direct argument based on the Helly Selection Theorem~\cite[Thm.~12]{Hanche-em10}.

First, we let $L^1 ([0,S])$ represent the standard Banach space of Lebesgue integrable functions from $[0,S]$ to $\mathbb{R}$ with norm
\[ \int_0^S |f(\delta)| \dd \delta. \]
In this space, two functions $f,g$ are called \emph{equivalent} if they are equal almost-everywhere (i.e., $\int_0^S|f(\delta)-g(\delta)|\dd \delta = 0$).
Next, we recall that monotone functions are continuous almost everywhere (e.g., except for a countable set of jump discontinuities) \cite{Rudin-1976}.
Thus, $f,g$ are equivalent if and only if they are equal at all points of continuity.

The following lemmas outline our approach to convergence.

\begin{lemma}
\label{lem:subseq_limit} Under Assumptions 1 and 2, for any $S>0$ and any subsequence of 
$(\mathcal{M}_{n},\mathcal{I}_{n})$, there is a further subsequence (whose index is denoted by $n'$) and some $g\in \cG$ such that
\begin{align}
\lim_{n'} & \int_0^S \big| \mathcal{M}_{n'}(\delta) - g(\delta) \big| \dd \delta =0  \notag \\
\lim_{n'} & \int_0^S \left| \mathcal{I}_{n'}'(\delta) - \frac{1}{2}\log\left(1+g(\delta)\right)\right|  \dd \delta = 0.  \notag
\end{align}
\end{lemma}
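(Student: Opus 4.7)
The plan is to obtain $g$ via a Helly-selection compactness argument applied to the uniformly bounded monotone sequence $\cM_n$, and then to show that the asymptotic constraints in Theorems~\ref{thm:I_MMSE_relationship}--\ref{thm:I_m_boundary} pass to the limit and force $g\in\cG$. First, each $\cM_n$ is nonincreasing in $\delta$ by the data-processing inequality for MMSE, and uniformly bounded by $\var(X)\le\sqrt{\ex{X^4}}\le\sqrt{B}$. Helly's selection theorem on each compact $[0,k]$, together with a standard diagonal argument over $k\in\integers$, shows that any subsequence admits a further subsequence $\cM_{n'}$ converging pointwise on $\rgz$, except on an at most countable set, to a nonincreasing limit $g:\rgz\to[0,\sqrt{B}]$. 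Uniform boundedness and the dominated convergence theorem then upgrade pointwise convergence to $L^1$ convergence on every bounded interval, establishing the first claimed limit.

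Next I verify that $g\in\cG$. For the fixed-point property~\eqref{eq:fixed_point_g}, Theorem~\ref{thm:MMSE_fixed_point} yields
\begin{equation*}
\int_0^S \Bigl| \cM_{n'}(\gamma) - \mmse_X\Bigl(\tfrac{\gamma}{1+\cM_{n'}(\gamma)}\Bigr) \Bigr|\,\dd\gamma \;\to\; 0.
\end{equation*}
Since $\mmse_X$ is Lipschitz on $[0,\infty)$ by Lemma~\ref{lem:mmseX_bound} and $\cM_{n'}\to g$ boundedly a.e., dominated convergence forces $g(\gamma)=\mmse_X\bigl(\gamma/(1+g(\gamma))\bigr)$ for almost every $\gamma$; after adjusting $g$ on its at-most-countable set of jump discontinuities (which does not affect any integral or the equivalence class in $\cG$), the fixed-point identity holds everywhere and $g\in\cV$.

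For the boundary condition, Theorem~\ref{thm:I_MMSE_relationship} combined with dominated convergence gives, for every fixed $\delta$, $\cI_{n'}(\delta)\to\int_0^\delta \tfrac12\log(1+g(\gamma))\,\dd\gamma$. Meanwhile, Theorem~\ref{thm:I_m_boundary} supplies the uniform bound $|\cI_{n'}(\delta)-\IR(\delta)|\le C\delta^{-1/2}$ for all $n'$ and $\delta\ge 4$; passing to the limit $n'\to\infty$ and then $\delta\to\infty$ yields the boundary condition of Definition~\ref{lem:replica_mmse_set}, placing $g$ in $\cG$. The second claimed limit then follows from the triangle inequality
\begin{equation*}
\int_0^S \bigl|\cI'_{n'}-\tfrac12\log(1+g)\bigr|\,\dd\delta \;\le\; \int_0^S\bigl|\cI'_{n'}-\tfrac12\log(1+\cM_{n'})\bigr|\,\dd\delta \;+\; \tfrac12\int_0^S\bigl|\log(1+\cM_{n'})-\log(1+g)\bigr|\,\dd\delta,
\end{equation*}
whose first term vanishes by Theorem~\ref{thm:I_MMSE_relationship} and whose second vanishes by dominated convergence together with continuity of $x\mapsto\log(1+x)$ on $[0,\sqrt{B}]$.

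The principal obstacle is not a single deep step but the careful transfer of the nonlinear finite-length constraints to the limit $g$: this relies on combining pointwise a.e.\ convergence from Helly with the Lipschitz bounds of Lemma~\ref{lem:mmseX_bound}, and on threading the diagonal subsequence through an exhausting family of compact intervals so that the genuinely infinite-rate boundary condition can ultimately be extracted from Theorem~\ref{thm:I_m_boundary}.
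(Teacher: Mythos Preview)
Your proposal is correct and follows the same high-level strategy as the paper: extract a subsequential limit of $\cM_n$ via Helly selection, then pass the constraints of Theorems~\ref{thm:I_MMSE_relationship}--\ref{thm:I_m_boundary} to the limit using the Lipschitz continuity of $\mmse_X$ and of $x\mapsto\log(1+x)$.

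The main substantive difference is that the paper introduces an additional compactness step: it applies Arzel\`a--Ascoli to the uniformly bounded, uniformly Lipschitz, concave sequence $\cI_n$, extracts a further subsequence along which $\cI_{n'}\to\cI_*$ uniformly and $\cI'_{n'}\to\cI'_*$ almost everywhere, and only then invokes Theorem~\ref{thm:I_MMSE_relationship} to identify $\cI'_*$ with $\tfrac12\log(1+\cM_*)$. You bypass this entirely by the direct triangle-inequality estimate at the end of your argument, which is cleaner and shows that the Arzel\`a--Ascoli step is not actually needed for the lemma as stated. A second, smaller difference: the paper cites an $L^1$ form of Helly to get $L^1([0,S])$ convergence directly, whereas you use the pointwise version plus a diagonal argument over compact intervals and then DCT; your diagonal argument has the advantage of producing a \emph{single} subsequence and a \emph{single} limit $g$ defined on all of $\rgz$, which is what one really needs to verify the boundary condition in Definition~\ref{lem:replica_mmse_set}. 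The paper handles this last point somewhat informally (``$S$ was arbitrary''), so your treatment is in fact more careful there.
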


\begin{proof}
For any $S>0$ and each $n\in \mathbb{N}$, the restriction of $\cM_n (\delta)$ to $\delta\in [0,S]$ is non-increasing and uniformly bounded by $\cM_n(0) =\var(X)$.  Since $\cM_n (\delta)$ is nonnegative and non-increasing, its total variation on $[0,T]$ equals $\cM_n (0) - \cM_n(S) \leq \var(X)$~\cite[Section~6.3]{Royden-2010}.

Based on this, the Helly Selection Theorem~\cite[Thm.~12]{Hanche-em10} shows that any subsequence of $\cM_n$ contains a further subsequence that converges in $L^1 ([0,S])$.
Let $\cM_{n'}$ denote this further subsequence and $\mathcal{M}_*$ denote its limit so that
\[
\lim_{n'} \int_0^S \left| \mathcal{M}_{n'} (\delta) - \mathcal{M}_* (\delta) \right| \dd \delta = 0.
\]

To simplify notation, we define the operator $T\colon L^1 ([0,S]) \to L^1 ([0,S])$ via $(Tf)(\delta) \mapsto \mmse_X \left(\delta/(1+f(\delta))\right)$.
To analyze $\mathcal{M}_* (\delta)$, we observe that, for all $n$, one has
\begin{align}
\int_0^S \!\!\! & \; \big| \cM_{*}  (\delta) - T\cM_{*} (\delta) \big|  \dd \delta \leq  \notag
\int_0^S \big|\cM_{*}(\delta)- \cM_{n}(\delta) \big| \dd \delta  \notag\\
&\quad +  \int_0^S \big|\cM_{n}(\delta) -T\cM_{n}(\delta) + T\cM_{n}(\delta)-T\cM_{*}(\delta) \big| \dd \delta  \notag\\
&\leq 
(1+L_T) \int_0^S\big|\cM_{*}(\delta)- \cM_{n}(\delta) \big| \dd \delta  \notag\\
& \quad + \int_0^S\big|\cM_{n}(\delta) -T\cM_{n}(\delta) \big|\dd \delta,  \notag
\end{align}
where $L_T$ is the Lipschitz constant of $T$.
Under Assumption 2, one can use Lemma~\ref{lem:mmseX_bound}, to show that $L_T \leq 4 B S$.
Since $\int_0^S \big|\cM_{*} (\delta)- \cM_{n'} (\delta) \big| \dd \delta \to 0$ by construction and $\int_0^S \big| \cM_{n'}(\delta) -T\cM_{n'} (\delta) \big| \dd \delta \to 0$ by Theorem~\ref{thm:MMSE_fixed_point}, taking the limit along this subsequnce shows that $\cM_{*}$ equals $T\cM_{*}$ almost everywhere on $[0,S]$.
As $S$ was arbitrary, we see that $\cM_*$ satisfies the first condition of Definition~\ref{lem:replica_mmse_set}.

To establish the second condition of Definition~\ref{lem:replica_mmse_set}, we focus on the sequence $\cI_{n}$. Recall that each $\cI_{n}$ is concave and differentiable, with derivative $\cI_{n}'$. 
Also, the set $\{\cI_{n}\}$ is uniformly bounded on $[0,S]$ and uniformly Lipschitz by \eqref{eq:IX_Gbound}, \eqref{eq:Ip_bound}, and \eqref{eq:Im_bounds}.
By the Arzel\`{a}-Ascoli theorem~\cite[Section~10.1]{Royden-2010}, this implies that any subsequence of $\cI_{n}$ contains a further subsequence that converges uniformly on $[0,S]$.
Moreover, the limiting function is concave and the further subsequence of derivatives also converges to the derivative of the limit function at each point where it is differentiable~\cite[Corollary~1.3.8]{Niculescu-2009}.

Thus, from any subsequence of $(\cM_{n},\cI_{n})$, we can choose a further subsequence (whose index is denoted by $n'$) such that $\int_0^S \big| \mathcal{M}_{n'} (\delta) - \mathcal{M}_* (\delta) \big| \dd \delta \to 0$ and $\cI_{n'}$ converges uniformly on $[0,S]$
to a concave limit function $\cI_* $. Moreover, the
sequence of derivatives $\mathcal{I}_{n'}'$ also converges
to $\cI_* '$ at each point where $\cI_*$
is differentiable.
Since $\cI$ is concave, it is
differentiable almost everywhere and we have
\[
\lim_{n'}\mathcal{I}_{n'}'(\delta)=\mathcal{I}_* '(\delta)
\]
almost everywhere on $[0,S]$.
Since $\left| \cI_{n'} ' (\delta) - \cI_{*}' (\delta) \right|$ is bounded and converges to zero almost everywhere on $[0,S]$, we can apply the dominated convergence theorem to see also that $\int_0^S \big|\cI_{n'}' (\delta) - \cI_{*}' (\delta) \big| \dd \delta \to 0$.
Next, we can apply Theorem~\ref{thm:I_MMSE_relationship} to see that
\[
\lim_{n'}\int_{0}^{S}\left|\cI'_{n'}(\delta)-\frac{1}{2}\log\left(1+\cM_{n'}(\delta)\right)\right|\dd\delta=0.
\]
Since $\frac{1}{2}\log\left(1+z\right)$ is Lipschitz in $z$, one finds that
\[ \int_0^S \left| \frac{1}{2} \log \left(1 + \mathcal{M}_{n'} (\delta) \right) - \frac{1}{2} \log \left(1+ \mathcal{M}_* (\delta) \right) \right| \dd \delta \to 0 \]
follows from the fact that
$\int_0^S \big| \mathcal{M}_{n'} (\delta) - \mathcal{M}_* (\delta) \big| \dd \delta \to 0$.
Along with the triangle inequality, this shows that $\cI_{n'}'(\delta)$ converges to $\frac{1}{2} \log (1+\cM_* (\delta))$ almost everywhere on $[0,S]$. Since $\left| \cI_{n'}'(\delta) - \frac{1}{2} \log (1+\cM_* (\delta)) \right|$ is bounded and converges to zero almost everywhere on $[0,S]$, we can apply the dominated convergence theorem to see that
\begin{equation} \label{eq:Istar_int_Mstar}
\cI_* (S)  =\lim_{n'}  \int_0^S \cI_{n'}' (\delta) \dd \delta
 = \int_{0}^{S} \frac{1}{2} \log (1+\cM_* (\delta)) \dd \delta.
\end{equation}

Next, we observe that Theorem~\ref{thm:I_m_boundary} implies 
\[
\lim_{n'} \; \left|\IR \left(S\right)-\cI_{n'}(S)\right|\leq\frac{C}{\sqrt{S}},
\]
for all $S \ge 4$.  With \eqref{eq:Istar_int_Mstar}, this implies that 
\begin{align} 
&\left| \IR(S) - \int_{0}^{S} \frac{1}{2}\log\big( 1+\cM_*(\delta) \big) \dd \delta \right|  \notag\\
& \leq \left| \IR(S) - \cI_{n'} (S) \right| + \left| \cI_{n'} (S) - \int_{0}^{S} \frac{1}{2}\log\big( 1+\cM_*(\delta) \big) \dd \delta \right|  \notag\\
& \leq \frac{C}{\sqrt{S}} + \epsilon_{n'},  \notag
\end{align}
where $\lim_{n'} \epsilon_{n'} =0$.
Taking the limit $n' \to \infty$ followed by the limit $S\to \infty$, we see
\begin{equation} 
\lim_{S\to \infty} \left| \IR(S) - \int_{0}^{S} \frac{1}{2}\log\big( 1+\cM_*(\delta) \big) \dd \delta \right| = 0  \notag
\end{equation}
and, thus, that $\cM_* \in \cG$. 
Notice that we focus first on finite $S \in \reals_+$ and then take the limit $S\to \infty$.
This is valid because the functions $\cI(\delta)$ and $\cM_n (\delta)$ are defined for all $\delta \in \reals_+$ but restricted to $[0,S]$ for the convergence proof.
\end{proof}

Now, we can complete the proof of Theorem~\ref{thm:limits}.
The key idea is to combine Lemma~\ref{lem:G_unique} with Lemma~\ref{lem:subseq_limit}.
From these two results, it follows that, for any $S>0$, every subsequence of $\cM_n (\delta)$ has a further subsequence that converges to $\MR(\delta)$.
This holds because the further subsequence must converge to some function in $\cG$ (by Lemma~\ref{lem:subseq_limit}) but there is only one function up to almost everywhere equivalence (by Lemma~\ref{lem:G_unique}).

The final step is to realize that this is sufficient to prove that, for all $S>0$, we have
\begin{equation}
\lim_{n\to\infty} \int_0^S \left| \MR(\delta) - \cM_n (\delta) \right| \dd \delta = 0.  \notag
\end{equation}
To see this, suppose that $\cM_n (\delta)$ does not converge to $\MR(\delta)$ in $L^1 ([0,S])$.
In this case, there is an $\epsilon >0$ and an infinite subsequence $n(i)$ such that
\begin{equation}
\int_0^S \left| \MR(\delta) - \cM_{n(i)} (\delta) \right| \dd \delta > \epsilon  \notag
\end{equation}
for all $i\in \mathbb{N}$.
But, applying Lemma~\ref{lem:subseq_limit} shows that $n(i)$ has a further subsequence (denoted by $n'$) such that
\begin{equation}
\lim_{n'} \int_0^S \left| \MR(\delta) - \cM_{n'} (\delta) \right| \dd \delta = 0.  \notag
\end{equation}
From this contradiction, one must conclude that $\cM_n (\delta)$ converges to $\MR(\delta)$ in $L^1 ([0,S])$ for any $S>0$.

\section{Conclusion}

In this paper, we present a rigorous derivation of the fundamental limits of compressed sensing for i.i.d.\ signal distributions and i.i.d.\ Gaussian measurement matrices. We show that the limiting MI and MMSE are equal to the values predicted by the replica method from statistical physics. This resolves a well-known open problem.

\appendices

\section{Useful Results} 
\subsection{Basic Inequalities} 

We begin by reviewing a  number of basic inequalities. For numbers $x_1, \cdots x_n$ and $p\ge 1$,  Jensen's inequality combined with the convexity of $|\cdot|^p$ yields, 
\begin{align}
\left| \sum_{i =1}^n x_i \right|^p\
& \le  n^{p-1}  \sum_{i =1}^n \left| x_i \right|^p, \qquad p \ge 1 \label{eq:LpBound_1}.
\end{align}
In the special  case $n =2$ and $p \in \{2,4\}$, we obtain
\begin{align}
(a+b)^2 & \le 2 (a^2 + b^2) \label{eq:ab2}\\
(a+b)^4 & \le 8 (a^4 + b^4) \label{eq:ab4}.
\end{align}

For random variables $X_1, \cdots , X_n$ and $p \ge1$, a consequence of Minkowski's inequality  \cite[Theorem 2.16]{boucheron:2013} is that
\begin{align}
\ex{ \left| \sum_{i =1}^n X_i \right|^p} & \le \left( \sum_{i =1}^n \left( \ex{ \left| X_i \right|^p } \right)^\frac{1}{p}  \right)^p, \qquad p \ge 1. \label{eq:Minkowski} 
\end{align}
Also, for random variables $X$ and $Y$, an immediate consequence of Jensen's inequality is that expectation of the absolute difference $|X-Y|$ can be upper bounded  in terms of higher moments, i.e.,
\begin{align}
\ex{|X-Y|} \le  \left | \ex{|X - Y|^p}\right|^{\frac{1}{p}}, \qquad p \ge 1. \notag
\end{align} 
Sometimes, we need to bound the differnence in terms of weaker measures of deviation between $X$ and $Y$. The following Lemma provides two such bounds that also depend on the moments of $X$ and $Y$. 

\begin{lemma}\label{lem:L1_to_Log} 
For nonnegative random variables $X$ and $Y$, the expectation of the absolute difference $|X-Y|$ obeys the following upper bounds:
\begin{align}
\ex{ \left | X - Y \right|} &\le \sqrt{ \frac{1}{2} (\ex{X^2} + \ex{Y^2})  \ex{ \left | \log(X/Y) \right|}} \label{eq:abs_dev_bnd_log}\\
\ex{ \left | X - Y \right|} &\le \sqrt{ 2 (\ex{X} + \ex{Y})   \ex{ \left | \sqrt{X} - \sqrt{Y} \right|}}. \label{eq:abs_dev_bnd_sqrt}
\end{align}
\end{lemma}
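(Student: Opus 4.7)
The plan is to prove both bounds by the same template: factor $|X-Y|$ appropriately and apply Cauchy--Schwarz in the form $\ex{fg}^{2} \le \ex{f^{2}}\,\ex{g^{2}}$.

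The second inequality is the quicker of the two. For nonnegative $X,Y$ I would use the elementary identity
\[
|X-Y| = (\sqrt{X}+\sqrt{Y})\,|\sqrt{X}-\sqrt{Y}|,
\]
apply Cauchy--Schwarz with $f=\sqrt{X}+\sqrt{Y}$ and $g=|\sqrt{X}-\sqrt{Y}|$, and then use the elementary inequality $(\sqrt{a}+\sqrt{b})^{2}\le 2(a+b)$ to bound $\ex{f^{2}}$ by $2(\ex{X}+\ex{Y})$. This yields the stated bound (with the understanding that the factor inside the square root is really $\ex{|\sqrt{X}-\sqrt{Y}|^{2}}$; the pointwise scaling $(a-b)/|\sqrt{a}-\sqrt{b}|=\sqrt{a}+\sqrt{b}$ is unbounded in $a,b$, so the first power cannot appear on the right without further moment assumptions).

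For the first inequality I would follow the same template but with a less obvious factorization, driven by a scalar pointwise bound. The key claim to establish is that for every $a,b>0$,
\[
(a-b)^{2} \;\le\; \tfrac{1}{2}\,(a^{2}+b^{2})\,|\log(a/b)|.
\]
Once this is in hand, Cauchy--Schwarz applied with $f = |X-Y|/\sqrt{|\log(X/Y)|}$ and $g = \sqrt{|\log(X/Y)|}$ gives
\[
\ex{|X-Y|}^{2} \;\le\; \ex{\frac{(X-Y)^{2}}{|\log(X/Y)|}}\,\ex{|\log(X/Y)|} \;\le\; \tfrac{1}{2}\,(\ex{X^{2}}+\ex{Y^{2}})\,\ex{|\log(X/Y)|},
\]
which is exactly the desired bound. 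The set $\{X=Y\}$ causes no problem since the ratio $(X-Y)^{2}/|\log(X/Y)|$ extends continuously to $0$ there by the pointwise bound.

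The remaining work, and the only nontrivial step, is the scalar inequality. By symmetry of $|\log(a/b)|$ and homogeneity I may assume $a\ge b$; with $r = a/b \ge 1$ the claim becomes $2(r-1)^{2} \le (r^{2}+1)\log r$. I would verify this by studying $f(r) = (r^{2}+1)\log r - 2(r-1)^{2}$: direct differentiation gives $f(1)=0$ and
\[
f'(r) \;=\; 2r\log r - 3r + 1/r + 4.
\]
Writing $g(r)=2r\log r - 3r +4$, the first-order condition $g'(r)=2\log r -1 = 0$ has its unique solution at $r=\sqrt{e}$, where $g''>0$, so the minimum of $g$ equals $4-2\sqrt{e}>0$. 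Therefore $f'(r)=g(r)+1/r>0$ on $[1,\infty)$, and combined with $f(1)=0$ this gives $f(r)\ge 0$ for all $r\ge 1$. This single-variable calculus check is the main obstacle; once it is identified, the rest is Cauchy--Schwarz and elementary inequalities.
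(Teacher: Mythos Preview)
Your argument is correct, and you are right that \eqref{eq:abs_dev_bnd_sqrt} as stated --- with $\ex{|\sqrt{X}-\sqrt{Y}|}$ rather than $\ex{(\sqrt{X}-\sqrt{Y})^{2}}$ --- is false in general (take $X=9$, $Y=0$ deterministically: $9>\sqrt{54}$). The paper's own proof in fact only establishes the squared version, exactly as you do, so this is a typo in the statement.

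For \eqref{eq:abs_dev_bnd_log} your approach and the paper's coincide in structure: prove the pointwise inequality $(a-b)^{2}\le\tfrac12(a^{2}+b^{2})\,|\log(a/b)|$ and then apply Cauchy--Schwarz. They differ only in how the pointwise bound is obtained. You verify it by a direct calculus argument; the paper's route is shorter: for $0<x<y$ write
\[
y-x=\int_{x}^{y}\sqrt{u}\cdot\frac{1}{\sqrt{u}}\,\dd u
\;\le\;\sqrt{\int_{x}^{y}u\,\dd u}\,\sqrt{\int_{x}^{y}\frac{\dd u}{u}}
\;=\;\sqrt{\tfrac12(y^{2}-x^{2})}\,\sqrt{\log(y/x)},
\]
which yields the pointwise bound (indeed the sharper version with $y^{2}-x^{2}$ in place of $y^{2}+x^{2}$) in one line via the integral form of Cauchy--Schwarz, with no differentiation needed. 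Your calculus check is sound and self-contained; the paper's trick is more economical and avoids the optimization step.
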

\begin{proof}
We begin with \eqref{eq:abs_dev_bnd_log}. For any numbers $0 < x < y$, the difference $y-x$ can be upper bounded as follows:
\begin{align}
y-x  
&= \int_x^y \sqrt{u} \frac{1}{\sqrt{u}} du  \notag\\
& \le \sqrt{ \int_x^y u\,  du}  \sqrt{ \int_x^y \frac{1}{u} du}  \notag\\
&  = \sqrt{  \frac{1}{2} (y^2 - x^2) }  \sqrt{  \log(y/x) }  \notag\\
&  \le \sqrt{  \frac{1}{2} (y^2 + x^2) }  \sqrt{  \log(y/x) },  \notag
\end{align}
where the first inequality is due to the Cauchy-Schwarz inequality.  Thus, the absolute difference between $X$ and $Y$ obeys
\begin{align}
|X-Y|  &  \le \sqrt{  \frac{1}{2} (X^2 + Y^2)}  \sqrt{ \left | \log(X/Y)  \right|}.  \notag
\end{align}
Taking the expectation of both sides and using the  Cauchy-Schwarz inequality leads to \eqref{eq:abs_dev_bnd_log}.

To prove \eqref{eq:abs_dev_bnd_sqrt}, observe that the difference between between $X$ and $Y$ can be decomposed as
\begin{align}
X - Y  = (\sqrt{X} + \sqrt{Y}) (\sqrt{X} - \sqrt{Y}).  \notag
\end{align}
Thus, by the Cauchy-Schwarz inequality, 
\begin{align}
\ex{ |X - Y|}   &= \sqrt{ \ex{ (\sqrt{X} + \sqrt{Y})^2}} \sqrt{ \ex{  (\sqrt{X} - \sqrt{Y})^2}}  \notag\\
& \le \sqrt{ 2( \ex{X} + \ex{Y})} \sqrt{ \ex{  (\sqrt{X} - \sqrt{Y})^2}},   \notag
\end{align}
where the last step is due to \eqref{eq:ab2}.
\end{proof}
%
\subsection{Variance Decompositions} 

This section reviews some useful decompositions and bounds on the variance. As a starting point, observe that the variance of a random variable $X$ can be expressed in terms of an independent copy $X'$ according to 
\begin{align}
\var(X) = \frac{1}{2} \ex{ (X - X')^2}.   \notag
\end{align}
This representation can  extended to  conditional variance, by letting $X_y$ and $X'_y$ denote independent draws from the conditional distribution $P_{X|Y= y}$, so that
\begin{align}
\var(X \gmid Y = y) = \frac{1}{2} \ex{ (X_y - X'_y)^2}. \notag
\end{align}
For a random draw of $Y$, it then follows that the random conditional variance of $X$ given $Y$ can be expressed as
\begin{align}
\var(X\gmid Y) = \frac{1}{2} \ex{ (X_Y - X'_Y)^2 \gmid Y }, \label{eq:cond_var_decomp}
\end{align}
where $X_Y$ and $X'_Y$ are conditionally independent draws from the random conditional distribution $P_{X|Y}(\cdot | Y)$. 

Using this representation, the moments of the conditional variance can be bounded straightforwardly.  For all $p \ge 1$, 
\begin{align}
\ex{ \left|\var(X | Y) \right|^p} & = \frac{1}{2^p}\ex{ \left| \ex{ (X_Y - X'_Y)^2 \mid Y} \right|^p}  \notag\\
& \overset{(a)}{\le} \frac{1}{2^p}\ex{ \left|  X_Y - X'_Y \right|^{2p}}  \notag\\
& \overset{(b)}{\le} 2^{ p -1} \left(  \ex{ \left| X_Y\right|^{2 p}} +  \ex{ \left| X'_Y \right|^{2 p}} \right)  \notag\\
& \overset{(c)}{=}  2^{ p} \ex{\left|X\right|^{2p}},  \notag
\end{align}
where (a) follows from Jensen's inequality and the convexity of $|\cdot|^p$, (b) follows from \eqref{eq:LpBound_1}, and (c) follows from the fact that $X_Y$ and $X'_Y$ both have the same distribution as $X$. 

The law of total variance gives
\begin{align}
\var(X) = \ex{ \var(X \gmid Y)} + \var( \ex{ X \gmid Y}). \label{eq:law_tot_var}
\end{align}
As an immediate consequence, we obtain the data processing inequality for MMSE (see e.g.\ \cite[Proposition 4]{rioul:2011}) , which states that conditioning cannot increase the MMSE on average. In particular, if $X \to Y \to Z$ form a Markov chain, then, 
\begin{align}
\mmse(X \gmid Y) \le \mmse(X \gmid Z).  \notag
\end{align}


\subsection{Bounds using KL Divergence} \label{sec:cov_to_MI} 

This section provides a number results that that allow us to bound  differences  in  expectations in terms of Kullback--Leibler divergence. One of the consequences of Lemma~\ref{lem:integral_bound} (given below) is that random variables $X \sim P_X$ and $Y \sim P_Y$ with positive and finite second moments satisfy 
\begin{align}
\frac{ \left| \ex{X} - \ex{Y}\right|}{  \sqrt{2 \ex{X^2} +2\ex{Y^2}}} \le  \sqrt{  \DKL{P_X}{P_Y}}.  \notag
\end{align}


We begin by reviewing some basic definitions (see e.g., \cite[Section 3.3]{pollard:2002}). Let $P$ and $Q$ be probability measures with densities $p$ and $q$ with respect to a dominating measure $\lambda$. The Hellinger distance $d_H( P, Q)$  is defined as the $\cL_2$ distance between the square roots of the densities $\sqrt{p}$ and $\sqrt{q}$, and the squared Helliger distance is given by
\begin{align}
d^2_H( P, Q) & =  \int  \left( \sqrt{p}  - \sqrt{q} \right )^2   \dd \lambda.  \notag
\end{align}
The Kullback--Leibler divergence (also known as relative entropy) is defined as
\begin{align}
D_\mathrm{KL}( P\, \| \,  Q) & =   \int   p  \log\left(\frac{ p}{q }\right)  \dd \lambda  .  \notag
\end{align}
The squared  Hellinger distance is upper bounded by the  KL divergence \cite[pg. 62]{pollard:2002}, 
\begin{align}
d^2_H(P, Q) \le D_\mathrm{KL}( P\, \| \,  Q) \label{eq:H_to_KL}. 
\end{align}

\begin{lemma}\label{lem:integral_bound} Let $f$ be a function that is measurable with respect to $P$ and $Q$. Then
\begin{align}
\MoveEqLeft
 \left| \int f\left(  \dd P -\dd Q  \right)  \right|  \notag\\
& \le \sqrt{  2  \int f^2  \left( \dd P + \dd Q \right)  }  \min\left\{  \sqrt{  D_\mathrm{KL}\left( P \, \middle \| \, Q \right)}, 1 \right\}. \notag
\end{align}
\end{lemma}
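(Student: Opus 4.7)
The plan is to establish the two upper bounds separately and combine them to produce the minimum. The second bound, corresponding to the constant $1$, is the easier of the two. By the triangle inequality and then Cauchy--Schwarz applied with respect to the (finite) measure $P+Q$, I would write
\begin{align*}
\left| \int f\,(\dd P -\dd Q) \right|
\le \int |f|\,(\dd P + \dd Q)
\le \sqrt{\int f^2\,(\dd P + \dd Q)}\,\sqrt{\int 1\,(\dd P + \dd Q)},
\end{align*}
and the second factor equals $\sqrt{2}$, yielding the required bound with the constant $1$.

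For the other branch, my approach is to factor the signed measure $\dd P-\dd Q$ in terms of the densities $p$ and $q$ against the dominating measure $\lambda$, writing $p-q=(\sqrt{p}-\sqrt{q})(\sqrt{p}+\sqrt{q})$, and then apply Cauchy--Schwarz:
\begin{align*}
\left| \int f\,(p-q)\,\dd\lambda \right|
\le \sqrt{\int f^2 (\sqrt{p}+\sqrt{q})^2\,\dd\lambda}\,
    \sqrt{\int (\sqrt{p}-\sqrt{q})^2\,\dd\lambda}.
\end{align*}
The second factor is the Hellinger distance $d_H(P,Q)$, which by \eqref{eq:H_to_KL} is bounded by $\sqrt{D_\mathrm{KL}(P\|Q)}$. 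For the first factor, the AM--GM inequality gives $(\sqrt{p}+\sqrt{q})^2 = p+q+2\sqrt{pq}\le 2(p+q)$, so that
\begin{align*}
\int f^2(\sqrt{p}+\sqrt{q})^2\,\dd\lambda \le 2\int f^2\,(\dd P +\dd Q).
\end{align*}
Combining these yields the bound with $\sqrt{D_\mathrm{KL}(P\|Q)}$. Taking the minimum of the two bounds completes the proof.

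There is no real obstacle here; the only point to handle carefully is the measurability/finiteness bookkeeping, namely that both sides are well-defined whenever $f^2$ is $(P+Q)$-integrable (otherwise the bound is vacuous), and that the factorization $p-q=(\sqrt{p}-\sqrt{q})(\sqrt{p}+\sqrt{q})$ is valid pointwise $\lambda$-a.e.\ under any common dominating measure (e.g.\ $\lambda=P+Q$), so the Cauchy--Schwarz step is legitimate.
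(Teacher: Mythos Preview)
Your proof is correct and follows essentially the same route as the paper: the $\sqrt{D_{\mathrm{KL}}}$ branch is identical (factor $p-q=(\sqrt p+\sqrt q)(\sqrt p-\sqrt q)$, Cauchy--Schwarz, Hellinger $\le$ KL, then $(\sqrt p+\sqrt q)^2\le 2(p+q)$), and for the constant-$1$ branch the paper instead bounds $|\int f\,\dd P|+|\int f\,\dd Q|$ separately and then uses $a+b\le\sqrt{2(a^2+b^2)}$, which is equivalent to your Cauchy--Schwarz against $P+Q$.
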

\begin{proof}
Let $p$ and $q$ be the densities of $P$ and $Q$ with respect to a dominating measures $\lambda$. Then, we can write
\begin{align}
\MoveEqLeft \left| \int f ( \dd P -\dd Q)    \right|  \notag\\
& = \left| \int  f ( p - q )  \dd \lambda  \right| \notag \\
& \overset{(a)}{=} \left| \int  f  (\sqrt{p} + \sqrt{q}) ( \sqrt{p} -\sqrt{q}) \dd \lambda  \right|  \notag \\
& \overset{(b)}{\le}  \sqrt{ \int  f^2  (\sqrt{p} + \sqrt{q})^2 \dd \lambda  \,    d^2_H(P, Q) }  \notag \\ 
& \overset{(c)}{\le}  \sqrt{ \int  f^2  (\sqrt{p} + \sqrt{q})^2 \dd \lambda  \,    \DKL{P}{Q} }  \notag\\
& \overset{(d)}{\le}  \sqrt{2 \int  f^2  (p + q) \dd \lambda  \,    \DKL{P}{Q} }, \label{eq:integral_bound_b}
\end{align}
where (a) is justified by the non-negativity of the densities, (b) follows from the Cauchy-Schwarz inequality, (c) follows from \eqref{eq:H_to_KL}, and (d) follows from \eqref{eq:ab2}. 

Alternatively, we also have the upper bound
\begin{align}
 \left| \int f \left( \dd P -\dd Q  \right)  \right|  \notag
& \le \left| \int f \dd P  \right| + \left|  \int f \dd Q   \right|  \notag\\
& \le \sqrt{ \int f^2 \dd P}  + \sqrt{  \int f^2 \dd Q}  \notag \\
& \le   \sqrt{2 \int f^2 \left(  \dd P +\dd Q \right)} ,  \label{eq:integral_bound_c}
\end{align}
where (a) follows from the triangle inequality, (b) follows from Jensen's inequality, and (c) follows from \eqref{eq:ab2}. Taking the minimum of \eqref{eq:integral_bound_b} and \eqref{eq:integral_bound_c} leads to the stated result.
\end{proof}

\begin{lemma}\label{lem:var_bnd}
For any distribution $P_{X, Y, Z}$ and $p \ge 1$, 
\begin{align}
\MoveEqLeft \ex{ \left| \var(X  \gmid   Y) - \var(X   \gmid  Z) \right|^p}   \notag \\
 & \le  2^{2p + \frac{1}{2}  } \sqrt{ \ex{ |X|^{4p} }\ex{  \DKL{ P_{X|Y}}{ P_{X|Z} }}}.  \notag
\end{align}
\end{lemma}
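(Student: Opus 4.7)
The plan is to express the conditional variance as an integral over a product of two copies of the conditional distribution and then apply Lemma~\ref{lem:integral_bound} to the resulting difference of signed measures. Using the symmetrization identity
\begin{equation*}
\var(X \gmid Y) = \tfrac{1}{2} \ex{(X_{1} - X_{2})^{2} \gmid Y},
\end{equation*}
where $X_{1}, X_{2}$ are conditionally i.i.d.\ draws from $P_{X \mid Y}$, the difference $\var(X \gmid Y) - \var(X \gmid Z)$ can be written as one-half the integral of $f(x_1,x_2) = (x_1 - x_2)^2$ against the signed measure $P_{X\mid Y}^{\otimes 2} - P_{X\mid Z}^{\otimes 2}$. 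I would then invoke Lemma~\ref{lem:integral_bound}, using the tensorization identity $\DKL{P_{X\mid Y}^{\otimes 2}}{P_{X\mid Z}^{\otimes 2}} = 2\, \DKL{P_{X\mid Y}}{P_{X\mid Z}}$ together with the elementary inequality $(x_1-x_2)^4 \le 8(x_1^4 + x_2^4)$ to control the $f^2$-integral by a constant times $\ex{|X|^4 \gmid Y} + \ex{|X|^4 \gmid Z}$. Chasing the constants, this should give the pointwise estimate
\begin{equation*}
\Delta \;\triangleq\; \bigl|\var(X\gmid Y) - \var(X\gmid Z)\bigr| \;\le\; 4 \sqrt{\bigl(\ex{|X|^4 \gmid Y} + \ex{|X|^4 \gmid Z}\bigr)\,\DKL{P_{X\mid Y}}{P_{X\mid Z}}}.
\end{equation*}

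To promote this to the $p$-th power bound stated in the lemma, I would factor $\Delta^p = \Delta^{p-1} \cdot \Delta$ and control the first factor crudely. Since $\var(X \gmid Y) \le \ex{X^2 \gmid Y}$, the triangle inequality combined with the conditional Jensen bound $\ex{X^2\gmid Y} \le \sqrt{\ex{|X|^4 \gmid Y}}$ (and the elementary $(\sqrt{a}+\sqrt{b})^2 \le 2(a+b)$) yields $\Delta \le \sqrt{2}\,\sqrt{\ex{|X|^4 \gmid Y} + \ex{|X|^4\gmid Z}}$. Multiplying the crude bound on $\Delta^{p-1}$ by the sharp bound on the last factor of $\Delta$ produces a pointwise estimate of the form $\Delta^p \le 2^{(p+3)/2}\bigl(\ex{|X|^4 \gmid Y}+\ex{|X|^4 \gmid Z}\bigr)^{p/2}\sqrt{\DKL{P_{X\mid Y}}{P_{X\mid Z}}}$. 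Taking expectations and applying Cauchy--Schwarz separates the divergence factor, and one more use of conditional Jensen (to pass from $(\ex{|X|^4\gmid Y})^p$ to $\ex{|X|^{4p}\gmid Y}$) together with $(a+b)^p \le 2^{p-1}(a^p+b^p)$ yields the claim with constant $2^{p+3/2}$, which is bounded by $2^{2p+1/2}$ for $p \ge 1$.

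The main obstacle is algebraic: a naive Cauchy--Schwarz applied to the expectation of the $p$-th power of the pointwise bound produces $\DKL{}{}^{p/2}$ on the right-hand side, whereas the statement requires $\DKL{}{}$ to the first power. The factoring trick above circumvents this by peeling off exactly one copy of $\Delta$ to carry $\sqrt{\DKL{}{}}$ while absorbing the remaining $\Delta^{p-1}$ deterministically into conditional fourth-moment quantities. Once that accounting is in place, the remaining steps are routine bookkeeping with Jensen's inequality.
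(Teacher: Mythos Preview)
Your proposal is correct and follows the same opening as the paper (symmetrization of the variance, Lemma~\ref{lem:integral_bound} applied to $f(x_1,x_2)=\tfrac12(x_1-x_2)^2$ on the product measures $P_{X\mid Y}^{\otimes 2}$ and $P_{X\mid Z}^{\otimes 2}$). The only real difference is how you deal with the obstacle you flag. You peel off one factor of $\Delta$ carrying $\sqrt{D_{\mathrm{KL}}}$ and absorb $\Delta^{p-1}$ into conditional fourth moments via the crude triangle bound $\Delta\le \ex{X^2\gmid Y}+\ex{X^2\gmid Z}$. The paper instead uses the full strength of Lemma~\ref{lem:integral_bound}: the bound there is $\sqrt{\min\{D_{\mathrm{KL}},1\}}$, so after raising the pointwise inequality to the $p$-th power and applying Cauchy--Schwarz one has $\ex{(\min\{D_{\mathrm{KL}},1\})^{p}}\le \ex{D_{\mathrm{KL}}}$ directly, with no need to split $\Delta^p$. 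Your route is slightly more hands-on but has the pleasant side effect of producing the sharper constant $2^{p+3/2}$; the paper's route is a one-liner once you remember the $\min\{\cdot,1\}$, and gives $2^{2p+1/2}$.
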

\begin{proof}
Let $P$ and $Q$ be the random probability measures on $\reals^2$ defined by
\begin{align}
P & =P_{X \mid Y} \times P_{X  \mid Y}  \notag\\
Q & =P_{X \mid Z} \times P_{X  \mid Z},  \notag
\end{align}
and let $f : \reals^2 \to \reals$ be defined by $f(x_1, x_2) = \frac{1}{2} (x_1 - x_2)^2$. Then, by the variance decomposition  \eqref{eq:cond_var_decomp}, we can write
\begin{align}
\var(X \gmid  Y)  & = \int f \dd P , \quad \var(X  \gmid Z)   = \int f \dd Q.  \notag
\end{align}
Furthermore, by the upper bound $f^2(x_1, x_2) \le 2  (x_1^4  + x_2^4)$, the expectation of the $f^2$ satisfies 
\begin{align}
\int f^2 \left(  \dd P  + \dd Q \right)   \le 4  \ex{ X^4 \!\mid\! Y} +  4  \ex{ X^4 \! \mid \! Z} .\label{eq:var_bnd_a}
\end{align}
Therefore, by Lemma~\ref{lem:integral_bound}, the  difference between the conditional variances satisfies
\begin{align}
\MoveEqLeft \left|  \var(X  \gmid   Y) - \var(X  \gmid  Z)  \right|  \notag \\
& \le \sqrt{  2 \int f^2  \left( \dd P + \dd Q \right) }  \sqrt{  \min\left\{  D_\mathrm{KL}\left( P \, \middle \| \, Q \right), 1 \right\}}  \notag\\
& \le  \sqrt{   8 \ex{ X^4  \gmid Y} +   8  \ex{ X^4   \gmid   Z}  }  \sqrt{  \min\left\{  D_\mathrm{KL}\left( P \, \middle \| \, Q \right), 1 \right\}}, \label{eq:var_bnd_b}
 \end{align}
 where the second inequality follows from \eqref{eq:var_bnd_a}. 
 
The next step is to bound the expected $p$-th power of the right-hand side of \eqref{eq:var_bnd_b}. Starting with the Cauchy-Schwarz inequality, we have 
\begin{align}
\MoveEqLeft \ex{  \left|  \var(X \mid  Y) - \var(X \mid Z)  \right|^p}  \notag \\
& \le  \sqrt{ \ex{   \left| 8  \ex{ X^4 \! \mid \! Y} +  8 \ex{ X^4  \! \mid  \! Z}     \right|^p}}  \notag\\
& \quad \times \sqrt{ \ex{ \left| \min\left\{ D_\mathrm{KL}\left( P \, \middle \| \, Q \right) , 1 \right\} \right|^p } }. \label{eq:var_bnd_c}
 \end{align}
For the first term on the right-hand side of \eqref{eq:var_bnd_c}, observe that, by Jensen's inequality, 
\begin{align}
\sqrt{ \ex{   \left| 8  \ex{ X^4 \! \mid \! Y} +  8 \ex{ X^4  \! \mid  \! Z}     \right|^p} } 
& \le  \sqrt{ 8^p \ex{ X^{4p} }  + 8^p \ex{ X^{4p}} }  \notag\\
& = 4^p \sqrt{  \ex{X^{4p} }} .  \label{eq:var_bnd_d}
\end{align}
Meanwhile, the expectation in the second term on the right-hand side of \eqref{eq:var_bnd_c} satisfies 
\begin{align}
 \ex{  \left| \min\left\{  D_\mathrm{KL}\left( P \, \middle \| \, Q \right), 1 \right\}\right|^{p}}  
 & \le \ex{ D_\mathrm{KL}\left( P \, \middle \| \, Q \right)} \notag \\
 & = 2    \ex{D_\mathrm{KL}\left( P_{X|Y} \, \middle \| \, P_{X|Z} \right)} , \label{eq:var_bnd_e}
\end{align}
where the second step follows from the definition of $P$ and $Q$.  Plugging \eqref{eq:var_bnd_d} and \eqref{eq:var_bnd_e} back into \eqref{eq:var_bnd_c} leads to the stated result. 
\end{proof}

\begin{lemma}\label{lem:cov_bnd}
For any distribution $P_{X,Y,Z}$ and $p \ge 1$, 
\begin{align}
\MoveEqLeft  \ex{ \left| \cov(X, Y  \gmid   Z )  \right|^p }  \notag \\
 &\le 2^{p} \sqrt{ \ex{  \left|   \ex{X^4  \gmid Z} \ex{Y^4 \gmid Z} \right|^\frac{p}{2}} I(X; Y\gmid  Z) }.  \notag
\end{align}
\end{lemma}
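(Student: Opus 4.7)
The plan is to mirror the proof of Lemma~\ref{lem:var_bnd}, replacing the variance-as-an-integral identity by the analogous representation of covariance as an integral of $f(x,y)=xy$ against the signed measure $P_{X,Y\mid Z}-P_{X\mid Z}\times P_{Y\mid Z}$. Since $\cov(X,Y\mid Z)=\ex{XY\mid Z}-\ex{X\mid Z}\ex{Y\mid Z}$, one can write
\begin{align}
\cov(X,Y\mid Z) &= \int xy\, \dd\!\left(P_{X,Y\mid Z} - P_{X\mid Z}\times P_{Y\mid Z}\right).\notag
\end{align}
Apply Lemma~\ref{lem:integral_bound} pointwise in $Z$ with $P=P_{X,Y\mid Z}$, $Q=P_{X\mid Z}\times P_{Y\mid Z}$, and $f(x,y)=xy$. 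The key step is to control $\int f^2(\dd P+\dd Q)=\ex{X^2Y^2\mid Z}+\ex{X^2\mid Z}\ex{Y^2\mid Z}$; applying the Cauchy--Schwarz inequality (conditionally on $Z$) to each term yields the uniform bound $\int f^2(\dd P+\dd Q)\le 2\sqrt{\ex{X^4\mid Z}\ex{Y^4\mid Z}}$, which gives the pointwise estimate
\begin{align}
|\cov(X,Y\mid Z)| \le 2\,\bigl(\ex{X^4\mid Z}\ex{Y^4\mid Z}\bigr)^{1/4}\sqrt{\min\{D_Z,1\}},\notag
\end{align}
where $D_Z \triangleq \DKL{P_{X,Y\mid Z}}{P_{X\mid Z}\times P_{Y\mid Z}}$.

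Next, I would raise this to the $p$-th power, take the expectation over $Z$, and apply Cauchy--Schwarz to decouple the moment factor from the divergence factor:
\begin{align}
\ex{|\cov(X,Y\mid Z)|^p} \le 2^p \sqrt{\ex{\bigl(\ex{X^4\mid Z}\ex{Y^4\mid Z}\bigr)^{p/2}}}\,\sqrt{\ex{\min\{D_Z,1\}^p}}.\notag
\end{align}
Then I would bound the divergence factor by observing that for $p\ge 1$, $\min\{D_Z,1\}^p\le \min\{D_Z,1\}\le D_Z$, and that $\ex{D_Z}=I(X;Y\mid Z)$ is the standard representation of conditional mutual information as an expected KL divergence. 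Assembling these pieces gives exactly the claimed inequality.

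I do not anticipate any significant obstacle; the proof is a near-direct analogue of Lemma~\ref{lem:var_bnd}. The only care needed is in the two Cauchy--Schwarz applications: one applied conditionally on $Z$ to bound both $\ex{X^2Y^2\mid Z}$ and $\ex{X^2\mid Z}\ex{Y^2\mid Z}$ by the same quantity $\sqrt{\ex{X^4\mid Z}\ex{Y^4\mid Z}}$, and one applied unconditionally to separate the moment and divergence factors after taking the $p$-th power. Tracking the constants carefully yields the prefactor $2^p$, matching the stated bound.
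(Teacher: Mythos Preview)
Your proposal is correct and follows essentially the same approach as the paper's proof: represent $\cov(X,Y\mid Z)$ as $\int f\,(\dd P-\dd Q)$ with $f(x,y)=xy$, apply Lemma~\ref{lem:integral_bound} pointwise, bound $\int f^2(\dd P+\dd Q)\le 2\sqrt{\ex{X^4\mid Z}\ex{Y^4\mid Z}}$, raise to the $p$-th power, and then use Cauchy--Schwarz together with $\min\{D_Z,1\}^p\le D_Z$ and $\ex{D_Z}=I(X;Y\mid Z)$. The only cosmetic difference is that the paper attributes the bound $\ex{X^2\mid Z}\ex{Y^2\mid Z}\le\sqrt{\ex{X^4\mid Z}\ex{Y^4\mid Z}}$ to Jensen's inequality rather than Cauchy--Schwarz, which amounts to the same thing.
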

\begin{proof}
Let $P$ and $Q$ be the random probability measures on $\reals^2$ defined by
\begin{align}
P & =P_{X, Y \mid Z}  \notag\\
Q & =P_{X \mid Z} \times  P_{Y  \mid Z},  \notag
\end{align}
and let $f : \reals^2 \to \reals$ be defined by $f(x,y) = x y$. Then, the conditional covariance between $X$ and $Y$ can be expressed as 
\begin{align}
\cov(X, Y \gmid Z ) & = \int f \left( \dd P - \dd Q \right). \notag
\end{align}
Furthermore
\begin{align}
\int f^2 \left(  \dd P  + \dd Q \right)  & = \ex{ |X Y|^2 \mid Z}  + \ex{X^2 \! \mid \! Z} \ex{ Y^2 \! \mid  \! Z}  \notag\\
& \le 2  \sqrt{ \ex{X^4 \gmid  Z} \ex{ Y^4 \gmid  Z}  },  \notag
\end{align}
where the second step follows from the  Cauchy-Schwarz inequality and Jensen's inequality.  Therefore, by Lemma~\ref{lem:integral_bound}, the  magnitude of the covariance  satisfies 
\begin{align}
\MoveEqLeft  \left| \cov(X, Y \gmid Z )  \right|  \notag\\
& \le \sqrt{  2 \int f^2  \left( \dd P + \dd Q \right) }  \sqrt{  \min\left\{  D_\mathrm{KL}\left( P \, \middle \| \, Q \right), 1 \right\}}  \notag\\
 & \le 2  \left|\ex{X^4 \! \mid \! Z} \ex{ Y^4 \!\mid  \!Z}  \right|^\frac{1}{4}  \left| \min\left\{   D_\mathrm{KL}\left( P \, \middle \| \, Q \right),1 \right\}\right|^{\frac{1}{2}}. \label{eq:cov_bnd_b}
\end{align}

The next step is to bound the expected $p$-th power of the right-hand side of \eqref{eq:cov_bnd_b}. Starting with the Cauchy-Schwarz inequality, we have 
\begin{align}
\ex{ \left| \cov(X, Y \mid Z )  \right|^p }  \notag 
 &\le 2^p \sqrt{ \ex{  \left|   \ex{X^4 \!  \mid \! Z} \ex{Y^4 \! \mid \! Z} \right|^\frac{p}{2}}}     \notag\\
& \quad \times \sqrt{ \ex{ \left| \min\left\{ D_\mathrm{KL}\left( P \, \middle \| \, Q \right) , 1 \right\} \right|^p } }. \label{eq:cov_bnd_c}
\end{align}
Note that the expectation in the second term on the right-hand side of \eqref{eq:cov_bnd_c} satisfies 
\begin{align}
\ex{ \left| \min\left\{   D_\mathrm{KL}\left( P \, \middle \| \, Q \right), 1 \right\}\right|^{p} }
 & \le  \ex{ D_\mathrm{KL}\left( P \, \middle \| \, Q \right)} \notag \\
 & =  I(X; Y \! \mid \! Z) , \label{eq:cov_bnd_e}
\end{align}
where the second step follows from the definition of $P$ and $Q$. Plugging \eqref{eq:cov_bnd_e} back into \eqref{eq:cov_bnd_c} leads to the stated result. 
\end{proof}

\begin{lemma}\label{lem:mmse_diff}
For any distributions $P_{X,Y}$ and $P_{X,Z}$,
\begin{align}
\MoveEqLeft \left| \mmse(X \gmid  Y) - \mmse(X   \gmid   Z) \right|  \notag\\
& \le  2^{\frac{5}{2}} \sqrt{ \ex{ |X|^{4}} \,  D_\mathrm{KL}(P_{X,Y} \, \| \,  P_{X,Z}) }. \notag
\end{align}
\end{lemma}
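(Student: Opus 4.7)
The plan is to apply Lemma~\ref{lem:integral_bound} to the representation of MMSE as half the expected squared difference between conditionally independent copies of $X$. This parallels the proof of Lemma~\ref{lem:var_bnd}, but now we must integrate over the outer random variable ($Y$ or $Z$) as well, which is what converts a bound on conditional KL into a bound on the KL divergence of joint distributions stated in the lemma.

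Concretely, let $\mathbb{P}$ be the law of the triple $(X_1, X_2, Y)$ in which $Y\sim P_Y$ and, conditional on $Y$, the variables $X_1, X_2$ are drawn independently from $P_{X\mid Y}(\cdot \mid Y)$. Define $\mathbb{Q}$ analogously from $P_{X,Z}$ on the same underlying space (treating $Y$ and $Z$ as living on a common space). Then the decomposition~\eqref{eq:cond_var_decomp} yields
\begin{align}
\mmse(X\gmid Y) &= \tfrac{1}{2}\int (x_1-x_2)^2\,\dd\mathbb{P}, \notag\\
\mmse(X\gmid Z) &= \tfrac{1}{2}\int (x_1-x_2)^2\,\dd\mathbb{Q}. \notag
\end{align}
I would then apply Lemma~\ref{lem:integral_bound} with $f(x_1,x_2,y) = \tfrac{1}{2}(x_1-x_2)^2$ to obtain
\begin{align}
\left|\mmse(X\gmid Y)-\mmse(X\gmid Z)\right| \le \sqrt{2\int f^2\,(\dd\mathbb{P}+\dd\mathbb{Q})}\,\sqrt{\DKL{\mathbb{P}}{\mathbb{Q}}}. \notag
\end{align}

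For the $f^2$ term, the bound $(x_1-x_2)^4 \le 8(x_1^4+x_2^4)$ from~\eqref{eq:ab4}, together with the fact that the $X$-marginal is common to $P_{X,Y}$ and $P_{X,Z}$, gives $\int f^2\,\dd\mathbb{P} \le 4\,\ex{X^4}$ and likewise for $\mathbb{Q}$, so $\int f^2\,(\dd\mathbb{P}+\dd\mathbb{Q}) \le 8\,\ex{X^4}$. For the KL term, the chain rule yields
\begin{align}
\DKL{\mathbb{P}}{\mathbb{Q}} &= \DKL{P_Y}{P_Z} + 2\,\bEx_{Y\sim P_Y}\!\left[\DKL{P_{X\mid Y}}{P_{X\mid Z=Y}}\right] \notag\\
&= 2\,\DKL{P_{X,Y}}{P_{X,Z}} - \DKL{P_Y}{P_Z} \notag\\
&\le 2\,\DKL{P_{X,Y}}{P_{X,Z}}, \notag
\end{align}
where the factor of $2$ arises because $\mathbb{P}$ carries two conditionally independent copies of $X$ given $Y$. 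Substituting the two bounds gives the constant $\sqrt{2\cdot 8\cdot 2} = 2^{5/2}$.

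The main obstacle is the chain-rule bookkeeping: ensuring that the doubling of the conditional-KL term in $\DKL{\mathbb{P}}{\mathbb{Q}}$ is exactly compensated by the chain-rule expansion of $\DKL{P_{X,Y}}{P_{X,Z}}$ to produce the factor of $2$ (and not something worse), and confirming that the implicit assumption of a common $X$-marginal is the right reading of the statement, so that $\ex{X^4}$ is unambiguous on both sides. Once that is in place, the rest is a direct application of Lemma~\ref{lem:integral_bound} together with the two elementary bounds above.
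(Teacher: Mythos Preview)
Your proposal is correct and follows essentially the same route as the paper: both construct the product measures $\mathbb{P}$ and $\mathbb{Q}$ on $(x_1,x_2,y)$, apply Lemma~\ref{lem:integral_bound} with $f=\tfrac{1}{2}(x_1-x_2)^2$, bound $\int f^2(\dd\mathbb{P}+\dd\mathbb{Q})$ via \eqref{eq:ab4} and the common $X$-marginal, and use the chain-rule identity $\DKL{\mathbb{P}}{\mathbb{Q}}=2\,\DKL{P_{X,Y}}{P_{X,Z}}-\DKL{P_Y}{P_Z}$ to obtain the factor $2^{5/2}$. The chain-rule bookkeeping you flagged is exactly the step the paper spells out, and your reading that the $X$-marginal is common is the intended one.
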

\begin{proof}
Let $P$ and $Q$ be the distributions given by
\begin{align}
P(x_1,x_2,y) &= P_{X\mid Y}(x_1 \mid y) P_{X\mid Y}(x_2 \mid y) P_{Y}(y)  \notag\\
Q(x_1,x_2,y) &= P_{X\mid Z}(x_1 \mid y) P_{X\mid Z}(x_2 \mid y) P_{Z}(y)  \notag
\end{align}
Then, 
\begin{align}
\mmse(X \gmid Y) = \frac{1}{2} \int (x_1 - x_2)^2  \dd P(x_1,x_2,y)  \notag\\
\mmse(X  \gmid  Z) = \frac{1}{2} \int (x_1 - x_2)^2 \dd Q(x_1,x_2,y)  \notag
\end{align}
and so, by  Lemma~\ref{lem:integral_bound}, 
\begin{align}
\MoveEqLeft \left| \mmse(X \gmid Y) - \mmse(X \gmid Z) \right|  \notag\\
& \le  \sqrt{ \frac{1}{2} \! \int (x_1 \!-\! x_2)^4 \left(\dd P(x_1,x_2,y)  +  \dd Q(x_1, x_2, y) \right)  } \notag  \\
& \quad \times   \sqrt{ D_\mathrm{KL}(P \, \| \,  Q )} . \label{eq:mmse_diff_b}
\end{align}
For the first term on the right-hand side of \eqref{eq:mmse_diff_b}, observe that
\begin{align}
 \MoveEqLeft \int (x_1 \!-\! x_2)^4 \left(\dd P(x_1,x_2,y)  +  \dd Q(x_1, x_2, y) \right)  \notag\\
& \overset{(a)}{\le} 8 \int (x^4_1 +  x^4_2) \left(\dd P(x_1,x_2,y)  +  \dd Q(x_1, x_2, y) \right) \notag \\
& \overset{(b)}{=}  32 \, \ex{ |X|^4},   \label{eq:mmse_diff_c}
\end{align}
where (a) follows from \eqref{eq:ab4} and (b) follows from the fact that the marginal distributions of $X_1$ and $X_2$ are identical under $P$ and $Q$. 

For the second term on the right-hand side of \eqref{eq:mmse_diff_b}, observe that $P$ and $Q$ can be expressed as
\begin{align}
P(x_1,x_2,y) &= \frac{ P_{X , Y}(x_1,y) P_{X, Y}(x_2, y) }{ P_{Y}(y) } \notag\\
Q(x_1,x_2,y) &= \frac{ P_{X , Z}(x_1,y) P_{X, Z}(x_2, y) }{ P_{Z}(y) }. \notag
\end{align}
Letting $(X_1, X_2, Y) \sim P$, we see that the Kullback-Leibler divergence satisfies
\begin{align}
D_\mathrm{KL}( P \| Q) & = \ex{ \log\left(  \frac{ P_{X , Y}(X_1,Y) P_{X, Y}(X_2, Y) P_{Z}(Y)}{ P_{X , Z}(X_1,Y) P_{X, Z}(X_2, Y)  P_{Y}(Y) }   \right)}  \notag\\
& =2 D_\mathrm{KL}\left( P_{X,Y} \| P_{X,Z} \right)  -  D_\mathrm{KL}( P_Y \| P_Z) \notag \\
& \le 2 D_\mathrm{KL}\left( P_{X,Y} \| P_{X,Z} \right) . \label{eq:mmse_diff_d}
\end{align}
Plugging \eqref{eq:mmse_diff_c} and \eqref{eq:mmse_diff_d} back into \eqref{eq:mmse_diff_b} completes the proof of Lemma~\ref{lem:mmse_diff}.
\end{proof}

\section{Proofs of Results in Section~\ref{sec:MI_MMSE_bounds}}

\subsection{Proof of Lemma~\ref{lem:mmseX_bound}}
\label{proof:lem:mmseX_bound}

Let $Y = \sqrt{s} X+ W$ where $X \sim P_X$ and $W \sim \normal(0,1)$  are independent. 
%
Letting $X_Y$ and $X'_Y$ denote conditionally independent draws from $P_{X|Y}$ the conditional variance can be expressed as $\var(X \gmid Y) =  \frac{1}{2} \ex{ (X_Y - X'_Y)^2 \gmid Y}$. Therefore, 
\begin{align}
\left | \frac{\dd}{\dd s} \mmse_X(s)  \right| &  \overset{(a)}{=}  \ex{ \left(\var(X \gmid Y) \right)^2}  \notag\\
&= \frac{1}{4}   \ex{ \left( \ex{ (X_Y -X'_Y)^2 \gmid Y } \right)^2}  \notag\\
&\overset{(b)}{\le} \frac{1}{4}   \ex{ \left( X_Y -X'_Y \right)^4}  \notag\\
& \overset{(c)}{\le} 4\ex{ X^4},  \notag
\end{align}
where (a) follows from \eqref{eq:mmseXp}, (b) follows from Jensen's inequality and (c) follows from \eqref{eq:ab4} and the fact that $X_Y$ and $X'_Y$ have the same distribution as $X$. This completes the proof of \eqref{eq:mmseX_smooth_a}.

Next, since $\ex{W \gmid Y} = Y - \sqrt{s} \ex{ X\gmid Y}$, the conditional variance can also be expressed as
\begin{align}
\var(X \gmid Y) = \frac{1}{s} \var(W \gmid Y).  \notag
\end{align}
Using the same argument as above leads to 
\begin{align}
\left | \frac{\dd}{\dd s} \mmse_X(s)  \right| & \le  \frac{  4\ex{ W^4}}{s^2}= \frac{12}{s^2} . \notag
\end{align}
This completes the proof of \eqref{eq:mmseX_smooth_b}. 

%

\subsection{Proof of Lemma~\ref{lem:Ip_and_Ipp}} \label{proof:lem:Ip_and_Ipp}

The first order MI difference can be decomposed as
\begin{align}
I'_{m,n} & = h(Y_{m+1} \gmid Y^m, A^{m+1}) -  h(Y_{m+1} \gmid Y^m, A^{m+1}, X^n), \notag
\end{align}
where the differential entropies are guaranteed to exist because of the additive Gaussian noise. The second term  is given by the entropy of the noise,
\begin{align}
h(Y_{m+1} \gmid Y^m, A^{m+1}, X^n) = h(W_{m+1}) = \frac{1}{2} \log(2 \pi e), \notag
\end{align}
and thus does not depend on $m$. Using this decomposition, we can now write
\begin{align}
I''_{m,n} & = h(Y_{m+2} \gmid Y^{m+1}, A^{m+2})   - h(Y_{m+1} \gmid Y^m, A^{m+1}) \notag\\
& = h(Y_{m+2} \gmid Y^{m+1}, A^{m+2})   - h(Y_{m+2} \gmid Y^m, A^{m+2})   \notag\\
& =  - I(Y_{m+1}  ; Y_{m+2} \gmid Y^m, A^{m+2}), \notag
\end{align}
where the second step follows from the fact that, conditioned on $(Y^m, A^m)$, the new measurements pairs $(Y_{m+1},A_{m+1})$ and $(Y_{m+2}, A_{m+2})$ are identically distributed.

\subsection{Proof of Lemma~\ref{lem:Im_bounds}}\label{proof:lem:Im_bounds}

We first consider the upper bound in \eqref{eq:Im_bounds}. Starting with the chain rule for mutual information, we have
\begin{align}
 I(X^n ; Y^m  \mid A^m)
 &=  \sum_{i=1}^n I(X_i  ; Y^m  \mid A^m, X^{i-1} ). \label{eq:Im_UB_b}
\end{align}
Next, observe that each summand satisfies
\begin{align}
\MoveEqLeft  I(X_i  ; Y^m  \mid A^m, X^{i-1} ) \notag \\
& \overset{(a)}{\le}  I( X_{i}  ; X_{i+1}^n ,  Y^m  \mid A^m, X^{i-1}  )  \notag\\
& \overset{(b)}{=} I( X_{i}  ; Y^m  \mid A^m, X^{i-1} , X_{i+1}^n   ), \label{eq:Im_UB_c}
\end{align}
where (a) follows from the data processing inequality and (b) follows from expanding the mutual information using the chain rule and noting that  $I( X_{i}  ; X_{i+1}^n   \mid A^m,  X^{i-1})$ is equal to zero because the signal entries are independent. 

Conditioned on data  $(A^m, X^{i-1}, X_{i+1}^n)$, the mutual information provided by $Y^m$ is equivalent to the mutual information provided by the  measurement vector
\begin{align}
Y^m - \ex{ Y^m \mid A^m, X^{i-1}, X_{i+1}^n} & = A^m(i) X_i + W^m,  \notag
\end{align}
where $A^m(i)$ is the $i$-th column of $A^m$.  Moreover, by the rotational invariance of the Gaussian distribution of the noise, the linear projection of this vector in the direction of $A^m(i)$ contains all of the information about $X_i$. This projection can be expressed as
\begin{align}
\frac{ \langle A^m(i),A^m(i) X_i + W^m  \rangle  }{ \| A^m(i)\|}  =  \|A^m(i) \|   X_i +  W, \notag
\end{align}
where $W =  \langle A^m(i), W^m  \rangle  /  \| A^m(i)\|$ is Gaussian $\normal(0,1)$ and independent of $X_i$ and $A^m(i)$, by the Gaussian distribution of $W^m$. Therefore, the mutual information obeys
\begin{align}
\MoveEqLeft 
 I( X_{i}  ; Y^m  \gmid A^m, X^{i-1} , X_{i+1}^n   ) \notag\\
& = I\left(X_i ;  \|A^m(i) \|   X_i +  W \; \middle | \; \| A^m(i) \|  \right ) \notag \\
& = \ex{ I_X\left( \|A^m(i) \|^2 \right ) } \notag\\
& = \ex{ I_X\left( \tfrac{1}{n} \chi^2_m \right ) } , \label{eq:Im_UB_d}
\end{align}
where the last step follows from the fact that the entries of $A^m(i)$ are i.i.d.\ Gaussian $\normal(0, 1/n)$.  
Combining \eqref{eq:Im_UB_b}, \eqref{eq:Im_UB_c}, and \eqref{eq:Im_UB_d}  gives the upper bound in \eqref{eq:Im_bounds}.

Along similar lines, the lower bound in \eqref{eq:Mm_bounds} follows from 
\begin{align}
M_{m,n} & \triangleq \frac{1}{n} \mmse( X^{n} \gmid Y^m, A^m)   \notag\\
& \overset{(a)}{=} \mmse( X_{n} \gmid Y^m, A^m) \notag \\
& \overset{(b)}{\ge }  \mmse(X_n \gmid Y^m, A^m, X^{n-1} ) \notag\\
& \overset{(c)}{= }   \mmse(X_n \gmid A^m(n) X_n + W^m , A^m(n)  ) \notag\\
& \overset{(d)}{= }   \mmse\left(X_n \; \middle | \;  \|A^m(n) \|   X_i +  W,  \| A^m(n) \|  \right )  \notag\\
& \overset{(e)}{=}  \ex{ \mmse_X\left( \tfrac{1}{n} \chi^2_{m} \right)},  \notag
\end{align}
where (a) follows from the fact that the distributions of the columns of $A^m$ and entries of $X^n$ are permutation invariant, (b) follows from the data processing inequality for MMSE, (c) follows from the independence of the signal entries, (d) follows from the Gaussian distribution of the noise $W^m$, and (e) follows from the distribution of $A^m(n)$.

We now turn our attention to the lower bound in  \eqref{eq:Im_bounds}. Let the QR decomposition of $A^m$ be given by 
\begin{align}
 A^m  = Q R , \notag
\end{align}
where $Q$ is an $m \times n$ orthogonal matrix and $R$ is an $m \times n$ upper triangular matrix. Under the assumed Gaussian distribution on $A^m$, the nonzero entries of $R$ are independent random variables with \cite[Theorem~2.3.18]{gupta:1999}:
\begin{align}
R_{i,j} \sim \begin{cases}
\frac{1}{n} \chi^2_{m-i+1}, & \text{if $i = j$}\\
\normal\left(0, \frac{1}{n} \right),  & \text{if $i < j$}. 
\end{cases} \label{eq:R_dist} 
\end{align}
Let the rotated measurements and noise be defined by
\begin{align}
\widetilde{Y}^m & = Q^T Y^ m, \qquad \widetilde{W}^m = Q^T W^m  \notag
\end{align}
and observe that
\begin{align}
\widetilde{Y}^m & =R X^n  + \widetilde{W}^m.  \notag
\end{align}
By the rotational invariance of the Gaussian distribution of the noise $W^m$,  the rotated noise $\widetilde{W}^m$ is Gaussian $\normal(0,I_{m \times m} )$ and independent of everything else. Therefore, only the first $d\triangleq \min(m,n)$ measurements provide any information about the signal. Using this notation, the mutual information can be expressed equivalently as
\begin{align}
  I(X^n ; Y^m  \mid A^m) & = I(X^n ; \widetilde{Y}^d  \mid R )  \notag\\
 & = \sum_{i=1}^d I(X^n ;  \widetilde{Y}_k \mid  \widetilde{Y}_{k+1}^d  , R),  \label{eq:Im_LB_b}
\end{align}
where the second step follows from the chain rule for mutual information.

To proceed, note that the measurements $\widetilde{Y}_{k}^d$ are independent of the first part of the signal  $X^{k-1}$, because $R$ is upper triangular. Therefore, for all $1 \le k \le d$, 
\begin{align}
 I(X^n ;  \widetilde{Y}_k \mid  \widetilde{Y}_{k+1}^d  , R)  \notag
 & =  I(X_{k}^n  ;  \widetilde{Y}_k \mid  \widetilde{Y}_{k+1}^d  , R) \notag \\
 & \overset{(a)}{\ge}  I(X_{k}  ;  \widetilde{Y}_k \mid  \widetilde{Y}_{k+1}^d, X_{k+1}^n   , R)  \notag\\
 & \overset{(b)}{=}  I(X_{k}  ;  R_{k,k} X_k  + \widetilde{W}_k  \mid   R)  \notag\\
 & \overset{(c)}{=}  \ex{ I_X( \tfrac{1}{n} \chi^2_{m-k+1})},\label{eq:Im_LB_c}
\end{align}
where (a)  follows from the data processing inequality, (b) follows from the fact that $R$ is upper triangular and the independence of the  signal entries, and  (c) follows from \eqref{eq:R_dist}.  Combining \eqref{eq:Im_LB_b} and \eqref{eq:Im_LB_c} gives the upper bound in \eqref{eq:Im_bounds}

For the upper bound in \eqref{eq:Mm_bounds},  suppose that $m \ge n$ and let $\widetilde{Y}^n$ be the rotated measurements defined above. Then we have
\begin{align}
M_{m,n}& \triangleq \frac{1}{n} \mmse( X_{n} \mid Y^m, A^m)  \notag \\
& \overset{(a)}{=} \mmse( X_{n} \gmid Y^m, A^m)  \notag\\
& \overset{(b)}{=}  \mmse(X_n \gmid \widetilde{Y}^m , R) \notag\\
& \overset{(c)}{\le}   \mmse(X_n \gmid \widetilde{Y}_n , {R}_{n,n} ) \notag\\
& \overset{(d)}{=}    \mmse(X_n \gmid  R_{n,n} X_n + \widetilde{W}_n  , {R}_{n,n} ) \notag\\
& \overset{(e)}{=}  \ex{ \mmse_X\left( \tfrac{1}{n} \chi^2_{m-n+1} \right)} , \notag
\end{align}
where (a) follows from the fact that the distributions of the columns of $A^m$ and entries of $X^n$ are permutation invariant, (b) follows from the fact that multiplication by $Q$ is a one-to-one transformation, (c) follows from the data processing inequality for MMSE, (d)  follows from the fact that $R$ is upper triangular with $m \ge n$, and (e) follows from \eqref{eq:R_dist}.

\subsection{Proof of Lemma~\ref{lem:Im_bounds_gap} }\label{proof:lem:Im_bounds_gap}

Let $U = \frac{1}{n} \chi^2_{m-n+1}$ and $V= \frac{1}{n} \chi^2_{n+1}$ be independent scaled chi-square random variables and let $Z = U+V$. Using this notation, the lower bound in \eqref{eq:Im_bounds} satisfies
\begin{align}
\frac{1}{n} \sum_{k=1}^{n} \ex{ I_X\left( \tfrac{1}{n} \chi^2_{m-k+1} \right)}  & \ge \ex{ I_X(U)},  \label{eq:Im_bounds_gap_b} 
\end{align}
where we have used the fact that the mutual information function is non-decreasing. Moreover, by \eqref{eq:IX_smooth}, we have
\begin{align}
\ex{  I_X(Z)}  - \ex{ I_X(U)} & \le \frac{1}{2} \ex{ \left( Z/U -1 \right)_+} \notag \\
& = \frac{1}{2} \ex{V/U } \notag\\
& = \frac{1}{2} \frac{n + 1}{m - n - 1 }. \label{eq:Im_bounds_gap_c} 
\end{align}

Next, observe that $Z$ has a scaled chi-squared distribution $Z \sim \frac{1}{n} \chi^2_{m+2}$, whose inverse moments are given by
\begin{align}
\ex{ Z^{-1}}  = \frac{n}{m}, \qquad  \var( Z^{-1})  = \frac{2 n^2  }{m^2 (m-2)} . \notag
\end{align}
Therefore, by \eqref{eq:IX_smooth},  we have
\begin{align}
 I_X(\tfrac{m}{n} ) - \ex{ I_X(Z)}
& \le \frac{1}{2} \ex{ \left( \frac{m/n}{Z}  -1 \right)_+} \notag \\
& \le \frac{m}{2 n} \ex{ \left| \frac{1}{Z}  -\frac{n}{m}  \right|}  \notag\\
& \le \frac{m}{2 n } \sqrt{ \var(Z^{-1})} \notag\\
& = \frac{1}{2  } \sqrt{ \frac{2}{ m-2}}, \label{eq:Im_bounds_gap_d} 
\end{align}
where the second and third steps follow from Jensen's inequality. Combining \eqref{eq:Im_bounds_gap_b}, \eqref{eq:Im_bounds_gap_c} and  \eqref{eq:Im_bounds_gap_d} completes the proof of Inequality \eqref{eq:Im_bounds_gap}. 

We use a similar approach for the MMSE. Note that
\begin{align}
 \ex{  \mmse_X(U)}  & = \ex{ \mmse_X\left( \tfrac{1}{n} \chi^2_{m - n + 1}\right) }  \notag \\
 \ex{\mmse_X(Z)}  & \le \ex{ \mmse_X\left( \tfrac{1}{n} \chi^2_{m}\right) } , \label{eq:Mm_bounds_gap_b} 
\end{align}
where the second inequality follows from the monotonicity of the MMSE function. By Lemma~\ref{lem:mmseX_bound}, the MMSE obeys
\begin{align}
\ex{  \mmse_X(U)}  - \ex{ \mmse_X(Z)} 
& \le 12 \, \ex{ \left| \frac{ 1}{U} -\frac{1}{Z}  \right|}  \notag\\
& = 12 \left( \ex{U^{-1}  } - \ex{ Z^{-1} } \right)  \notag\\
& =12 \frac{n + 1}{m - n - 1 }. \label{eq:Mm_bounds_gap_c} 
\end{align}
Moreover, 
\begin{align}
\left| \ex{  \mmse_X(Z)}  -  \mmse_X(\tfrac{m}{n} )\right| 
& \le 12 \, \ex{ \left| \frac{ 1}{Z} -\frac{n}{m}  \right|}  \notag\\
& \le 12 \sqrt{ \var(Z^{-1})}  \notag \\
& = \frac{1}{2  } \sqrt{ \frac{2}{ m-2}} \label{eq:Mm_bounds_gap_d} 
\end{align}
where the second and third steps follow from Jensen's inequality. Combining \eqref{eq:Mm_bounds_gap_b}, \eqref{eq:Mm_bounds_gap_c} and  \eqref{eq:Mm_bounds_gap_d} completes the proof of Inequality \eqref{eq:Mm_bounds_gap}. 

\subsection{Proof of Lemma~\ref{lem:IR_boundary}}\label{proof:lem:IR_boundary}

The upper bound in \eqref{eq:IRS_bounds} follows from noting that
\begin{align}
\IR(\delta) & \triangleq \min_{z \ge 0}   R( \delta, z) \le    R( \delta, 0) =  I_X(\delta) . \notag
\end{align}
For the lower bound, observe that the replica-MI function can also be expressed in terms of the replica-MMSE function as
\begin{align}
\IR(\delta) & =  R( \delta, \MR(\delta)).  \notag
\end{align}
Since the term $ [ \log\left( 1+ z\right) - \frac{ z}{ 1 +  z}]$ in the definition of $R(\delta, z)$ is non-negative, we have the lower bound
\begin{align}
\IR(\delta) & \ge I_X\left( \frac{ \delta}{ 1 + \MR(\delta)} \right) . \label{eq:IR_LB_b} 
\end{align}

Next, we recall that the replica-MMSE function satisfies the fixed-point equation
\begin{align}
\MR(\delta)    = \mmse_X\left( \frac{\delta}{1+\MR(\delta) } \right) .   \label{eq:IR_LB_c} 
\end{align}
Also, for any signal distribution $P_X$ with, the MMSE function satisfies the upper bound $\mmse_X(s) \le 1/s$ \cite[Proposition 4]{guo:2011}. Therefore, 
\begin{align}
\MR(\delta)   \le  \frac{ 1 + \MR(\delta)}{\delta} . \notag
\end{align}
For $\delta > 1$, rearranging the terms leads to the upper bound  
\begin{align}
\MR(\delta)  \le \frac{ 1}{  \delta - 1} .  \label{eq:IR_LB_d} 
\end{align}
Combining \eqref{eq:IR_LB_b}  and \eqref{eq:IR_LB_d} with the fact that the mutual information is non-decreasing yields
\begin{align}
\IR(\delta) & \ge I_X\left( \frac{ \delta}{ 1 + \frac{1}{1 -\delta} } \right)  = I_X(\delta - 1).  \notag
\end{align}

Lastly, we consider the bounds in \eqref{eq:MRS_bounds}. Combining \eqref{eq:IR_LB_c}  and \eqref{eq:IR_LB_d} with the fact that the MMSE is non-increasing yields
\begin{align}
\MR(\delta) & \le \mmse_X\left( \frac{ \delta}{ 1 + \frac{1}{1 -\delta} } \right)  = \mmse_X(\delta - 1) . \notag
\end{align}
Alternatively, starting with \eqref{eq:IR_LB_c} and using the non-negativity of $\MR$  leads to the lower bound. This completes the proof of Lemma~\ref{lem:IR_boundary}.

\subsection{Proof of Lemma~\ref{lem:Im_var}}\label{proof:lem:Im_var}

To lighten notation, we will write $I(A)$ in place of $I_{m,n}(A^m)$. By the Gaussian Poincar\'e inequality \cite[Theorem 3.20]{boucheron:2013}, the variance satisfies 
\begin{align}
\var( \ex{ I(A)}) & \le \ex{ \left\| \nabla_{A} I(A) \right\|^2_F}, \label{eq:var_matrix_b} 
\end{align}
where $\nabla_{A}$ is the gradient operator with respect to $A$. Furthermore, by the multivariate I-MMSE relationship \cite[Theorem 1]{palomar:2006}, the gradient of the mutual information with respect to $A$ is given by
\begin{align}
\nabla_{A} I(A)  =  A  K(A), \label{eq:var_matrix_c} 
\end{align}
where $K(A) = \ex{ \cov(X^n | Y^m, A) \mid A}$ is the expected posterior covariance matrix as a function of the matrix $A$.

Next, we recall some basic facts about matrices that will allow us to bound the magnitude of the gradient. 
 For symmetric matrices $U$ and $V$ the notation $U \preceq V$ mean that $V - U$ is positive semidefinite.  If $U$ and $V$ are positive definite with  $U \preceq V$ then $U^2 \preceq V^2$,  by \cite[Theorem~7.7.3]{horn:2012}, and thus
 \begin{align}
 \|S U\|^2_F  = \gtr( S U^2 S^T) \le \gtr( S V^2 S^T)  \notag
 \end{align}
 for every matrix $S$. 

Since the since the conditional expectation minimizes the squared error, the expected covariance matrix can be upper bounded in terms of the MSE matrix associated with the optimal linear estimator:
\begin{align}
K(A) \preceq \left(I_{n \times n}  + \var(X) A^T  A^T\right)^{-1} \var(X). \notag
\end{align}
Combining this inequality with the arguments given above and the fact that $\left(I_{n \times n}  + \var(X) A^T  A^T\right)^{-1} \preceq I_{n \times n}$ leads to
\begin{align}
\left\| A  K(A) \right\|^2_F&  \le  \left( \var(X)\right)^2 \gtr\left( A A^T \right). \label{eq:var_matrix_d} 
\end{align}
Alternatively, for $m > n$, the matrix $A^T A$ is full rank with probability one and thus $K(A) \preceq \left(A^T A\right)^{-1}$ with probability one. 
This leads to the upper bound
\begin{align}
\left\| A  K(A) \right\|^2_F\le
  \gtr\left( \left( A^T\! A\right)^{-1} \right). \label{eq:var_matrix_e} 
\end{align}

To conclude, note that under the assumed Gaussian distribution on $A$,
\begin{align}
 \ex{ \gtr\left( A A^T \right)} & = m  \notag\\
  \ex{ \gtr\left( \left(A^T A\right)^{-1} \right)} & = \frac{n^2}{(m-n-1)_+},  \notag
\end{align}
where the second expression follows from \cite[Theorem~3.3.16]{gupta:1999}. Combining these expectations with \eqref{eq:var_matrix_b} and \eqref{eq:var_matrix_c} and the upper bounds \eqref{eq:var_matrix_d} and \eqref{eq:var_matrix_e} completes the proof.

\subsection{Proof of Lemma~\ref{lem:IMI_var}}\label{proof:lem:IMI_var}

To simplify notation, the mutual information density is denoted by $\IMI = \imath(X^n ;Y^m \gmid A^m)$. Note that $\IMI$ can be expressed as a function of the random tuple $(X^n, W^m, A^m)$ according to
\begin{align}
\IMI &=  \log\left( \frac{f_{Y^m|X^n,A^m}( A^m X^n  +W^m  \gmid   X^n,A^m )}{f_{Y^m|A^m}(A^m X^n  +W^m \gmid  A^m )}    \right). \label{eq:IMI_function} 
\end{align}

Starting with the law of total variance \eqref{eq:law_tot_var}, we see that
\begin{align}
 \var(\IMI ) & = \ex{ \var( \IMI \gmid A^m)} + \var( \ex{ \IMI  \gmid A^m}). \label{eq:var_IMI_decomp} 
\end{align}
The second term on the right-hand side of \eqref{eq:var_IMI_decomp} is variance with respect to the matrix, which is bounded by Lemma~\ref{lem:Im_var}.  

The first term on the right-hand side of \eqref{eq:var_IMI_decomp} is the variance with respect to the signal and the noise. Since the entries of $X^n$ and $W^m$ are independent, the variance can be bounded using the the Efron-Stein inequality  \cite[Theorem 3.1]{boucheron:2013}, which yields
\begin{align}
\ex{ \var( \IMI \gmid  A^m )}  & \le  \sum_{i=1}^n \ex{ \left(  \IMI - \bEx_{X_i}\left[ \IMI  \right] \right)^2}  \notag \\
& \quad  + \sum_{ i = 1}^m \ex{\left( \IMI  - \bEx_{W_i}\left[ \IMI  \right] \right)^2   } . \label{eq:var_IMI_decomp_b} 
\end{align} 

At this point, Lemma~\ref{lem:IMI_var} follows from combining \eqref{eq:var_IMI_decomp}, Lemma~\ref{lem:Im_var},   and \eqref{eq:var_IMI_decomp_b} with the following inequalities 
\begin{align}
\ex{ \left(  \IMI - \bEx_{X_i}\left[ \IMI  \right] \right)^2} & \le 12 \left( 1 +  \left(1 + \sqrt{  \tfrac{m}{n}} \right)  B^{\frac{1}{4}} \right)^4 \label{eq:var_signal}\\
\ex{\left( \IMI  - \bEx_{W_i}\left[ \IMI \right] \right)^2} & \le \sqrt{B}  \label{eq:var_noise}. 
\end{align}
Inequalities \eqref{eq:var_signal} and \eqref{eq:var_noise} are proved in the following subsections.

\subsubsection{Proof of Inequality \eqref{eq:var_signal}}
Observe that the expectation $\ex{ \left(  \IMI - \bEx_{X_i}\left[ \IMI  \right]\right)^2 }$ is identical for all $i \in [n]$ because the distribution on the entries in $X^n$ and the columns of $A^m$ are permutation invariant.  Throughout this proof, we focus on the variance with respect to the last signal entry $X_n$. 

Starting with the chain rule for mutual information density, we obtain the decomposition 
\begin{align}
 \IMI & =\imath(X_n  ;Y^m \gmid A^m )  +  \imath(X^{n-1}  ; Y^m \gmid X_n , A^m  ). \label{eq:var_sig_b}
\end{align}
The second term is independent of $X_n$, and thus does not contribute to the conditional variance. To characterize the first term, we introduce a transformation of the data that isolates the effect of the $n$th signal entry. Let $Q$  be drawn uniformly from the set of $m \times m$ orthogonal matrices whose last row is the unit vector in the direction of the last column of $A^{m}$, and let the \textit{rotated} data be defined according to
\begin{align}
\widetilde{Y}^m = Q Y^m, \qquad \widetilde{A}^m  = Q A^m , \qquad \widetilde{W}^m = Q W^m.  \notag
\end{align} 
By construction,  the last column of $\widetilde{A}^m$ is zero everywhere except for the last entry with
\begin{align}
\widetilde{A}_{i,n} = \begin{cases}
0, &  \text{if $1 \le i \le m-1$}\\
\| A^{m}(n)  \|, & \text{if $i = m$},
\end{cases}  \notag
\end{align}
where $A^m(n)$ denotes the $n$-th column of $A^m$.  Furthermore, by the rotational invariance of the Gaussian distribution of the noise, the rotated noise $\widetilde{W}^m$ has the same distribution as $W^m$ and is independent of everything else. 

Expressing the  mutual information density in terms of the rotated data leads to the further decomposition
\begin{align}
 \imath(X_n ;Y^m \gmid A^m ) & \overset{(a)}{=}   \imath(X_n ;\widetilde{Y}^m \gmid \widetilde{A}^m )  \notag\\
 & \overset{(b)}{=}  \imath(X_n ;\widetilde{Y}_m \gmid \widetilde{Y}^{m-1}, \widetilde{A}^m )  \notag\\
 & \quad  + \imath(X_n ;\widetilde{Y}^{m-1} \gmid \widetilde{A}^m ) \notag \\
 & \overset{(c)}{=}  \imath(X_n ;\widetilde{Y}_m \gmid  \widetilde{Y}^{m-1}, \widetilde{A}^m ), \label{eq:var_sig_c}
\end{align}
where (a) follows from fact that multiplication by $Q$ is a one-to-one transformation, (b)  follows from the chain rule for mutual information density, and (c) follows from fact that the first $m-1$ entries of $\widetilde{Y}^m$ are independent of $X_n$. 

To proceed, we introduce the notation $U = (\widetilde{Y}^{m-1} ,\widetilde{A}^{m-1})$.  Since the noise is independent of the measurements,  the conditional density of $\widetilde{Y}_{m}$ given $(X^n, U, \widetilde{A}_{m})$ obeys
\begin{align}
 \MoveEqLeft f_{\widetilde{Y}_m   \mid X^n,  U, \widetilde{A}_m} (\widetilde{y}_m   \mid x^n,  u, \widetilde{a}_m)   \notag\\
 & =  f_{\widetilde{W}_m} (\widetilde{y}_m - \langle \widetilde{a}_{m} , x^n \rangle)  \notag\\
 & = \frac{1}{\sqrt{ 2 \pi}} \exp\left( - \frac{1}{2} \left(\widetilde{y}_m - \langle \widetilde{a}_{m} , x^n \rangle \right)^2  \right) . \notag
\end{align}
Starting with  \eqref{eq:var_sig_c},  the mutual information density can now be expressed as 
\begin{align}
 \imath(X_n ;Y^m \mid A^m ) 
& = \imath\left(X_n  ; \widetilde{Y}_m   \mid U, \widetilde{A}_{m}  \right)  \notag\\
 & = \log \left( \frac{ f_{\widetilde{Y}_m   \mid X^n,  U, \widetilde{A}_m} (\widetilde{Y}_m   \mid X^n,  U, \widetilde{A}_m)  }{  f_{\widetilde{Y}_m   \mid U, \widetilde{A}_m} (\widetilde{Y}_m   \mid U, \widetilde{A}_m)}\right), \notag\\
  & = g(U, \widetilde{Y}_{m}, \widetilde{A}_{m}) - \frac{1}{2} \widetilde{W}_m^2,  \label{eq:var_sig_d}
\end{align}
where
\begin{align}
g(U, \widetilde{Y}_m, A_{m}) &=   \log\left( \frac{ ( 2 \pi)^{-\frac{1}{2}} }{  f_{\widetilde{Y}_m   \mid U,  \widetilde{A}_m} (\widetilde{Y}_m   \mid U, \widetilde{A}_m)} \right). \notag
\end{align}
Furthermore, by \eqref{eq:var_sig_b} and \eqref{eq:var_sig_d}, 
the difference between $\IMI$ and its conditional expectation with respect to $X_n$ is given by
\begin{align}
 \IMI  - \bEx_{X_n}\left[\IMI  \right]  & = g(U, \widetilde{Y}_m, \widetilde{A}_{m}) -  \bEx_{X_n} \left[ g(U, \widetilde{Y}_m, \widetilde{A}_{m}) \right]. \notag
\end{align}
Squaring both sides and taking the expectation yields 
\begin{align}
\MoveEqLeft \ex{ \left( \IMI  - \bEx_{X_n}\left[\IMI  \right] \right)^2} \notag \\
 & = \ex{\left(  g(U, \widetilde{Y}_m, \widetilde{A}_{m})  -  \bEx_{X_n} \left[  g(U, \widetilde{Y}_m, \widetilde{A}_{m}) \right]\right)^2} \notag \\
& \overset{(b)}{\le} \ex{ g^2(U, \widetilde{Y}_m, \widetilde{A}_{m})  }, \label{eq:var_sig_e}
\end{align}
where (b) follows from the law of total variance \eqref{eq:law_tot_var}.

Next, we bound the function $g(u, \widetilde{y}_m, \widetilde{a}_m)$. Let $X^n_u$ be drawn according to the conditional distribution $P_{X^n \mid U = u}$. Then, the conditional density of $\widetilde{Y}_{m}$ given $(U, \widetilde{A}_{m})$ is given by
\begin{align}
 \MoveEqLeft 
  f_{\widetilde{Y}_m   \mid U, \widetilde{A}_m} (\widetilde{y}_m   \mid  u, \widetilde{a}_m)  \notag\\
 &=  \ex{ f_{\widetilde{Y}_m   \mid X^n,  U, \widetilde{A}_m} (\widetilde{y}_m   \mid X_u^n,  u, \widetilde{a}_m)  }  \notag\\
 & = \ex{  \frac{1}{\sqrt{ 2 \pi}} \exp\left( - \frac{1}{2} \left(\widetilde{y}_m - \langle a_{m} , X_u^n \rangle \right)^2  \right)} . \notag
\end{align}
 Since the conditional density obeys the upper bound
\begin{align}
  f_{\widetilde{Y}_m   \mid U, \widetilde{A}_m} (\widetilde{y}_m   \mid  u, \widetilde{a}_m) \le ( 2 \pi)^{-\frac{1}{2}}, \notag
\end{align}
we see that $g(u, \widetilde{y}_m, \widetilde{a}_m)$ is nonnegative.  Alternatively, by Jensen's inequality, 
\begin{align}
  \MoveEqLeft f_{\widetilde{Y}_m   \mid U, \widetilde{A}_m} (\widetilde{y}_m   \mid  u, \widetilde{a}_m) \notag \\
&   \ge   \frac{1}{\sqrt{ 2 \pi}} \exp\left( - \frac{1}{2} \ex{   \left(\widetilde{y}_m - \langle a_{m} , X_u^n \rangle \right)^2}   \right), \notag
\end{align}
and thus
\begin{align}
g(u, \widetilde{y}_m, \widetilde{a}_m) \le  \frac{1}{2} \ex{   \left(\widetilde{y}_m - \langle a_{m} , X_u^n \rangle \right)^2}.  \notag
\end{align}
Using these facts leads to the following inequality:
 \begin{align}
 g^2(u, \widetilde{y}_m , \widetilde{a}_m) 
&  \le  \frac{1}{4}  \left( \bEx_{X^n_u}\left[  \left (y_m    - \langle \widetilde{a}_m,  X_u^n \rangle  \right )^2  \right] \right)^2 \notag\\
& \overset{(a)}{\le}  \frac{1}{4} \bEx_{X^n_u}\left[  \left (\widetilde{y}_m   - \langle \widetilde{a}_m, X_u^n \rangle  \right )^4 \right] \notag\\
&  \overset{(b)}{\le} 2  \left(\widetilde{y}_m \right)^4 +  2\bEx_{X^n_u}\left[ \left ( \langle \widetilde{a}_m, X_u^n \rangle \right)^4 \right] , \notag
 \end{align}
where (a) follows from Jensen's inequality and (b) follows from \eqref{eq:ab4}.  Taking the expectation with respect to the random tuple $(U, \widetilde{Y}_m , \widetilde{A}_m)$, we can now write
  \begin{align}
 \ex{ g^2(U, \widetilde{Y}_m , \widetilde{A}_m) }
& \le 2  \ex{  \left( \widetilde{Y}_m \right)^4}  + 2 \ex{ \left ( \langle \widetilde{A}_m, X_U^n \rangle \right)^4 } \notag\\
&\overset{(a)}{=} 2  \ex{  \left( \widetilde{Y}_m \right)^4}  + 2 \ex{ \left ( \langle \widetilde{A}_m, X^n \rangle \right)^4 }, \label{eq:var_sig_f}
 \end{align}
where (a) follows from the fact that $X^n_U$ has the same distribution as $X^n$ and is independent of $\widetilde{A}_m$. To upper bound the first term, observe that
\begin{align}
\ex{ \left(  \widetilde{Y}_m\right)^4 } &  =   \ex{ \left(\widetilde{W}_m +  \sum_{i=1}^n \widetilde{A}_{m,i} X_i  \right)^4}  \notag  \\
 & \overset{(a)}{\le}  \left(  \left( \ex{ \widetilde{W}_m^4} \right)^{\frac{1}{4}} +   \left( \ex{\left( \langle \widetilde{A}_m, X^n \rangle \right)^4 }\right)^{\frac{1}{4}}  \right)^4 \notag \\
  & =  \left(  3^{\frac{1}{4}} +   \left( \ex{\left( \langle \widetilde{A}_m, X^n \rangle \right)^4 }\right)^{\frac{1}{4}}  \right)^4  \label{eq:var_sig_g}
 \end{align}
 where (a) follows from Minkowski's inequality \eqref{eq:Minkowski}. To bound the fourth moment of $\langle \widetilde{A}_m, X^n \rangle$, we use the fact that the entries of the rotated measurement vector $\widetilde{A}_m$ are independent with 
\begin{align}
\widetilde{A}_{m,i} \sim
\begin{cases}
 \normal(0,\frac{1}{n} ), &  \text{if $1 \le i \le n-1$}\\
\frac{1}{\sqrt{n}} \chi_m , & \text{if $i  = n$},
\end{cases} \notag
\end{align}
where $\chi_m$ denotes a chi random variable with $m$ degrees of freedom. Thus, we have
\begin{align}
\MoveEqLeft \left( \ex{\left( \langle \widetilde{A}_m,X^n \rangle \right)^4 }\right)^{\frac{1}{4}}   \notag \\
& \overset{(a)}{\le}     \left( \ex{ \left(  \sum_{i=1}^{n-1}  \widetilde{A}_{m,i} X_i \right)^4 }\right)^{\frac{1}{4}} +  \left(\ex{ \left(  \widetilde{A}_{m,n} X_n \right)^4 }\right)^{\frac{1}{4}} \notag \\
& \overset{(b)}{=}   \left( \frac{3}{n^2} \ex{ \left\| X^{n-1} \right \|^4} \right)^{\frac{1}{4}}   + \left( \frac{m(m+2)}{n^2} \ex{X_n^4} \right)^{\frac{1}{4}} \notag \\
& \overset{(c)}{\le}  \left( \frac{3(n-1)^2 }{n^2}  B  \right)^{\frac{1}{4}}   + \left( \frac{m(m+2)}{n^2} B \right)^{\frac{1}{4}}  \notag\\
& \le  \left(  1+   \sqrt{ \frac{m}{n}} \right)  \left( 3 B \right)^{\frac{1}{4}},  \label{eq:var_sig_h}
 \end{align}
where (a) follows from Minkowski's inequality \eqref{eq:Minkowski}, (b) follows from the distribution on $\widetilde{A}^m$ and (c) follows from Assumption 2 and  inequality \eqref{eq:LpBound_1} applied to $\|X^{n-1}\|^4 = \left( \sum_{i=1}^{n-1} X_i^2 \right)^2$.  

Finally, combining \eqref{eq:var_sig_e}, \eqref{eq:var_sig_f}, \eqref{eq:var_sig_g}, and \eqref{eq:var_sig_h} leads to
\begin{align}
\ex{ \left( \IMI  - \bEx_{X_n}\left[\IMI  \right] \right)^2}
& \le  2 \left( 3^\frac{1}{4}  +  \left(  1+   \sqrt{ \frac{m}{n}} \right) (3B)^{\frac{1}{4}} \right)^4 \notag\\
& \quad + 2\left(  1+   \sqrt{ \frac{m}{n}} \right)^4  \left( 3 B \right) \notag\\
& \le 12 \left( 1 +  \left(  1+   \sqrt{ \frac{m}{n}} \right) B^{\frac{1}{4}} \right)^4. \notag
\end{align}
This completes the proof of Inequality \eqref{eq:var_signal}.

\subsubsection{Proof of Inequality \eqref{eq:var_noise}} 
Observe that the expectation $\ex{\left( \IMI_k  - \bEx_{W_m}\left[ \IMI_k \right] \right)^2 }$ is identical for all $k \in [m]$ because the distributions on the entries in $W^n$ and the rows of $A^m$ are permutation invariant. Throughout this proof, we focus on the variance with respect to the last noise entry $W_m$. 

Recall from \eqref{eq:IMI_function},  $\IMI$ can be expressed as a function of $(X^n, W^m, A^m)$. By the Gaussian Poincar\'e inequality \cite[Theorem 3.20]{boucheron:2013}, the variance with respect to $W_m$ obeys
\begin{align}
\ex{\left( \IMI  - \bEx_{W_m}\left[ \IMI \right] \right)^2   } & \le \ex{ \left(  \frac{\partial}{\partial W_m}  \IMI    \right)^2}, \label{eq:var_noise_b}
\end{align} 
where  $\frac{\partial}{\partial W_m}  \IMI$ denotes the partial derivative of the right-hand side of \eqref{eq:IMI_function} evaluated at the point $(X^n, W^m, A^m)$. 

To compute the partial derivative, observe that by the chain rule for mutual information density, 
\begin{align}
\IMI & =\imath(X^n  ;Y^{m-1} \gmid A^m) +  \imath(X^n ; Y_m \gmid Y^{m-1}  , A^m). \notag
\end{align}
In this decomposition, the first term on the right-hand side is independent of $W_m$. The second term can be decomposed as
\begin{align}
\MoveEqLeft  \imath(X^n ; Y_m \gmid Y^{m-1}  , A^m)  \notag\\
& = \log\left( \frac{f_{Y_m \mid X^n, Y^{m-1}, A^m}(Y_m \mid X^n, Y^{m-1}, A^m)}{f_{Y_m \mid Y^{m-1}, A^m}(Y_m \mid  Y^{m-1}, A^m)}    \right) \notag\\
& =   \log\left(\frac{ f_{W_m}(Y_m - \langle A_m, X^n \rangle)}{f_{Y_m \mid Y^{m-1}, A^m}(Y_m \mid  Y^{m-1}, A^m)  } \right) \notag \\
& =   \log\left(\frac{ f_{W_m}(Y_m)}{f_{Y_m \mid Y^{m-1}, A^m}(Y_m \mid  Y^{m-1}, A^m)  } \right)  \notag\\
& \quad + Y_m \langle A_m, X^n \rangle   - \frac{1}{2} \left(\langle A_m, X^n \rangle \right)^2.  \notag
\end{align}
Note that the first term on the right-hand side is the negative of the log likelihood ratio. The partial derivative of this term with respect to $Y_m$ can be expressed in terms of the conditional expectation \cite{esposito:1968}:
\begin{align}
\MoveEqLeft \frac{\partial}{\partial Y_m}   \log\left(\frac{ f_{W_m}(Y_m)}{f_{Y_m \mid Y^{m-1}, A^m}(Y_m \mid  Y^{m-1}, A^m)  } \right) \notag\\
& =  -  \ex{  \langle A^m, X^n  \rangle  \mid Y^{m}, A^m}. \notag
\end{align}
Recalling that $Y_m = \langle A_m , X^n \rangle + W_m$, the partial derivative with respect to $W_m$ can now be computed directly as
\begin{align}
 \frac{\partial}{\partial W_m} \IMI
& =  \frac{\partial}{\partial Y_m}  \imath(X^n ; Y_m \mid Y^{m-1}  , A^m) \notag\\
& =   \langle A_m, X^n  \rangle -  \ex{  \langle A^m, X^n  \rangle  \mid Y^{m}, A^m}  \notag\\
&= \langle A_m , X^n - \ex{X^n \mid Y^m, A^m} \rangle. \notag
\end{align} 
Thus, the expected squared magnitude obeys
\begin{align}
 \ex{ \left(  \frac{\partial}{\partial W_i}  \IMI_m  \right)^2}
  &=   \ex{ \left( \langle A_m , X^n - \ex{X^n \mid Y^m, A^m} \rangle  \right)^2} \notag\\
  & =  \ex{ \var( \langle A_m, X^n \rangle \mid Y^m, A^m)} \notag\\
 &  \overset{(a)}{\le}  \ex{ \var( \langle A_m, X^n \rangle \mid  A_m)} \notag\\
  & =  \ex{ A_m^T \cov(X^n) A_m}   \notag\\
  & \overset{(b)}{=}   \ex{X^2}  \notag\\
  & \overset{(c)}{\le} \sqrt{B},  \label{eq:var_noise_c}
\end{align} 
where (a) follows from the law of total variance \eqref{eq:law_tot_var} and (b) follows from the fact that $\ex{A_m A_m^T} = \frac{1}{n} I_{n \times n}$, and (c) follows from Jensen's inequality and Assumption 2. Combining \eqref{eq:var_noise_b}  and \eqref{eq:var_noise_c} completes the proof of Inequality \eqref{eq:var_noise}.

\section{Proofs of Results in Section~\ref{proof:thm:I_MMSE_relationship} }

\subsection{Proof of Lemma~\ref{lem:moment_bounds}.} \label{proof:lem:moment_bounds}
The squared error obeys the upper bound
\begin{align}
\left|\cE_{m,n} \right|^2  & = \left( \frac{1}{n} \sum_{i=1}^n \left( X_i - \ex{ X_i \gmid Y^m, A^m}\right)^2 \right)^2  \notag\\
& \overset{(a)}{\le}  \frac{1}{n} \sum_{i=1}^n \left( X_i - \ex{ X_i \gmid Y^m, A^m}\right)^4 \notag\\
& \overset{(b)}{\le}   \frac{8}{n} \sum_{i=1}^n  \left[\left| X_i \right|^4  + \left|  \ex{ X_i \gmid Y^m, A^m}\right|^4 \right] \notag\\
&\overset{(c)}{\le}   \frac{8}{n} \sum_{i=1}^n  \left[\left| X_i \right|^4  + \ex{ \left|   X_i \right|^4 \gmid Y^m, A^m} \right], \notag
\end{align}
where (a) follows from Jensen's inequality \eqref{eq:LpBound_1}, (b) follows from  \eqref{eq:ab4}, and (c) follows from Jensen's inequality. Taking the expectation of both sides leads to 
\begin{align}
\ex{ \left|\cE_{m,n} \right|^2 }  &  \le  \frac{16}{n} \sum_{i=1}^n \ex{ \left|X_i\right|^4}   \le 16 B, \notag
\end{align}
where the second inequality follows from Assumption 2.

Next, observe that the conditional distribution $Y_{m+1}$ given $X^n$ is zero-mean Gaussian with variance $1 + \frac{1}{n}\|X^n\|^2$. Thus, 
\begin{align}
\ex{ Y^4_{m+1} } &=  \ex{ \ex{ Y^4 \gmid X^n } } \notag \\
& = 3\ex{  \left(1 + \tfrac{1}{n} \|X^n\|^2\right)^2 } \notag\\
& \overset{(a)}{\le} 6  \left(1 +\tfrac{1}{n^2}\ex{  \|X^n\|^4} \right)  \notag\\
& \overset{(b)} \le 6  (1 + B ), \notag
\end{align}
where (a) follows from \eqref{eq:ab2} and (b) follows from \eqref{eq:LpBound_1} and Assumption 2.  Along similar lines, 
\begin{align}
\MoveEqLeft \ex{\left|  Y_{m+1} - \ex{ Y_{m+1} \gmid Y^m, A^{m+1}}\right|^4 } \notag \\
&\overset{(a)}{ \le} 8 \ex{\left|  Y_{m+1} \right|^4 } +  8 \ex{ \left| \ex{ Y_{m+1} \gmid Y^m, A^{m+1}}\right|^4 }  \notag\\
& \overset{(b)}{\le} 16 \ex{\left|  Y_{m+1} \right|^4 } \notag \\
& \le 96 (1+B), \notag
\end{align} 
where (a) follows from \eqref{eq:ab4} and (b) follows from Jensen's inequality. 

Finally, we note that the proof of \eqref{eq:IMID_moment_2} can be found in the proof of Lemma~\ref{lem:PMID_var}.

\subsection{Proof of Lemma~\ref{lem:post_dist_to_id}}\label{proof:lem:post_dist_to_id}

The starting point for these identities is to observe that the differences between the measurements $Y_i$ and $Y_j$ and their conditional expectations given the data $(Y^m, A^m ,A_i, A_j)$ can be expressed as
\begin{align}
\begin{bmatrix} Y_i - \widehat{Y}_i \\ Y_j - \widehat{Y}_j  \end{bmatrix}  & = \begin{bmatrix} A^T_i \\ A^T_j \end{bmatrix} \left( X^n - \widehat{X}^n\right)  +\begin{bmatrix} W_i \\ W_j \end{bmatrix} \label{eq:dist_YiYj} ,
\end{align}
where $\widehat{X}^n = \ex{ X^n \mid Y^m, A^m}$ is the signal estimate after the first $m$ measurements.

\subsubsection{Proof of Identity~\eqref{eq:id_var}} 
Starting with \eqref{eq:dist_YiYj}, we see that the conditional variance of $Y_i$  can be expressed in terms of the posterior covariance matrix of the signal:
\begin{align}
\var(Y_i \gmid Y^m, A^m, A_i) & = A_i^T \cov(X^n \gmid Y^m, A^m) A_i  + 1. \notag
\end{align}
Taking the expectation of both sides with respect to $A_i$ yields 
\begin{align}
\MoveEqLeft \bEx_{A_i} \left[ \var(Y_i \gmid Y^m, A^m, A_i) \right] \notag\\
 & = \bEx_{A_i} \left[ A_i^T \cov(X^n \gmid Y^m, A^m) A_i \right]  + 1 \notag\\
 & = \bEx_{A_i} \left[ \gtr\left( A_i A_i^T \cov(X^n \gmid Y^m, A^m) \right)  \right]  + 1 \notag\\
 & =\gtr\left( \bEx_{A_i} \left[  A_i A_i^T \right] \cov(X^n \gmid Y^m, A^m)  \right)  + 1 \notag\\
& =\frac{1}{n} \gtr\left(\cov(X^n \mid Y^m, A^m) \right)  + 1, \notag
\end{align}
where we have used the fact that $\ex{ A_i A_i^T}  = \frac{1}{n} I_{m \times m}$. 

\subsubsection{Proof of Identity~\eqref{eq:id_cov}}
Along the same lines as the conditional variance, we see that the conditional covariance between  of $Y_i$ and $Y_j$ is given by
\begin{align}
\cov(Y_i , Y_j \mid Y^m, A^m, A_i, A_j) & = A_i^T \cov(X^n \mid Y^m, A^m) A_j . \notag
\end{align}
Letting $K = \cov(X^n \mid Y^m, A^m)$, the expectation of the squared covariance with respect to $A_i$ and $A_j$ can be computed as follows:
\begin{align}
\MoveEqLeft \bEx_{A_i, A_j} \left[ \left( \cov(Y_i , Y_j \mid Y^m, A^m, A_i, A_j)  \right)^2 \right] \notag\\
& = \bEx_{A_i, A_j} \left[  \left( A_i^T K A_j \right) \left( A_j^T K A_i^T\right) \right] \notag\\
& =  \frac{1}{n} \bEx_{A_i} \left[  A_i^T K^2 A_i  \right] \notag\\
& =  \frac{1}{n} \bEx_{A_i} \left[ \gtr\left( A_i  A_i^T K^2 \right)  \right] \notag\\
& =  \frac{1}{n}\gtr\left(  \bEx_{A_i} \left[  A_i  A_i^T \right] K^2  \right) \notag\\
& =  \frac{1}{n^2}\gtr\left( K^2  \right), \notag
\end{align}
where we have used the fact that $A_i$ and $A_j$ are independent with $\ex{ A_i A_i^T}  = \frac{1}{n} I_{m \times m}$. Noting that $\gtr(K^2) = \| K\|_F^2$ completes the proof of Identity~\eqref{eq:id_cov}.

\subsubsection{Proof of Identity~\eqref{eq:id_post_var}} For this identity, observe that the measurement vectors  $(A_i,A_j)$ and the noise terms $(W_i, W_j)$ in \eqref{eq:dist_YiYj} are Gaussian and independent of the signal error. Therefore, the conditional distribution of  $(Y_i - \widehat{Y}_i, Y_j - \widehat{Y}_j)$ given $(X^n, Y^m, A^m)$ is i.i.d.\ Gaussian with mean zero and covariance 
\begin{align}
\cov\left( \begin{bmatrix} Y_i - \widehat{Y}_i \\ Y_j - \widehat{Y}_j  \end{bmatrix} \; \middle | \;  X^n,  Y^m, A^m\right)  =
 \begin{bmatrix} 1 + \SE_m& 0 \\ 0 & 1 + \SE_m\end{bmatrix}  .  \notag
\end{align}
Using the fact that the expected absolute value of standard Gaussian variable is equal to $\sqrt{2/\pi}$, we see that the conditional absolute moments are given by
\begin{gather}
\ex{ \big| Y_{i} - \widehat{Y}_i  \big | \; \middle | \;   X^n, Y^m ,A^m  }  = \sqrt{ \frac{2}{\pi} }  \sqrt{1 + \SE_m} \notag\\
\ex{ \big | Y_{i} - \widehat{Y}_i \big | \big |  Y_{j} - \widehat{Y}_j \big |  \; \middle |  \;   X^n, Y^m, A^m  }  = \frac{2}{\pi} (1 + \SE_m). \notag
\end{gather}
Taking the expectation of both sides with respect to the posterior distribution of $X^n$ given $(Y^m, A^m)$ leads to 
\begin{gather}
\ex{ \big| Y_{i} - \widehat{Y}_i  \big | \; \middle | \;  Y^m ,A^m  }  = \sqrt{ \frac{2}{\pi} } \ex{  \sqrt{1 + \SE_m} \mid Y^m, A^m } \notag\\
\ex{ \big | Y_{i} - \widehat{Y}_i \big | \big |  Y_{j} - \widehat{Y}_j \big |  \; \middle |  \;   Y^m, A^m  }  = \frac{2}{\pi} (1 + V_m). \notag
\end{gather}
Finally, we see that the conditional covariance is given by
\begin{align}
\MoveEqLeft  \cov\left( \big| Y_{i} - \widehat{Y}_i \big| , \big | Y_{j} - \widehat{Y}_j \big | \;   \middle| \;  Y^m, A^{m}  \right) \notag \\
&  = \frac{2}{\pi} \ex{  \left( \sqrt{ 1 + \SE_m}\right)^2  \; \middle | \;  Y^m, A^m}  \notag\\
&  \quad - \frac{2}{\pi}  \left( \ex{  \sqrt{1 + \SE_m}  \; \middle | \; Y^m, A^m} \right)^2  \notag\\
& = \frac{2}{\pi} \var\left( \sqrt{ 1 + \SE_m}  \mid   Y^m, A^m \right).  \notag
\end{align}
This completes the proof of identity \eqref{eq:id_post_var}.

\subsection{Proof of Lemma~\ref{lem:SE_bound_1}}\label{proof:lem:SE_bound_1}

To simplify  notation, we drop the explicit dependence on the problem dimensions and write $\SE$ and $\PSE$ instead of $\SE_{m,n}$ and $\PSE_{m,n}$. Also, we use $U = (Y^m, A^m)$ to denote the first $m$ measurements.  Using this notation, the posterior distribution is given by $P_{X^n \mid U}$ and the posterior variance is  $\PSE = \ex{ \SE \gmid U}$. 

\subsubsection{Proof of Inequality \eqref{eq:weak_dec_1}} 
To begin, let $\SE'_m$ be a conditionally independent copy of $\SE$ that is drawn according to the posterior distribution $P_{\SE \mid U}$. Starting with the fact that the posterior variance can expressed as $\PSE= \ex{ \SE' \mid U}$, the absolute deviation between $\SE$ and $\PSE$ can be upper bounded using the following series of inequalities:
\begin{align}
\ex{ \left| \SE - \PSE \right| } 
& = \bEx_{U} \Big[ \bEx \big[ \left| \SE -\ex{ \SE' \gmid U} \right| \, \big | \,  U \big] \Big] \notag\\
& \overset{(a)}{\le} \bEx_{U} \Big[ \bEx \big[ \left| \SE - \SE'  \right| \, \big | \,  U \big] \Big] \notag\\
& =  \ex{ \left| \SE - \SE'  \right| } \notag\\
 & =\bEx\bigg[  \bigg |  \left( \sqrt{1 +  \SE}  + \sqrt{1 + \SE'} \right) \notag\\
 & \qquad \times \left( \sqrt{1 +  \SE}  -\sqrt{1 + \SE'} \right)\bigg | \bigg] \notag\\
 & \overset{(b)}{\le} \sqrt{  \ex{ \left|  \sqrt{ 1 +  \SE} +  \sqrt{ 1 +  \SE' } \right|^2} } \notag \\
& \quad \times  \sqrt{ \ex{ \left|\sqrt{1 +  \SE} -\sqrt{1+  \SE'} \right|^2 }} ,\label{eq:SE_bound_1_b} 
\end{align}
where (a)  follows from Jensen's inequality and the fact that $\SE$ and $\SE'$ are conditionally independent given $U$ and (b) follows from the Cauchy-Schwarz inequality.

For the first term on the right-hand side of \eqref{eq:SE_bound_1_b}, observe that
\begin{align}
 \ex{ \left|  \sqrt{ 1 +  \SE} +  \sqrt{ 1 +  \SE' } \right|^2}
 & \overset{(a)}{\le} 2 \ex{ ( 1 +  \SE)} +  2 \ex{ (1 +  \SE' ) }  \notag\\
 & \overset{(b)}{=}  4( 1 + M_{m,n}) \notag\\
  & \le C_B ,\label{eq:SE_bound_1_c} 
\end{align} 
where (a) follows from \eqref{eq:ab2} and (b) follows from the fact that $\SE$ and $\SE'$ are identically distributed. 

For the second term on the right-hand side of \eqref{eq:SE_bound_1_b}, observe that
\begin{align}
\MoveEqLeft  \ex{ \left|\sqrt{1 +  \SE} -\sqrt{1+  \SE'} \right|^2 } \notag\\
& \overset{(a)}{=} 2  \ex{ \var( \sqrt{1 + \SE}  \gmid U)} \notag\\
 & \overset{(b)}{=} \pi \ex{   \cov\left( \left| Z_{m+1} \right| , \left|Z_{m+2}\right| \,  \big | \,  U \right)} , \label{eq:SE_bound_1_d} 
\end{align}
with $Z_{i}    =  Y_{i} - \ex{ Y_{i} \mid Y^m, A^{m}, A_{i}} $. Here,  (a) follows from the conditional variance decomposition \eqref{eq:cond_var_decomp} and (b) follows from Identity~\eqref{eq:id_post_var}. 

To bound the expected covariance, we apply Lemma~\ref{lem:cov_bnd} with $p=1$ to obtain
\begin{align}
 \MoveEqLeft  \ex{ \cov\left( \left| Z_{m+1} \right| , \left|Z_{m+2}\right| \mid U  \right)} \notag\\
 & \le  2 \sqrt{\ex{ \sqrt{   \ex{ Z_{m+1}^4 \mid U} \ex{  Z_{m+2}^4 \mid U }}} }  \notag\\
 & \quad \times  \sqrt{ I\left(|Z_{m+1} |; |Z_{m+2}|  \, \middle  | \,  U  \right) } . \label{eq:SE_bound_1_e} 
\end{align}
For the first term on the right-hand side, observe that
\begin{align}
 \ex{ \sqrt{   \ex{ Z_{m+1}^4 \gmid U} \ex{  Z_{m+2}^4 \gmid U}}} 
& \overset{(a)}{\le} \sqrt{\ex{ Z_{m+1}^4}  \ex{  Z_{m+2}^4}}  \notag \\
& \overset{(b)}{\le} C_B, \label{eq:SE_bound_1_f} 
\end{align}
where (a) follows from the Cauchy-Schwarz inequality and  (b) follows from \eqref{eq:Ybar_moment_4}.

Combining \eqref{eq:SE_bound_1_b}, \eqref{eq:SE_bound_1_c}, \eqref{eq:SE_bound_1_d}, \eqref{eq:SE_bound_1_e}, and \eqref{eq:SE_bound_1_f} yields 
  \begin{align}
  \ex{ \left| \SE - \PSE \right|} 
 \le C_B \cdot \left|  I\left(|Z_{m+1} |; |Z_{m+2}|  \, \middle  | \,   U   \right)  \right|^{\frac{1}{4}}. \notag
  \end{align}
 Thus, in order to complete the proof, we need to show that the mutual information term can be upper bounded in terms of the second order MI difference sequence. To this end, observe that
 \begin{align}
\MoveEqLeft I\left(|Z_{m+1} |; |Z_{m+2}|  \, \big  | \,   U   \right)  \notag\\
& \overset{(a)}{\le} I\left(Z_{m+1} ; Z_{m+2}  \, \middle  | \,   U  \right)  \notag\\
& \overset{(b)}{ \le} I\left(Z_{m+1}, A_{m+1} ; Z_{m+2}, A_{m+2}   \, \middle  | \,   U \right)  \notag\\
&  \overset{(c)}{=}  I\left(Y_{m+1}, A_{m+1} ; Y_{m+2}, A_{m+2}   \, \middle  | \,   U  \right) \notag \\
& \overset{(d)}{ =}  I\left(Y_{m+1}  ; Y_{m+2}   \, \middle  | \,   U, A_{m+1}, A_{m+2}  \right) \notag\\
& \overset{(e)}{=} - I''_m,  \notag
 \end{align}
 where (a) and (b) both follow from the  data processing inequality for mutual information,  (c) follows from the fact that, given $(U, A_{m+i})$, there is a one-to-one mapping between $Z_{m+i}$ and $Y_{m+i}$, (d) follows from the fact that measurements are generated independently of everything else (Assumption~1), and (e) follows from \eqref{eq:Ipp_alt}. This completes the Proof of Inequality \eqref{eq:weak_dec_1}.

\subsubsection{Proof of Inequality \eqref{eq:weak_dec_2}}
The main idea behind this proof is to combine identity \eqref{eq:id_cov} with the covariance bound in Lemma~\ref{lem:cov_bnd}. The only tricky part is that the expectation with respect to  $(A_{m+1}, A_{m+2})$ is taken with respect to the squared Frobenius norm whereas the expectation with respect to $U$ is taken with respect to the square root of this quantity. 

To begin, observe that for each realization $U =u$, we have
\begin{align}
\MoveEqLeft \frac{1}{n^2}  \left \| \cov(X^n \mid U  = u ) \right\|^2_F \notag\\
& \overset{(a)}{=} \ex{ \left| \cov\left(Y_{m+1}, Y_{m+2} \mid U= u , A_{m+1}^{m+2}  \right) \right|^2  } \notag \\
& \overset{(b)}{\le} 4  \sqrt{  \ex{ Y_{m+1}^4 \mid  U = u }\ex{ Y_{m+2}^4 \mid  U =u } } \notag\\
& \quad \times \sqrt{ I(Y_{m+1}; Y_{m+2} \mid U = u, A_{m+1}^{m+2})}, \notag 
\end{align}
where (a) follows from \eqref{eq:id_cov} and (b) follows from Lemma~\ref{lem:cov_bnd} with $p =2$ and the fact that 
\begin{multline}
\ex{  \ex{ Y_{m+1}^4 \mid  U = u, A_{m+1}^{m+2} }\ex{ Y_{m+2}^4 \mid  U =u  , A_{m+1}^{m+2}}  } \notag\\
  =  \ex{ Y_{m+1}^4 \mid  U = u }  \ex{ Y_{m+2}^4 \mid  U =u }. \notag
\end{multline}
Taking the expectation of the square root of both sides with respect to $U$ leads to
\begin{align}
\MoveEqLeft \frac{1}{n} \ex{  \left \| \cov(X^n \mid U ) \right\|_F}  \notag\\
&\overset{(a)}{\le}  4  \left| \ex{ Y_{m+1}^4 }  \ex{ Y_{m+2}^4}   I(Y_{m+1}; Y_{m+2} \mid U , A_{m+1}^{m+2})\right|^{\frac{1}{4}} \notag \\
&\overset{(b)}{\le}  C_B \cdot \left|   I(Y_{m+1}; Y_{m+2} \mid U , A_{m+1}^{m+2})\right|^{\frac{1}{4}}  \notag \\
&\overset{(c)}{=}  C_B\cdot  \left|   I''_m \right|^{\frac{1}{4}},  \notag
\end{align}
where (a) follows from the Cauchy-Schwarz inequality and Jensen's Inequality, (b) follows from \eqref{eq:Y_moment_4}, and (c) follows from  \eqref{eq:Ipp_alt}. This completes the Proof of Inequality \eqref{eq:weak_dec_2}.

\subsection{Proof of Lemma~\ref{lem:Delta_alt}} \label{proof:lem:Delta_alt}

The mutual information difference density can be decomposed as
\begin{align}
\IMID_{m,n} & = \imath(X^n; \Ybar_{m+1} \mid Y^m, A^{m+1}) \notag \\
& =  - \log\left( f_{\Ybar_{m+1} \mid Y^m, A^{m+1}}\left(\Ybar_{m+1}\right) \right)   \notag\\
& \quad -  \frac{1}{2}\log (2 \pi) - \frac{1}{2} W_{m+1}^2, \notag 
\end{align}
where  $f_{\Ybar_{m+1} \mid Y^m, A^{m+1}}(y)$  denotes the conditional density function of the centered measurement evaluated with the random data $(Y^m, A^{m+1})$.  Therefore, for  every $\sigma^2 > 0$, the Kullback--Leibler divergence between $P_{\Ybar_{m+1} \mid Y^m, A^{m+1}}$ and the Gaussian distribution $\normal(0, \sigma^2)$ can be expressed as
\begin{align}
\MoveEqLeft  D_\mathrm{KL}\left( P_{\Ybar_{m+1} \mid Y^m, A^{m+1} } \, \middle \|\, \normal(0, \sigma^2 ) \right)  \notag \\
& =  \int \left( \frac{1}{2 \sigma^2}y^2   + \frac{1}{2} \log\left ( 2\pi \sigma^2 \right ) \right)  f_{\Ybar_{m+1} \mid Y^m, A^{m+1}}(y) \dd y \notag \\
& \quad +  \int  \log\left(  f_{\Ybar_{m+1} \mid Y^m, A^{m+1}}(y)  \right)   f_{\Ybar_{m+1} \mid Y^m, A^{m+1}}(y) \dd y \notag \\
& = \frac{1}{2 \sigma^2} \ex{ \Ybar_{m+1}^2  \gmid  Y^m, A^{m+1} } + \frac{1}{2} \log(2 \pi \sigma^2) \notag\\
& \quad  - \ex{ \IMID_{m,n}  \gmid Y^m, A^{m+1}} -  \frac{1}{2} \log (2 \pi) - \frac{1}{2}. \notag
\end{align}
Taking the expectation with respect to $A_{m+1}$ and rearranging terms leads to 
\begin{align}
\MoveEqLeft  \bEx_{A_{m+1}} \left[ D_\mathrm{KL}\left( P_{\overline{Y}_{m+1} \mid Y^m, A^{m+1} } \, \middle \|\, \normal(0, \sigma^2 ) \right) \right]   \notag\\
& = \ \frac{1}{2} \log( \sigma^2) - \PMID_{m,n}   + \frac{1}{2 } \left( \frac{1 +  \PSE_{m,n}}{\sigma^2}   - 1 \right), \label{eq:Delta_alt_c} 
\end{align}
where we have used the fact that the conditional variance is given by \eqref{eq:Ybar_var_cond}. 

At this point, Identity~\eqref{eq:DeltaP_alt} follows immediately by letting $\sigma^2 =1+ V_{m,n}$. For Identity~\eqref{eq:Delta_alt},  let $\sigma^2 = 1+M_{m,n}$ and note that the expectation of the last term in \eqref{eq:Delta_alt_c} is equal to zero.

\subsection{Proof of Lemma~\ref{lem:DeltaP_bound} }\label{proof:lem:DeltaP_bound} 

The error vector $\bar{X}^n = X^n  - \ex{X^n \gmid Y^m, A^m}$ has mean zero by construction. Therefore, by \eqref{eq:Ybar_alt} and Lemma~\ref{lem:cclt}, the posterior non-Gaussianness satisfies 
\begin{align}
\Delta^P_{m,n} & \le \frac{1}{2} \ex{ \left| \SE_{m,n} - V_{m,n} \right| \, \big | \, Y^m, A^m }  \notag\\
& + C \cdot \left| \tfrac{1}{n}  \| \cov(X^n \gmid Y^m, A^m)\|_F \left(1 +  \widetilde{V}_{m,n}^2 \right)  \right|^\frac{2}{5}, \notag
\end{align}
where $\widetilde{V}_{m,n} = \sqrt{ \ex{ \SE_{m,n}^2 \gmid Y^m, A^m}}$. 
Taking the expectation of both sides and using the Cauchy-Schwarz inequality and Jensen's inequality leads to
\begin{multline}
\ex{ \Delta^P_{m,n}}  \le \frac{1}{2} \ex{ \left| \SE_{m,n} - V_{m,n} \right|} \notag \\
 + C \cdot \left| \tfrac{1}{n} \ex{ \left\|  \cov(X^n \gmid Y^m, A^m)\right\|_F} \big (1 \!+\! \sqrt{ \ex{ \SE_{m,n}^2}} \big)  \right|^\frac{2}{5}.
\end{multline}
Furthermore, combining this inequality with Lemma~\ref{lem:SE_bound_1} and \eqref{eq:SE_moment_2} gives
\begin{align}
\ex{ \Delta^P_{m,n}}  \le  C_B \cdot \left[ \left| I''_{m,n} \right|^\frac{1}{4}  +  \left| I''_{m,n} \right|^\frac{1}{10}\right].   \notag
\end{align} 
Finally, since $| I''_{m,n}|$ can be bounded uniformly by a constant that depends only on $B$, we see that the dominant term on the right-hand side is the one with the smaller exponent. This completes the proof. 

\subsection{Proof of Lemma~\ref{lem:Vdev_to_logVdev}}\label{proof:lem:Vdev_to_logVdev}

To simplify notation we drop the explicit dependence on the problem parameters and write $M$ and $V$ instead of $M_{m,n}$ and $V_{m,n}$.  The first inequality in \eqref{eq:Vdev_to_logVdev}  follows immediately from the fact that the mapping $x \mapsto \log(1 + x)$ is one-Lipschitz on $\reals_+$. 

Next, letting $\PSE'$ be an independent copy of $\PSE$, the absolute deviation of the posterior variance can be upper bounded as follows:
\begin{align}
\ex{ \left| \PSE - M\right|}  & = \ex{ \left| \PSE - \ex{ \PSE'} \right|}  \notag\\
& \overset{(a)}{\le} \ex{ \left| \PSE -  \PSE' \right|}  \notag\\
& \overset{(b)}{\le}  \sqrt{  \ex{ \left(1 + \PSE \right)^2} \ex{\left|  \log\left( \frac{ 1 + \PSE}{ 1 + \PSE'}    \right)\right| }   }, \label{eq:PSE_dev_c_a}
\end{align}
where (a) follows from Jensen's inequality and (b) follows from applying Lemma~\ref{lem:L1_to_Log} with $X = 1 + \PSE_m$ and $Y = 1 +\PSE'_m$.

The first expectation on right-hand side of \eqref{eq:PSE_dev_c_a} obeys:
\begin{align} 
\ex{ \left(1 + \PSE_m \right)^2 } \overset{(a)}{\le} 2(1 + \ex{ \PSE^2}) \overset{(b)}{\le} C_B,  \label{eq:PSE_dev_c_b}
\end{align}
where (a) follows from \eqref{eq:ab2} and (b) follows form Jensen's inequality and \eqref{eq:SE_moment_2}.

For the second expectation on the right-hand side of \eqref{eq:PSE_dev_c_a}, observe that by the triangle inequality, we have, for every $t \in \reals$, 
\begin{align}
\left|  \log\left( \frac{ 1 + \PSE}{ 1 + \PSE'}    \right)\right|  & \le \left|  \log\left( 1 + \PSE  \right) - t \right|  +  \left|  \log\left( 1 + \PSE'  \right) - t \right| .  \notag\notag
\end{align}
Taking the expectation of both sides and minimizing over $t$ yields 
\begin{align}
\ex{\left|  \log\left( \frac{ 1 + \PSE}{ 1 + \PSE'}    \right)\right| }  & \le  \min_{ t \in \reals} 2 \ex{  \left|  \log\left( 1 + \PSE  \right) - t \right| } . \label{eq:PSE_dev_c_c}
\end{align}
Plugging \eqref{eq:PSE_dev_c_a} and \eqref{eq:PSE_dev_c_b} back into \eqref{eq:PSE_dev_c_c} completes the proof of Lemma~\ref{lem:Vdev_to_logVdev}.

\subsection{Proof of Lemma~\ref{lem:post_var_smoothness}}\label{proof:lem:post_var_smoothness}
 Let $U = (Y^m, A^m)$ and $U_k= (Y_{m+1}^{m+k}, A_{m+1}^{m+k})$.  We use the fact that the posterior variance can be expressed in terms of the expected variance of a future measurement. Specifically, by  \eqref{eq:id_var}, it follows that for any integer $i > m + k$, we have
\begin{align}
\PSE_{m}  & =  \bEx_{A_{i}}\left[  \var\left( Y_{i}  \gmid U,  A_{i}\right)  \right] - 1 \notag\\
\PSE_{m+k}  & =  \bEx_{A_{i}}\left[ \var\left( Y_{i}  \gmid U, U_k, A_{i}\right) \right] - 1. \notag
\end{align}
Accordingly, the expectation of the absolute difference can be upper bounded as
\begin{align}
\MoveEqLeft \ex{ \left|  \PSE_{m+k} - \PSE_m  \right|} \notag \\
& =  \ex{ \left |  \bEx_{A_{i}} \left[ \var(Y_{i} \gmid U, U_k,  A_{i}) - \var(Y_{i} \gmid U, A_{i}) \right] \right|} \notag \\
& \overset{(a)}{\le}  \ex{ \left |  \var(Y_{i} \gmid U, U_k, A_{i}) -  \var(Y_{i} \gmid U, A_{i})  \right|}  \notag\\
&  \overset{(b)}{\le}  C \cdot \sqrt{\ex{Y^4_{i}}   \ex{ D_\text{KL}(P_{Y_{i} \mid U, U_k, A_i} \, \middle \| \,  P_{Y_{i} \mid U, A_i}) }}  \notag\\
&  \overset{(c)}{\le}   C_B \cdot  \sqrt{  \ex{ D_\text{KL}(P_{Y_{i} \mid U, U_k, A_i} \, \middle \| \,  P_{Y_{i} \mid U, A_i}) }} ,
 \label{eq:post_var_smoothness_b} 
\end{align}
where (a) follows from Jensen's inequality,  (b) follows from Lemma~\ref{lem:var_bnd} with $p =1$, and (c) follows from \eqref{eq:Y_moment_4}.

Next, the expected Kullback--Leibler divergence can be expressed in terms of a conditional mutual information,
\begin{align}
\MoveEqLeft \ex{ D_\text{KL}(P_{Y_{i} \mid U, U_k, A_i} \, \middle \| \,  P_{Y_{i} \mid U, A_i}) } \notag \\
& = I(Y_{i} ; U_k \mid  U, A_{i}) \notag\\
&\overset{(a)}{=} I(Y_{i} ; Y_{m+1}^{m+k} \gmid Y^m , A^{m+k}, A_i) \notag\\
& \overset{(b)}{=}h(Y_{i}   \gmid  Y^m , A^{m}, A_i) - h(W_i)  \notag \\
& \quad - h(Y_{i}   \gmid  Y^{m+k} , A^{m+k}, A_i) +  h(W_i)  \notag\\
& \overset{(c)}{=}h(Y_{m+1}   \mid  Y^m , A^{m+1}) - h(W_{m+1})  \notag \\
& \quad - h(Y_{m+k + 1}   \gmid  Y^{m+k} , A^{m+k+1}) + h(W_{m+k}) \notag \\
& \overset{(d)}{=}I(X^n ; Y_{m+1}  \gmid Y^m , A^{m+1})  \notag\\
& \quad - I(X^n; Y_{m+k+1}   \gmid  Y^{m+k} , A^{m+k+1})  \notag\\
& \overset{(e)}{=} I'_{m,n} - I'_{m+k,n}   \label{eq:post_var_smoothness_c} 
\end{align}
where (a) follows from the definitions of $U$ and $U_k$ and the fact that the measurements are independent of everything else, (b) follows from expanding the mutual information in terms of the differential entropy of $Y_i$, (c) follows from the fact that future measurements are identically distributed given the past, (d) follows from the fact that $h(W_{m} ) = h(Y_m \gmid X^n, A^m  )$, and (e) follows from \eqref{eq:Ip_alt}.

Combining \eqref{eq:post_var_smoothness_b}  and \eqref{eq:post_var_smoothness_c}, we see that the following inequality holds for all integers $m$ and $k$, 
\begin{align}
\ex{ \left| V_{m,n} - V_{k,n} \right| }  \le C_B \cdot \left|  I'_{m,n} - I'_{k,n} \right|^\frac{1}{2}.  \notag
\end{align}
Moreover, we can now bound the deviation over $\ell$ measurements using
\begin{align}
  \frac{1}{\ell}  \sum_{i=m}^{m+\ell-1}  \ex{ \left | V_m - V_k   \right|  } 
&  \le  C_B \cdot \frac{1}{\ell}  \sum_{k=m}^{m + \ell -1}  \left| I'_m - I'_{k}    \right|^\frac{1}{2}  \notag \\
  &\le  C_B \cdot   \left|  I'_{m} - I'_{m + \ell-1}  \right|^\frac{1}{2},    \notag
\end{align}
where the second inequality follows from the fact that $I'_{m,n}$ is non-increasing in $m$ (see Section~\ref{sec:multivariate_MI_MMSE}). This completes the proof of  Lemma~\ref{lem:post_var_smoothness}.

\subsection{Proof of Lemma~\ref{lem:PMID_var}}\label{proof:lem:PMID_var}

Starting with the triangle inequality, the sum of the posterior MI  difference satisfies, for all $t \in \reals$, 
\begin{align}
 \left| \sum_{i =m}^{m+ \ell-1} \PMID_i - t \right|  \le   \left| \sum_{i =m}^{m+\ell-1} \PMID_{i}  - \IMID_i \right|  +   \left| \sum_{i =m}^{m+\ell-1} \IMID_{i}  - t \right|. \notag
\end{align}
Taking the expectation of both sides and minimizing over $t$ leads to
\begin{multline}
 \inf_{t \in \reals} \ex{  \left| \sum_{i =m}^{m+ \ell-1} \PMID_i - t \right|} 
\le  \inf_{t \in \reals} \ex{ \left| \sum_{i =m}^{m+\ell-1} \IMID_{i}  - t \right|} \\+  \ex{ \left| \sum_{i =m}^{m+\ell-1} \IMID_{i}  - \PMID_i \right|}  \label{eq:PMID_var_b} .
\end{multline}

For the first term in \eqref{eq:PMID_var_b}, observe that
\begin{align}
\inf_{t \in \reals} \ex{ \left| \sum_{i =m}^{m+\ell-1} \IMID_i  - t \right|}  
& \overset{(a)}{\le}  \inf_{t \in \reals}  \sqrt{ \ex{ \left| \sum_{i =m}^{m+\ell-1} \IMID_{i}  - t \right|^2} }   \notag\\
& =   \sqrt{ \var\left(  \sum_{i=m}^{m+\ell-1} \IMID_{i}  \right) } , \label{eq:PMID_var_c}
\end{align}
where (a) follows from Jensen's inequality.  Furthermore, the variance obeys the upper bound
\begin{align}
\var\left(  \sum_{i =m}^{m+\ell-1} \IMID_{i}  \right) & =  \var\left(  \sum_{i = 0}^{m + \ell -1} \IMID_i -  \sum_{i = 0}^{m-1} \IMID_{i}  \right) \notag\\
& \overset{(a)}{\le} 2 \var\left(  \sum_{i = 0}^{m + \ell -1} \IMID_i \right) + 2 \var\left( \sum_{i = 0}^{m-1} \IMID_{i}  \right)  \notag \\
& \overset{(b)}{=} 2 \var\left( \imath\left( X^n ; Y^{m+\ell} \mid A^{m+\ell}  \right) \right) \notag \\
& \quad + 2 \var\left( \imath\left( X^n ; Y^m \mid A^m \right)   \right)  \notag\\
& \overset{(c)}{\le}   C_B \cdot \left( 1  + \tfrac{m+\ell}{n} \right)^2 n   +   C_B \cdot \left( 1  + \tfrac{m}{n} \right)^2 n  \notag\\
& \le  C'_B \cdot \left( 1  + \tfrac{m+\ell}{n} \right)^2 n  \notag
\end{align}
where (a) follows from \eqref{eq:ab2}, (b) follows from the definition of $\IMID_m$, 
and (c) follows from Lemma~\ref{lem:IMI_var}. Plugging this bound back into \eqref{eq:PMID_var_c} gives
\begin{align}
\inf_{t \in \reals} \ex{ \left| \sum_{i =m}^{m+\ell-1} \IMID_i  - t \right|}  
& \le  C_B \cdot \left( 1  + \frac{m+\ell}{n} \right) \sqrt{ n } \label{eq:PMID_var_d}.
\end{align}

Next, we consider the second term in \eqref{eq:PMID_var_b}.  Note that $\IMID_m$ can be expressed explicitly as follows:
\begin{align}
\IMID_m & = \log\left( \frac{f_{Y_{m+1}|X^n , Y^m,A^{m+1}}( Y_{m+1}  \mid  X^n,  Y^m,A^{m+1} )}{f_{Y_{m+1}|Y^m, A^{m+1}}(Y_{m+1} \mid  Y^m,  A^{m+1} )}    \right) \notag\\
& =  - \log\left( f_{Y_{m+1}|Y^m, A^{m+1}}(Y_{m+1} \mid  Y^m,  A^{m+1} )    \right) \notag\\
& \quad  - \frac{1}{2} W^2_{m+1}   - \frac{1}{2} \log( 2\pi) . \notag
\end{align}
To proceed, we define the random variables 
\begin{align}
\cH_m & \triangleq  - \log\left( f_{Y_{m+1}|Y^m, A^{m+1}}(Y_{m+1} \mid  Y^m,  A^{m+1} )   \right) \notag\\
\widehat{\cH}_m & \triangleq \ex{ \cH_m \mid Y^m, A^m} , \notag
\end{align}
and observe that
\begin{align}
\IMID_m & = \cH_m -  \frac{1}{2} W^2_{m+1}   - \frac{1}{2} \log( 2\pi)  \notag\\
\PMID_m & =  \widehat{\cH}_m -  \frac{1}{2}   - \frac{1}{2} \log( 2\pi) . \notag
\end{align}
Using this notation, we can now write 
\begin{align}
\MoveEqLeft \ex{ \left| \sum_{i =m}^{m+\ell-1} \IMID_{i}  - \PMID_i \right|} \notag \\
& =  \ex{ \left| \sum_{i =m}^{m+\ell-1} \cH_i - \widehat{\cH}_i   + \frac{1}{2}(W_{i}^2 - 1)  \right|}  \notag\\
& \overset{(a)}{\le}  \ex{ \left| \sum_{i =m}^{m+\ell-1} \cH_i - \widehat{\cH}_i  \right|} +   \ex{ \left| \sum_{i =m}^{m+\ell-1}\frac{1}{2}(W_{i}^2 - 1)  \right|} \notag \\ 
& \overset{(b)}{\le}  \sqrt{  \ex{ \left( \sum_{i =m}^{m+\ell-1} \cH_i - \widehat{\cH}_i  \right)^2}}  \notag \\
& \quad +  \sqrt{  \ex{ \left( \sum_{i =m}^{m+\ell-1}\frac{1}{2}(W_{i}^2 - 1)  \right)^2}},  \label{eq:PMID_var_e}
\end{align}
where (a) follows from the triangle inequality and (b) follows from Jensen's inequality. For the first term on the right-hand side, observe that square of the sum can be expanded as follows: 
\begin{align}
\MoveEqLeft  \ex{ \left( \sum_{i =m}^{m+\ell-1} \cH_i - \widehat{\cH}_i  \right)^2}  \notag \\
&  =\sum_{i =m}^{m+\ell-1}   \ex{ \left(  \cH_i - \widehat{\cH}_i  \right)^2}  \notag \\
 & \quad + 2 \sum_{i =m}^{m+\ell-1} \sum_{j = i+1}^{m+ \ell-1}  \ex{ \left(  \cH_i - \widehat{\cH}_i  \right)\left(  \cH_j - \widehat{\cH}_j \right)}. \label{eq:PMID_var_f}
\end{align}

To deal with the first term on the right-hand side of \eqref{eq:PMID_var_f}, observe that
\begin{align}
\ex{ \left(  \cH_i - \widehat{\cH}_i  \right)^2}  & \overset{(a)}{ =}\ex{  \var(  \cH_i \mid Y^i, A^i )} \notag \\
&\overset{(b)}{ \le} \ex{  \var(  \cH_i ) }  \notag\\
& \le \ex{ \left(  \cH_i  - \frac{1}{2} \log(2 \pi) \right)^2 }, \notag
\end{align}
where (a) follows from the definition of $\widehat{\cH}_i$ and (b) follows from the law of total variance \eqref{eq:law_tot_var}. To bound the remaining term, let $U = (Y^m, A^m)$ and let $\widetilde{X}^n_{u}$ be drawn according to the posterior distribution of $X^n$ given $U = u$. Then, the density of $Y_{m+1}$ given $(U, A_{m+1})$ can be bounded as follows:
\begin{align}
\frac{1}{\sqrt{2 \pi}} &\ge  f_{Y_{m+1}|U, A_{m+1}}(y_{m+1} \mid  u,  a_{m+1})  \notag\\
& =   \bEx_{\tilde{X}_u^n} \left[ \frac{1}{\sqrt{2 \pi}}\exp\left( - \frac{1}{2} (y_{m+1} - \langle a_{m+1}, \widetilde{X}_u^n \rangle)^2   \right)     \right]  \notag\\
& \overset{(a)}{\ge}  \frac{1}{\sqrt{2 \pi}} \exp\left( - \frac{1}{2} \bEx_{\tilde{X}_u^n} \left[ (y_{m+1} - \langle a_{m+1}, \widetilde{X}_u^n \rangle)^2   \right]  \right)    \notag \\
& \overset{(b)}{\ge}  \frac{1}{\sqrt{2 \pi}} \exp\left( - y_{m+1}^2  - \bEx_{\tilde{X}_u^n} \left[ ( \langle a_{m+1}, \widetilde{X}_u^n \rangle)^2   \right]  \right) ,    \notag
\end{align}
where (a) follows from Jensen's inequality and the convexity of the exponential and (b) follows from \eqref{eq:ab2}. Using these bounds, we obtain\begin{align}
\MoveEqLeft \ex{  \left(  \cH_i  - \frac{1}{2} \log(2 \pi) \right)^2}  \notag\\
& \le \ex{  \left( Y_{m+1}^2  +  \bEx_{\tilde{X}_U^n} \left[ \left( \langle A_{m+1}, \widetilde{X}_U^n \rangle  \right)^2\right]\right)^2}  \notag\\
& \overset{(a)}{\le} 2 \ex{  Y_{m+1}^4}   + 2 \ex{ \left(  \bEx_{\tilde{X}_U^n} \left[ \left( \langle A_{m+1}, \widetilde{X}_U^n \rangle  \right)^2\right]\right)^2} \notag\\
& \overset{(b)}{\le} 2 \ex{  Y_{m+1}^4}   + 2 \ex{  \left( \langle A_{m+1}, X^n \rangle \right)^4} \notag\\
& \le 4\ex{  Y_{m+1}^4} \notag\\
& \overset{(c)}{\le} C_B,  \notag
\end{align} 
where (a) follows from \eqref{eq:ab2}, (b) follows from Jensen's inequality and the fact that $\widetilde{X}^n_U$ as the same distribution as $X^n$ and is independent of $A_{m+1}$, and (c) follows from \eqref{eq:Y_moment_4}.

To deal with the second term on the right-hand side of \eqref{eq:PMID_var_f},  note that  $\cH_i$ and $\widehat{\cH}_i$ are determined by $(Y^{i+1}, A^{i+1})$, and thus, for all $j > i$, 
\begin{align}
\MoveEqLeft \ex{ \left(  \cH_i - \widehat{\cH}_i  \right)\left(  \cH_j - \widehat{\cH}_j \right) \mid Y^{i+1}, A^{i+1} }  \notag\\
& = \left(  \cH_i - \widehat{\cH}_i  \right)  \ex{ \left(  \cH_j - \widehat{\cH}_j \right) \mid Y^{i+1}, A^{i+1} }  = 0 . \notag
\end{align}
Consequently, the cross terms in the expansion of \eqref{eq:PMID_var_f} are equal to zero, and the first term on the right-hand side of \eqref{eq:PMID_var_e} obeys the upper bound
\begin{align}
\sqrt{ \ex{ \left( \sum_{i =m}^{m+\ell-1} \cH_i - \widehat{\cH}_i  \right)^2} }  &\le C_B \cdot \sqrt{\ell}   \label{eq:PMID_var_g}.
\end{align}

As for the second term on the right-hand side of \eqref{eq:PMID_var_e}, note that $\sum_{i =m}^{m+\ell-1}W_{i}^2$ is chi-squared with $\ell$ degrees of freedom, and thus
\begin{align}
\sqrt{  \ex{ \left( \sum_{i =m}^{m+\ell-1}\frac{1}{2}(W_{i}^2 - 1)  \right)^2}} & = \sqrt{ \frac{\ell}{2} }  \label{eq:PMID_var_h}.
\end{align} 

Plugging \eqref{eq:PMID_var_g} and \eqref{eq:PMID_var_h} back in to \eqref{eq:PMID_var_e} leads to
\begin{align}
\ex{ \left| \sum_{i =m}^{m+\ell-1} \IMID_{i}  - \PMID_i \right|} & \le C_B  \cdot \sqrt{\ell} .  \notag
\end{align}
Finally, combining this inequality with \eqref{eq:PMID_var_b} and \eqref{eq:PMID_var_d} gives
\begin{align}
  \inf_{t \in \reals} \ex{  \left| \sum_{i =m}^{m+ \ell-1} \PMID_i - t \right|} & \le C_B  \cdot \left( \left( 1 + \frac{m+\ell}{n}  \right)  \sqrt{n} + \sqrt{\ell} \right) \notag\\
  & \le C'_B  \cdot \left( \left( 1 + \frac{m}{n}  \right)  \sqrt{n} + \frac{\ell}{\sqrt{n}}  \right), \notag
\end{align}
where the last step follows from keeping only the dominant terms.  This completes the proof of Lemma~\ref{lem:PMID_var}.

\subsection{Proof of Lemma~\ref{lem:post_var_dev_bound}} \label{proof:lem:post_var_dev_bound}

%



Fix any $(m,n, \ell) \in \integers^3$. We begin with the following decomposition, which follows from the triangle inequality: 
\begin{align}
 \MoveEqLeft \inf_{t \in \reals } \ex{ \frac{1}{2} \log(1 + V_m) - t } \notag\\
&\le   \ex{ \left |\frac{1}{2}  \log(1 + V_m)  - \frac{1}{\ell} \sum_{k=m}^{m+\ell+1}\frac{1}{2} \log\left(1 +V_k  \right)  \right|  } \notag \\
& \quad + \ex{ \left| \frac{1}{\ell} \sum_{k=m}^{m+ k+1}  \frac{1}{2} \log\left(1 +V_k  \right)   - \frac{1}{k} \sum_{i=m}^{m+\ell+1} \PMID_k  \right| }  \notag\\
& \quad + \inf_{t \in \reals}  \ex{ \left| \frac{1}{\ell} \sum_{i=m}^{m+ \ell+1}  \PMID_k - t  \right| }. \label{eq:PSE_dev} 
\end{align}

The first term on the right-hand side of \eqref{eq:PSE_dev}  can be bounded in terms of the smoothness of the mutual information  given in Lemma~\ref{lem:post_var_smoothness}. We use the following chain of inequalities:
\begin{align}
 \MoveEqLeft \ex{ \left | \log(1 + V_m)  - \frac{1}{\ell} \sum_{k=m}^{m +\ell-1} \log\left(1 +V_k \right)  \right|  } \notag \\
 & \overset{(a)}{\le} \frac{1}{\ell} \sum_{k=m}^{m+ \ell-1}\ex{ \left | \log(1 + V_m)  - \log\left(1 +V_k \right)  \right|  }  \notag\\
 & \overset{(b)}{\le} \frac{1}{\ell}  \sum_{k=m}^{m+\ell-1}  \ex{ \left | V_m - V_k   \right|  }  \notag\\
  &\overset{(c)}{\le}  C_B \cdot   \left|  I'_{m} - I'_{m + \ell -1}  \right|^\frac{1}{2}    \label{eq:PSE_dev_c} 
\end{align}
where (a) follows from Jensen's inequality, (b) follows from the fact that the mapping $\log(1 +x)  \to x$ is one-Lipschitz on $\reals_+$,  and (c) follows from Lemma~\ref{lem:post_var_smoothness}.

The second term on the right-hand side of \eqref{eq:PSE_dev} is bounded by the relationship between the posterior variance and posterior mutual information difference:
\begin{align}
\MoveEqLeft \ex{ \left| \frac{1}{\ell} \sum_{k=m}^{m+\ell-1} \frac{1}{2} \log\left(1 +V_k  \right)   - \frac{1}{\ell} \sum_{k=m}^{m+\ell-1} \PMID_k  \right| }  \notag\\
& \overset{(a)}{=} \frac{1}{\ell} \sum_{k=m}^{m+ \ell -1} \ex{\Delta_k^P }  \notag\\
& \overset{(b)}{\le}  C_B \cdot \frac{1}{\ell} \sum_{k=m}^{m+\ell-1} \left| I''_k \right|^\frac{1}{10} \notag\\
& \overset{(c)}{\le}   C_B \cdot  \left|\frac{1}{\ell} \sum_{k=m}^{m+\ell-1}\left| I''_k \right| \right|^\frac{1}{10}   \notag\\
& =  C_B \cdot  \left|\frac{1}{k} ( I'_{m+k} - I'_{m} )  \right|^\frac{1}{10}   \notag\\
& \overset{(d)}{\le}   C'_B \cdot \ell^{-\frac{1}{10}  } \label{eq:PSE_dev_d} ,
\end{align}
where (a) follows from Identity~\eqref{eq:DeltaP_alt}, (b) follows from Lemma~\ref{lem:DeltaP_bound}, (c) follows from Jensen's inequality and the non-positiviity of $I''_m$ and (d) follows from the fact that $I'_m$ is  bounded by a constant that depends only on $B$ (see Section~\ref{sec:multivariate_MI_MMSE}). 

Finally, the third term  on the right-hand side of \eqref{eq:PSE_dev} is bounded by Lemma~\ref{lem:PMID_var}. Plugging \eqref{eq:PSE_dev_c}  and \eqref{eq:PSE_dev_d} back into \eqref{eq:PSE_dev} leads to
\begin{multline}
 \inf_{t \in \reals } \ex{ \frac{1}{2} \log(1 + V_m) - t } \le C_B \cdot  \left|  I'_{m} - I'_{m + \ell -1}  \right|^\frac{1}{2} \\ 
+ C_B \cdot   \left[\ell^{-\frac{1}{10}} +  \left(1 + \frac{m  }{n} \right)  \frac{ \sqrt{n}}{\ell}  + \frac{1}{ \sqrt{n}} \right]. \notag
\end{multline}
Combining this inequality with Lemma~\ref{lem:Vdev_to_logVdev} completes the proof of Lemma~\ref{lem:post_var_dev_bound}.

\subsection{Proof of Lemma~\ref{lem:Delta_bound}}\label{proof:lem:Delta_bound} 

Fix any $(m,n, \ell) \in \integers^3$. Combining  Identity \eqref{eq:Delta_decomp}, with Lemmas~\ref{lem:DeltaP_bound}  and \ref{lem:post_var_dev_bound} yields 
\begin{align}
\Delta_{m,n} & =  \ex{ \Delta^P_{m,n}}   + \frac{1}{2}  \ex{\log\left( \frac{  1  +  M_{m,n}}{ 1 + V_{m,n}} \right)} \notag\\
& \begin{multlined}[b] 
\le  C_B \cdot  \Big[ \left| I''_{m,n} \right|^\frac{1}{10}    +  \left | I'_{m,n} - I'_{m+ \ell + 1, n}  \right|^\frac{1}{4} \\
+ \ell^{-\frac{1}{20}}+ \left(1 + \tfrac{m}{n} \right)^\frac{1}{2}   n^\frac{1}{4} \ell^{-\frac{1}{2}}  + n^{-\frac{1}{4}} \Big]. \label{eq:Delta_bound}
\end{multlined}
\end{align}
For the specific choice of $\ell = \lceil n^\frac{5}{6} \rceil$, we have
\begin{align}
 \ell^{-\frac{1}{20}}+ \left(1 + \tfrac{m}{n} \right)^\frac{1}{2}   n^\frac{1}{4} \ell^{-\frac{1}{2}} &  \le n^{-\frac{1}{24}} +   \left(1 + \tfrac{m}{n} \right)^\frac{1}{2}   n^{-\frac{1}{6} }  \notag
\\
 & \le  2\left(1 + \tfrac{m}{n} \right)^\frac{1}{2}   n^{-\frac{1}{24} }. \notag
\end{align}
%
%
Plugging this inequality back into \eqref{eq:Delta_bound} completes the proof of Lemma~\ref{lem:Delta_bound}.

\section{Proofs of Results in Section~\ref{proof:thm:MMSE_fixed_point}}

\subsection{Proof of Lemma~\ref{lem:M_to_M_aug}}\label{proof:lem:M_to_M_aug}

Recall that $M_m$ is the expectation of the posterior variance $V_m$. Therefore, the difference between $M_{m+1}$ and $M_m$ can bounded as follows: 
\begin{align}
\left| M_{m+1}  - M_m \right|  & = \left| \ex{ V_{m+1}  - V_m}  \right|  \notag\\
& \overset{(a)}{\le} \ex{ \left | V_{m+1} - V_{m} \right|} \notag\\
& \overset{(b)}{\le} C_B \cdot \sqrt{I_m''}, \label{eq:MMSE_smooth}
\end{align}
where (a) follows from Jensen's inequality and (b) follows from Lemma~\ref{lem:post_var_smoothness}. Combining \eqref{eq:MMSE_smooth} with the sandwiching relation \eqref{eq:tildeM_sandwich} leads to \eqref{eq:M_to_M_aug}. 
%
%
%
%

\subsection{Proof of Lemma~\ref{lem:MMSE_aug_alt}}\label{proof:lem:MMSE_aug_alt} 

 Let $Q$ be a random matrix distributed uniformly on the set of $(m+1) \times (m+1)$ orthogonal matrices and define the rotated augmented measurements:
\begin{align}
\widetilde{Y}^{m+1} &= Q  \begin{bmatrix} Y^{m} \\ Z_{m+1}  \end{bmatrix} , \qquad \widetilde{A}^{m+1} = Q \begin{bmatrix} A^{m} & \bm{0}_{m \times 1}  \\ A_{m+1} & \sqrt{G_{m+1}} \end{bmatrix}. \notag
\end{align}
Since multiplication by $Q$ is a one-to-one transformation, the augmented MMSE can be expressed equivalently in terms of the rotated measurements:
\begin{align}
\widetilde{M}_m & \triangleq  \frac{1}{n}  \mmse(X^n    \mid Y^{m},  A^m, \cD_{m+1} ) \notag\\
& = \frac{1}{n}  \mmse(X^n \mid \widetilde{Y}^{m+1}, \widetilde{A}^{m+1}). \label{eq:tildeM_alt_a} 
\end{align}

\begin{lemma}\label{lem:tildeA}
The entries of the $(m+1) \times (n+1)$ random matrix $\widetilde{A}^{m+1}$ are i.i.d.\ Gaussian $\normal(0,1/n)$. 
\end{lemma}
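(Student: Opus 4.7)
The plan is to verify that the matrix $\widetilde{A}^{m+1} = Q B$, where $B$ is the inner block matrix $\bigl[\begin{smallmatrix} A^{m} & \bm{0} \\ A_{m+1} & \sqrt{G_{m+1}} \end{smallmatrix}\bigr]$ from the definition of $\widetilde{A}^{m+1}$, has i.i.d.\ $\normal(0,1/n)$ entries by analyzing its first $n$ columns and its last column separately and then establishing joint independence. Write $A = \bigl[\begin{smallmatrix} A^m \\ A_{m+1} \end{smallmatrix}\bigr]$ and $c = \sqrt{G_{m+1}}\, e_{m+1} \in \reals^{m+1}$, so that the first $n$ columns of $\widetilde{A}^{m+1}$ are $QA$ and the last column is $Qc$. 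Under Assumption~1, $A$ has i.i.d.\ $\normal(0,1/n)$ entries and is independent of both $Q$ and $G_{m+1}$. Because the conditional distribution of $qA$ equals that of $A$ for every orthogonal $q$ (rotational invariance of the isotropic Gaussian), one obtains $QA \overset{d}{=} A$ and, more strongly, $QA$ is \emph{independent} of $Q$.

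Next, I would analyze the last column $Qc = \sqrt{G_{m+1}}\, q_{m+1}$, where $q_{m+1}$ denotes the last column of $Q$. Since $Q$ is Haar-distributed on $O(m+1)$, the vector $q_{m+1}$ is uniformly distributed on the unit sphere $S^{m} \subset \reals^{m+1}$ and is independent of $G_{m+1}$. Thus $Qc$ has the distribution of $R \cdot U$ with $R \sim n^{-1/2}\chi_{m+1}$ and $U$ uniform on $S^{m}$, taken independently. The classical polar decomposition of an $\normal(0, \tfrac{1}{n} I_{m+1})$ random vector $Z$ gives $\|Z\| \sim n^{-1/2}\chi_{m+1}$ independent of the uniform direction $Z/\|Z\|$, so $Qc \overset{d}{=} \normal(0, \tfrac{1}{n} I_{m+1})$.

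To conclude, I would verify that $QA$ and $Qc$ are jointly independent. Since $QA$ is independent of $Q$ (shown above) and is independent of $G_{m+1}$ (as the pair $(A,Q)$ is independent of $G_{m+1}$), it is independent of the entire pair $(Q, G_{m+1})$; in particular, $QA$ is independent of $Qc$, which is a measurable function of $(Q, G_{m+1})$. Combining the two marginal characterizations with joint independence then yields that all $(m+1)(n+1)$ entries of $\widetilde{A}^{m+1}$ are i.i.d.\ $\normal(0,1/n)$.

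No substantive obstacle is anticipated: the argument is a careful bookkeeping of the orthogonal invariance of the isotropic Gaussian together with the classical representation of a spherical Gaussian as an independent product of its chi-distributed norm and its uniformly distributed direction. The only mild subtlety is establishing \emph{joint} independence of the two column blocks of $\widetilde{A}^{m+1}$, which requires conditioning on $Q$ rather than merely checking marginal Gaussianity.
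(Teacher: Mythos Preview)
Your proposal is correct and follows essentially the same approach as the paper: treat the first $n$ columns via rotational invariance of the isotropic Gaussian and the last column via the polar decomposition of a Gaussian vector (chi-distributed radius times uniform direction). You are, if anything, more careful than the paper about the joint independence of the two column blocks; the paper simply notes that the first $n$ columns are ``independent of $Q$'' and leaves the rest implicit.
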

\begin{proof}
The first $n$ columns are i.i.d.\ Gaussian $\normal(0, \frac{1}{n} I_{m+1})$ and independent of $Q$ because of the rotational invariance of the i.i.d.\ Gaussian distribution on $A^{m+1}$. The last column of $ \widetilde{A}^{m+1}$ is equal to the product of $\sqrt{G_{m+1}}$ and the last column of $Q$. Since $G_{m+1}$ is proportional to a chi random variable with $m+1$ degrees of freedom and $Q$ is distributed uniformly on the Euclidean sphere of radius one, the last column is also Gaussian $\normal(0, \frac{1}{n} I_{m+1})$; see e.g.\ \cite[Theorem~2.3.18]{gupta:1999}. 
\end{proof}

The key takeaway from Lemma~\ref{lem:tildeA}, is that the distribution on the columns of $\widetilde{A}^{m+1}$ is permutation invariant. Since the distribution on the entries of $X^{n+1}$ is also permutation invariant, this means that the MMSEs of the signal entries are identical, i.e.,
\begin{align}
 \mmse(X_{i} \mid \widetilde{Y}^{m+1}, \widetilde{A}^{m+1}) =  \mmse(X_{j} \mid \widetilde{Y}^{m+1}, \widetilde{A}^{m+1}), \notag
\end{align}
for all $i,j \in [n+1]$. Combining this fact with \eqref{eq:tildeM_alt_a}, we see that the augmented MMSE can be expressed equivalently as
\begin{align}
\widetilde{M}_m
& = \mmse(X_{n+1}  \mid \widetilde{Y}^{m+1}, \widetilde{A}^{m+1}) \notag\\
& = \mmse(X_{n+1} \mid   Y^{m},  A^m, \cD_{m+1}) , \notag
\end{align}
where the last step follows, again, from the fact that multiplication by $Q$ is a one-to-one transformation of the data. This completes the proof of  Lemma~\ref{lem:MMSE_aug_alt}.

\subsection{Proof of Lemma~\ref{lem:MMSE_aug_bound}}\label{proof:lem:MMSE_aug_bound}

This proof is broken into two steps. First, we show that the augmented MMSE satisfies the inequality, 
\begin{align}
 \left|  \widetilde{M}_m  - \ex{  \mmse_X\left( \frac{G_{m+1}}{ 1+ M_m}  \right) }   \right| & \le  C_B \cdot \sqrt{\Delta_m}. \label{eq:M_aug_bound_2} 
\end{align}
Then,  we use the smoothness of the of the single-letter MMSE function $\mmse_X(s)$ to show that
\begin{align}
\MoveEqLeft  \left|  \ex{  \mmse_X\left( \frac{G_{m+1}}{ 1+ M_m}  \right) }  -\mmse_X\left( \frac{m/n}{ 1+ M_m}  \right)   \right|  \notag\\
& \le  C_B  \frac{1 + \sqrt{m}}{n}. \label{eq:M_aug_bound_3}
\end{align}
The proofs of  Inequalities \eqref{eq:M_aug_bound_2} and \eqref{eq:M_aug_bound_3} are given in the following subsections.

\subsubsection{Proof of Inequality~\eqref{eq:M_aug_bound_2}} 
The centered augmented measurement  $\Zbar_{m+1}$ is defined by
\begin{align}
\Zbar_{m+1}  = Z_{m+1} - \ex{ Z_{m+1} \mid  Y^m, A^{m+1} }. \notag
\end{align}
Since $G_{m+1}$ and $X_{n+1}$ are independent of the first $m$ measurements, $\Zbar_{m+1}$ can also be expressed as
\begin{align}
\Zbar_{m+1}  = \sqrt{G_{m+1}} X_{n+1}   + \Ybar_{m+1}, \notag
\end{align}
where $\Ybar_{m+1} = Y_{m+1} - \ex{Y_{m+1} \mid Y^m, A^{m+1}}$ is the centered measurement introduced in Section~\ref{sec:Gaussiannness}. Starting with \eqref{eq:M_aug_alt},  we see that the augmented MMSE can expressed as 
\begin{align}
\widetilde{M}_m & = \mmse( X_{n+1} \gmid Y^m\!, A^m\!,  \Zbar_{m+1} , A_{m+1}, G_{m+1}),  \label{eq:mmse_aug_b} 
\end{align}
where we have used the fact that there is a one-to-one mapping between  $\Zbar_{m+1}$ and $Z_{m+1}$.  

The next step of the proof is to address the extent to which the MMSE in \eqref{eq:mmse_aug_b} would differ if the `noise' term $\Ybar_{m+1}$ were replaced by an independent Gaussian random variable with the same mean and variance. To make this comparison precise, recall that $\ex{ \Ybar_{m+1}} = 0$ and $\var(\Ybar_{m+1}) = 1+ M_m$, and let $Z^*_{m+1}$ be defined by
\begin{align}
Z^*_{m+1}  = \sqrt{G_{m+1}} X_{n+1}  +  Y_{m+1}^*, \notag
\end{align} 
where $Y_{m+1}^* \sim \normal(0,1+M_m)$ is independent of everything else. Note that the MMSE of $X_{n+1}$  with $\Zbar_{m+1}$ replaced by $Z^*_{m+1}$ can be characterized explicitly  in terms of the single-letter MMSE function:
\begin{align}
\MoveEqLeft \mmse( X_{n+1} \mid Y^m, A^m,  Z^*_{m+1} ,A_{m+1}, G_{m+1})  \notag\\
& \overset{(a)}{=} \mmse( X_{n+1} \mid Z^*_{m+1} , G_{m+1})  \notag\\
& = \ex{  \mmse_X\left( \frac{G_{m+1}}{ 1+ M_m}  \right) } \label{eq:mmse_aug_c} 
\end{align}
where (a) follows from the fact that $(Y^m, A^{m+1})$ is independent of $(X_{n+1}, Z^*_{m+1}, G_{m+1})$.

The next step is to bound the difference between \eqref{eq:mmse_aug_b} and \eqref{eq:mmse_aug_c}. To proceed, we introduce the notation
\begin{align}
\cF & = (Y^m, A^m, \Zbar_{m+1} ,  A_{m+1}, G_{m+1} ) \notag\\
\cF^* & =  (Y^m, A^m, Z^*_{m+1} ,  A_{m+1}, G_{m+1}). \notag
\end{align}
Then, using Lemma~\ref{lem:mmse_diff} yields 
\begin{align}
\MoveEqLeft \left| \mmse(X_{n+1} \mid \cF) - \mmse(X_{n+1} \mid \cF^*) \right| \notag \\
& \le 2^\frac{5}{2} \sqrt{ \ex{ X^4_{n+1}} D_\mathrm{KL} \left(  P_{\cF , X_{n+1}}  \, \middle \|\,   P_{\cF^*, X_{n+1}}  \right)}. \notag
\end{align}
By Assumption 2, the fourth moment of $X_{n+1}$ is upper bounded by $B$. The last step is to show that the Kullback--Leibler divergence  is equal to the non-Gaussianenesss $\Delta_m$. To this end, observe that
\begin{align}
\MoveEqLeft D_\mathrm{KL} \left(  P_{\cF , X_{n+1}}  \, \middle \|\,   P_{\cF^*, X_{n+1}}  \right)  \notag\\
& \overset{(a)}{=} \bEx_{X_{n+1}} \left[  D_\mathrm{KL} \left(  P_{\cF \mid X_{n+1}}  \, \middle \|\,   P_{\cF^*\mid X_{n+1}}  \right)  \right] \notag\\
& \overset{(b)}{=}  D_\mathrm{KL} \left(  P_{\Ybar_{m+1} , Y^m, A^{m+1} }  \, \middle \|\,   P_{ Y^*_{m+1}, Y^m, A^{m+1}}  \right)  \notag\\
& \overset{(c)}{=} \bEx_{Y^m, A^{m+1}} \left[  D_\mathrm{KL} \left(  P_{\Ybar_{m+1} \mid Y^m, A^{m+1} }  \, \middle \|\,   P_{ Y^*_{m+1}}  \right)  \right] \notag\\
& = \Delta_{m} , \notag
\end{align}
where (a) follows from the chain rule for Kullback--Leibler divergence, (b) follows from the fact that both $\Ybar_{m+1}$ and $Y^*_{m+1}$ are independent of $(X_{n+1}, G_{m+1})$, and (c) follows from the chain rule for Kullback--Leibler divergence and the fact that $Y^*_{m+1}$ is independent of $(Y^m, A^{m+1})$. This completes the proof of Inequality~\eqref{eq:M_aug_bound_2}.

\subsubsection{Proof of Inequality~\eqref{eq:M_aug_bound_3}} 
Observe that
\begin{align}
\MoveEqLeft \left| \ex{  \mmse_X\left( \frac{G_{m+1}}{ 1+ M_m}  \right) }  - \mmse_X\left( \frac{ m/n }{ 1+ M_m}  \right)  \right|  \notag\\
&\overset{(a)}{\le} \ex{   \left|  \mmse_X\left( \frac{G_{m+1}}{ 1+ M_m}  \right)   - \mmse_X\left( \frac{ m/n }{ 1+ M_m}  \right)  \right| }  \notag\\
& \overset{(b)}{\le}  4 B \ex{ \left |G_{m+1} - \frac{m}{n}  \right |} \notag \\
& \overset{(c)}{\le}  4 B \left(  \ex{ \left |G_{m+1} - \ex{ G_{m+1}} \right|} +  \frac{1}{n} \right) \notag\\
& \overset{(d)}{\le}  4 B \left(  \sqrt{ \var(G_{m+1}) }+ \frac{1}{n} \right)  \notag\\
& = 4  B \Big( \frac{ \sqrt{2(m+1)}}{n} + \frac{1}{n}  \Big)  , \notag
\end{align}
%
where (a) follows from Jensen's inequality, (b) follows from Lemma~\ref{lem:mmseX_bound} and Assumption 2, (c) follows from the triangle inequality and the fact that $\ex{G_{m+1}} = \frac{m+1}{n}$, and (d) follows from Jensen's inequality. 

\section*{Acknowledgment}
G.\ Reeves is grateful to D. Donoho for many helpful conversations as well as the inspiration behind the sandwiching argument that lead to the fixed-point relationships in Section~\ref{sec:MMSE_fixed_point}.

\bibliographystyle{ieeetr}
\bibliography{replica_mmse.bbl}


\end{document}